\newcommand{\indep}{\perp \!\!\! \perp}
\newcommand{\R}{{\mathbb{R}}}
\newcommand{\E}{{\mathbb{E}}}
\newcommand{\pr}{{\mathbb{P}}}
\newcommand{\T}{{\mathbb{T}}}
\newcommand{\pps}{\texttt{synthesis} }
\newcommand{\bff}{\texttt{bff} }
\newtheorem{lemma}{Lemma}[section]
\newtheorem{theorem}{Theorem}[section]
\newtheorem{corollary}{Corollary}[theorem]
\newtheorem{definition}{Definition}
\newtheorem{assumption}{Assumption}
\newtheorem{prop}{Proposition}[section]
\newtheorem{remark}{Remark}[section]
\newtheorem{exmp}{Example}[section]
\title{$\T$-Stochastic Graphs}
\author{Sijia Fang and Karl Rohe}
\date{\vspace{-2em}}
\begin{document}

\begin{LARGE}
\textbf{$\T$-Stochastic Graphs}
\end{LARGE}
\vspace{.3in}

Sijia Fang and Karl Rohe

\textit{UW-Madison Statistics Department,
Madison, WI,
USA.}

E-mail: sfang44@wisc.edu karlrohe@stat.wisc.edu

\begin{quote}
\noindent
\textbf{Summary:} 

Previous statistical approaches to hierarchical clustering for social network analysis all construct an ``ultrametric'' hierarchy.  
While the assumption of ultrametricity has been discussed and studied in the phylogenetics literature, it has not yet been acknowledged in the social network literature.
We show that ``non-ultrametric structure'' in the network introduces significant instabilities in the existing top-down recovery algorithms.
To address this issue, we introduce an instability diagnostic plot and use it to examine a collection of empirical networks. These networks appear to violate the ``ultrametric'' assumption. 
We propose a deceptively simple and yet general class of probabilistic models called $\T$-Stochastic Graphs which impose no topological restrictions on the latent hierarchy. To illustrate this model, we propose six alternative forms of hierarchical network models and then show that all six are 
equivalent to the $\T$-Stochastic Graph model. These alternative models motivate a novel approach to hierarchical clustering that combines spectral techniques with the well-known Neighbor-Joining algorithm from phylogenetic reconstruction.   We prove this spectral approach is statistically consistent.

\end{quote}

\section{Introduction}

Empirical social networks can have hundreds of communities \citep{fanpvalues, zhang2018attention, varimax_rohe}.  
Interpreting and naming them is challenging and time-consuming. To aid these efforts, it can be helpful to understand which communities are closer than others.  For example, if the communities are hierarchically structured \citep{ravasz2002hierarchical, lancichinetti2009detecting, SHEN20091706}, it can dramatically lessen the burden of interpreting the potentially hundreds of communities.

Our proposed approach builds on the previous approaches to modeling hierarchies in social networks \citep{li, lei, clauset2008hierarchical}.
Many of these papers propose ``top-down'' algorithms and prove that they are statistically consistent under specific models.  Unfortunately, in empirical networks, these algorithms are often unstable. To understand why, Section  \ref{sec:instability} proposes a simple and quick-to-compute \textit{top-down instability diagnostic} and  illustrates the diagnostic on a collection of large empirical social networks.

To overcome the instabilities in previous approaches, this paper proposes the $\T$-Stochastic Graph model, a simple model class for generating random graphs with the latent hierarchical structure encoded in a tree graph $\T$ (Section \ref{sec:tsg}). Importantly, $\T$ can be any weighted tree graph with no restrictions imposed on the topology or the edge weights. The leaf nodes in $\T$ correspond to people in the social network and the non-leaf nodes can be interpreted as ``communities'' or ``blocks''. The probability that two people are friends is parameterized by the distance between their leaf nodes in $\T$. The previously proposed models in \citep{li, lei, clauset2008hierarchical} are all contained in this model class (shown in Appendix \ref{appendix:other_models}); as such, each of them can be expressed as $\T$-Stochastic Graph models with certain restrictions on the tree $\T$. Among these restrictions, the most decisive one is a type of homogeneous assumption that assumes all leaf nodes are equal distant to the root node; this is the ``ultrametric'' assumption.

To gain more intuition for the $\T$-Stochastic Graph model, we propose six alternative ways of imagining and modeling hierarchical structures in social networks (Section \ref{sec:intuitions}).  These models include 
\begin{enumerate}
\itemsep0em 
    \item a type of Stochastic Blockmodel, 
    \item a Random Dot Product Graph where the latent positions are not independent, but sampled from a Graphical Model with a tree structure, 
    \item a ``hierarchically overlapping'' Stochastic Blockmodel, 
    \item a Stochastic Process that generates random edges one-at-a-time in a ``top-down'' fashion,
    \item a Stochastic Process that generates random edges one-at-a-time in a ``bottom-up'' fashion, and
    \item an axiomatic notion ``Hierarchical Stochastic Equivalence.'' 
\end{enumerate} 
While these six alternative perspectives might appear distinct from each other, 
we prove that all six models are equivalent to the $\T$-Stochastic Graph model (and thus equivalent to one another). As such, these models provide alternative ways of imagining the $\T$-Stochastic Graph model. 

To estimate the latent hierarchy $\T$ from a $\T$-Stochastic Graph, we propose an algorithm with three steps.  The first step fits a Degree-Corrected Stochastic Blockmodel (DCSBM) \citep{karrer2011stochastic} that partially recovers the hierarchical structure in $\T$. 
In the second step, each element of the  ``$k \times k$ block probability matrix'' from the DCSBM is log-transformed; under the $\T$-Stochastic Graph model, this estimates a distance matrix.  In the last step, we give this estimated distance matrix to the popular Neighbor-Joining (NJ) algorithm \citep{saitou1987neighbor} from the literature on phylogenetic tree reconstruction. Because this three step algorithm synthesizes techniques from the spectral and phylogenetic literature, and the reconstructed tree graph synthesizes the leaves, we call it \texttt{synthesis}.
The theoretical results in Section \ref{sec:estimation} study when the \pps algorithm consistently estimates $\T$.

Figure \ref{fig:journal_tree} illustrates the results of \pps  applied to a citation network of 22,688 academic journals with $k=100$ dimensions. Appendix \ref{sec:real_data_journal} explains this data example in greater detail. Previously, \cite{fanpvalues} studied this graph; using a resampling technique, \cite{fanpvalues} found statistical evidence for at least 100 dimensions in this network. Then, \cite{varimax_rohe} extracted keywords from the journal titles for each of the $k=100$ dimensions and ``manually'' interpret these 100 dimensions. Figure \ref{fig:journal_tree} displays the hierarchy that \pps estimates on these 100 dimensions.  As described in the figure caption, this hierarchy helps to interpret the dimensions and the relationships between them. Section \ref{sec:real_data} provides another empirical example with \pps on the Wikipedia hyperlink network, where the recovered hierarchy resembles a world map.

Taken together, our study contributes to the understanding of modeling social network hierarchies in five ways. In particular, this paper

\begin{enumerate}
\itemsep0em 
    \item highlights the ``ultrametric'' assumption
     and provides a diagnostic for it,
    \item proposes a class of models that do not have restrictions on the latent tree topology,
    \item shows how this class of models is equivalent to six other parameterizations of hierarchical networks, and also generalizes the previous models,
    \item proposes a spectral algorithm to estimate the latent hierarchy, and
    \item shows that this algorithm is consistent.
\end{enumerate}

\begin{figure}[!ht]
    \centering
    \vspace{-0.1in}
	\includegraphics[width=5in]{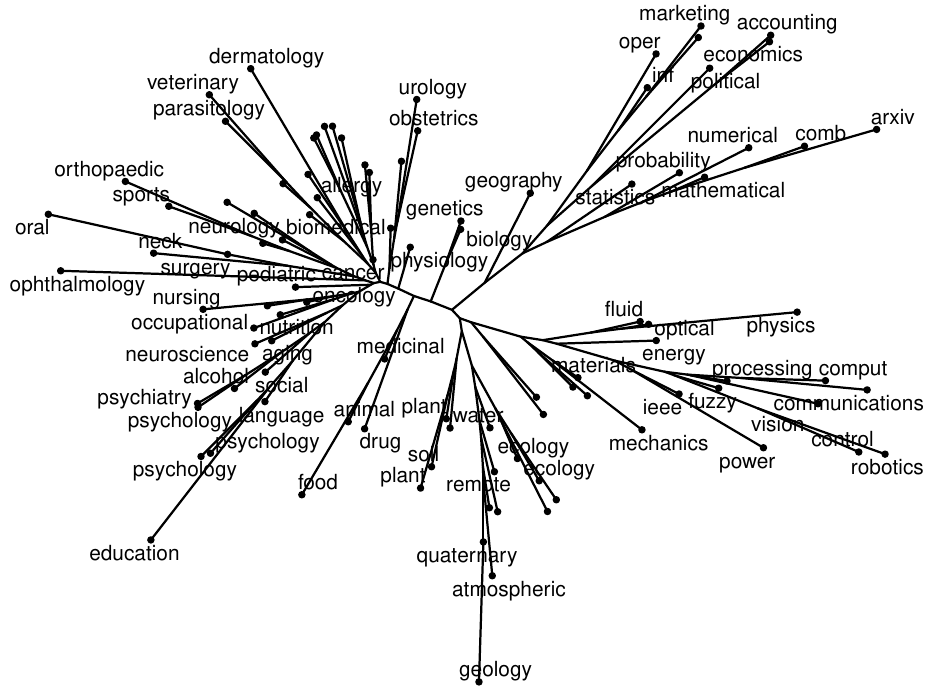}
     \caption{%
     This visualization comes from the citation pattern of 22,688 academic journals.  Each leaf in this tree corresponds to an estimated dimension (cluster) in the network. Starting from the top right and going clockwise, this hierarchy finds
     \textit{economics}, \textit{math} (statistics cluster is closer to the core of the tree, between economics and math), then \textit{physics}, \textit{engineering}, \textit{materials}, \textit{earth sciences (geology, ecology, plant, etc)}, \textit{pharmaceuticals}, \textit{education/psychology}, then a big mix of medical areas.}
   \label{fig:journal_tree}
\end{figure}

\section{$\T$-Stochastic Graphs} \label{sec:tsg}

In the $\T$-Stochastic Graph model, there are, in fact, two types of graphs: (i) the observed social network and (ii) the latent hierarchy $\T$. These are very different graphs. To emphasize their difference, we refer to the observed social network only via its adjacency matrix $A \in \R^{n \times n}$. In this social network $A$, there are $n$ people or ``nodes'' and $A_{ij}\ge0$ measures the strength of the relationship between nodes $i$ and $j$:
\[A_{ij}=0 \Longleftrightarrow \text{nodes $i$ and $j$ are not friends}.\]
The latent hierarchy is a tree graph that we refer to as $\T = (V, E)$, with node set $V = \{1, \dots, n, \dots, N\}$ and edge set $E = \{(u,v)|\text{$u$ and $v$ are connected}\}$. Given a set of weights $\{w_{uv}\}$ assigned to every edge $(u, v)\in E$, we define $d(i, j)$, the distance between any pair of nodes $i, j\in V$, as the summation of edge weights along the shortest path between $i$ and $j$, and we say $d(\cdot, \cdot)$ is an \textbf{additive distance}\footnote{The usual definition on additive distance does not start from edge weights \citep{choi2011learning, erdHos1999few}, we employ this equivalent definition for simplicity, more details can be found in Remark \ref{rmk:differnt_def_additive} in Appendix \ref{appendix:cov->dist}.} on $\T$.
In both $A$ and $\T$, we will refer to the \textbf{degree} of node $i$, denoted by $deg(i)$, as the number of connections to node $i$. Importantly, these degrees are different in $A$ and $\T$. In latent tree $\T$, $deg(i) = \sum_{j} \mathds{1}\{ (i,j) \in E \}$; in the observed social network $A$, $deg(i) = \sum_j \mathds{1}\{A_{ij} > 0\}$. Whether $deg$ refers to the degree in $\T$ or $A$ will always be clear from the context. In $\T$, all nodes with degree one are called \textbf{leaf nodes}, all nodes with degree $>1$ are called \textbf{internal nodes}.

The $\T$-Stochastic Graph model (Definition \ref{def:tsg} below) relates $\T$ to $A$.  In this model, $A$ is random and $\T$ is fixed; in particular, $\T$ is a latent structure that describes the probability distribution for $A$. In statistical estimation (Section \ref{sec:estimation}), we observe $A$ and we want to estimate $\T$.

\begin{definition}\label{def:random_graph}
In this paper, we say a symmetric adjacency matrix $A \in \R^{n \times n}$ is a \textbf{random graph} if  it has independent elements \footnote{For independence, we mean all $\{A_{ij}\}_{0<i\leq j< n}$ are independent, notice that symmetry constrains $A_{ij} = A_{ji}$} with positive variance and $\E(A_{ij}) = \lambda_{ij}$, for some set of $\lambda_{ij}$'s.
\end{definition}

For example, $A_{ij}$ could be Bernoulli($\lambda_{ij}$) or Poisson($\lambda_{ij})$. 
The $\T$-Stochastic Graph model parameterizes these $\lambda_{ij}$'s using an additive distance $d(\cdot, \cdot)$ on $\T$.

\begin{definition}\label{def:tsg}

Suppose that  $A\in \R^{n \times n}$ is a random graph and $\T$ is a tree graph. 
We say that $A$ is a $\T$-\textbf{Stochastic Graph} if the nodes $\{1, \dots, n\}$ in $A$ match the leaf nodes $\{1, \dots, n\}$ in $\T$ such that for all  $i \ne j$,
\begin{equation}\label{eq:lambdadef}
\lambda_{ij} = \exp(-d(i,j)),
\end{equation}
where $d(i, j)$ is the distance between nodes $i$, $j$ in $\T$ and $\lambda_{ij}$ comes from the nodes $i,j$ in $A$. 
\end{definition}

\begin{figure}[!ht] %
   \centering
   \includegraphics[width=6.5in]{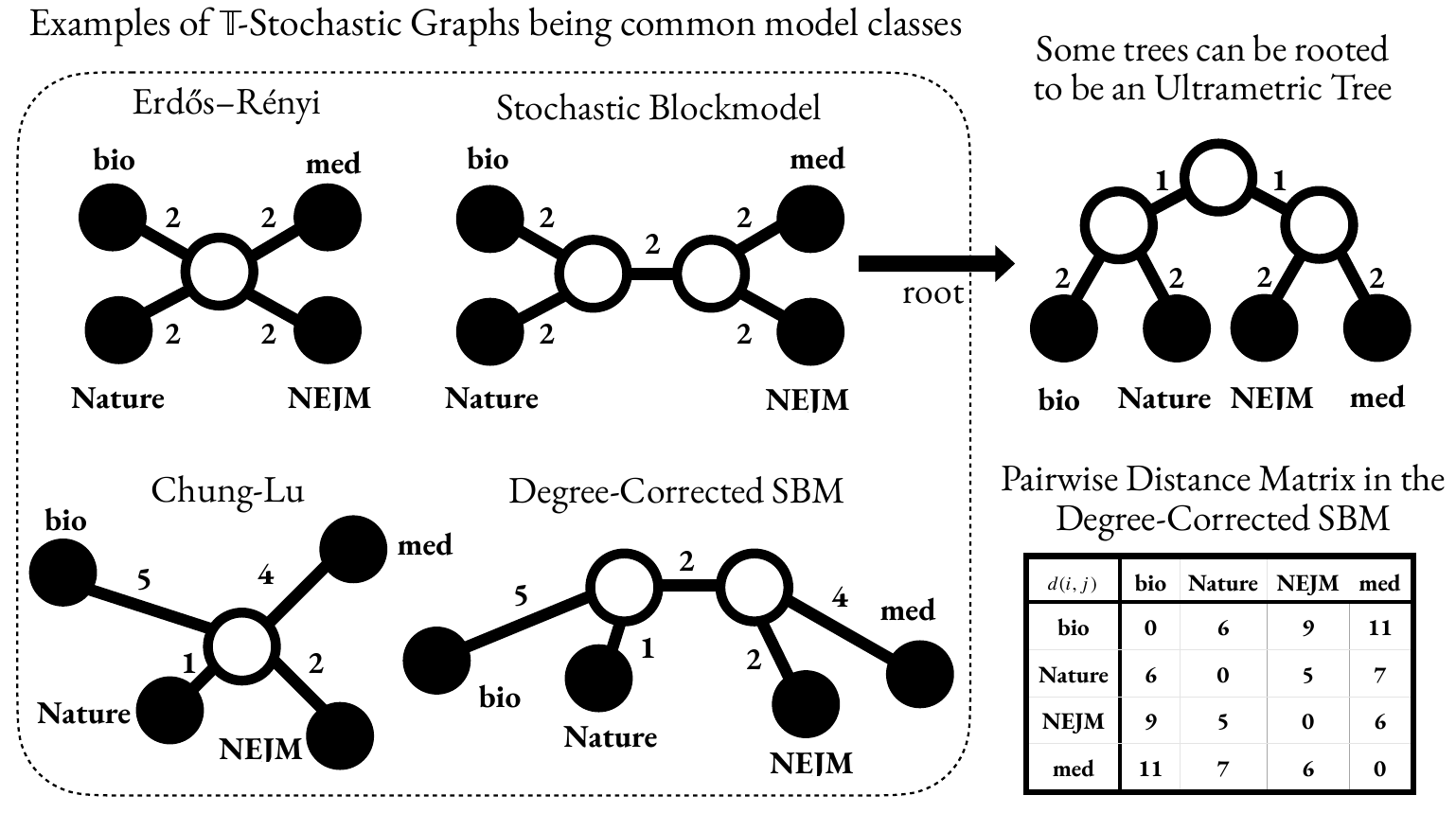} 
   \caption{This figure presents some toy model $\T$-Stochastic Graphs, imagine NEJM is the New England Journal of Medicine (a high-status medical journal), med is a lower-status medical journal, Nature is a high-status Biology journal, and bio is a lower status Biology journal. If $\T$ is a star graph with equal edge weights, then the $\T$-Stochastic Graph is an Erd\H{o}s-R\'enyi graph \citep{erdHos1999few}.  If $\T$ is a star graph with different edge weights, then it is a Chung-Lu graph \citep{chung2002connected}. If there are two internal nodes in $\T$, then the $\T$-Stochastic Graph is a (Degree-Corrected) Stochastic Blockmodel \citep{holland, karrer2011stochastic} with two blocks. The bottom right tree captures both the clustering structures and high/low status of each journal.}
   \vspace{-0.1in}
   \label{fig:tsg_common_class_ultrametric_distance}
\end{figure}

Figure \ref{fig:tsg_common_class_ultrametric_distance} presents some examples of $\T$ with four leaf nodes ($n=4$), each corresponding to a common model class.
Superficially, the most catching thing about $\T$-Stochastic Graphs is that Equation \eqref{eq:lambdadef} contains both an $\exp(\cdot)$ transformation and a distance $d(\cdot, \cdot)$. However, we will see that this often arises naturally.  For example, the previously defined hierarchical models in \cite{clauset2008hierarchical, lei, li} are not defined with exponential transformation and additive distance, but one can construct an additive distance $d(\cdot, \cdot)$ on $\T$ so that those models are $\T$-Stochastic Graphs satisfying Equation \eqref{eq:lambdadef}.

The most important feature of the $\T$-Stochastic Graph model is that it does not make any constraints on the latent hierarchy $\T$ or the additive distance $d(\cdot, \cdot)$. On the contrary, the hierarchical models in \cite{clauset2008hierarchical, lei, li} implicitly make various assumptions on $\T$ and $d(\cdot, \cdot)$. Appendix \ref{appendix:other_models} provides a detailed discussion about these constrains, and proves that all these models are special cases of $\T$-Stochastic Graphs. Among these constrains, one that is imposed on all these models is the following property:

\begin{definition} \label{def:ultrametric}
A tree graph $\T$ with distance $d(\cdot, \cdot)$is \textbf{ultrametric} if there exists a \textbf{root node} $r\in V$, such that all leaf nodes in $\T$ are equidistant to $r$, i.e., $d(i, r)\equiv c$ for any leaf node $i$.
\end{definition}

For example, the two trees on the bottom of Figure \ref{fig:tsg_common_class_ultrametric_distance} is not ultrametric.  The top left tree is ultrametric. The top right tree 
does not satisfy Definition \ref{def:ultrametric}, however, it can be ``rooted'' in a way that does not change the pairwise distances between leaves and makes the tree satisfy Definition \ref{def:ultrametric}. 
In such cases, it is often said that this tree is ultrametric.

Roughly speaking, the ultrametric constraint makes the social network more homogeneous. For example, the bottom two trees in Figure \ref{fig:tsg_common_class_ultrametric_distance} correspond to model classes that account for degree heterogeneity, which prevents these trees from being rooted ultrametrically. 
This is not the only pattern that prevents ultrametric rooting, but it is one that is easy to see in illustrations.

Section \ref{sec:instability} discusses how the ultrametric assumption leads to algorithms that have a fundamental instability on empirical networks; we propose a simple diagnostic plot to identify and help understand this instability.

\begin{remark}\label{rmk:multiply_c}
Sometimes we consider $\lambda_{ij} = c \exp(-d(i,j))$ for a positive constant $c>0$. This generalization is equivalent to allowing negative distance $d(i, p(i))$, where $i$ is any leaf node and $p(i)$ is the neighbor of leaf node $i$. See Remark \ref{appendix_rmk:multiply_c} in Appendix \ref{appendix:tgb->tsg} for more explanation.
\end{remark}

\begin{remark}\label{rmk:negative_edge_weights}
Negative edge weights between two internal nodes are different from negative edge weights between a leaf node and its parent. The former one leads to a violation of homophilous structure\footnote{Homophilous structure means nodes closer in the tree are more likely to be friends than nodes further apart, which is also referred to as ``assortative'' or ``affinity'' in other literature, see Appendix \ref{appendix: other_models_assortativity_affinity} for further discussions.} while the latter does not. See Appendix \ref{appendix: negative_edge_weight} for more discussions. In the following sections, we assume internal edge weights to be non-negative unless explicitly stated otherwise. In particular, this issue arises in Section \ref{sec:bus} and Section \ref{sec:hse}. 
When we refer to ``negative edge weights'' in this paper, it specifically means negative edge weights between internal nodes.
\end{remark}

\begin{remark}\label{rmk:identifiability}
For identifiability purposes, we assume $deg(u)\geq 3$ for any internal node $u$ in $\T$ following \citep{pearl1988probabilistic, choi2011learning}. Furthermore, we assume $w_{uv}\neq 0$ for any $(u, v)\in E$ to avoid two nodes $u$ and $v$ from being identical to each other.

\end{remark}

\subsection{A fundamental instability in estimating $\T$ with  ``top-down'' approaches} \label{sec:instability}

\begin{quote}
This section examines the spectral properties of multiple large empirical social networks to see why previous ``top-down'' approaches are insufficient and thus why we should consider the more general $\T$-Stochastic Graphs.  Later sections can be read before this section.
\end{quote}

Previous statistical approaches to hierarchical clustering in social networks have primarily focused on ``top-down'' partitioning. In ``top-down'' approaches, the $n$ nodes are iteratively split into two (or more) clusters \citep{lei, li, aizenbud2021spectral}. 
To partition a group of $n$ nodes into two groups, these approaches typically construct some $n\times n$ matrix (e.g. $A$ or a graph Laplacian), compute its second eigenvector $\hat x \in \R^n$, and partition node $i$ based upon the +/- sign of $\hat x_i$, the $i$th element of that vector.

\begin{quote}
\textbf{The fundamental instability of top-down splitting:} \textit{The splitting eigenvector $\hat x \in \R^n$ for a $\T$-Stochastic Graph is a random vector. So, if an element $\hat x_i$ is close to zero, then slight perturbations can change the cluster that node $i$ is assigned to. While previous theoretical results show that under certain assumptions, $\hat x_i$ is well separated from zero with large probability; empirically, we found the most common values of $\hat x_i$ are often very close to zero. This inconsistency between theoretical results and empirical evidence suggests a gap between the assumptions and the data. Moreover, the algorithms are sensitive to this gap.} 
\end{quote}

\begin{figure}[!ht] %
   \centering
   \textbf{The instability diagnostic examines whether the splitting vector has values in two well-separated clusters.  Unfortunately, most values  are often very close to zero.}
   \includegraphics[width=6in]{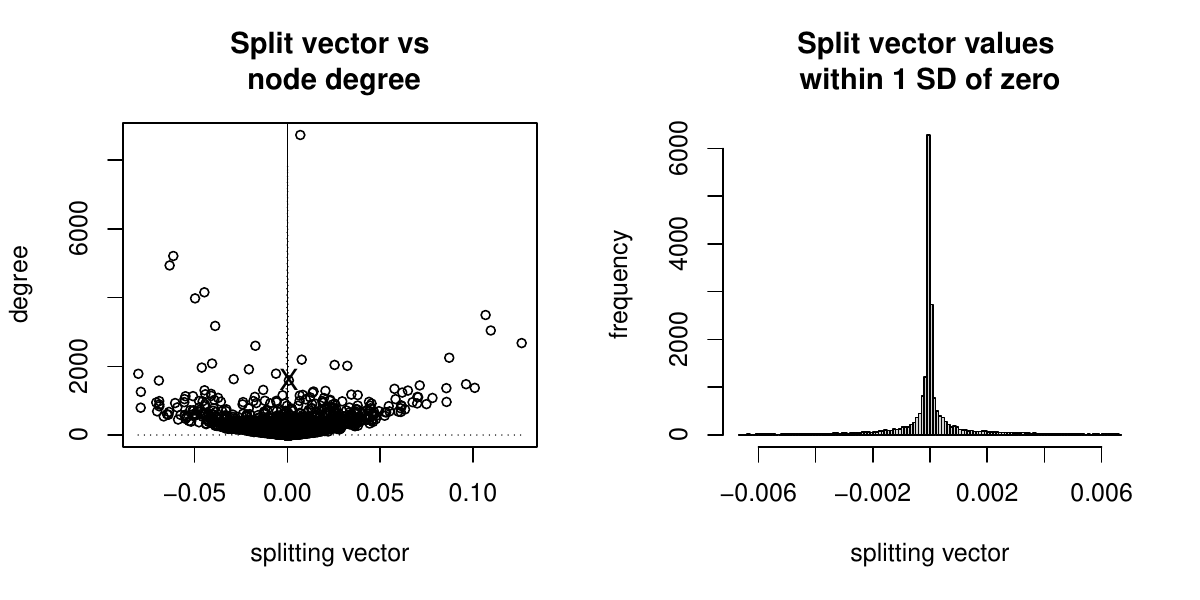} 
   \vspace{-0.25in}
   \caption{Left panel: Each point is a journal in the symmetrized  journal citation graph. The horizontal axis corresponds to the splitting vector $\hat x \in \R^{22,688}$ (the second eigenvector of $A$); the vertical axis gives the journal degree in the symmetrized graph.  The dashed line gives a kernel density estimate for the elements in $\hat x \in \R^{22,688}$. There is also a solid line at $\hat x_i = 0$. \textit{The points $\hat x_i$ are so tightly clustered around zero that the density estimate of $\hat x_i$ and the vertical line are nearly indistinguishable.} Right panel: This is a histogram of the values $\hat x_i$ that are within a single standard deviation of zero.  Notice that the scale of the horizontal axis in the left panel is zoomed in.  The single mode at zero is highly peaked.}
   \label{fig:secondeighist}
\end{figure}

Figure \ref{fig:secondeighist} presents $\hat x$ of the same citation graph in Figure \ref{fig:journal_tree}.  In the left panel, each point is an academic journal (i.e. a node in the citation graph).  The leftmost point is Nucleic Acids Research and the rightmost point is JAMA (Journal of the American Medical Association). The journals on the right (i.e. large positive values in $\hat x$) tend to be  prestigious medical journals, while the points on the left (i.e. large negative values in $\hat x$) tend to be prestigious molecular biology and genomics journals.  The vertical axis gives the degree of the node (while citations are directed edges, we have symmetrized all edges).  The highest point is PLOS ONE.  The journal with an X over that is close to the boundary is Journal of Immunology.  Both PLOS ONE and Journal of Immunology seem to be very important to both the left and right sides. Splitting them at the first step seems unwise.

For the Statistics literature, the splitting vector $\hat x$ makes an even more regrettable error; JRSS-B is on the left and JASA is on the right.  So, \textbf{two of the most prestigious statistics journals would be put into different clusters in the first split and would never reconvene in a top-down approach.} JRSS-B is connected to 194 other journals and JASA to 513 other journals. So, this is not simply a problem of errors on small degree nodes.

Previous theoretical results made various types of assumptions to ensure values in $\hat x$ form two well-separated clusters. For example, \cite{li} proposed a hierarchical model with balanced, \textbf{binary}, and \textbf{ultrametric}\footnote{Appendix \ref{appendix:other_models} gives rigorous definitions of the binary and ultrametric assumptions and discusses how both Li's model and Lei's model are $\T$-Stochastic Graphs with these assumptions enforced on $\T$.} constraints. Theorem 1 in \citep{li} shows that under this model, the population splitting vector only has two values, one positive and one negative, with the same magnitudes. 
Theorem 2.1 in \citep{lei} presented similar results on eigenvectors of the Laplacian matrix, when the balanced assumption is removed. Both results imply a two-mode structure for the sample splitting vector: one mode on the positive side and one mode on the negative side. In summary, if the random graph $A$ is generated under an ultrametric and binary hierarchical model, $\hat x_i$ are expected to be well separated from zero.

\begin{theorem}\label{thm:ultrametric_split}
(Theorem 1 in \cite{li})
Let $A\in \R^{n\times n}$ be generated from Li's model, then the second eigenvalue of $\E A$ has multiplicity 1 and the entries of the corresponding eigenvector $u_2$ obey
\begin{align*}
    u_{2, i} = \pm 
    \begin{cases}
    1/\sqrt{n}, \quad & i\in \mathcal{G}_0 \\
    -1/\sqrt{n} , & i\in \mathcal{G}_1 
    \end{cases}
\end{align*}
where $\mathcal{G}_0$ and $\mathcal{G}_1$ give the correct true partition of the nodes.
\end{theorem}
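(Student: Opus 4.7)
The plan is to verify directly that the stated vector is an eigenvector of $\E A$, and then to enumerate the full spectrum of $\E A$ to establish that the associated eigenvalue is the second largest and has multiplicity one. Throughout, write $P := \E A$ and exploit the fact that under Li's balanced, binary, ultrametric model, $P_{ij}$ depends only on the depth of the lowest common ancestor of the leaves $i, j$ in $\T$. Let $p_d$ denote the edge probability for pairs whose LCA is at depth $d$ from the root (so $d = 0$ corresponds to the root-level split into $\mathcal{G}_0$ and $\mathcal{G}_1$, and assortativity gives $p_0 < p_1 < \cdots$). This puts $P$ into a nested block form in which the two top-level diagonal blocks are identical up to relabeling and the off-diagonal block is the constant $p_0 \mathbf{1}\mathbf{1}^T$.

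The first step is to check that the candidate eigenvector $u_2$, defined by $u_{2,i} = 1/\sqrt n$ on $\mathcal{G}_0$ and $-1/\sqrt n$ on $\mathcal{G}_1$, satisfies $Pu_2 = \lambda u_2$. For $i \in \mathcal{G}_0$,
\[
(Pu_2)_i = \tfrac{1}{\sqrt n}\Bigl(\sum_{j\in \mathcal{G}_0} P_{ij} - \sum_{j\in \mathcal{G}_1} P_{ij}\Bigr).
\]
The cross-block sum equals $(n/2)p_0$ for every such $i$, and by the balanced symmetric structure of the subtree rooted at the left child of the root, the within-block sum $S = \sum_{j\in \mathcal{G}_0} P_{ij}$ is also independent of the choice of $i \in \mathcal{G}_0$. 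The same holds by symmetry on $\mathcal{G}_1$, so $Pu_2 = \lambda u_2$ with $\lambda = S - (n/2)p_0$, which is strictly positive under assortativity.

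The second step is to show this eigenvalue has multiplicity one and is second-largest. Using the recursive block structure, one can construct a complete orthogonal eigenbasis of $P$ as follows: $\mathbf{1}$ is an eigenvector (row sums are constant), and for each of the $n - 1$ internal nodes $v$ of $\T$ at depth $d \in \{0, 1, \dots, K-1\}$, the vector $e_v$ that takes value $+1$ on leaves under the left child of $v$, $-1$ on leaves under the right child, and $0$ elsewhere is an eigenvector of $P$. Its eigenvalue $\mu_d$ depends only on $d$, by the same symmetry argument as in the first step applied locally to the subtree rooted at $v$. These $n$ vectors are pairwise orthogonal and form a basis, so the spectrum of $P$ consists exactly of the top eigenvalue together with $\{\mu_d\}_{d=0}^{K-1}$ with respective multiplicities $\{2^d\}$. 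Our candidate $u_2$ is (up to normalization) the unique eigenvector $e_v$ for $v = $ the root, with eigenvalue $\mu_0$.

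It remains to prove $\mu_0 > \mu_d$ strictly for every $d \ge 1$ (and $\mu_0$ is below the top). A direct count of how many leaves in the active subtree sit at LCA depth $k$ with a fixed leaf $i$ gives a closed form $\mu_d = \sum_{k=d+1}^{K-1} c_{d,k}(p_k - p_d)$ with explicit positive weights $c_{d,k}$, from which the monotonicity $\mu_0 > \mu_1 > \cdots$ follows from $p_k > p_d$ for all $k > d$ together with the fact that deeper splits involve strictly fewer contributing levels and smaller subtrees. The main obstacle is bookkeeping in this last step: the formula for $\mu_d$ is simple but notation-heavy, and one has to be careful about the balanced sizes at each level and about sign conventions. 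An equivalent, cleaner route is induction on subtree depth, using that each diagonal block of $P$ is itself of the same form as $P$ on half the nodes, which allows $\mu_0$ to be expressed as a strict linear combination of $\mu_1$ plus a positive assortativity gap. Either route delivers the claim.
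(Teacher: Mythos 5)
The paper does not prove this statement at all: Theorem \ref{thm:ultrametric_split} is quoted verbatim as Theorem 1 of \cite{li}, so there is no internal proof to compare yours against. Judged on its own, your argument is the standard (and correct) one for nested block-constant matrices: verify the root-level contrast is an eigenvector, build the full orthogonal eigenbasis from the all-ones vector plus one left-minus-right contrast per internal node of the hierarchy, and order the contrast eigenvalues by depth using assortativity. This is essentially the route taken in the cited source, so there is no genuine gap in the idea.

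Three small points would tighten the write-up. First, in Li's model the leaves of $\T$ are communities, not individual nodes, so besides the tree contrasts you must also account for the within-community contrasts (orthogonal to constants inside a block); their eigenvalues are $0$ (or a negative diagonal correction), hence harmlessly below $\mu_0>0$, but they are needed to complete the basis and to rule out extra multiplicity. Second, $\E A$ has zero diagonal, so it differs from your nested-constant matrix $P$ by a multiple of the identity; this is a multiple of the identity precisely because all within-block diagonal values coincide in the balanced ultrametric model, so eigenvectors and the ordering are unaffected, but it deserves one sentence. Third, the monotonicity step can be made immediate rather than "bookkeeping-heavy'': with $c_k$ the count of leaves at LCA-depth exactly $k$ with a fixed leaf, one gets the telescoping identity $\mu_d-\mu_{d+1}=(p_{d+1}-p_d)\sum_{k>d}c_k>0$, which replaces the vaguer "fewer contributing levels and smaller subtrees'' remark and also handles the special count at the deepest level cleanly.
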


As a consequence, these model assumptions are easy to diagnose by plotting $\hat x_i$ against the degree of node $i$ and making a histogram of the $\hat x_i$ values. Figure \ref{fig:splitvecs} presents diagnostic plots for some large networks summarized in Table \ref{tab:splitvecs}. Similar to Figure \ref{fig:secondeighist}, all histograms present a sharp peak at zero, and many near-zero $\hat x_i$ values correspond to high-degree nodes. 
Table \ref{tab:splitvecs} provides additional summaries. Most splitting vectors are highly concentrated around zero (3rd to 5th columns in Table \ref{tab:splitvecs}). Some networks (e.g. Facebook Page and Wiki Page) even have more than 95\% elements within the range of $\pm 0.1\sigma$, where $\sigma$ is the sample standard deviation. Perhaps another threshold could be found to avoid cutting at the highly peaked zero value. Alternatively, one could apply $k$-means to separate $\hat x_i$ (e.g. \texttt{HCD-Spec} in \cite{li}). However, these variations do not escape the fundamental problem that most splitting vectors have only one mode.
The last column presents the $p$-value for Silverman's unimodality test \citep{silverman1981using, JSSv097i09}. 
Despite these networks having tens of thousands or millions of nodes, only one of the nine networks rejects the unimodal null hypothesis. This conflicts with previous ultrametric models. Though the previous study presented successful results of bi-partition methods on some real datasets, they only investigated small or moderate-sized networks (about hundreds of nodes). For large social networks, we have not found an example of diagnostic plot that looks different from those in Figure \ref{fig:splitvecs}.

\begin{table}[!ht]
    \centering
    \textbf{Splitting vector values in most large networks are highly concentrated around zero, and fail to reject the unimodal hypothesis test.}
    \footnotesize
    \begin{tabular}{ccccccc}
    \hline
    Network & $|V|$ & $|E|$ 
    & $<0.1\sigma$
    & $<0.05\sigma$ & $<0.01\sigma$ & p-value 
    \\
    \hline
    Journal Citation \citep{ammar-etal-2018-construction} & 22688 & 474841 & 0.56 & 0.45 & 0.20 & 0.54 
    \\
    Facebook (Social) \citep{leskovec2012learning} & 4039 & 88234  & 0.86 & 0.76 & 0.61 & <0.01
    \\
    Epinions \citep{richardson2003trust} & 75879 & 508837  & 0.84 & 0.75 & 0.42 & 0.16
    \\
    Slashdot \citep{leskovec2009community} & 82168 & 948464  & 0.51 & 0.33 & 0.10 & 0.42
    \\
    Wiki (Vote) \citep{leskovec2010signed} & 7115 & 103689  & 0.53 & 0.39 & 0.19 & 0.87
    \\
    Facebook (Page) \citep{rozemberczki2019multiscale} & 22470 & 171002  & 0.97 & 0.94 & 0.81 & 0.56
    \\
    Github \citep{rozemberczki2019multiscale} & 37700 & 289003  & 0.35 & 0.21 & 0.05 & 0.45
    \\
    LastFM \citep{feather} & 7624 & 27806  & 0.81 & 0.71 & 0.42 & 0.89
    \\
    Wiki (Page) \citep{snapnets}  & 2464429 & 76229780  & 0.98 & 0.95 & 0.60 & 0.48 \\
    \hline
         
    \end{tabular}
    \caption{Summary statistics for splitting vectors in large social networks. 
    The third to fifth columns measure the level of concentration around zero. Let $\sigma$ be the standard deviation of $\hat x_i$, then they measure $\sum_i\mathds{1}\{|\hat x_i| < c\sigma\}/n$ with $c = 0.1, 0.05, 0.01$ accordingly. The last column displays the p-value for Silverman's test \citep{silverman1981using}. The null hypothesis is that $\hat x_i$ comes from a unimodal distribution.}
    \label{tab:splitvecs}
\end{table}

\begin{figure}[!ht] %
   \centering
   \textbf{Instability diagnostic plots for large social networks exhibit a single mode around zero. Moreover, many nodes around zero are important high-degree nodes.}
   \includegraphics[width=6.5in]{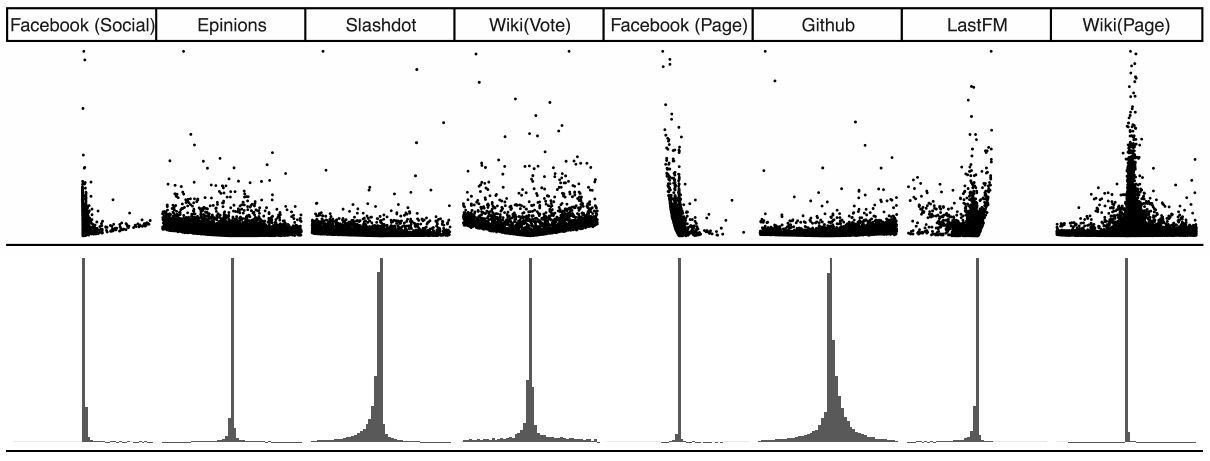} 
   \caption{Instablility diagnostic plots for social networks in Table \ref{tab:splitvecs}, plotted within the range of $\scaleto{\left[-2/\sqrt{n},2/\sqrt{n}\right]}{9.5pt}$, where $n$ is the number of vertices in each network.}
   \label{fig:splitvecs}
\end{figure}

While the two-mode structure implied by the binary and ultrametric assumptions conflicts with the one peak pattern in empirical data, networks sampled from the $\T$-Stochastic Graph can display a unimodal structure and still be estimable. Appendix \ref{appendix:parametric_bootstrap} discusses this in more details by comparing the diagonostic plots of ``parametric bootstrap'' networks from Li's model and our $\T$-Stochastic Graphs.

\begin{figure}[!ht] %
   \centering
   \includegraphics[width=4.7in]{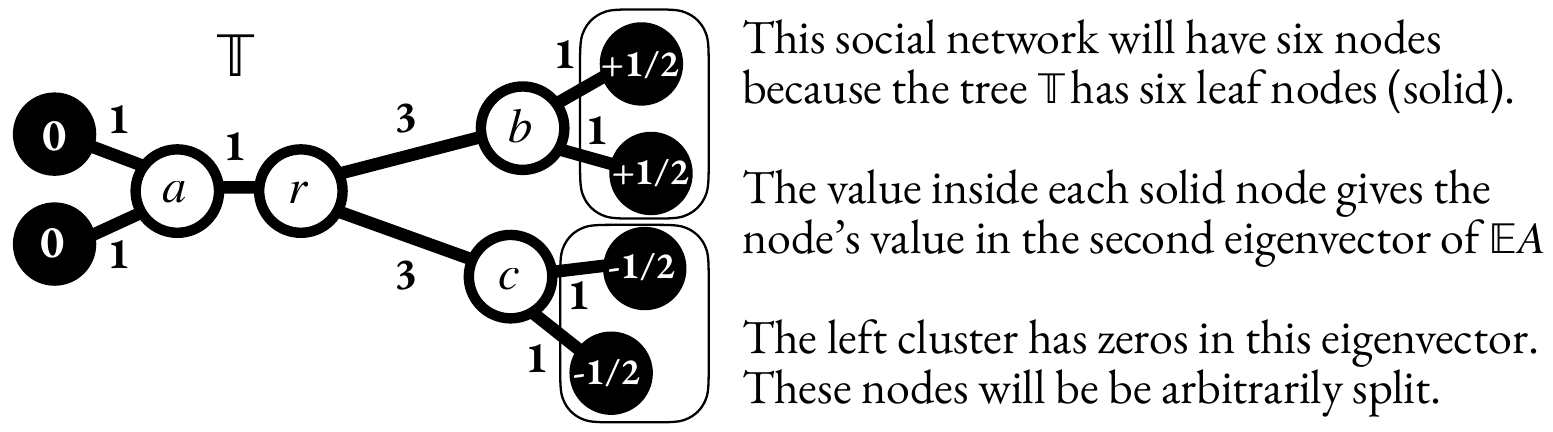} 
   \caption{In this toy model, the second eigenvector of $\E A$ assigns the two nodes on the left values of zero.  When this happens, ``top-down'' hierarchical clustering will be unstable.}
   \label{fig:hollow_nodes}
\end{figure}

To understand how \textbf{non-ultrametricity allows zero values in the  splitting vector}, Figure \ref{fig:hollow_nodes} gives a small $\T$ for a toy model. 
In this example, $\T$ is a non-ultrametric tree with six leaf nodes (solid black) and the resulting adjacency matrix $A$ is $6 \times 6$. Define $x \in \R^6$ as the second eigenvector of $\E A$. This eigenvector assigns a value $x_i$ to each of the six leaf nodes in the tree and these values are given inside the solid black. While the sign of $x_i$ correctly partitions the nodes on the right, the nodes on the left descend from the internal node $a$ and are both assigned value 0 by $x$. Given the observed graph $A$, we compute its second eigenvector $\hat x$ as an estimator of $x$ and use $\hat x$ to build the first split. In $\hat x$, the two nodes that descend from $a$ will be arbitrarily split.

We do not emphasize the difference between binary trees and non-binary trees. This is because, in the $\T$ Stochastic Graph, non-binary trees are binary trees with some zero edge weights. That said, previous bi-partition methods implicitly assume strictly positive edge lengths to assure an eigen gap between the second and the third eigenvalues (Equation (3) in \citep{lei} and Equation (8) in \citep{li}). When zero edge weights are allowed, the multiplicity of the second eigenvalue can be greater than one. Consequently, \textbf{the splitting vector for non-binary trees are not unique}. This non-uniqueness makes it impractical to extend bi-partition methods to non-binary structures whereas our \pps algorithm does not suffer from this problem.

\section{Six alternative constructions for $\T$-Stochastic Graphs} \label{sec:intuitions}

\begin{quote}
This section presents six alternative constructions of hierarchies in social networks. While these models do not initially resemble the $\T$-Stochastic Graph, we show that all these models are equivalent to it. These six alternatives help to clarify the types of assumptions made by the $\T$-Stochastic Graph model and identify relationships to other models. Moreover, the intuition from Section \ref{sec:tgb} on the Graphical Blockmodel motivates a spectral estimation procedure.
\end{quote}

Section \ref{sec:tsg->sbm} first builds some intuition of $\T$-Stochastic Graph by presenting a connection to the Degree-Corrected Stochastic Blockmodel (DCSBM). Starting from Section \ref{sec:tgb}, six equivalent models are introduced, below is a brief description of each model, numbered by the subsections that discuss them.

\noindent \textbf{\ref{sec:tgb} ($\T$-Graphical Blockmodel):} This is a Degree-Corrected Stochastic Blockmodel \citep{karrer2011stochastic} where the block-to-block connectivity matrix $B$ is the covariance matrix for a subset of variables in a Gaussian Graphical Model (GGM) \citep{lauritzen1996graphical} with conditional independence structure $ \T$.  Section \ref{sec:estimation} shows how this model provides a clear path for statistical estimation of $\T$ via popular and well-studied estimation techniques.

\noindent \textbf{\ref{sec:tg-rdpg} ($\T$-Graphical RDPG):} This is a Random Dot Product Graph (RDPG) \citep{rdpg} 
where the latent positions are not independent but sampled in a certain way from a Gaussian Graphical Model where the conditional independence structure is encoded in 
$ \T$.

\noindent \textbf{\ref{sec:osb} ($\T_r$-Overlapping Blockmodel):} %
This is an Overlapping Stochastic Blockmodel \citep{latouche2011overlapping} defined by a rooted tree $\T_r$. Leaf nodes in $\T_r$ correspond to nodes in $A$; non-leaf nodes in $\T_r$ correspond to blocks in the Overlapping Stochastic Blockmodel. %
Any leaf node belongs to all blocks between that leaf and the root.

\noindent \textbf{\ref{sec:tds} ($\T_r$-Top Down Stochastic Process):}  Given a rooted tree $\T_r$ (which need not be ultrametric), start the process at the root node $r$.  For each node in the tree, you have some probabilities of walking along any edge away from the root or stopping at that node. When you stop at a node $u$, select two leaf nodes $i$ and $j$ that descend on a path from $u$, away from the root, and connect them.

\noindent \textbf{\ref{sec:bus} ($\T$-Bottom Up Stochastic Process):} Select a leaf node $i$ at random, with probability proportional to some $\theta_i$. Take a non-backtracking random walk on $\T$, starting at leaf $i$ and terminating at a leaf $j$. Connect $i$ to $j$.

\noindent \textbf{\ref{sec:hse} ($\T$-HSE):} \textit{Hierarchical Stochastic Equivalence} (HSE) is an axiom related to ``Stochastic Equivalence,'' the axiom that first motivated the Stochastic Blockmodel in \cite{holland}. 

The following sections require more notation. Consider any tree graph $\T = (V, E)$. 
One can ``root'' $\T$ by picking any node $r\in V$ as the root, we denote this rooted tree as $\T_r$. Let $V_\ell \subset V$ be the set of leaf nodes in $\T$, i.e.,  $V_\ell = \{i\in V: deg(i) = 1\}$. For any leaf node $i \in V_\ell$, we use $p(i)$ to denote its only neighbor node, sometimes we also say this is the ``parent'' of $i$. Define $V_z \subset V$ as the set of nodes in $\T$ that are connected to a leaf node, i.e., $V_z = \{u\in V: u = p(i) \text{ for some $i\in V_\ell$}\}$.

\subsection{Every $\T$-Stochastic Graph is a Degree-Corrected Stochastic Blockmodel}\label{sec:tsg->sbm}

\begin{definition}\label{def:dcsbm}
The \textbf{Degree-Corrected Stochastic Blockmodel} (DCSBM) 
is a random graph with
$$\lambda_{ij}^{DC} = \theta_i \theta_j B_{z(i), z(j)}$$ 
for node specific degree parameters $\theta_i \in \R_+$, block assignments $z(i) \in \{1,\dots, k\}$, and a connectivity matrix $B \in \R_+^{k \times k}$ that is \textbf{full rank}. If $\theta_i \equiv 1, \forall i$, then it reduces to a Stochastic Blockmodel (SBM).
\end{definition}

\begin{theorem}\label{thm:tsg->dcsbm}
Every $\T$-Stochastic Graph is a DCSBM.
\end{theorem}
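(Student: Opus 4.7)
The plan is to exhibit an explicit DCSBM parameterization read off directly from the tree. For each leaf $i \in V_\ell$, take its block label to be its unique neighbor, $z(i) = p(i) \in V_z$, so the block index set is identified with $V_z$ and $k = |V_z|$. Define degree parameters $\theta_i = \exp(-d(i,p(i)))$ for $i \in V_\ell$ and the $k \times k$ connectivity matrix $B$ by $B_{uv} = \exp(-d(u,v))$ for $u,v \in V_z$. Positivity of $\theta_i$ and $B_{uv}$ is immediate from positivity of $\exp$.

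The factorization then follows from additivity of the tree distance. For any two distinct leaves $i,j$, the unique path from $i$ to $j$ in $\T$ must pass through $p(i)$ and then $p(j)$, so
\begin{equation*}
d(i,j) \;=\; d(i,p(i)) + d(p(i),p(j)) + d(p(j),j),
\end{equation*}
where the middle term is $0$ when $p(i)=p(j)$. Exponentiating and using Definition \ref{def:tsg} yields $\lambda_{ij} = \theta_i\,\theta_j\,B_{p(i),p(j)}$, which matches the DCSBM form of Definition \ref{def:dcsbm}.

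The main obstacle is verifying that $B$ is full rank. My plan is to realize $B$ as the covariance matrix of a Gaussian process on $\T$ restricted to $V_z$, anticipating the $\T$-Graphical Blockmodel construction of Section \ref{sec:tgb}. Root $\T$ at an arbitrary node and recursively define a mean-zero, unit-variance Gaussian process $\{X_u\}_{u\in V}$ by propagating along each tree edge $(u,v)$ of weight $w_{uv}\ge 0$ (non-negative on internal edges by Remark \ref{rmk:negative_edge_weights}) via $X_v = e^{-w_{uv}} X_u + \sqrt{1 - e^{-2 w_{uv}}}\, Z_v$ with $Z_v \sim N(0,1)$ independent. A short induction along paths, combined with the Markov structure of this process on $\T$, yields $\mathrm{Cov}(X_u, X_v) = \exp(-d(u,v))$ for all $u,v \in V$. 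Hence $B$ is exactly the covariance of $(X_u)_{u \in V_z}$, so $B \succeq 0$. Strict positive definiteness, and thus full rank, follows because the identifiability assumption in Remark \ref{rmk:identifiability} forces every internal edge weight to be nonzero, so the product structure of the tree covariance yields a strictly positive determinant.

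An alternative, purely analytic route to full rank would invoke Schoenberg's theorem together with the classical fact that any tree metric is of negative type (via the cut-based representation $d(u,v) = \sum_{e \in E} w_e \mathds{1}\{e \text{ separates } u \text{ from } v\}$), giving positive definiteness of $\exp(-t\,d)$ for every $t > 0$. Either route produces a DCSBM representation with $z$, $\theta$, and $B$ as defined above, completing the identification.
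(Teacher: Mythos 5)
Your construction of the DCSBM parameters ($z(i)=p(i)$, $\theta_i=\exp(-d(i,p(i)))$, $B_{uv}=\exp(-d(u,v))$) and the additivity-based factorization $\lambda_{ij}=\theta_i B_{z(i),z(j)}\theta_j$ is exactly the paper's argument in Section \ref{sec:tsg->sbm}. Where you genuinely diverge is the full-rank step. The paper routes this through Proposition \ref{prop:dist->cov}: it first proves positive semidefiniteness of $\Sigma_{ij}=\exp(-d(i,j))$ by an induction that shrinks internal edges (Lemma \ref{lemma:semi_positive}), and then rules out degeneracy by a contradiction argument using the local Markov property and a Sherman--Morrison computation; $B$ is then a principal submatrix of this $\Sigma$. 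You instead realize the covariance directly via an autoregressive Gaussian process on the tree, $X_v=e^{-w_{uv}}X_u+\sqrt{1-e^{-2w_{uv}}}Z_v$, so that multiplicativity of correlations along paths gives $\mathrm{Cov}(X_u,X_v)=\exp(-d(u,v))$, and nondegeneracy comes for free from the strictly positive innovation variances (since internal edge weights are strictly positive by Remarks \ref{rmk:negative_edge_weights} and \ref{rmk:identifiability}). This is a cleaner, more constructive proof of positive definiteness than the paper's two-step argument, and it also makes the link to the $\T$-Graphical Blockmodel of Section \ref{sec:tgb} transparent. Two small points to tighten: run the construction only on the subtree spanned by the internal nodes (leaf edges may have negative weights, e.g.\ Remark \ref{rmk:multiply_c}, which would make $\sqrt{1-e^{-2w}}$ undefined, and they are irrelevant to $B$ anyway), and state explicitly that $B$, being the covariance of the subvector $(X_u)_{u\in V_z}$, i.e.\ a principal submatrix of a strictly positive definite matrix, is itself positive definite and hence full rank; the appeal to a ``strictly positive determinant'' via the product formula is fine but should be attached to the full internal-node covariance before restricting. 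Your Schoenberg/negative-type alternative is also standard, though strictness there needs its own justification.
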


The key component of the proof involves construction; this construction gives essential insight into the $\T$-Stochastic Graph. Given $\T$, construct the parameters $z, \theta,$ and $B$ for a DCSBM as:
\begin{enumerate}
\itemsep0.1em 
    \item Consider $V_z\subset V$, the set of nodes in $\T$ that are connected to a leaf node. Each node in $V_z$ will correspond to a block in the DCSBM.
    \item For each node  $u \in V_z$, find all leaf nodes $i$ that connect to $u$ and define $z(i) =u$; they belong to this block. Define $\theta_i = \exp(-d(i,u)) = \exp(-w_{iu})$. If $\theta_i$ is large, then $i$ is close to $u$.
    \item Let $k = |V_z|$ be the number of nodes in $V_z$.  Define $B \in \R_+^{k \times k}$ such that for each pair of nodes $u,v \in V_z$, $B_{uv} = \exp(-d(u,v))$.
\end{enumerate}
We wish to show that for each $i\neq j$, $\lambda_{ij}$ in the $\T$-Stochastic Graph equals $\lambda_{ij}^{DC}$ in the DCSBM. This is illustrated in the following sequence of equalities. A comprehensive proof also needs to show that $B$ is full rank, this is can be found in Appendix \ref{appendix:tsg->tgb}. 
\begin{eqnarray*}
\lambda_{ij} &=& 
\exp(-d(i, j)) = 
\exp\left(-\left(d(i, z(i)) + d(z(i) , z(j)) + d(z(j), j)\right)\right)\\
&=& {\overbrace {\textstyle \exp(-d(i, z(i)))}^{\mathclap{\theta_i}}} \quad {\overbrace {\textstyle \exp(-d(z(i) , z(j)))}^{\mathclap{B_{z(i), z(j)}}}} \quad {\overbrace {\textstyle \exp(-d(z(j), j))}^{\mathclap{\theta_j}}} \; = \; \theta_i B_{z(i), z(j)} \theta_j
= \lambda_{ij}^{DC}.
\end{eqnarray*}

The $\T$-Stochastic Graph in Figure \ref{fig:tsg_dcsbm_tree} provides an illustration of this process.  
On the left side of Figure \ref{fig:tsg_dcsbm_tree}, node $p(i)$ is connected to leaf node $i$, therefore $p(i)\in V_z$, and $z(i) = p(i)$ represents the block that node $i$ belongs to. %
The magnitude of $\theta_i$ is decided by the distance between $i$ and $p(i)$. If the distance $d(i, p(i))$ increases, then the degree correction parameter $\theta_i = \exp(-d(i, p(i)))$ decreases. In other words, if the distance between a leaf node and its parent increases, the expected number of connections that this leaf node will have in the sampled social network decreases. This matches our intuition about hierarchies. When leaf node $i$ is close to its neighbor node $p(i)$, it should be close to other leaf nodes in the tree and thus is more likely to be connected to them, leading to a higher expected degree. 
The same intuition holds for internal nodes. Since $B_{z(i)z(j)} = \exp \left(-d(p(i), p(j)) \right)$, the closer $p(i)$ and $p(j)$ are, the more likely that two nodes from these two blocks will form a connection.

While the above statement starts from a $\T$-Stochastic Graph and explains how there is an equivalent DCSBM, the reverse is not true in general. If one wishes to construct a DCSBM that is also a $\T$-Stochastic Graph, the connectivity matrix needs to satisfy certain constraints. Section \ref{sec:tgb} presents a new hierarchical model class to explain this.

\begin{figure}[htbp] %
   
   \centering
      \textbf{Every $\T$-Stochastic Graph is also a DCSBM, where the connectivity matrix $B$ is determined by the distance matrix of a subset of nodes in $\T_z$}
       \vspace{.1in}
     
   \includegraphics[width=6.5in]{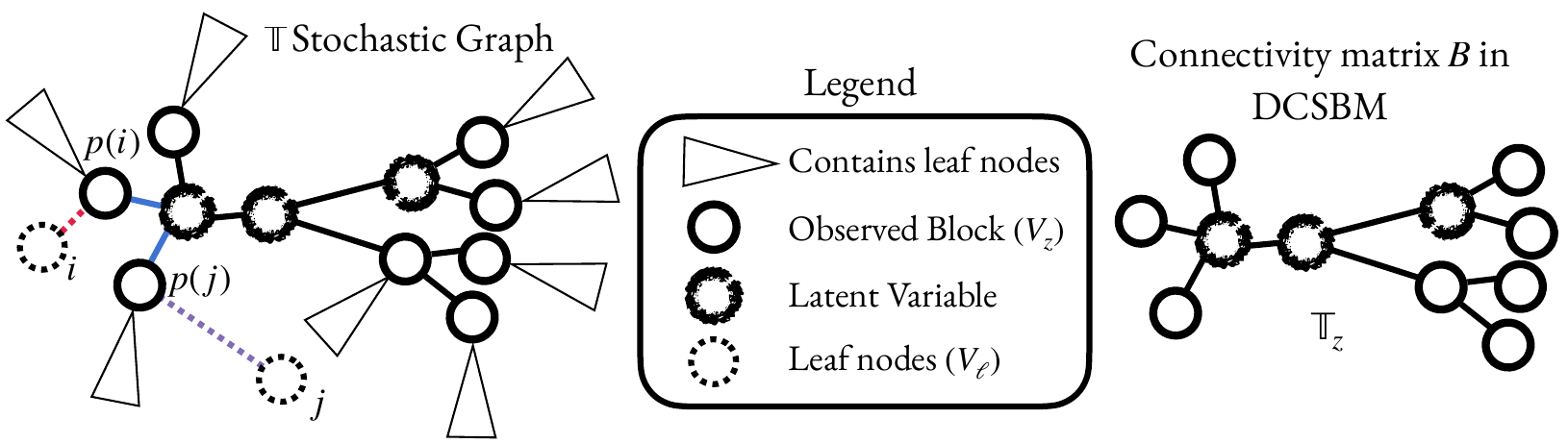} 
   \caption{On the left, each triangle contains several leaf nodes of $\T$; each leaf node corresponds to a node in the observed social network $A$; the dashed circles serve as examples of leaf nodes $i$ and $j$.  To find the $\T_z$, remove all leaf nodes from $\T$.  There are two types of nodes in $\T_z$ that we call ``observed blocks'' and ``latent variables''; latent variables are not connected to any leaf nodes of $\T$ and observed blocks are connected to leaf nodes of $\T$.  We call them ``observed blocks'' because every observed block node in $\T$ corresponds to a block in a DCSBM. The latent variables are not connected to leaf nodes and thus do not appear in $B$. 
}
   \label{fig:tsg_dcsbm_tree}
\end{figure}

\subsection{Using a Gaussian Graphical Model on $\T$ to parameterize DCSBMs that produce $\T$-Stochastic Graphs}\label{sec:tgb}

The last section shows that all $\T$-Stochastic Graphs are DCSBMs, but the converse is not true: many DCSBMs are not $\T$-Stochastic Graphs. To parameterize the DCSBMs that do generate $\T$-Stochastic Graphs, we make a connection to Gaussian Graphical Models.

\begin{definition}\label{def:ggm}
We say a random vector $W$ that follows a multivariate Gaussian distribution with mean vector $\mu\in \R^p$ and covariance matrix $\Sigma\in \R^{p\times p}$ comes from a Gaussian Graphical Model (GGM) on graph $\mathbb{G} = (V,E)$ with $V = \{1, \dots, p\}$ if 
\begin{equation}\label{eq:ggm_def}
(\Sigma^{-1})_{ij} = 0 \  \Longrightarrow \ (i,j) \not \in E
\end{equation}
for all pairs $i,j \in V$. Throughout this paper, we presume that $\mu = 0$, $\Sigma_{ii} = 1$ for all $i$, and $0<\vert \Sigma_{ij} \vert<1$ for all $i\neq j$.\footnote{The first two assumptions ($\mu= 0$ and $\Sigma_{ii} = 1$) are for simplicity and the equivalence between models can be easily extended to general cases, see Appendix \ref{appendix:covariance} for more discussions. The $0<\vert \Sigma_{ij} \vert<1$ assumption is for identifiability purposes, it makes sure two nodes connected by an edge are neither perfectly dependent nor independent. See \citep{choi2011learning, pearl1988probabilistic} for similar assumptions.}
\end{definition}

The $\T$-Graphical Blockmodel proposed below is a DCSBM with the connectivity matrix $B$ defined by a GGM on $\T$.

\begin{definition}\label{def:tgb}
The graph $A$ follows the $\T$-\textbf{Graphical Blockmodel} if $A$ is a DCSBM with block assignments $z(\cdot)$ and connectivity matrix $B$ such that
\vspace{-0.1in}
\begin{enumerate}
\itemsep0em 
    \item nodes in graph $A$ are indexed by nodes in $V_\ell$, the set of leaf nodes in $\T$, 
    \item rows and columns of $B$ are indexed by nodes in $V_z$, parents of leaf nodes in $\T$,
    \item node $i$ belongs to block $u$ if $u$ is the parent of $i$ in $\T$, i.e., $z(i) = u \Longleftrightarrow p(i) = u$,
    \item for any pair of nodes $u, v \in V_z$, 
\begin{equation}
    B_{uv} = \left|\Sigma_{uv}\right|, 
\end{equation}
where $\Sigma$ is the covariance matrix of a GGM on $\T$.
\end{enumerate}

\end{definition}

\begin{theorem}\label{thm:tsg_tgb}
A random graph is a $\T$-Graphical Blockmodel if and only if it is a $\T$-Stochastic Graph.
\end{theorem}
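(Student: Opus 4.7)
I will prove the biconditional by two-way construction. The bridge between the two models is the classical \emph{tree product property} for standardized GGMs on trees: if $W$ is a mean-zero GGM on $\T$ with unit variances, then for any pair of nodes $u,v\in V$,
\[|\Sigma_{uv}| \;=\; \prod_{(s,t)\in\pi(u,v)} |\Sigma_{st}|,\]
where $\pi(u,v)$ denotes the unique path from $u$ to $v$ in $\T$. Taking logarithms turns this product over paths into a \emph{sum} over edges, which is precisely the additive-distance structure used in Definition~\ref{def:tsg}. So the proof amounts to matching edge weights on $\T$ with $-\log|\Sigma_{st}|$ on the internal edges and with $-\log\theta_i$ on the leaf edges.

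\textbf{Forward direction ($\T$-Graphical Blockmodel $\Rightarrow$ $\T$-Stochastic Graph).} Given the $\T$-Graphical Blockmodel parameters $(z,\theta,B)$ arising from a GGM with covariance $\Sigma$, define edge weights on $\T$ by $w_{st}:=-\log|\Sigma_{st}|$ for every edge with both endpoints in $V\setminus V_\ell$ and $w_{ip(i)}:=-\log\theta_i$ for every leaf edge. By Definition~\ref{def:ggm} and Remark~\ref{rmk:identifiability} these are well-defined real numbers. The induced additive distance $d$ then satisfies
\begin{align*}
\lambda_{ij}^{DC}
&=\theta_i\theta_j\,B_{p(i),p(j)}
\;=\;\theta_i\theta_j\,|\Sigma_{p(i),p(j)}|\\
&=e^{-w_{ip(i)}}\,e^{-w_{jp(j)}}\prod_{(s,t)\in\pi(p(i),p(j))}e^{-w_{st}}
\;=\;e^{-d(i,j)},
\end{align*}
which is the $\T$-Stochastic Graph form of Equation~\eqref{eq:lambdadef}.

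\textbf{Reverse direction ($\T$-Stochastic Graph $\Rightarrow$ $\T$-Graphical Blockmodel).} Given a $\T$-Stochastic Graph with edge weights $\{w_{st}\}$, Theorem~\ref{thm:tsg->dcsbm} already supplies the DCSBM parameters $\theta_i=e^{-w_{ip(i)}}$, $z(i)=p(i)$, and $B_{uv}=e^{-d(u,v)}$ on $V_z$. To realize $B$ as $|\Sigma|$ for a GGM on $\T$, I construct the Gaussian vector recursively: root $\T$ at any convenient node, draw the root from $\mathcal{N}(0,1)$, and for each child $t$ of a parent $s$, set $W_t=\rho_{st}W_s+\sqrt{1-\rho_{st}^2}\,\epsilon_t$ with $\epsilon_t\sim\mathcal{N}(0,1)$ independent and $\rho_{st}:=e^{-w_{st}}\in(0,1)$ for internal edges (any $\rho_{iu}\in(0,1)$ can be chosen at leaf edges, since those entries do not enter $B$). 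This Gaussian vector is Markov with respect to $\T$, hence a GGM on $\T$, and a short induction along the path $\pi(u,v)$ gives $\Sigma_{uv}=\prod_{(s,t)\in\pi(u,v)}\rho_{st}=e^{-d(u,v)}=B_{uv}$ for $u,v\in V_z$, as required.

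\textbf{Main obstacle.} The essential content is the tree product property together with verifying that the recursive construction yields a bona fide GGM on $\T$ satisfying the identifiability constraints in Definition~\ref{def:ggm}; using Remark~\ref{rmk:identifiability} and the convention on internal edge weights in Remark~\ref{rmk:negative_edge_weights} keeps each $\rho_{st}\in(0,1)$ strictly, so that $0<|\Sigma_{uv}|<1$ off-diagonal. Full rank of $B$, which is what lets us call the resulting model a DCSBM, is the one step that requires care beyond algebraic bookkeeping; it is already established in Appendix~\ref{appendix:tsg->tgb}. Everything else is exponential/logarithm bookkeeping that trades sums of edge weights for products of correlations.
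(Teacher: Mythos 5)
Your forward direction is essentially the paper's argument: define $w_{st}=-\log|\Sigma_{st}|$ on internal edges and $w_{ip(i)}=-\log\theta_i$ on leaf edges, and invoke the path-multiplicativity of correlations for a GGM on a tree (the paper's Proposition~\ref{prop:cov->dist}); note that, just as in the paper, this silently permits negative leaf-edge weights when $\theta_i>1$, which the paper excuses via Remark~\ref{appendix_rmk:multiply_c}. Your reverse direction, however, is a genuinely different route. The paper defines $\Sigma_{ij}=\exp(-d(i,j))$ on \emph{all} node pairs and then must prove this matrix is a legitimate GGM covariance — positive semidefiniteness by induction on the number of internal nodes, the local Markov property, and invertibility by a separate contradiction argument (Proposition~\ref{prop:dist->cov} and its lemmas). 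You instead build the Gaussian vector explicitly by rooting $\T$ and setting $W_t=\rho_{st}W_s+\sqrt{1-\rho_{st}^2}\,\epsilon_t$ with $\rho_{st}=e^{-w_{st}}\in(0,1)$; this automatically produces a valid, strictly positive-definite covariance (each conditional variance $1-\rho_{st}^2$ is positive), unit variances, tree-Markov structure, and path-multiplicative correlations, so $\Sigma_{uv}=e^{-d(u,v)}=B_{uv}$ on $V_z$ without any of the paper's PSD machinery. Two small remarks: you defer full rank of $B$ to Appendix~\ref{appendix:tsg->tgb}, but your own construction already delivers it, since $B$ is a principal submatrix of the positive-definite covariance you built; and your freedom to choose arbitrary leaf-edge correlations is legitimate because Definition~\ref{def:tgb} only constrains $B$ on $V_z$, which also sidesteps the issue the paper faces when leaf-edge weights are negative. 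The trade-off is that the paper's heavier Proposition~\ref{prop:dist->cov} is reused elsewhere (e.g., for the RDPG equivalence and to justify the log transform in \texttt{TSGdist}), whereas your construction proves only what this theorem needs, more economically.
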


The proof of Theorem \ref{thm:tsg_tgb} can be found in Appendix \ref{appendix:tsg_tgb}. The following propositions shed light on this perhaps unexpected equivalence in Theorem \ref{thm:tsg_tgb}. Specifically, Proposition \ref{prop:cov->dist} provides a way to construct an additive distance $d(\cdot, \cdot)$ from a GGM on $\T$ such that the covariance of the GGM and the distance 
can be connected by an exponential transformation. Proposition \ref{prop:dist->cov} explains the other direction: given any distance $d(\cdot, \cdot)$ on tree $\T$, an exponential transformation can build the covariance matrix of some GGM on $\T$. The proof of Proposition \ref{prop:cov->dist} and \ref{prop:dist->cov} can be found in Appendix \ref{appendix:cov->dist} and \ref{appendix:dist->cov}.

\begin{prop}\label{prop:cov->dist}
(\cite{erdHos1999few})
If $W\sim N(0,\Sigma)$ comes from a GGM on tree $\T = (V, E)$, for any two nodes $i, j \in V$, define
\begin{equation}\label{eq:ggm1}
 d(i,j) = -\log(|\Sigma_{ij}|),
\end{equation}
then $d(\cdot, \cdot)$ is an additive distance on $\T$. That is, there exists a set of edge weights $w_{uv}$ such that $d(i, j)$ is the summation of $w_{uv}$ on the shortest path between node $i$ and $j$.
\end{prop}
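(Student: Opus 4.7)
The plan is to build the additive distance by defining the edge weight $w_{uv} = -\log|\Sigma_{uv}|$ for each edge $(u,v) \in E$, and then to show that $d(i,j) = -\log|\Sigma_{ij}|$ coincides with the sum of these weights along the unique tree path from $i$ to $j$. Taking logs, this reduces to a single multiplicative identity on the covariance entries: for any $i, j \in V$ with tree path $i = u_0, u_1, \ldots, u_k = j$,
\begin{equation*}
\Sigma_{ij} = \prod_{l=0}^{k-1} \Sigma_{u_l, u_{l+1}}.
\end{equation*}
The identifiability assumption $0 < |\Sigma_{uv}| < 1$ from Definition \ref{def:ggm} then guarantees that each $w_{uv}$ is finite and strictly positive, so $d$ behaves as a bona fide distance.

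I would prove the multiplicative identity by induction on $k$. The base case $k=1$ is immediate. For the inductive step, observe that any intermediate node $m$ on the tree path from $i$ to $j$ is a graph separator in $\T$: deleting $m$ disconnects $i$ from $j$. By the standard correspondence between graph separation and conditional independence for GGMs, this yields $W_i \perp W_j \mid W_m$. For zero-mean jointly Gaussian variables with $\Sigma_{mm}=1$, the vanishing conditional covariance formula $\Sigma_{ij} - \Sigma_{im}\Sigma_{mm}^{-1}\Sigma_{mj} = 0$ becomes $\Sigma_{ij} = \Sigma_{im}\Sigma_{mj}$. Applying this at $m = u_{k-1}$ and invoking the inductive hypothesis on the sub-path $u_0, \ldots, u_{k-1}$ closes the induction. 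Taking $-\log|\cdot|$ on both sides of the product formula gives exactly $d(i,j) = \sum_{l=0}^{k-1} w_{u_l, u_{l+1}}$, which is the claimed additive structure.

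The main obstacle is making the separation-to-conditional-independence step fully rigorous in a self-contained way. If the Hammersley--Clifford / global Markov property for Gaussian tree models is not already invoked elsewhere in the paper, I would either cite a standard reference (e.g.\ Lauritzen's monograph on graphical models) or give a short direct argument: restrict the precision matrix $\Sigma^{-1}$ to any three consecutive nodes $u_{l-1}, u_l, u_{l+1}$ on the path, use the GGM definition $(\Sigma^{-1})_{u_{l-1},u_{l+1}} = 0$ guaranteed by the tree (since these two nodes are non-adjacent in $\T$), and eliminate $u_l$ via the Schur complement to conclude that the marginal correlation factors. Iterating this elimination along the path gives the multiplicative identity without appealing to any external theorem.
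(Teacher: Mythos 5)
Your proposal is correct in substance, but it takes a genuinely different route from the paper: the paper does not prove the additivity from first principles at all, it simply invokes Lemma \ref{lemma:additive} (Proposition 3 of \citep{choi2011learning}, originally \citep{erdHos1999few}), which states that in a tree GGM the negative log of the pairwise \emph{correlations} is additive along tree paths, and then notes that under the unit-variance normalization of Definition \ref{def:ggm} correlations coincide with covariances. You instead give a self-contained argument: set $w_{uv}=-\log|\Sigma_{uv}|$ on edges, observe that any node $m$ on the path from $i$ to $j$ separates them in $\T$, invoke the global Markov property to get $W_i \perp W_j \mid W_m$, and use the Gaussian conditional-covariance identity $\Sigma_{ij}=\Sigma_{im}\Sigma_{mm}^{-1}\Sigma_{mj}=\Sigma_{im}\Sigma_{mj}$ to run an induction along the path; positivity of the weights then follows from $0<|\Sigma_{uv}|<1$. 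This is exactly the content of the cited lemma, so your approach buys self-containedness at the cost of having to supply (or cite) the pairwise-to-global Markov step, while the paper's approach is shorter but leans entirely on the external reference. Note also that you are implicitly using the standard GGM convention (non-edge $\Rightarrow$ zero precision entry), which is the intended reading even though Equation \eqref{eq:ggm_def} is written with the implication in the other direction.

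One caveat on your fallback "short direct argument": restricting $\Sigma^{-1}$ to three consecutive path nodes and reading off the zero at $(u_{l-1},u_{l+1})$ does not immediately give the factorization, because the precision matrix of the \emph{marginal} of those three nodes is a Schur complement of the full precision matrix, not its sub-block, and marginalization can create fill-in. To make an elimination argument rigorous you would need to track fill-in along the tree (eliminating the off-path subtrees only creates fill-in incident to their single attachment node on the path, so the marginal over the path nodes remains Markov on the path), or simply cite the equivalence of pairwise and global Markov properties for positive Gaussian densities, as you propose in your primary route. With that citation, your proof is complete.
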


\begin{prop}\label{prop:dist->cov}
Given any tree $\T = (V, E)$ and an additive distance $d(\cdot, \cdot)$, define $\Sigma$ with 
\[\Sigma_{ij} = \exp(-d(i, j))\]
for any two nodes $i, j \in V$,
then $\Sigma$ is positive definite and satisfies Equation \eqref{eq:ggm_def}.
\end{prop}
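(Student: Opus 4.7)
My plan is to realize $\Sigma$ as the covariance matrix of a nondegenerate Gaussian vector constructed by running a Markov recursion along $\T$, and then read off both positive definiteness and the graphical Markov structure from the factored log-density. Root $\T$ at an arbitrary node $r \in V$; for each non-root $v$, let $p(v)$ be its parent, let $w_v := w_{p(v)v} > 0$ be the weight on the connecting edge, and set $a_v := \exp(-w_v) \in (0,1)$. Define $W = (W_v)_{v \in V}$ by $W_r \sim N(0,1)$ and $W_v = a_v W_{p(v)} + \epsilon_v$, with $\epsilon_v \sim N(0,\,1 - a_v^2)$ independent across non-root $v$.

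A straightforward induction on depth gives $\mathrm{Var}(W_v) = 1$ for every $v$. For any $i \neq j$, conditioning on the value of $W$ at the lowest common ancestor of $i$ and $j$ decouples the two descending subpaths, each of which is a simple autoregressive chain whose covariance is a product of $a_v$'s; iterating the tower rule therefore yields
\[\mathrm{Cov}(W_i, W_j) \;=\; \prod_{(u,u')\in \mathrm{path}(i,j)} a_{u'} \;=\; \exp\!\Bigl(-\!\!\sum_{(u,u')\in \mathrm{path}(i,j)} w_{uu'}\Bigr) \;=\; \exp(-d(i,j)),\]
matching the definition of $\Sigma$. Writing $\epsilon_v = \sqrt{1 - a_v^2}\,\xi_v$ with $\xi_v$ i.i.d.\ $N(0,1)$, the linear map $(\xi_r,\{\xi_v\}_{v\neq r}) \mapsto W$ is lower triangular in topological order with strictly positive diagonal entries $1$ and $\sqrt{1 - a_v^2}$; hence invertible, so $W$ has a nondegenerate distribution and $\Sigma$ is positive definite.

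For the graphical Markov structure, the factored density $p(W) = p(W_r)\prod_{v\neq r} p(W_v\mid W_{p(v)})$ gives
\[-2\log p(W) \;=\; W_r^2 \;+\; \sum_{v\neq r} \frac{(W_v - a_v W_{p(v)})^2}{1 - a_v^2} \;+\; \text{const}.\]
Expanding each squared term, the only cross term involving distinct $W_u W_v$ is $-2 a_v W_v W_{p(v)}/(1 - a_v^2)$, which arises exactly once per tree edge (since each non-root $v$ has a unique parent and $\T$ has no cycles). Identifying the quadratic form with $W^\top \Sigma^{-1} W$ yields $(\Sigma^{-1})_{p(v)\, v} = -a_v/(1 - a_v^2) \neq 0$ for every edge and $(\Sigma^{-1})_{uv} = 0$ for every non-adjacent pair $u \neq v$. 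In particular $(\Sigma^{-1})_{ij} = 0 \Rightarrow (i,j) \notin E$, which is the requirement of Definition \ref{def:ggm}.

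The main obstacle I anticipate is the covariance identity: the $i$-to-$j$ path in $\T$ generally first ascends from $i$ to the lowest common ancestor and then descends to $j$, so the tower-rule calculation is not a single chain product. The cleanest route is to condition first on $W$ at $\mathrm{LCA}(i,j)$, after which the two descending chains become conditionally independent and each contributes a product of its $a_v$'s by a one-step induction along the chain; the global identity $\prod_e a_e = \exp(-\sum_e w_e) = \exp(-d(i,j))$ then follows. The remaining steps---unit variance, positive definiteness, and reading $\Sigma^{-1}$ off the factored log-density---are essentially routine consequences of the rooted Markov construction.
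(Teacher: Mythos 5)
Your proof is correct, but it takes a genuinely different route from the paper's. You construct the Gaussian explicitly as a rooted tree-autoregressive process $W_v = a_v W_{p(v)} + \epsilon_v$, which gives positive definiteness immediately from the triangular (Cholesky-like) map $\xi \mapsto W$ with positive diagonal, and gives the precision pattern by reading the factored log-density: each parent--child pair contributes exactly one cross term $-a_v/(1-a_v^2)\neq 0$, and no cross term appears for non-adjacent pairs, which indeed yields $(\Sigma^{-1})_{ij}=0 \Rightarrow (i,j)\notin E$. The paper instead works non-constructively on the matrix side: it first proves positive \emph{semi}definiteness of the leaf-to-leaf matrix by induction on the number of internal nodes (star-shaped base case written as $\theta\theta^T + D$, plus an edge-contraction step), extends to all of $V$ by attaching zero-weight twigs so that every node becomes a leaf, verifies the local Markov property by a conditional-covariance computation (Lemma \ref{lemma:additive-reverse}), and then establishes invertibility by a separate contradiction argument using the Sherman--Morrison formula. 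Your construction is shorter and more informative (it produces the exact nonzero precision entries on edges, and positive definiteness and Markovness come out of the same factorization), while the paper's inductive PSD lemma has the side benefit of tolerating zero edge weights, which it reuses for the twig-attachment device. One point worth making explicit in your write-up: you need $a_v\in(0,1)$, i.e.\ strictly positive edge weights, so that $1-a_v^2>0$; this is consistent with the paper's standing assumptions (nonzero weights by Remark \ref{rmk:identifiability} and non-negative weights per Remark \ref{rmk:negative_edge_weights}), but the proposition's phrasing ``any additive distance'' hides it, and the statement is simply false for, say, negative leaf-edge weights, so the hypothesis should be stated rather than silently assumed.
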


\subsection{$\T$-Graphical RDPG} \label{sec:tg-rdpg}

An alternative formulation of the $\T$-Stochastic Graph is the following parameterization of the Random Dot Product Graph (RDPG) with latent positions generated by a GGM on $\T$. The previous model ($\T$-Graphical Blockmodel) depends on the population covariance matrix from a GGM, and uses nodes in $V_z$ to define the connectivity matrix $B$; this section presents a model that employs the sample covariance matrix and only takes samples from $V_\ell$, the set of leaf nodes.

In the RDPG, each node is assigned a latent position vector $x_i \in \R^q$.
 For notation simplicity, we re-parameterize the original RDPG model in \citep{rdpg} as a random graph with
$$\lambda_{ij}^{RDPG} = q^{-1} |\langle x_i, x_j \rangle|.$$
For convenience, this scales by the latent dimension $q$. The absolute value ensures that $\lambda_{ij}\ge 0$.

Typically, in the RDPG, latent position vectors $x_1, \dots, x_n \in \R^q$ are considered to be independent and identically distributed.  However, in order to generate hierarchical dependence among the $n$ nodes in the graph, we make these random vectors dependent, and their dependence is specified by a GGM on $\T$. In the $\T$-Graphical RDPG proposed below, the vectors $x_1, \dots, x_n$ are \textit{dependent}, but the elements within each $x_i$ are \textit{independent}.

\begin{definition}
Given a tree $\T = (V, E)$ with $n$ leaf nodes.
Consider a Gaussian distribution $\mathcal{P}_X(\cdot)$ that is a GGM on $\T$, and let $\mathcal{P}_{X_\ell}(\cdot)$ be the marginal distribution of all leaf nodes. Generate $q$ independent length $n$ vectors \[X_1, \cdots, X_q \overset{i.i.d.}{\sim} \mathcal{P}_{X_\ell}(\cdot),\]
and place them into columns of $X\in \R^{n\times q}$. Then the \textbf{$\T$-Graphical RDPG} is an RDPG, with $x_i$ defined as the $i$th row of $X$. 

\end{definition} 

As the number of latent dimensions $q \rightarrow \infty$, this model converges, in a simple sense defined below, to a $\T$-Stochastic Graph. Similarly, for any $\T$-Stochastic Graph, there exists a $\T$-Graphical RDPG that converges to it as $q \rightarrow \infty$. 

\begin{theorem}\label{thm:tsg_tg-rdpg}
Given any $\T$-Graphical RDPG, there exists a $\T$-Stochastic Graph with $\lambda_{ij}\geq 0$ such that for any pair of nodes $i$ and $j$, 
\[\lambda_{ij}^{RDPG} \xrightarrow{q\rightarrow\infty} \lambda_{ij}.\]
And vice versa, for any $\T$-Stochastic Graph, there exist a $\T$-Graphical RDPG such that $\lambda_{ij}^{RDPG} \rightarrow \lambda_{ij}$ as the latent dimension $q\rightarrow \infty$.
\end{theorem}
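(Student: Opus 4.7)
The plan is to reduce both directions of the theorem to the identification $q^{-1}\langle x_i, x_j\rangle \to \Sigma_{\ell,ij}$ coming from the strong law of large numbers, then invoke Propositions \ref{prop:cov->dist} and \ref{prop:dist->cov} to translate between the GGM covariance and the additive distance $d(\cdot,\cdot)$ on $\T$. The mode of convergence I will use is almost sure (elementwise), which implies the stated pointwise convergence $\lambda_{ij}^{RDPG}\to \lambda_{ij}$.

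For the forward direction, start with a $\T$-Graphical RDPG whose column vectors $X_1,\dots,X_q \in \R^n$ are i.i.d.\ from $\mathcal{P}_{X_\ell}$, the marginal of a GGM on $\T$ restricted to the leaves. Let $\Sigma_\ell$ denote this marginal covariance. For fixed $i,j\in V_\ell$, the random variables $\{X_{ik}X_{jk}\}_{k=1}^{q}$ are i.i.d.\ with finite mean $\Sigma_{\ell,ij}$ (finite because $X_{ik}$ is Gaussian), so the SLLN gives $q^{-1}\sum_{k=1}^{q} X_{ik}X_{jk}\to \Sigma_{\ell,ij}$ almost surely. Continuity of $|\cdot|$ yields $\lambda_{ij}^{RDPG}=q^{-1}|\langle x_i,x_j\rangle|\to |\Sigma_{\ell,ij}|$. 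Because $\Sigma_\ell$ is itself the covariance of a GGM on the subtree connecting the leaves (the marginal of a tree-structured Gaussian on the leaf set still satisfies the tree's additive-distance property, since the definition in Proposition \ref{prop:cov->dist} uses only the covariances at the leaves), Proposition \ref{prop:cov->dist} implies that $d(i,j):=-\log|\Sigma_{\ell,ij}|$ is an additive distance on $\T$. Thus the limit is $\exp(-d(i,j))$, which is exactly $\lambda_{ij}$ for a $\T$-Stochastic Graph; setting this as the rate completes the direction.

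For the reverse direction, start with a $\T$-Stochastic Graph whose rates are $\lambda_{ij}=\exp(-d(i,j))$ for some additive distance $d$ on $\T$. Apply Proposition \ref{prop:dist->cov} to build a positive-definite $\Sigma$ on $V$ with $\Sigma_{ij}=\exp(-d(i,j))$ that is itself the covariance matrix of a GGM on $\T$, and let $\mathcal{P}_X=N(0,\Sigma)$; its leaf-marginal $\mathcal{P}_{X_\ell}$ has covariance $\Sigma_\ell$ with $\Sigma_{\ell,ij}=\exp(-d(i,j))>0$ at leaves $i\ne j$. Sampling $X_1,\dots,X_q\overset{i.i.d.}{\sim}\mathcal{P}_{X_\ell}$ and placing them into columns of $X$ defines a $\T$-Graphical RDPG. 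By the same SLLN argument, $\lambda_{ij}^{RDPG}\to |\Sigma_{\ell,ij}|=\exp(-d(i,j))=\lambda_{ij}$.

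The only substantive point that requires care—and which I expect to be the main obstacle—is the claim that the leaf-marginal of a tree-structured Gaussian is still compatible with the additive distance on $\T$, so that $-\log|\Sigma_{\ell,ij}|$ remains the tree distance $d(i,j)$ on $\T$ (not some reduced tree on the leaves alone). This is the content behind invoking Proposition \ref{prop:cov->dist}, but the proposition is stated for entries of the full $\Sigma$; I will verify it by noting that marginalizing a Gaussian does not change the pairwise covariances being marginalized over, so $\Sigma_{\ell,ij}=\Sigma_{ij}$ for leaf indices $i,j$, and the additive-distance identity $-\log|\Sigma_{ij}|=d(i,j)$ therefore transfers intact. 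A secondary minor point is that the convergence is elementwise and almost sure (not uniform in $i,j$), which is exactly what the theorem statement asks for.
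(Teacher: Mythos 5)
Your proposal is correct and follows essentially the same route as the paper: the strong law of large numbers gives $q^{-1}\langle x_i,x_j\rangle \to \Sigma_{ij}$, and Propositions \ref{prop:cov->dist} and \ref{prop:dist->cov} convert between the GGM covariance and the additive distance, so the limit is $\exp(-d(i,j))=\lambda_{ij}$ in both directions. Your extra observation that the leaf-marginal covariance coincides with the full $\Sigma$ at leaf indices is a nice explicit justification of a step the paper leaves implicit, but it does not change the argument.
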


Given Proposition \ref{prop:cov->dist} and \ref{prop:dist->cov}, the equivalences in Theorems \ref{thm:tsg_tg-rdpg} follow from the law of large numbers; as $q \rightarrow \infty$,  
$$\lambda_{ij}^{RDPG} = q^{-1} |\langle x_i, x_j \rangle| \rightarrow |\Sigma_{i,j}| = \exp(- d(i,j)) = \lambda_{ij}.$$

\subsection{$\T_r$-Overlapping Blockmodel}\label{sec:osb}

In a DCSBM, every node $i$ belongs to exactly one block. In an Overlapping SBM, node $i$ is allowed to belong to multiple blocks. In this way, ``clusters'' overlap with each other and that is where the name ``overlapping'' comes from. To distinguish from the DCSBM, we utilize the symbol $K$ to indicate the number of blocks and utilize a binary vector of length $K$, denoted by $Z_i \in \{0, 1\}^K$, to signify the block membership of node $i$.

\begin{definition}
The \textbf{Overlapping Stochastic Blockmodel} (OSBM) \footnote{The original paper on the overlapping Stochastic Blockmodel is not exactly this definition here because it includes a logistic link function $\pr(A_{ij} = 1) = logit(Z_iBZ_j^T)$, and we further include degree corrected parameters $\theta_i$.} is a random graph with
\[\lambda_{ij}^{OSBM} = \theta_i\theta_j Z_i^T BZ_j\]
for node specific degree parameters $\theta_i \in \R_+$, block assignments $Z_i \in \{0, 1\}^K$, and connectivity matrix $B\in \R_+^{K\times K}$ that is full rank.
\end{definition}

The definition of $\T_r$-Overlapping Blockmodel below uses ``rooted trees'' $\T_r$. The topology structure of $\T_r$ and $\T$ are the same, except that $\T_r$ picks some internal node $r$ as the root. In a rooted tree $\T_r$, we say node $u$ is an \textbf{ancestor} of $i$ if $u$ lies on the shortest path between node $i$ and root $r$, denote the set of ancestor nodes of $i$ as $\text{anc}(i)$. The $\T_r$-Overlapping Blockmodel is an OSBM with a hierarchical structure encoded in the block membership: every leaf node belongs to all of its ancestor blocks.

\begin{definition}
The graph $A$ follows the $\T_r$-\textbf{Overlapping Blockmodel} if $A$ is an OSBM such that 
\vspace{-0.06in}
\begin{enumerate}
\itemsep0em 
    \item nodes in graph $A$ are indexed by leaf nodes in $\T_r$,
    \item blocks in the OSBM are indexed by all internal nodes in $\T_r$,
    \item nodes $i$ belongs to block $u$ if internal node $u$ is an ancestor of leaf node $i$, i.e., \[Z_{iu} = 1 \Longleftrightarrow u\in \text{anc}(i).\]
\end{enumerate}
\end{definition}

The relationship between $\T$-Stochastic Graphs and $\T_r$-Overlapping Blockmodels are presented in Theorem \ref{thm:tsg->tob} and \ref{thm:tob->tsg} below, with proofs contained in Appendix \ref{appendix:tsg->tob} and \ref{appendix:tob->tsg}, accordingly.

\begin{theorem}\label{thm:tsg->tob}
Every $\T$-Stochastic Graph is a $\T_r$-Overlapping Blockmodel, where $r$ is any internal node in $\T$.
\end{theorem}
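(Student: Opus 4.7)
The plan is to start from an arbitrary $\T$-Stochastic Graph with $\lambda_{ij} = \exp(-d(i,j))$, fix any internal node $r$ as the root, and explicitly construct OSBM parameters $\theta_i$, $Z_i$, and $B$ so that $A$ is realized as a $\T_r$-Overlapping Blockmodel. The identity that drives everything is the tree-path decomposition
\[
d(i,j) \;=\; d(i, r) + d(j, r) - 2\, d(\mathrm{lca}(i,j), r),
\]
valid for any two leaves $i,j$, where $\mathrm{lca}(i,j)$ denotes their lowest common ancestor in $\T_r$. This follows from additivity of $d$ because the unique $i$-to-$j$ path in $\T_r$ goes through $\mathrm{lca}(i,j)$ and then shares the remaining segment up to $r$ with both the $i$-to-$r$ and $j$-to-$r$ paths. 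Setting $\theta_i := \exp(-d(i, r))$ gives $\lambda_{ij} = \theta_i \theta_j \exp(2\, d(\mathrm{lca}(i,j), r))$, so the task reduces to expressing the last factor as $Z_i^\top B Z_j$.

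Since the $\T_r$-Overlapping Blockmodel forces $Z_{iu} = 1$ iff $u \in \mathrm{anc}(i)$, I would take $B$ to be diagonal, so that $Z_i^\top B Z_j = \sum_{u \in \mathrm{anc}(i) \cap \mathrm{anc}(j)} B_{uu}$. The set $\mathrm{anc}(i) \cap \mathrm{anc}(j)$ is precisely the chain of internal nodes from $r$ down to $\mathrm{lca}(i,j)$; writing this chain as $r = v_0, v_1, \ldots, v_m = \mathrm{lca}(i,j)$, the sum telescopes provided the diagonal entries are consecutive differences along root-to-node chains. Concretely I would set $B_{rr} = 1$ and, for every other internal node $u$ with parent $p_r(u)$ in $\T_r$,
\[
B_{uu} \;=\; \exp(2\, d(u, r)) \;-\; \exp(2\, d(p_r(u), r)).
\]
Then $\sum_{k=0}^{m} B_{v_k v_k} = 1 + \sum_{k=1}^{m} \bigl[\exp(2 d(v_k, r)) - \exp(2 d(v_{k-1}, r))\bigr] = \exp(2\, d(\mathrm{lca}(i,j), r))$, which combined with the choice of $\theta_i$ yields $\lambda_{ij} = \theta_i \theta_j Z_i^\top B Z_j$, exactly the form of $\lambda_{ij}^{OSBM}$.

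It remains to check the positivity and rank conditions demanded by the OSBM definition. Since internal edge weights are non-negative (Remark \ref{rmk:negative_edge_weights}) and non-zero (Remark \ref{rmk:identifiability}), $d(u, r) > d(p_r(u), r)$ for every non-root internal $u$, so every $B_{uu}$ is strictly positive; a positive-diagonal matrix is automatically full rank, and the $\theta_i$ are manifestly positive. I anticipate the only real snag is bookkeeping: articulating cleanly that $\mathrm{anc}(i) \cap \mathrm{anc}(j)$ coincides with the root-to-LCA chain, and verifying the degenerate case $\mathrm{lca}(i,j) = r$ (where the chain is $\{r\}$ and the identity reduces to $B_{rr} = 1 = \exp(0)$). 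Nothing in the construction uses any special property of $r$, which is consistent with the claim that $r$ may be chosen as any internal node of $\T$.
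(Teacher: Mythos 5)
Your construction is exactly the one the paper uses: the paper sets $\theta_u = \exp(-d(u,r))$ for all non-root nodes and takes $B$ diagonal with $B_{rr}=1$ and $B_{uu} = 1/\theta_u^2 - 1/\theta_{p(u)}^2 = \exp(2d(u,r)) - \exp(2d(p(u),r))$, then telescopes the sum over the root-to-$(i\wedge j)$ chain just as you do, with the same positivity argument from positive internal edge weights. Your proof is correct and essentially identical to the paper's, differing only in notation.
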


\begin{theorem}\label{thm:tob->tsg}
Any $\T_r$-Overlapping Blockmodel with diagonal connectivity matrix $B$ is a $\T$-Stochastic Graph, where $\T$ is the unrooted $\T_r$.
\end{theorem}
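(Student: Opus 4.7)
The plan is to expand the OSBM rate $\lambda_{ij}^{OSBM}$, show it depends only on $\theta_i$, $\theta_j$, and the lowest common ancestor of $i$ and $j$ in $\T_r$, and then realize its negative log as an additive path sum on the unrooted tree $\T$ by writing down explicit edge weights.

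First, since $B$ is diagonal write $b_u = B_{uu}$; then
\[Z_i^\top B Z_j \;=\; \sum_{u} b_u Z_{iu} Z_{ju} \;=\; \sum_{u \in \mathrm{anc}(i)\cap\mathrm{anc}(j)} b_u.\]
For distinct leaves $i,j$, the intersection $\mathrm{anc}(i)\cap\mathrm{anc}(j)$ is exactly the chain of internal nodes in $\T_r$ running from $u^* := \mathrm{lca}(i,j)$ up to the root $r$, inclusive at both ends. Setting
\[f(u) \;:=\; \sum_{v \in \mathrm{anc}(u)\cup\{u\}} b_v, \qquad g(u) \;:=\; \log f(u),\]
I obtain $\lambda_{ij}^{OSBM} = \theta_i \theta_j f(u^*)$, so the dependence on $i,j$ beyond the leaf degrees flows entirely through $u^*$.

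Next, equip the unrooted tree $\T$ with edge weights
\[w_{i,p(i)} \;=\; -\log\theta_i - \tfrac{1}{2}\, g(p(i)) \quad \text{for each leaf } i \in V_\ell,\]
\[w_{u,p(u)} \;=\; \tfrac{1}{2}\bigl(g(u) - g(p(u))\bigr) \quad \text{for each internal } u \neq r,\]
using the parent relation $p(\cdot)$ inherited from $\T_r$. Because $f(p(u)) = f(u) - b_u$ with $b_u > 0$, the internal weights are automatically positive; leaf weights may be negative, which is permitted by Remark~\ref{rmk:multiply_c}. For any two distinct leaves $i,j$ the unique path in $\T$ passes through $u^*$, and the internal weights telescope: the upward chain from $p(i)$ to $u^*$ sums to $\tfrac{1}{2}\bigl(g(p(i)) - g(u^*)\bigr)$ and the chain from $p(j)$ to $u^*$ sums to $\tfrac{1}{2}\bigl(g(p(j)) - g(u^*)\bigr)$, while the two leaf edges contribute $-\log\theta_i - \tfrac{1}{2} g(p(i))$ and $-\log\theta_j - \tfrac{1}{2} g(p(j))$. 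These collapse to
\[d(i,j) \;=\; -\log\theta_i - \log\theta_j - g(u^*),\]
so $\exp(-d(i,j)) = \theta_i \theta_j f(u^*) = \lambda_{ij}^{OSBM}$, matching Definition~\ref{def:tsg}.

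The main obstacle is really the bookkeeping of the $\tfrac{1}{2}$ split on internal edges: it must be chosen so that the two upward chains telescope to $-g(u^*)$ while the leaf edges absorb the residual $\tfrac{1}{2} g(p(\cdot))$ terms, and this assignment must be consistent across all leaf pairs sharing a common internal edge, not just one at a time. I would also record that the diagonal-$B$ hypothesis is essential: without it, $Z_i^\top B Z_j$ picks up cross terms $B_{uv}$ indexed by pairs of ancestors, which in general cannot be absorbed into any additive distance, so the restriction in the statement is genuinely needed.
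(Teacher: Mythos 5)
Your proof is correct and is essentially the paper's own argument: your cumulative sum $f(u)$ is exactly $1/\theta_u^2$ for the internal-node parameters that the paper defines recursively (so your internal weight $\tfrac{1}{2}\bigl(g(u)-g(p(u))\bigr)$ and leaf weight $-\log\theta_i-\tfrac{1}{2}g(p(i))$ coincide with the paper's $\log\bigl(\theta_{p(u)}/\theta_u\bigr)$ and $\log\bigl(\theta_{p(i)}/\theta_i\bigr)$), and the telescoping identity $\lambda^{OB}_{ij}=\theta_i\theta_j f(i\wedge j)=\exp(-d(i,j))$ is the same computation as the paper's Equation \eqref{eq:tob_tsg}. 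The only (cosmetic) difference is that folding $B_{rr}$ directly into $f$ avoids the paper's normalization $\theta_r=1$, which tacitly assumes $B_{rr}=1$.
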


\subsection{Equivalent Stochastic Processes that generate edges}

This section introduces two Stochastic Processes that help generate $\T$-Stochastic Graphs efficiently. Each process parameterizes an edge generator $\mathcal{P}(\cdot)$ that samples one edge at a time from the set of all possible ${n \choose 2}$ edges.  Both edge generators, if given as input to Algorithm \ref{alg:graph_generator}, will generate graphs that are equivalent to $\T$-Stochastic Graphs with Poisson distributed elements.

\begin{algorithm}
	\caption{Random Graph Generator}
	\label{alg:graph_generator}
	\KwIn{sparsity parameter $\zeta$, edge generator $\mathcal{P}(\cdot)$}
	\KwOut{random graph $A\in \R^{n\times n}$}
 	Initialize $A$ to be a matrix with all zeros. Sample the number of edges $m\sim \text{Poisson}(\zeta)$\;
	\For{$\ell \gets1$ \KwTo $m$}{
 Sample an edge $(I,J)\sim \mathcal{P}(\cdot)$, then add it to the graph $A$ by incrementing $A_{I,J}$ by one
 \;}
\end{algorithm}

The edge generators both use the concept of a \textbf{Markov process}.  
In any Markov process $X_t$, 
we say $a$ is an \textbf{absorbing state} if $\pr(X_{t+1} = a \mid X_{t} = a) = 1$, and denote $\tau$ as the first time that $X_t$ enters an absorbing state. 

\begin{figure}[h]
    \centering
    \includegraphics[width=5in]{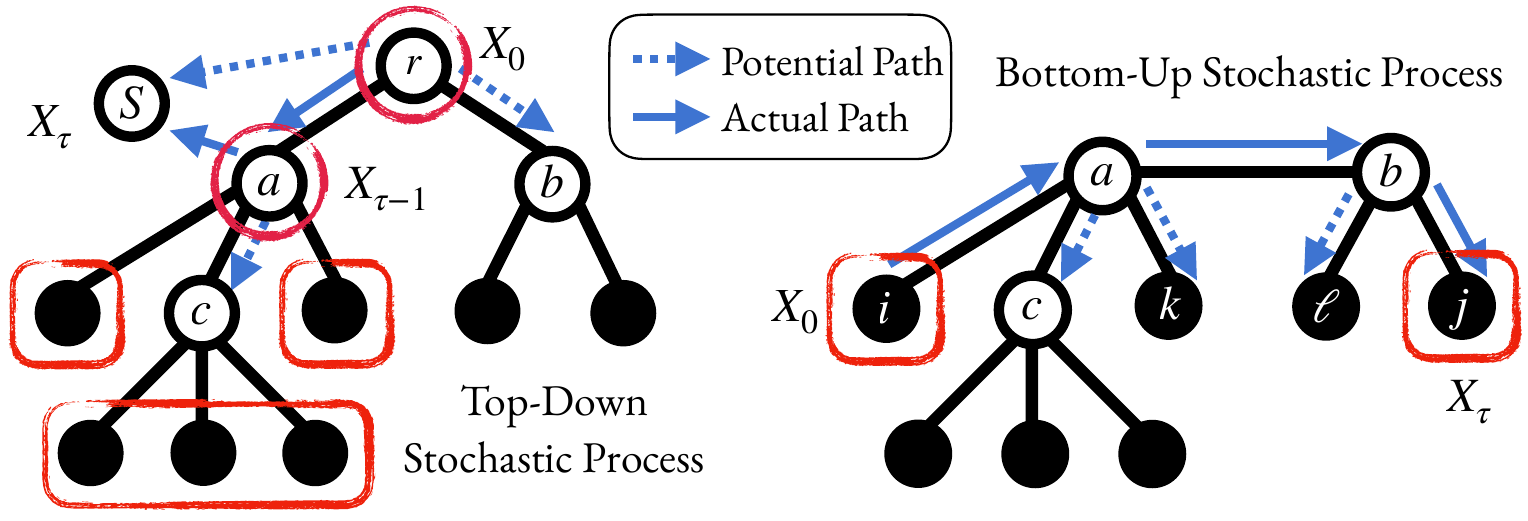}
    \caption{This plot illustrates realizations of the $\T$-Top Down Generator and the $\T$-Bottom Up Generator. 
    For the $\T_r$-Top Down Stochastic Process, $X_t$ starts from the root $r$, and can go to the left child $a$, the right child $b$, or the absorbing state $S$. In this realization, $X_t$ goes to node $a$, and has two options for the subsequent state: the internal node $c$, or the absorbing state $S$. Here, $X_t$ moves to the absorbing state $S$, resulting in $X_{\tau-1} = a$. Then the five leaf nodes marked in the red rectangle, which are descendants of node $a$, have a positive probability of being selected to form an edge. 
    To start the $\T$-Bottom Up Stochastic Process, sample a leaf node $i$ and define $X_0$ as the \textit{directed} edge away from it, $\langle i, a \rangle$. Starting from $\langle i, a \rangle$, $X_t$ can progress to any of the edges originating from node $a$, that is, edge $\langle a, c \rangle$, $\langle a, k \rangle$, and $\langle a, b \rangle$. In this realization, $X_t$ proceeds to edge $\langle a, b \rangle$, and has two options for the next state: the edge leading to leaf node $\ell$ or the edge leading to leaf node $j$. Here, $X_t$ goes to edge $\langle b, j \rangle$. As a result, $X_0 = \langle i, a \rangle, X_{\tau} = \langle b, j \rangle$, and the generator selects leaf nodes $i$ and $j$.}
    \label{fig:tds_bus}
\end{figure}

\subsubsection{$\T_r$-Top Down Stochastic Process}\label{sec:tds}

In a rooted tree $\T_r$, we say node $i$ is a \textbf{descendant} of node $u$ if $u$ is an ancestor of $i$, 
denote the set of all descendants of node $u$ as $\text{desc}(u)$.
We say node $i$ is a \textbf{child} of node $u$ if $u$ is the closest ancestor to $i$, denote the set of children of node $u$ as $\mathcal{C}(u)$, i.e. $\mathcal{C}(u) = \{i: (u, i)\in E,\ u\in \text{anc}(i)\}$. Note that $\mathcal{C}(u)\subseteq \text{desc}(u)$.

The $\T_r$-Top Down Generator samples an edge in $A$ through two steps. 
\begin{enumerate}
\itemsep0em 
    \item \textbf{Select an internal node $u \in V$.} 
A Markov process starts from the root node $r$ and walks along the tree in a top-down fashion. At each internal node, the process can either proceed to one of its non-leaf children or transition to the absorbing state (it cannot go backward). Once it reaches the absorbing state, the preceding state $X_{\tau-1}$ is selected as the internal node $u$ we sample. The generator needs to specify the transition probabilities. 
\item \textbf{Select two leaves that descend from $u$.} The generator selects two leaf nodes among all the descendants of $u$ with probabilities proportional to node-specific degree parameters $\theta_i$.
\end{enumerate} 
An example is provided in Figure \ref{fig:tds_bus}, and the detailed definitions 
can be found in Appendix \ref{appendix:tds_definition}. The proof of Theorem \ref{thm:tsg->tds} is contained in Appendix \ref{appendix:tds_proof}.

\begin{theorem}\label{thm:tsg->tds}
Given a $\T$-Stochastic Graph with $A_{ij}\overset{ind.}{\sim} \text{Poisson}(\lambda_{ij})$, there exists a sparsity parameter $\zeta>0$ and a $\T_r$-Top Down Generator such that they can generate graphs (using Algorithm \ref{alg:graph_generator}) with the same distribution as this $\T$-Stochastic Graph. And vice versa, 
for any $\T_r$-Top Down Generator and sparsity parameter $\zeta>0$, there exists a $\T$-Stochastic Graph with Poisson distributed $A_{ij}$ such that they have the same distribution.
\end{theorem}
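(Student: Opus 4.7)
The plan is to reduce Theorem \ref{thm:tsg->tds} to a statement about the single-edge sampling distribution $\mathcal{P}(\{(i,j)\})$ via a standard Poissonization argument, and then match the target $\lambda_{ij} = \exp(-d(i,j))$ by constructing (or reading off) the transition probabilities of the top-down Markov chain together with the leaf-selection weights.

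First I would observe that Algorithm \ref{alg:graph_generator} generates a graph whose entries are independent Poisson variables: because $m \sim \mathrm{Poisson}(\zeta)$ and conditional on $m$ the $m$ edges are i.i.d.\ from $\mathcal{P}(\cdot)$, the classical Poisson-to-multinomial equivalence implies that the $A_{ij}$'s are mutually independent with $A_{ij} \sim \mathrm{Poisson}(\zeta \cdot \mathcal{P}(\{(i,j)\}))$. Consequently the theorem collapses to the single identity
\begin{equation*}
\zeta \cdot \mathcal{P}(\{(i,j)\}) \;=\; \exp(-d(i,j)) \qquad \text{for every pair of leaves } i \ne j,
\end{equation*}
recasting both implications as the problem of matching two probability distributions on $\binom{V_\ell}{2}$ up to a positive scalar $\zeta$.

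For the forward direction (given a $\T$-Stochastic Graph with additive distance $d$, build a generator), I would root $\T$ at an arbitrary internal node $r$ and set $\zeta = \sum_{i<j} \exp(-d(i,j))$. The leaf-selection weight at a stopping node $u$ would be $\theta_i^{(u)} \propto \exp(-d(i,u))$ on $i \in \mathrm{desc}(u) \cap V_\ell$, and the downward transition probabilities would be tuned in terms of the subtree masses $s_u := \sum_{i \in \mathrm{desc}(u) \cap V_\ell} \exp(-d(i,u))$ so that the joint probability of stopping at $u$ and picking the pair $(i,j)$ telescopes along the ancestor chain. The additivity of $d$ along tree paths, $\exp(-d(i,j)) = \exp(-d(i,u))\exp(-d(u,j))$ for every $u$ on the $i$-to-$j$ path, is what makes this telescoping possible. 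Conversely, given any $\T_r$-Top Down Generator with sparsity $\zeta$, I would set $\lambda_{ij} := \zeta \cdot \mathcal{P}(\{(i,j)\})$ and verify that $-\log \lambda_{ij}$ is additive on $\T$; this follows because the Markov property of the downward walk forces $\mathcal{P}(\{(i,j)\})$ to decompose into stopping and transition factors along the two root-to-leaf branches, each of which can be absorbed into an edge weight $w_{uv}$ whose path-sum recovers $-\log \lambda_{ij}$.

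The main obstacle will be the bookkeeping introduced by the fact that a single edge $(i,j)$ can be produced with the Markov chain stopping at \emph{any} ancestor $u$ of $\mathrm{MRCA}(i,j)$, since both $i$ and $j$ remain descendants of every such $u$. A naive sum over stopping nodes is messy, so the crux is to express the stopping probabilities and leaf-selection weights in a common parametrization (via the subtree masses $s_u$) that lets the sum telescope to a clean product along the $i$-to-$j$ path. Once that is done, both implications reduce to rearranging products of exponentials and identifying edge weights $w_{uv}$ of $\T$ with log-ratios of transition and selection probabilities; the normalization constants and the choice of $\zeta$ can then be matched by inspection.
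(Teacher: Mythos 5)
Your proposal is correct in outline and, at its computational core, is the same argument as the paper's: both rest on the Poisson-to-multinomial reduction (the paper's Lemma \ref{lemma:fastrg}) that collapses the theorem to matching the single-edge sampling probability with $\exp(-d(i,j))$ up to the scalar $\zeta$, and both resolve the difficulty you correctly flag --- that the pair $(i,j)$ can be produced by stopping at any node on the path from $r$ to $i\wedge j$ --- by a telescoping choice of stopping masses. The difference is organizational. You propose to do the telescoping directly (in effect choosing the probability of stopping at $u$ proportional to $\vartheta(u)\left(\theta_u^{-2}-\theta_{p(u)}^{-2}\right)$, where $\vartheta(u)$ is the pair-normalizer over descendants of $u$), whereas the paper modularizes the same computation through two intermediate objects: a ``Bridge Generator'' that draws the stopping node in one shot, with Lemma \ref{lemma:bridge_td} converting back and forth between that stopping distribution and the Markov transition probabilities via subtree stopping masses $\mathscr{S}(h)$ (the step you summarize as ``tuning''), and the $\T_r$-Overlapping Blockmodel with diagonal $B$ (Lemma \ref{lemma:bridge_ob}), so the final identification with the $\T$-Stochastic Graph is inherited from Theorems \ref{thm:tsg->tob} and \ref{thm:tob->tsg}. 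Your direct route is shorter but must redo the telescoping by hand; the paper's route reuses proven equivalences and makes positivity of the stopping probabilities transparent ($B_{uu}=\theta_u^{-2}-\theta_{p(u)}^{-2}>0$ under nonnegative internal edge weights). Two small points to tighten: your $u$-dependent weights $\theta_i^{(u)}\propto\exp(-d(i,u))$ are fine only because they are proportional (in $i$) to the global $\theta_i=\exp(-d(i,r))$ required by Definition \ref{def:tdg}; and in the reverse direction the sampling probability does not literally factor into per-edge factors along the two root-to-leaf branches --- because of the sum over stopping nodes it has the form $\theta_i\theta_j F(i\wedge j)$ with $F$ the cumulative stopping mass along the root-to-MRCA path, and additivity of $-\log\lambda_{ij}$ follows by splitting the increments of $\log F$ between the two sides (as in the HRG-to-TSG conversion), with monotonicity of $F$ down the tree giving nonnegative internal edge weights.
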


\subsubsection{$\T$-Bottom Up Stochastic Process}\label{sec:bus}
This section considers directed graphs and adopts the notation $\langle u, v \rangle$ to indicate a directed edge that goes from node $u$ to node $v$. Given an undirected tree graph $\T = (V, E)$, the directed graph $\T_{dir} = (V, E_{dir})$ can be constructed by adding two directed edges, $\langle u, v \rangle$ and $\langle v, u \rangle$, to $E_{dir}$ for each undirected edge $(u, v) \in E$.

The $\T$-Bottom Up Generator samples an edge in $A$ by performing a non-backtracking\footnote{A non-backtracking random walk is not Markov if considering vertices as the state space but can be turned into a Markov chain by changing the state space from the vertices to the directed edges \citep{kempton2016non}, that's why we consider directed edges as the state space.} random walk $X_t$ on tree $\T$, and then selecting two leaf nodes based on the starting state, $X_0$, and the stopping state, $X_\tau$. This random walk is a Markov Process with its state space consisting of all directed edges in $\T_{dir}$. In particular, this process involves:

\begin{enumerate}
\itemsep0em 
    \item \textbf{Picking a leaf node to start:} To begin the process, a starting state $X_0$ is selected from all edges that originate from leaf nodes, with probabilities \(\{\pi_i : i \in V_\ell\}\).
    \item \textbf{Walking along the tree in a non-backtracking fashion:} Given $X_t = \langle u, v \rangle$, the process can move to any directed edge $\langle v, w \rangle$ that starts from node $v$, with a probability proportional to some edge parameter\footnote{We have $c_{uv} = c_{vu}$ here. Notice that edge parameter $c_{uv}$ is different from edge weight $w_{uv}$.} $c_{vw}>0$. However, 
$X_{t+1}$ cannot be the reverse edge $\langle v, u \rangle$ leading back to $u$, hence the term ``non-backtracking''.
    \item \textbf{Ending at another leaf node:} Any edge that leads to a leaf node is an absorbing state, the process stops once it reaches one of them.
\end{enumerate}
Appendix \ref{appendix:bus_definition} provides detailed definitions 
and an example is illustrated in Figure \ref{fig:tds_bus}.

The following definition of \textbf{symmetry} is helpful to parametrize the equivalence between $\T$-Bottom Up Generators and $\T$-Stochastic Graphs. To have full equivalence, it's also necessary to allow negative edge weights in $\T$-Stochastic Graphs (as in Remark \ref{rmk:negative_edge_weights}). 
Theorem \ref{thm:bus_negative} in Appendix \ref{appendix:negative_edge_bus}) provides an explicit equivalent condition for negative edge weights on specific edges.
 This requirement for negative edge weight also arises in the context of Hierarchical Stochastic Equivalence (Section \ref{sec:hse}), and additional discussion is available in Appendix \ref{appendix: negative_edge_weight}. The proof of Theorem \ref{thm:tsg->bus} and \ref{thm:bus->tsg}
 can be found in Appendix \ref{appendix:tsg->bus} and \ref{appendix:bus->tsg},
 accordingly.

\begin{definition}
    A $\T$-Bottom Up Stochastic Process is \textbf{symmetric} if for any pair of leaf nodes $i, j$, 
    \begin{equation}\label{eq:symmetric}
        \pr(X_0 = \langle i, p(i) \rangle, X_{\tau} = \langle p(j), j \rangle) = \pr(X_0 = \langle j, p(j) \rangle, X_{\tau} = \langle p(i), i \rangle).
    \end{equation}
\end{definition}

\begin{theorem}\label{thm:tsg->bus}
For any $\T$-Stochastic Graph with $A_{ij}\overset{ind.}{\sim} \text{Poisson}(\lambda_{ij})$, there exists a symmetric $\T$-Bottom Up Generator and a sparsity parameter $\zeta$ that can generate graphs (using Algorithm \ref{alg:graph_generator}) with the same distribution. 
\end{theorem}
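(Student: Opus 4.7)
The plan is to construct the generator parameters $(\zeta, \{\pi_i\}, \{c_{uv}\})$ so that the induced walk probabilities match $\exp(-d(i,j))$, and then invoke Poisson thinning. Specifically, Algorithm \ref{alg:graph_generator} with total rate $\zeta$ and edge generator $\mathcal{P}$ produces independent $A_{IJ} \sim \text{Poisson}(\zeta\, p_{IJ})$, where $p_{IJ}$ is the probability $\mathcal{P}$ assigns to the ordered pair $(I,J)$. It therefore suffices to exhibit a symmetric $\T$-Bottom Up Generator whose ordered-pair probability satisfies
$\pi_i \cdot \pr(X_\tau = \langle p(j), j\rangle \mid X_0 = \langle i, p(i)\rangle) = \lambda_{ij}/(2\zeta)$
for all leaves $i \ne j$, and to choose $\zeta$ so the total rate matches.

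Next, I would construct the parameters via a ``message-passing'' up-sweep on $\T$. For each directed edge $\langle v, u\rangle$, define a subtree mass $M_{v \to u}$ by $M_{v \to u} = 1$ when $v$ is a leaf and otherwise by
\[
M_{v \to u} = \sum_{w \sim v,\, w \ne u} \exp(-w_{vw})\, M_{w \to v}.
\]
The symmetric edge parameters $\{c_{uv}\}$ and the starting distribution $\{\pi_i\}$ are then defined from these messages so that, at each internal node $v_\ell$, the non-backtracking transition probability of moving forward along a given path decomposes as a product of an on-path exponential weight and a normalizer involving the branching-off subtree messages. For any leaf pair with unique path $i = v_0, v_1, \dots, v_k = j$, plugging the constructed parameters into
\[
\pi_i \prod_{\ell=1}^{k-1} \frac{c_{v_\ell v_{\ell+1}}}{\sum_{w \sim v_\ell,\, w \ne v_{\ell-1}} c_{v_\ell w}}
\]
should yield, after an induction along the path that telescopes the successive normalizers, a quantity equal to $\exp(-d(i,j))/(2\zeta)$. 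The symmetry condition \eqref{eq:symmetric} then follows because this expression depends on $(i,j)$ only through the symmetric distance $d(i,j)$.

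The main obstacle is choosing $c_{uv}$ and $\pi_i$ so that the non-backtracking denominators, which genuinely depend on the incoming direction at each internal node, combine with the forward $c$-factors to cancel cleanly and produce a path-\emph{independent} proportionality constant. Small worked cases such as a star or a caterpillar tree suggest that recursive propagation of the $M_{v \to u}$ masses together with a direction-absorbing choice of the leaf-edge parameters $c_{ip(i)}$ is sufficient; generalizing this to arbitrary tree topologies while respecting the symmetry constraint $c_{uv} = c_{vu}$, and handling internal nodes of degree $>3$ where multiple branching subtrees must be summed simultaneously, is where the bulk of the technical work lies. I also anticipate needing the extension to negative edge weights flagged in Remark \ref{rmk:negative_edge_weights} to absorb any residual degrees of freedom at leaf edges, with the argument then invoking Theorem \ref{thm:bus_negative} to pin down precisely which edges must be allowed negative weights.
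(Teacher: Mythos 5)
Your high-level plan is the same as the paper's: build a generator whose walk probabilities are proportional to $\lambda_{ij}=\exp(-d(i,j))$, choose the starting distribution $\{\pi_i\}$ to enforce symmetry, and then pass from edge-sampling to independent Poisson entries (your thinning step is the converse of the paper's Lemma \ref{lemma:fastrg} and is fine). The problem is that the one step you defer --- exhibiting symmetric constants $c_{uv}=c_{vu}$ for which the non-backtracking absorption probabilities satisfy $\pr(i\rightarrow j)\propto\lambda_{ij}$ --- is precisely the content of the theorem, and your proposal leaves it as an acknowledged obstacle supported only by small worked examples. As written, nothing in the argument establishes that the direction-dependent normalizers at internal nodes of arbitrary degree can be reconciled with a single symmetric edge parameter per edge, so the proof is incomplete at its core. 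The paper resolves exactly this point in two moves: Lemma \ref{lemma:non_symmetric_bu} constructs possibly asymmetric constants $\widetilde c_{uv}$ achieving the proportionality \eqref{eq:prop_bu} by induction on ``half-trees,'' and Algorithm \ref{alg:symmetriz_edge_constant} then rescales them into symmetric $c_{uv}$ without changing any transition probability, after which $\pi_i\propto\sum_{j\neq i}\lambda_{ij}$ gives both symmetry and $\pr((I,J)=(i,j))\propto\lambda_{ij}$.

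Your message-passing idea can in fact be pushed through, and it is worth seeing why, because it clarifies what is missing. With $M_{v\rightarrow u}=\sum_{w\sim v,\,w\neq u}e^{-w_{vw}}M_{w\rightarrow v}$ and $M$ equal to one at leaves, the natural transitions $\pr(\langle u,v\rangle\rightarrow\langle v,w\rangle)=e^{-w_{vw}}M_{w\rightarrow v}/M_{v\rightarrow u}$ give, by induction along the path, absorption probability $e^{-d(v,j)}/M_{v\rightarrow u}$, hence the desired proportionality. But these effective weights $a_{v\rightarrow w}=e^{-w_{vw}}M_{w\rightarrow v}$ are direction-dependent; to realize them within Definition \ref{def:bus} you must find node constants $\beta_v$ with $\beta_v a_{v\rightarrow w}=\beta_w a_{w\rightarrow v}$ on every edge, and you need to argue such a potential exists --- it does, because $\T$ is a tree and the constraints can be satisfied by a single root-to-leaf sweep, which is exactly what the paper's symmetrization algorithm does. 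That argument, not the telescoping on a path, is the missing piece. Finally, your appeal to negative edge weights and Theorem \ref{thm:bus_negative} is misplaced for this direction: negative internal weights (Remark \ref{rmk:negative_edge_weights}) are only needed for the converse statement, Theorem \ref{thm:bus->tsg}, where one starts from an arbitrary generator; here the generator's parameters are probabilities and positive constants, and no such extension is required.
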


\begin{theorem}\label{thm:bus->tsg}
Given a set of edge parameters $\{c_{uv}: (u, v)\in E\}$ that defines transition probabilities,
there exists a set of probabilities \(\{\pi_i: i \in V_\ell\}\) such that $\{c_{uv}: (u, v)\in E\}$ and \(\{\pi_i: i \in V_\ell\}\) together define a symmetric $\T$-Bottom Up Generator. Moreover, given this generator and a sparsity parameter $\zeta > 0$ to generate graphs (using Algorithm \ref{alg:graph_generator}), there exists a $\T$-Stochastic Graph, \textbf{with potentially negative edge weights} that has the same distribution as the generated graph. 
\end{theorem}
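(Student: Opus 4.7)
The plan has two parts matching the statement. Part (i): construct the starting probabilities $\pi_i$ from the edge parameters $\{c_{uv}\}$ using a potential function on $\T$, then verify the symmetry condition \eqref{eq:symmetric} directly. Part (ii): show that the expected adjacency $\lambda_{ij} = \E A_{ij}$ of the generated graph equals $\exp(-d(i,j))$ for some additive distance $d$ on $\T$ (possibly with negative edge weights).

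For part (i), because the walk is non-backtracking on a tree, for any two leaves $i, j$ the path in $\T$ is unique, say $i = v_0, v_1, \ldots, v_k, v_{k+1} = j$, and the joint probability of starting at $\langle i, p(i)\rangle$ and absorbing at $\langle p(j), j\rangle$ is $\pi_i T_{ij}$ with $T_{ij} = \prod_{\ell=1}^{k} c_{v_\ell v_{\ell+1}} / S_{v_\ell}^{(v_{\ell-1})}$, where $S_u^{(v)} := \sum_{w \ne v,\, (u,w) \in E} c_{uw}$. Using $c_{uv} = c_{vu}$, the symmetry condition reduces (after cancellation of common $c$-factors) to the requirement that $\pi_i/[c_{p(i),i}\,S_{p(i)}^{(i)}]$ divided by $\pi_j/[c_{p(j),j}\,S_{p(j)}^{(j)}]$ equal $\prod_{\ell=1}^{k-1} S_{v_{\ell+1}}^{(v_\ell)} / S_{v_\ell}^{(v_{\ell+1})}$. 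To resolve this telescoping relation, I would introduce a potential $\phi$ on internal nodes: fix an internal reference $v_\star$, set $\phi(v_\star) = 1$, and extend by $\phi(v)/\phi(u) = S_v^{(u)}/S_u^{(v)}$ along each edge $(u,v)$. Because $\T$ is a tree (no cycles), $\phi$ is globally well-defined. The right-hand side then telescopes to $\phi(p(j))/\phi(p(i))$, and the symmetric choice is
\[
\pi_i \;\propto\; \frac{c_{p(i),i}\, S_{p(i)}^{(i)}}{\phi(p(i))},
\]
normalized so that $\sum_i \pi_i = 1$.

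For part (ii), by Poisson thinning $A_{ij} \sim \text{Poisson}(\zeta\, \pi_i T_{ij})$ independently across pairs. Substituting the formula for $\pi_i$ and using the identity $S_v^{(u)}/\phi(v) = S_u^{(v)}/\phi(u) =: g(u,v)$ (which is symmetric in $u, v$ by the definition of $\phi$) to rewrite the $S$-factors yields
\[
\pi_i T_{ij} \;=\; \tilde\alpha \cdot \frac{\prod_{\ell=0}^{k} c_{v_\ell v_{\ell+1}}}{\prod_{\ell=1}^{k} \phi(v_\ell) \cdot \prod_{\ell=1}^{k-1} g(v_\ell, v_{\ell+1})}.
\]
I then assign the symmetric undirected edge weights
\[
w_{uv} \;=\; -\log c_{uv} + \tfrac{1}{2}\log \phi(u) + \tfrac{1}{2}\log \phi(v) + \log g(u,v)
\]
on internal edges and $w_{i,p(i)} = -\log c_{i,p(i)} + \tfrac{1}{2}\log\phi(p(i)) - \tfrac{1}{2}\log(\zeta\tilde\alpha)$ on leaf-to-parent edges; the overall constant absorbed into the leaf-edge weights may force them to be negative, consistent with Remark \ref{rmk:multiply_c}. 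A direct check along any path shows that the $\tfrac{1}{2}\log\phi$ contributions from the two path-edges adjacent to an internal node $v_\ell$ pair up cleanly to give exactly $\log\phi(v_\ell)$, so $\sum_{(u,v) \in \text{path}(i,j)} w_{uv} = -\log \lambda_{ij}$, exhibiting the generated graph as a $\T$-Stochastic Graph with distance $d(i,j)$ the sum of these weights along the path.

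The main obstacle is part (i): designing $\phi$ and exploiting the tree structure for its global consistency. This is exactly where the tree hypothesis is essential, since on a graph with cycles the edge-ratio conditions for $\phi$ would be overdetermined. Once $\phi$ is in hand, part (ii) reduces to the bookkeeping of distributing each factor of $c$, $\phi$, and $g$ across the $k+1$ path edges as a single undirected weight.
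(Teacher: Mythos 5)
Your proposal is correct, but it takes a genuinely different route from the paper. For the symmetry of $\pi$, the paper fixes a reference leaf $i$, sets $c_j \propto \pr(i\rightarrow j)/\pr(j\rightarrow i)$, normalizes, and then verifies Equation \eqref{eq:symmetric} for pairs not involving $i$ through a median-node cancellation argument; you instead build an explicit potential $\phi$ on the internal nodes from the ratios $S_v^{(u)}/S_u^{(v)}$ (globally consistent precisely because $\T$ is a tree) and obtain the closed form $\pi_i \propto c_{p(i),i}S_{p(i)}^{(i)}/\phi(p(i))$, after which symmetry is a one-line telescoping identity --- I checked the reduction of \eqref{eq:symmetric} to the ratio condition and the telescoping, and both are right. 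For the second claim, the paper does not construct the tree weights at all: it shows the generated intensities satisfy $\T$-HSE ($\lambda^{BU}_{i\ell}/\lambda^{BU}_{j\ell} = c_{ri}/c_{rj}$, depending on $\ell$ only through the median $r$) and then invokes Theorem \ref{thm:tse->tsg}, whose proof is a separate level-by-level induction; you bypass that machinery entirely by writing $\pi_iT_{ij}$ in terms of $c$, $\phi$, and the symmetric quantity $g(u,v)=S_v^{(u)}/\phi(v)$ and distributing the factors as explicit edge weights, and your path-sum check (the half-$\log\phi$ contributions at each interior node pairing up to a full $\log\phi$) does verify $\lambda_{ij}=\exp(-d(i,j))$. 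What your approach buys is a self-contained proof with closed-form $\pi$ and closed-form (possibly negative) edge weights, which also makes visible exactly which terms can push weights negative; what the paper's approach buys is brevity given that the HSE equivalence is already established and reused. Two cosmetic points to tidy up: the potential and $g$ should be declared only on internal edges (leaf nodes would give an empty sum $S_i^{(p(i))}=0$, but you never need them there, and $\deg\geq 3$ for internal nodes guarantees $S_u^{(v)}>0$), and depending on whether the unordered pair $(i,j)$ is counted via both ordered draws, $\E A_{ij}$ is $2\zeta\pi_iT_{ij}$ rather than $\zeta\pi_iT_{ij}$ --- a constant harmlessly absorbed into your leaf-edge weights, consistent with Remark \ref{rmk:multiply_c}.
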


\subsection{Hierarchical Stochastic Equivalence}\label{sec:hse}

\begin{quote} 
\textit{Stochastic Equivalence} (SE) is the original axiomatic motivation for the popular Stochastic Blockmodel \citep{holland}.  %
This section proposes a notion of \textit{Hierarchical Stochastic Equivalence} (HSE) and provides theorems that show HSE graphs are nearly equivalent to $\T$-Stochastic Graphs.
\end{quote}

Consider a random graph $A$ with $\lambda_{ij} \ge 0$; two nodes $i$ and $j$ are \textbf{\textit{Stochastically Equivalent}} (SE) if for all other nodes $\ell$, $\lambda_{i\ell}/\lambda_{j\ell} = 1$ \citep{holland}. SE is an elementary property that characterizes the Stochastic Blockmodel; in the SBM, two nodes are SE if and only if they are in the same block. This idea naturally extends to the Degree-Corrected Stochastic Blockmodel, two nodes $i$ and $j$ are \textbf{\textit{Degree-Corrected Stochastically Equivalent}} (DC-SE) if $\lambda_{i\ell}/\lambda_{j\ell} = c_{ij}$ for all other nodes $\ell$ and some constant $c_{ij}$ that does not depend on $\ell$.

\textbf{\textit{Hierarchical  Stochastic Equivalence}} (HSE) is a stronger condition than previous SE and DC-SE notions.  In those previous notions, only node pairs in the same ``block'' are equivalent.  In HSE, every pair of nodes satisfies a set of equivalence relationships. The key idea of HSE relies on the concept of ``median node'',  which is illustrated in Figure \ref{fig:hse}.

\begin{definition}\label{def:median}
The \textbf{median} $m(i, j, \ell)$ of any three leaf nodes $i,j,\ell \in \T$ is the unique node in $\T$ that
 lies on the shortest path between each pair of vertices $(i,j), (\ell,i)$, and $(\ell,j)$ in $\T$.
\end{definition}
\vspace{-0.05in}

In Figure \ref{fig:hse}, $\ell_1, \ell_2, \dots  $ are leaves that descend from some subtree connected to $a$. In that illustration, the median of $i,j$, and one of these $\ell \in \{\ell_1, \ell_2, \cdots\}$ is the node marked $a$; $m(i, j, \ell) = a$.

\begin{figure}[htbp] %
   \centering
   \includegraphics[width=2in]{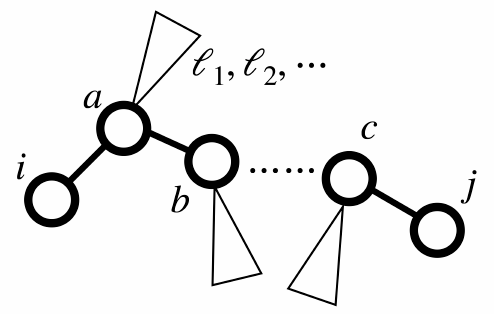} 
   \caption{A tree example that helps illustrate HSE. Triangles connected to nodes represent subtrees. Nodes $a, b, \cdots, c$ are internal nodes along the path between leaf nodes $i, j$. 
}
   \label{fig:hse}
\end{figure}

\vspace{-0.05in}

\begin{definition}\label{def:hse}
A random graph with $\lambda_{ij}\geq 0$ is $\T$-HSE if for any three leaf nodes $i, j, \ell$ in $\T$,
\vspace{-0.05in}
\begin{equation}\label{eq:hse}
    \lambda_{i, \ell}/\lambda_{j, \ell} = c_{ij}^{(m)},
    \vspace{-0.03in}
\end{equation}
where $m = m(i, j, \ell)$ is the median of $i,j,\ell$ in tree $\T$. So, the righthand side only depends on $\ell$ via its median node with $i$ and $j$.
\end{definition}
\vspace{-0.02in}

Take Figure \ref{fig:hse} as an example, in Equation \eqref{eq:hse}, $\lambda_{i\ell}/\lambda_{j\ell}$ stays the same for any node $\ell \in \{\ell_1, \ell_2, \cdots\}$ since they share the same median node $a$.

 To make $\T$-HSE graphs equivalent to $\T$-Stochastic Graphs, it is necessary to allow negative edge weights in $\T$-Stochastic Graphs (as in Remark \ref{rmk:negative_edge_weights}).
 Appendix \ref{appendix:neg_edge_hse} provides more discussion on why negative edge weights are inevitable in HSE and how this relates to the SE condition in SBMs. The proof of Theorem \ref{thm:tse->tsg} and \ref{thm:tsg->tse} can be found in Appendix \ref{appendix:tse->tsg} and \ref{appendix:tsg->tse}, accordingly.

\begin{theorem}\label{thm:tse->tsg}
Suppose $A$ is a random graph model with $\lambda_{ij} > 0$ for all node pairs $i,j$. If there exist a tree $\T$ such that $A$ is $\T$-HSE, then $A$ is a $\T$-Stochastic Graph \textbf{with potentially negative edge weights}.
\end{theorem}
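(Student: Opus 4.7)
The plan is to define $D_{ij} := -\log \lambda_{ij}$ (well-defined since $\lambda_{ij} > 0$) and to construct edge weights $\{w_{uv}\}_{(u,v)\in E}$ on $\T$, possibly negative as permitted by Remark \ref{rmk:negative_edge_weights}, whose induced additive distance $d(\cdot,\cdot)$ satisfies $d(i,j) = D_{ij}$ for every pair of leaves $i,j$. Once this is done, $\lambda_{ij} = \exp(-d(i,j))$, so Equation \eqref{eq:lambdadef} holds and $A$ is a $\T$-Stochastic Graph.

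The crucial step is to extract the four-point equation on $\T$ from HSE. Given any four leaves $i,j,k,\ell$ whose topology in $\T$ is the split $(ij \mid k\ell)$, the node $u$ at which the path between $i$ and $j$ meets the separating path is the common median $m(i,j,k) = u = m(i,j,\ell)$. Definition \ref{def:hse} then gives
\begin{equation*}
    \frac{\lambda_{ik}}{\lambda_{jk}} \;=\; c_{ij}^{(u)} \;=\; \frac{\lambda_{i\ell}}{\lambda_{j\ell}},
\end{equation*}
and taking $-\log$ yields $D_{ik} + D_{j\ell} = D_{i\ell} + D_{jk}$, the four-point equation for the split $(ij \mid k\ell)$ in $\T$.

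From these equations I would recover edge weights by the standard ``three-point formula''. For each leaf $i$ and internal node $m$, pick auxiliary leaves $j,k$ with $m(i,j,k) = m$ (possible because every internal node has degree at least three by Remark \ref{rmk:identifiability}) and set $d(i,m) := \tfrac{1}{2}(D_{ij} + D_{ik} - D_{jk})$. Well-definedness (independence of the choice of $(j,k)$) follows by applying the four-point equation to quadruples such as $\{i,j,j',k\}$ and $\{i,j',k',k\}$. From $d(i,\cdot)$ one reads off $w_{i,p(i)} := d(i,p(i))$ for each leaf edge, and for each internal edge $(u,v)$ one picks any leaf $i$ whose path in $\T$ crosses $(u,v)$ with $u$ closer to $i$ and sets $w_{uv} := d(i,v) - d(i,u)$; the four-point equation makes this choice consistent.

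The main obstacle is verifying that the additive distance produced by these weights actually equals $D_{ij}$ for \emph{every} leaf pair, not only those appearing directly in the construction. This is a classical characterization of additive tree metrics (Buneman's four-point theorem): for a prescribed tree topology, a real-valued leaf dissimilarity $D$ is realized as the path metric for some real edge weighting if and only if the four-point equations compatible with the topology hold. Since that characterization imposes no sign constraint on edge weights, potentially negative $w_{uv}$ is both natural (some $D_{ij}$ can be negative whenever $\lambda_{ij} > 1$) and precisely what Remark \ref{rmk:negative_edge_weights} permits, completing the proof.
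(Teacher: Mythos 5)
Your proposal is correct in outline and takes a genuinely different route from the paper. The paper's proof (Appendix \ref{appendix:tse->tsg}) never passes through the four-point condition: it constructs multiplicative edge parameters $\theta_{uv}$ directly from the $\lambda$'s by induction on the depth of a rooted version of $\T$ --- a star-tree base case (Lemma \ref{lemma:level1}), where the two unknowns $\theta_{ir},\theta_{jr}$ are solved from $\lambda_{ij}$ and $c^{(r)}_{ij}$, and an inductive step (Lemma \ref{lemma:levelk}) that peels off the deepest layer of leaves, builds a reduced family $\lambda^{k}$ on the smaller tree, verifies it is again HSE, and checks that products of $\theta$'s reproduce every $\lambda_{ij}$. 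You instead isolate the combinatorial content of HSE: for $D_{ij}=-\log\lambda_{ij}$ and any quartet split as $(ij\mid k\ell)$ in $\T$, the shared median gives $D_{ik}+D_{j\ell}=D_{i\ell}+D_{jk}$ (and the same argument gives equality of all three pairings for unresolved quartets when $\T$ is non-binary --- you need those too), and then you appeal to realizability of such a dissimilarity by a real edge-weighting of the fixed topology $\T$. The HSE-to-four-point step is correct and is the genuinely new content of your route; what it buys is brevity and a clean bridge to the phylogenetics machinery the paper already uses elsewhere (additive distances, NJ), whereas the paper's induction buys a self-contained, explicit construction of the weights.

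The one caveat is the final citation: Buneman's four-point theorem in its textbook form characterizes tree metrics via the four-point \emph{inequality} condition, with nonnegative weights and with the topology an output rather than an input. What you actually need is the fixed-topology, real-weights, equalities-only variant: a symmetric $D$ on the leaves is the path ``metric'' of some real weighting of $\T$ if and only if it satisfies all quartet equalities induced by $\T$. That variant is true and can be closed either by completing your Gromov-product construction with an induction on the number of leaves, or by a dimension count showing the linear map from $\R^{E}$ to leaf dissimilarities has image exactly the subspace cut out by those equalities --- but as written, the step you yourself flag as ``the main obstacle'' is deferred to a theorem whose standard statements do not literally cover it; supplying that verification is roughly the same amount of work as the paper's layer-by-layer induction with the $\theta$'s.
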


\begin{theorem} \label{thm:tsg->tse}
Any $\T$-Stochastic Graph is $\T$-HSE.
\end{theorem}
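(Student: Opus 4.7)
The plan is to show that under the $\T$-Stochastic Graph model, the ratio $\lambda_{i\ell}/\lambda_{j\ell}$ depends on $\ell$ only through the median $m=m(i,j,\ell)$, by exploiting the additivity of $d(\cdot,\cdot)$ along shortest paths in $\T$. The key observation is that since $m$ is defined to lie on the shortest path between each of the pairs $(i,\ell)$, $(j,\ell)$, and $(i,j)$, the additive distance $d$ splits cleanly at $m$. So the bulk of the argument is a one-line cancellation inside the exponential.

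More precisely, fix three leaf nodes $i,j,\ell$ and let $m=m(i,j,\ell)$. By Definition \ref{def:median}, $m$ lies on the shortest path between $i$ and $\ell$, and on the shortest path between $j$ and $\ell$. Because $d(\cdot,\cdot)$ is the sum of edge weights along the (unique) shortest path in the tree, this gives the splittings
\begin{equation*}
d(i,\ell)=d(i,m)+d(m,\ell),\qquad d(j,\ell)=d(j,m)+d(m,\ell).
\end{equation*}
The only step that needs a bit of care is verifying these two equalities: since $\T$ is a tree, the path from any leaf to $\ell$ that passes through an internal node $m$ is unique, and additivity of $d$ along such paths is exactly the defining property of an additive tree distance (as set up in Section \ref{sec:tsg}).

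Plugging into Equation \eqref{eq:lambdadef},
\begin{equation*}
\frac{\lambda_{i\ell}}{\lambda_{j\ell}} = \frac{\exp(-d(i,\ell))}{\exp(-d(j,\ell))} = \exp\bigl(d(j,\ell)-d(i,\ell)\bigr) = \exp\bigl(d(j,m)-d(i,m)\bigr).
\end{equation*}
The $d(m,\ell)$ terms cancel, leaving a quantity that depends only on $i$, $j$, and $m$. Setting
\begin{equation*}
c_{ij}^{(m)} := \exp\bigl(d(j,m)-d(i,m)\bigr),
\end{equation*}
we recover exactly the HSE condition in Equation \eqref{eq:hse}.

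There is no real obstacle here; the main subtlety is simply to confirm that the tree-metric split of $d$ at the median is valid, which is immediate from the uniqueness of paths in a tree together with the definition of $d$ as a sum of edge weights along the shortest path. The theorem then follows directly; no case analysis on the topology of $\T$ or on whether $i,j,\ell$ share ancestors is required, since the median $m$ neatly encodes all such topological information.
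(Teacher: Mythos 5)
Your proof is correct and follows essentially the same route as the paper's own argument: split $d(i,\ell)$ and $d(j,\ell)$ at the median $m$, cancel the common $\exp(-d(m,\ell))$ factor, and take $c_{ij}^{(m)} = \exp\bigl(d(j,m)-d(i,m)\bigr)$. Nothing is missing.
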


\section{Estimating $\T$-Stochastic Graphs using $\T$-Graphical Blockmodel} \label{sec:estimation}

\subsection{\pps: three-step estimation utilizing the $\T$-Graphical Blockmodel}

Every $\T$-Stochastic Graph is a DCSBM where the connectivity matrix $B$ is the covariance matrix for a Gaussian Graphical Model (GGM). This parameterization is called a $\T$-Graphical Blockmodel in Section \ref{sec:tgb} and it provides a natural path for estimation: recover the DCSBM first, and then construct the hierarchy using tools from Graphical Models. 
Using the language of tree graphs, recovering the DCSBM identifies the parent node $p(i) \in V$ for each leaf node $i \in V$. This is because $p(i) = z(i)$ using the notation of Definition \ref{def:dcsbm}. Then the algorithm only needs to construct $\T_z$, a subtree of $\T$, which is the smallest tree in $\T$ that connects the parent nodes $p(i)$. See Figure \ref{fig:tsg_dcsbm_tree} for an example.

To estimate $z(\cdot)$ and $\T_z$, we propose \texttt{synthesis}, 
a three-step mechanism that consistently recovers the hierarchy under mild assumptions (Theorem \ref{thm:meta_theorem} in Section \ref{sec: pps_consistent}). The \texttt{vsp} algorithm in the first step is a consistent community detection algorithm from \citep{varimax_rohe}; more details can be found in Appendix \ref{appendix:vsp}. The \texttt{TSGdist} and \texttt{SparseNJ} algorithms in the following steps are proposed using techniques from GGMs and phylogeny studies, and are explained in detail in the subsequent sections.
To simplify notations, we define the block membership matrix $Z\in \R^{n\times k}$ as $Z_{ij} = \theta_{i}\mathds{1}\{z(i)=j\}$.

\begin{algorithm}
\caption{\pps (meta-algorithm)}
    \label{alg:pps}
	\KwIn{\\ \Indp \Indp 
        adjacency matrix $A\in \R^{n\times n}$; \\
        number of blocks $k$;\\ 
        regularization parameter $\epsilon$; see Remark \ref{rmk:pps_regularization} for the default value;\\ 
        cutoff value $\varphi$; see Remark \ref{rmk:pps_cutoff} for the default value.}
	\KwOut{\\ \Indp \Indp
        subtree $\widehat \T_z$;\\
        block membership matrix $\widehat Z\in \R^{n\times k}$.}
  \vspace{0.05in}
    Community detection in the DCSBM: $\widehat Z \leftarrow \texttt{vsp}\left(A, k\right)$;\\[-0.05in]
    Distance estimation between nodes in $V_z$: $\widehat D \leftarrow \texttt{TSGdist}\left(A, \widehat Z, \epsilon \right)$;\\[-0.05in]
    Subtree reconstruction: $\widehat\T_z \leftarrow \texttt{SparseNJ}\left(\widehat D, \varphi \right)$.
\end{algorithm}

\begin{remark}\label{rmk:pps_post}
    (Optional) post-processing step: the entire tree $\T$ can be reconstructed by attaching twigs on $\widehat \T_z$ based on $\widehat Z$. We output $\widehat \T_z$ because, this level of the hierarchy (relationships among communities instead of single nodes) is easier to visualize for large social networks. For example, Figure \ref{fig:journal_tree} displays $k = 100$ communities as leaf nodes.
\end{remark}

\begin{remark}\label{rmk:pps_regularization}
    (Default) regularization parameter $\epsilon$: compute the sample average degree as $\widehat \Delta = \sum_{i = 1}^n\sum_{j = 1}^n A_{ij}/n$ and set the default regularization parameter to $\epsilon = \gamma\widehat \Delta^{.5}/n$ with $\gamma = 0.01$. Theorem \ref{thm:meta_theorem} gives more details about conditions on $\epsilon$ to guarantee asymptotic consistency.
\end{remark}

\begin{remark}\label{rmk:pps_cutoff}

(Default) cutoff value $\varphi$: there are multiple choices available for $\varphi$, see Section \ref{sec:sparse_nj} and Theorem \ref{thm:meta_theorem} for some examples. Algorithm \ref{alg:pps} use a default cutoff value of 
$\varphi = 2 \max_{uv}\, \left\{\widehat\sigma_{uv}\right\}$,
where $\widehat\sigma_{uv}^2$ is an estimate for the variance of $\widehat D_{uv}$ and is computed using Equation \eqref{eq: var_d_uv}.
\end{remark}

\begin{remark}\label{rmk:pps_k}
Selecting $k$: like many other community detection algorithms, \texttt{vsp} requires the number of blocks as an input. Nonetheless, we observed that the output tree of \pps is robust to the selection of $k$, see Appendix \ref{appendix:robust_k} for more details.  
\end{remark}

\subsection{\texttt{TSGdist}: distance estimation with properties of GGMs}\label{sec:tsg_dist}

\texttt{TSGdist} estimates $D\in \R^{k \times k}$, the matrix of the pairwise distance (in $\T$) between the nodes in $V_z$. This matrix can also be interpreted as the distances between blocks in the DCSBM. The main idea behind this algorithm is derived from the connection between Gaussian Graphical Models and $\T$-Stochastic Graphs. According to Proposition \ref{prop:cov->dist}, 
\begin{equation}\label{eq:neg_log_transform}
    D_{uv} = -\log\left(B_{uv}\right)
\end{equation} for any pair of nodes $u, v\in V_z$. Therefore, the first step is to estimate $B$, followed by constructing a plug-in estimator for $D$. 

To ensure the validity of the log transformation in Equation \eqref{eq:neg_log_transform} and to obtain positive distances, $\widehat B$ must have all diagonal entries equal to one and all non-diagonal entries bounded between $(0, 1]$.  Algorithm \ref{algo:tsg_dist} achieves this by utilizing two matrix clipping operations, denoted as $[\, \cdot\, ]_+$ and $[\, \cdot\, ]_{\leq 1}$. For any matrix $M$, 
\(\left[M_+\right]_{ij} = \max\{0,M_{ij}\}\) and \( \left[M_{\leq 1}\right]_{ij} = \min\{1, M_{ij}\}\).

\begin{algorithm}
\caption{\texttt{TSGdist} (step 2)}
\label{algo:tsg_dist}
	\KwIn{ \\ \Indp \Indp
        adjacency matrix $A\in \R^{n\times n}$;\\ 
        estimated block membership matrix $\widehat Z\in \R^{n\times k}$;\\
        regularization parameter $\epsilon$.}
	\KwOut{\\ \Indp \Indp
        pairwise distances $\widehat D \in \R^{k\times k}$ between nodes in $V_z$.} 
    Nonnegative Transformation: let $\widehat B^{nn} = \frac{1}{n^2}\left[\widehat Z^T A \widehat Z \right]_+$;\\[-0.05in]
	Regularization: set $\widehat B^{nn}_{ij} \leftarrow \widehat B^{nn}_{ij}+\epsilon$;\\[-0.05in]
	Scaling: let $\widehat S$ be a diagonal matrix with $\widehat S_{ii} = \sqrt{\widehat B^{nn}_{ii}}$, let $\widehat B = \left[\widehat S^{-1}\widehat B^{nn}\widehat S^{-1}\right]_{\leq 1}$;\\[-0.05in]
    Negative Log Transformation: output $\widehat D$ with $\widehat D_{uv} = -\log\left(\widehat B_{uv}\right)$.
\end{algorithm}

\begin{remark}\label{rmk:bnn_estimator}
Sometimes a nonnegative transformation of $\widehat Z$ provides a more reasonable interpretation. In that case, we recommend estimating $\widehat B^{nn}$ as $\widehat B^{nn} = \left(\widehat Z_+^T\widehat Z_+\right)^{-1} \widehat Z_+^T A \widehat Z_+ \left(\widehat Z_+^T \widehat Z_+\right)^{-1}$ or $\widehat B^{nn} = \frac{1}{n^2}\widehat Z_+^T A \widehat Z_+$. Theorem \ref{thm:tsgdist} and \ref{thm:meta_theorem} still hold for these two estimators.
\end{remark}

Notably, the column ordering of $Z$ is not identifiable as one could relabel nodes in $V_z$ without changing the tree structure or $\lambda_{ij}$. In the context of DCSBMs recovery, this is the unidentifiability of communities/blocks labels, and it is generally unavoidable \citep{zhao2012consistency, varimax_rohe}. As a result, the row and column ordering of $\widehat B$ and $\widehat D$ are also not identifiable. To address this, we define the set of reordering operations as $\mathscr{P}(k) = \{P \in \R^{k\times k}: P_{ij} \in \{0, 1\}, \ P^TP = PP^T = I_k\}$.

The following theorem shows that, up to row and column reorderings, \texttt{TSGdist} gives consistent estimation for $D$ as long as $\widehat Z$ is a consistent estimator for $Z$ up to column reorderings and scalings. The convergence rate for this $\widehat D$ can be found in Theorem \ref{thm:meta_theorem}. The proof of Theorem \ref{thm:tsgdist} can be found in Appendix \ref{appendix:tsgdist}

\begin{theorem}\label{thm:tsgdist}
Suppose $A$ is generated from a $\T$-Graphical Blockmodel with $\mathscr{A} = \rho_n ZBZ^T$, where $Z$ follows a bounded distribution and $B$ is a fixed matrix. Consider any $\widehat Z$ that is a consistent estimator for $Z$ up to column reorderings and scalings, that is, there exists a series of reordering matrix $P_n\in \mathscr{P}(k)$ and a fixed diagonal scaling matrix $C\in \R^{k\times k}$ such that 
\(\left\|\widehat{Z} - ZCP_n\right\|_{2\rightarrow \infty} = O_p(e_n),\)
where $\|\cdot\|_{2\rightarrow \infty}$ is the two to infinity norm \citep{cape2019two} and $e_n$ is a series of constants converging to zero as $n$ goes to infinity. If \(\left \| A - \mathscr{A} \right\|_2 = O_p\left( n \rho_n e_n \right),\) and the regularization parameter $\epsilon = O_p(e_n\rho_n)$, we have
\[\left\|\widehat{D} - P_n^TDP_n\right\|_{F} = O_p\left(e_n\right),\]
where $\widehat D$ is the output of Algorithm \ref{algo:tsg_dist}.
\end{theorem}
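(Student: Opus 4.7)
The plan is to track errors through the four steps of \texttt{TSGdist}: first concentrate $\widehat B^{nn}$ around a population target of the form $\rho_n P_n^T (CM) B (CM) P_n$ with $M := \E[Z_1 Z_1^T]$, then show that the scaling step cleanly cancels the $\sqrt{\rho_n}$ sparsity factor, and finally apply a Lipschitz bound for the log transform.

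\textbf{Step 1 (concentration of $\widehat B^{nn}$).} Write $\widehat Z = ZCP_n + E$ with $\|E\|_{2\to\infty} = O_p(e_n)$ and expand
\[\widehat Z^T A \widehat Z = P_n^T C Z^T A Z C P_n + P_n^T C Z^T A E + E^T A Z C P_n + E^T A E.\]
I would bound the error pieces using $\|A\|_2 \le \|\mathscr A\|_2 + \|A - \mathscr A\|_2 = O_p(n\rho_n)$, $\|Z\|_2 = O(\sqrt n)$ (since the rows of $Z$ have one bounded nonzero entry), and $\|E\|_F \le \sqrt n\,\|E\|_{2\to\infty} = O_p(\sqrt n\, e_n)$. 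The submultiplicative inequality $\|XYW\|_F \le \|X\|_2 \|Y\|_2 \|W\|_F$ then controls the cross terms at size $O_p(n^2 \rho_n e_n)$ and $E^T A E$ at strictly smaller order. For the main term, split $Z^T A Z = Z^T \mathscr A Z + Z^T (A - \mathscr A) Z$: the perturbation is $O_p(n^2 \rho_n e_n)$ in Frobenius norm, while $Z^T \mathscr A Z = \rho_n (Z^T Z) B (Z^T Z)$ and $Z^T Z / n \to M$ at rate $O_p(n^{-1/2})$ by the law of large numbers. Because rows of $Z$ have a single nonzero entry, $M$ is diagonal. Under the natural assumption $n^{-1/2} = O(e_n)$, dividing by $n^2$ yields
\[\widehat B^{nn} = \rho_n\, P_n^T (CM) B (CM) P_n + O_p(\rho_n e_n)\]
in Frobenius norm; the clipping $[\,\cdot\,]_+$ is non-expansive with a nonnegative target and $\epsilon = O_p(\rho_n e_n)$ is absorbed.

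\textbf{Step 2 (scaling and log).} Since $C$ and $M$ are diagonal, the target $P_n^T (CM) B (CM) P_n$ has $(u,v)$ entry $(CM)_{\pi(u)\pi(u)} B_{\pi(u)\pi(v)} (CM)_{\pi(v)\pi(v)}$ where $\pi$ is the permutation associated with $P_n$; its diagonal is $(CM)_{\pi(u)\pi(u)}^2$ because $B_{uu}=1$ by Definition \ref{def:ggm}. Thus $\widehat S_{uu} = \sqrt{\widehat B^{nn}_{uu}} = \sqrt{\rho_n}\,(CM)_{\pi(u)\pi(u)}\,(1+O_p(e_n))$, and the $\sqrt{\rho_n}$ factors cancel in the triple product, leaving
\[\bigl(\widehat S^{-1} \widehat B^{nn} \widehat S^{-1}\bigr)_{uv} = B_{\pi(u)\pi(v)} + O_p(e_n) = (P_n^T B P_n)_{uv} + O_p(e_n).\]
The upper clipping $[\,\cdot\,]_{\le 1}$ is non-expansive and $B \in (0,1]$, so $\|\widehat B - P_n^T B P_n\|_F = O_p(e_n)$. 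Because $B_{uv} = \exp(-d(u,v))$ lies in a fixed interval bounded away from zero (finite tree, nonzero edge weights by Remark \ref{rmk:identifiability}), $x \mapsto -\log x$ is Lipschitz there with high probability, giving $\|\widehat D - P_n^T D P_n\|_F = O_p(e_n)$.

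\textbf{Main obstacle.} The delicate point is the scaling step: both $\widehat B^{nn}$ and $\widehat S$ scale with $\rho_n$ and $\sqrt{\rho_n}$, so the sparsity factor must cancel \emph{exactly} while the $O_p(e_n)$ relative error survives. This only works because $M$ is diagonal---reducing $(CM) B (CM)$ to a clean product form with $(CM)_{uu}^2$ on the diagonal---and because $B_{uu} = 1$ in the underlying GGM covariance; absent either structural fact one would not obtain a clean permutation of $B$ after scaling. A secondary subtlety is ensuring $\widehat B^{nn}_{uu}$ stays bounded away from zero uniformly over the $k$ blocks, so that $\widehat S^{-1}$ is well-defined and concentrates, which follows from the same concentration bound above applied entrywise.
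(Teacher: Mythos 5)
Your overall architecture matches the paper's proof: concentrate $\widehat B^{nn}$ around $\rho_n$ times a diagonally scaled, permuted copy of $B$, exploit $B_{uu}=1$ so that the scaling step cancels both the $\rho_n$ factor and the diagonal scalings, and finish with non-expansiveness of the clippings and a Lipschitz bound for $-\log$ on entries bounded away from zero. The one genuine gap is in your Step 1, where you replace the sample Gram matrix $Z^TZ/n$ by its population limit $M=\E[Z_1Z_1^T]$ and consequently must assume $n^{-1/2}=O(e_n)$. That assumption is not among the theorem's hypotheses: $e_n$ is only required to tend to zero, and if $e_n=o(n^{-1/2})$ your claimed expansion $\widehat B^{nn}=\rho_nP_n^T(CM)B(CM)P_n+O_p(\rho_n e_n)$ is false — the $O_p(n^{-1/2})$ fluctuation of $Z^TZ/n$ about $M$ dominates, and once you have passed to this additive $M$-approximation you can no longer argue that the fluctuation entering the off-diagonal entries and the fluctuation entering the diagonal entries (through $\widehat S$) cancel in the ratio. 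So as written your argument proves the rate $O_p\left(\max\{e_n,n^{-1/2}\}\right)$, which coincides with the theorem only under your extra assumption (harmless in the downstream application, where $e_n=\Delta_n^{-.24}\log^{2.75}n\gg n^{-1/2}$, but weaker than the statement).

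The paper avoids this by never invoking the law of large numbers at this stage: Proposition \ref{prop:tildeB} establishes $\bigl\|\widehat B^{nn}-\rho_nP_n^TSBSP_n\bigr\|_F=O_p(\rho_ne_n)$ with $S$ a \emph{random} diagonal matrix (essentially $C\,Z^TZ/n$) satisfying only $S_{jj}\asymp 1$, and Proposition \ref{prop:S} shows $\widehat S^{-1}$ tracks $\rho_n^{-.5}P_n^TS^{-1}P_n$ at rate $O_p(\rho_n^{-.5}e_n)$. Because the \emph{same} random $S$ appears in the target for $\widehat B^{nn}$ and in the scaling, it cancels exactly in $\widehat S^{-1}\widehat B^{nn}\widehat S^{-1}$, and the conclusion holds for every $e_n\to 0$. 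Your proof is repaired by the same move: carry the diagonal matrix $C\,Z^TZ/n$ (entries $\asymp 1$ by boundedness of the rows of $Z$) through the argument in place of $CM$; your Step 2 then goes through verbatim. Remaining differences are cosmetic — the paper routes the bookkeeping through operator norms with $\sqrt{k}$ conversions rather than your direct Frobenius bounds, and it additionally covers the alternative estimators of Remark \ref{rmk:bnn_estimator}.
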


\subsection{\texttt{SparseNJ}: tree estimation with NJ and thresholding}\label{sec:sparse_nj}

To estimate the subtree $\T_z$ in the $\T$-Stochastic Graph, this section proposes \texttt{SparseNJ}, an extension to the popular Neighbor-Joining (NJ) algorithm. 
Previously, \cite{atteson1997performance} showed that NJ is consistent in the following sense:  
Let the tree $\T$ have a pairwise distance matrix $D$ among its leaf nodes and let  $\eta$ be the length of the shortest edge in $\T$.  If we observe $\widehat D$ that satisfies $\max_{u,v} \left|\widehat D_{uv} - D_{uv}\right| < \eta/2$, then NJ with $\widehat D$ will reconstruct a tree $\widehat \T$ that is \textbf{equivalent} to $\T$, denoted as $\widehat \T \sim \T$. 
Importantly, the above result does not require $\T$ to be ultrametric. However, it does presume  two conditions that are violated when we seek to recover $\T_z$ using the estimated pairwise distances between nodes in $V_z$. 
First, it presumes that $\T$ is binary.  Second, it presumes that only leaf nodes appear in $\widehat D$.

This section shows that, the problem of non-binary trees and non-leaf nodes distances can be resolved by applying an edge thresholding step after the traditional NJ algorithm; we refer to this new algorithm as \texttt{SparseNJ} (see Algorithm \ref{alg:sparse_nj} in Appendix \ref{appendix:sparse_nj_nj_algorithm} for details). Edge thresholding is a common post-processing step in empirical analysis (e.g. edge contraction in \citep{choi2011learning}). We contribute to the existing literature by providing a theoretical guarantee for it. Additionally, Theorem \ref{thm:cutoff} provides insight into how to select an appropriate cutoff value. 

Denote $V_o$ as the set of nodes that appear in $\widehat D$. When a tree $\T = (V, E)$ is estimated with $\widehat D$ and NJ, the resulting estimate $\widehat \T = \left(\widehat V, \widehat E\right)$ is always a binary tree with all nodes in $V_o$ labeled as leaf nodes. Therefore, if $\T$ is a non-binary tree or if our initial distance matrix includes some non-leaf nodes, $\widehat \T$ cannot be exactly equivalent to $\T$. However, $\widehat \T$ can still ``partially'' reconstruct $\T$ in the sense that all edges in $E$ are correctly reconstructed by some edges in $\widehat E$. Specifically, \cite{mihaescu2009neighbor} shows that an edge $e$ can be correctly reconstructed as long as $\max_{u,v}\left|\widehat D_{uv}-D_{uv}\right|<d(e)/4$, where $d(e)$ is the length of edge $e$. While this result focuses on the topology structure,
we further prove the existence of a cutoff value in the estimated edge length. Importantly, this cutoff value separates all the correctly reconstructed edges from those that are not. Definitions on the ``equivalence'' between two trees and the ``correctness'' of edge reconstructions follow from \citep{atteson1997performance, bandelt1986reconstructing} and can be found in Appendix \ref{appendix:nj}. The proof of Theorem \ref{thm:cutoff} and Corollary \ref{coro:cutoff} can be found in Appendix \ref{appendix:cutoff} and \ref{appendix:cutoff_coro}, respectively.

\begin{theorem}\label{thm:cutoff}
Given any tree $\T = (V, E)$ and the estimated pairwise distances $\widehat D$ between nodes in $V_o\subseteq V$,
let $\delta = \max_{uv} \left|\widehat D_{uv} - D_{uv}\right|$. If \(\{i\in V : deg(i) < 3\} \subseteq V_o\) and 
\[\min_{e\in E} d(e)> 4\delta, \] 
then NJ correctly reconstructs all edges in $\T$. Moreover, for any $e$ that is correctly reconstructed and any edge $e_0$ that is not, 
\[\widehat d(e_0)\leq 2\delta < \widehat d(e),\]
where $\widehat d(\cdot)$ is the estimated edge length.

\end{theorem}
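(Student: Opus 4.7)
The plan is to strengthen the Mihaescu-Pachter reconstruction guarantee cited above by adding a quantitative control on estimated edge lengths. The first conclusion---that all edges of $\T$ are correctly reconstructed by NJ---is almost immediate: the hypothesis $\min_{e\in E} d(e) > 4\delta$ gives $\delta < d(e)/4$ for every $e\in E$, which is precisely the Mihaescu-Pachter sufficient condition applied edge-by-edge. The assumption $\{i\in V : deg(i) < 3\} \subseteq V_o$ is used to exclude a degenerate failure mode: NJ forces every node in $V_o$ to be a leaf of $\widehat\T$, so if a true leaf of $\T$ (or a degree-2 node that would need to be expanded) were missing from $V_o$, there would be no node in $\widehat\T$ for it to correspond to.

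For the quantitative separation $\widehat d(e_0) \le 2\delta < \widehat d(e)$, I would analyze the NJ edge-length formula step by step. At each agglomeration step, NJ expresses the length of a newly produced edge as an explicit affine function of the current distance matrix entries. Evaluated on the true $D$, this function returns the length of the corresponding edge of $\T$ when the join is across a genuine edge of $\T$, and returns exactly $0$ when the join produces an ``artifact'' edge---an edge that collapses to a point when $\widehat\T$ is contracted down to $\T$. Artifact edges arise in two ways: (i) resolving a non-binary node $v$ of $\T$ into a binary configuration introduces $deg(v)-3$ zero-length edges, and (ii) forcing an internal node of $\T$ that happens to lie in $V_o$ to be a leaf of $\widehat\T$ attaches it to its ``true'' location via a zero-length edge. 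Given a uniform perturbation bound $\bigl|\widehat d(\widehat e) - d_{\mathrm{true}}(\widehat e)\bigr| \le 2\delta$ over $\widehat e \in \widehat E$, the two cases yield $\widehat d(e) \ge d(e) - 2\delta > 2\delta$ for correctly reconstructed $e\in E$, and $\widehat d(e_0) \le 0 + 2\delta = 2\delta$ for artifact edges.

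The main obstacle is proving the $2\delta$ perturbation bound uniformly across all iterations of NJ. At a single step, the length formula $\widehat d(i,u) = \tfrac{1}{2}\widehat D_{ij} + \tfrac{1}{2(m-2)}\sum_k(\widehat D_{ik} - \widehat D_{jk})$ has coefficients whose absolute values sum to $3/2$, giving a one-step perturbation of at most $3\delta/2 < 2\delta$. I would then argue by induction on the NJ iteration that when two nodes are merged, the new row/column of the reduced distance matrix inherits an entrywise perturbation no worse than $\delta$ from the original $\widehat D$; together with the single-step bound, this propagates through all iterations to give the uniform $2\delta$ bound. The other subtle point is formalizing the dichotomy ``every $\widehat e\in\widehat E$ is either the image of a unique $e\in E$ or an artifact with true length zero,'' which I would establish by viewing $\T$ as embedded in a fully resolved binary tree on $V_o$ obtained by adding zero-length edges for both (i) and (ii) above, and then invoking the reconstruction guarantee from the first part to match each $\widehat e$ with a unique such edge. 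Combining the perturbation bound with this dichotomy closes the proof.
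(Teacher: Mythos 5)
Your proposal follows essentially the same route as the paper's proof: correct reconstruction comes from the Mihaescu--Levy--Pachter edge-radius condition $\delta<d(e)/4$ (extended to observed internal nodes by attaching them as leaves via zero-length pendant edges), the unmatched edges of the NJ output are treated as zero-length artifacts so that the estimated topology with those weights realizes the true metric on $V_o$, and an Atteson-style induction showing the reduced distance matrices retain entrywise error at most $\delta$ gives a uniform bound on the estimated edge lengths and hence the $2\delta$ separation. The only real difference is that the paper cites Atteson's non-accumulation lemma rather than re-deriving it, and—because its reduction step averages distances without subtracting $\tfrac{1}{2}\widehat d(i,j)$—its edge-length formula carries an extra correction term, so the per-edge perturbation bound is $2\delta$ (four terms with coefficient $\tfrac12$) rather than your one-step $3\delta/2$; your $\delta$-inheritance claim holds precisely for that averaged-reduction convention, so your induction should be set up in that framework.
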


\begin{corollary}\label{coro:cutoff}
If conditions in Theorem \ref{thm:cutoff} are satisfied, then the \texttt{SparseNJ} algorithm with the cutoff value  $\varphi = 2\delta$ outputs a tree $\widehat \T$ that is equivalent to $\T$.
\end{corollary}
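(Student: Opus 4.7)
The plan is to derive this corollary as an almost direct bookkeeping consequence of Theorem \ref{thm:cutoff}. First I would run NJ on $\widehat D$ to produce a binary tree $\widehat \T_0$, which serves as the input to the thresholding step of \texttt{SparseNJ}. Theorem \ref{thm:cutoff} immediately gives two structural facts about $\widehat \T_0$ under the hypotheses of the corollary: every edge $e \in E$ of $\T$ is correctly reconstructed in $\widehat \T_0$, and the estimated edge lengths obey the clean dichotomy $\widehat d(e_0) \le 2\delta < \widehat d(e)$ separating every incorrectly reconstructed edge $e_0$ from every correctly reconstructed edge $e$.

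Next I would identify the ``extra'' structure in $\widehat \T_0$. Because NJ always outputs a binary tree whereas $\T$ need not be binary, $\widehat \T_0$ must be a binary refinement of $\T$ obtained by inserting auxiliary edges that split any high-degree internal node of $\T$. The first fact above says these auxiliary edges are precisely the edges of $\widehat \T_0$ not corresponding to any $e \in E$, and the second fact says their estimated lengths are all at most $2\delta$; conversely, every edge with estimated length exceeding $2\delta$ corresponds uniquely to some $e \in E$.

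Finally, the thresholding step of \texttt{SparseNJ} with $\varphi = 2\delta$ contracts exactly the auxiliary edges while preserving exactly the edges corresponding to some $e \in E$. The output $\widehat \T$ thus induces on $V_o$ the same collection of splits as $\T$, which is the notion of equivalence $\widehat \T \sim \T$ defined in Appendix \ref{appendix:nj} following \citep{atteson1997performance, bandelt1986reconstructing}. The only subtle point, and the main obstacle, is verifying that contracting the short edges of a binary refinement returns $\T$ up to equivalence rather than some unrelated tree; this follows because Theorem \ref{thm:cutoff} guarantees each $e \in E$ induces a genuine split of $V_o$ that is preserved as a separate edge of $\widehat \T_0$ with length strictly exceeding $\varphi$, and the split system so preserved determines $\T$ uniquely up to equivalence.
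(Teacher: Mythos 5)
Your argument is correct and follows essentially the same route as the paper: the paper's proof simply combines Theorem \ref{thm:cutoff} (all edges of $\T$ correctly reconstructed, with the $2\delta$ length dichotomy separating them from spurious edges) with Proposition \ref{prop:tree_shape} (equality of split systems on $V_o$ implies tree equivalence), which is exactly the reasoning you spell out in more detail when you note that thresholding at $\varphi=2\delta$ contracts precisely the auxiliary edges and that the preserved splits determine $\T$ up to equivalence.
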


While \texttt{SparseNJ} is simple to implement and has a theoretical guarantee, its success depends on the choice of an appropriate cutoff value, and it may erroneously shrink short edges to zero when $\widehat d(e)$ is not accurately estimated. In empirical data analysis, the main goal is to obtain a clear picture of the underlying structures and understand the relationship between clusters in the network. In this regard, a basic neighbor-joining algorithm can produce trustworthy results since the incorrectly reconstructed edges are short (according to Theorem \ref{thm:cutoff}), and therefore should not significantly affect the visualization result.

If the precise topology structure is of interest, a simple approach is to sort and plot $\widehat d(e)$ for all edges $e\in \widehat E$, and determine the cutoff value based on the gap between consecutive $\widehat d(e)$ values. Alternatively, if $A_{ij}$ is Poisson distributed or follows a sparse Bernoulli distribution, we recommend estimating the variance of $\widehat D_{uv}$ and using it to approximate $\delta$. In particular, let 

\begin{equation}\label{eq: var_d_uv}
   \widehat \sigma^2_{uv} = \mathds{1}\left\{\left[\widehat Z^T A\widehat Z\right]_{uv} >0\right\} \ \left. \widehat S_{uu}^2 \widehat S_{vv}^2  \sum_{i = 1}^n\sum_{j = 1}^n  \widehat Z^2_{iu} \widehat Z^2_{vj}A_{ij}\middle/\widehat B_{uv}^2 \right.,
\end{equation}
where $\widehat S$, $\widehat Z$, and $\widehat B$ are estimations defined in Algorithm \ref{algo:tsg_dist}. Then we could estimate $\widehat  \delta = \max_{uv} \, \left\{\widehat\sigma_{uv}\right\}$ and use $\varphi = 2\widehat \delta$ as the cutoff value. This is the default cutoff value in Algorithm \ref{alg:pps} (see Remark \ref{rmk:pps_cutoff}) and is also used in simulations (see Appendix \ref{sec:simu_multifurcating}).

\subsection{Consistency guarantee of \pps}\label{sec: pps_consistent}

To allow for sparseness in $A$, this section assumes the population adjacency matrix scales with $\rho_n\in \R$, i.e. $\mathscr{A} = \E(A|Z) = \rho_n ZBZ^T$, where $B$ is a fixed covariance matrix (of nodes in $V_z$) and rows in $Z$ follow an 
asymptotically fixed distribution. If $\rho_n \rightarrow 0$, then $A$ contains mostly zeros. In the context of $\T$-Stochastic Graphs, $\rho_n \rightarrow 0$ implies that the distances between nodes in $V_z$ are fixed while the distances from leaf nodes to their parents go to infinity. This section examines the asymptotic performance of \pps with a focus on the output of step 2 and step 3. We refer readers to \citep{varimax_rohe} for the consistency of $\widehat Z$.

The following assumptions are needed to guarantee the consistency of the \pps algorithm. Assumption \ref{assum:z_dist} is a standard distribution assumption for the block membership matrix $Z$ in DCSBMs \citep{varimax_rohe, zhao2012consistency}. In the context of $\T$-Stochastic Graphs, this means that fixing tree $\T_z$, each leaf node independently selects its parent out of $k$ nodes in $V_z$ through a Multinomial distribution. Then, the weight between leaf nodes and their parents, after a negative exponential transformation, follows a bounded positive distribution. Assumption \ref{assum:sparse} controls the sparsity level of $A$. It requires the average expected degree to grow as fast as $O(\log^{11.1} n)$. Assumption \ref{assum:A_dist} governs the tail behavior of $A_{ij}$. It is worth noting that this assumption is more inclusive than sub-Gaussian as both Poisson and Bernoulli distributions satisfy it. We refer readers to \citep{varimax_rohe} for more discussions on this assumption.

\begin{assumption}\label{assum:z_dist}
There exists a probability distribution $\pi$ on $[k]$ and a bounded positive distribution $f(\cdot)$ such that $z(i)\stackrel{i.i.d.}{\sim} Multinomial(\pi)$ and  $\theta_{i}\stackrel{i.i.d.}{\sim} f(\cdot)$.
\end{assumption}

\begin{assumption}\label{assum:sparse}
$\Delta_n \succeq \log^{11.1}n$, where $\Delta_{n} = n\rho_n$.
\end{assumption}

\begin{assumption}\label{assum:A_dist}
Let $\bar{\rho}_n = \max_{i, j} |\mathscr{A}_{ij}|$, then for all $m\geq 2$
\[\E \left[(A_{ij}-\mathscr{A}_{ij})^m\right]\leq (m-1)!\max(\bar{\rho}_n^{m/2}, \bar{\rho}_n),\]
where this expectation is conditional on $Z$.
\end{assumption}

Since node labels in $V_z$ are not identifiable (as discussed in Section \ref{sec:tsg_dist}), we examine the consistency of $\widehat \T_z$ up to node relabeling.
Specifically, consider tree $\T_z$ with distance matrix $D\in \R^{k\times k}$, and let $P\in \mathscr{P}(k)$ be a reordering matrix. We define $P(\T_z)$ as the tree obtained by relabelling nodes in $V_z$ such that the distance matrix is reordered as $P^TDP$. 
For a set of random variables $X_n$ and constants $a_n$, notation $X_n = \omega_p(a_n)$ means that $\lim_{n\rightarrow \infty}\pr\left(|X_n/a_n| \le \epsilon \right)=0$ for any $\epsilon>0$. The proof of Theorem \ref{thm:meta_theorem} can be found in Appendix \ref{appendix:meta_theorem}.

\begin{theorem}\label{thm:meta_theorem}
Suppose $A$ is generated from a $\T$-Graphical Blockmodel with $\mathscr{A} = \rho_n ZBZ^T$. Under assumption \ref{assum:z_dist},\ref{assum:sparse},\ref{assum:A_dist}, for any regularization parameter $\epsilon = O_p(\Delta_n^{.76} n^{-1}\log^{2.75})$ and cut off value $\varphi$ that converges to zero with $\varphi = \omega_p \left(\Delta^{-{.24}}\log^{2.75} n\right)$
, there exists a series of reordering matrix $P_n\in \mathscr{P}(k)$ such that \[\left\|\widehat{D} - P_n^TDP_n\right\|_{F} = O_p\left(\Delta_n^{-.24}\log^{2.75} n\right),\]
\[\pr\left(\widehat \T_z \sim P_n\left(\T_z\right)\right)\xrightarrow{n\rightarrow \infty} 1, \quad \pr\left(\widehat \T \sim \T\right)\xrightarrow{n\rightarrow \infty} 1,\]
where $\widehat D$ is the estimate produced by step 2 in algorithm \ref{alg:pps}, $\|\cdot\|_F$ is the Frobenius norm, $\widehat \T_z$ is the output of algorithm \ref{alg:pps}, and $\widehat \T$ is the output of the optional post-processing step in Remark \ref{rmk:pps_post}.
\end{theorem}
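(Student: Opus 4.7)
The proof plan reduces Theorem \ref{thm:meta_theorem} to slotting together three pre-established pieces, one per algorithmic step: the $\texttt{vsp}$ consistency result of \cite{varimax_rohe}, Theorem \ref{thm:tsgdist}, and Corollary \ref{coro:cutoff}. First, under Assumptions \ref{assum:z_dist}--\ref{assum:A_dist}, I would invoke the \texttt{vsp} consistency statement of \cite{varimax_rohe} to obtain a sequence of reordering matrices $P_n\in \mathscr{P}(k)$ and a fixed diagonal scaling matrix $C$ such that $\|\widehat Z - ZCP_n\|_{2\to\infty} = O_p(e_n)$ with $e_n := \Delta_n^{-.24}\log^{2.75} n$. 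The same random-matrix analysis also delivers the spectral concentration $\|A - \mathscr{A}\|_2 = O_p(\sqrt{\Delta_n})$, and since Assumption \ref{assum:sparse} forces $\Delta_n^{-1/2} \le \Delta_n^{-.24}\log^{2.75} n = e_n$, this bound is $O_p(n\rho_n e_n)$, exactly the hypothesis required by Theorem \ref{thm:tsgdist}.

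Next, I would substitute these rates, together with the prescribed $\epsilon = O_p(\Delta_n^{.76} n^{-1} \log^{2.75} n) = O_p(e_n \rho_n)$, into Theorem \ref{thm:tsgdist}. This directly yields $\|\widehat D - P_n^T D P_n\|_F = O_p(e_n)$, which is the first displayed conclusion.

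For the topology claims, set $\delta_n := \max_{u,v} |\widehat D_{uv} - (P_n^T D P_n)_{uv}| \le \|\widehat D - P_n^T D P_n\|_F = O_p(e_n) = o_p(1)$. Because the underlying tree $\T$ (and hence $B$) is fixed, $\min_{e \in E_z} d(e)$ is a strictly positive constant, so $\min_e d(e) > 4 \delta_n$ holds with probability tending to one, verifying the hypothesis of Theorem \ref{thm:cutoff}. The assumption $\varphi = \omega_p(\Delta_n^{-.24}\log^{2.75} n) = \omega_p(\delta_n)$ combined with $\varphi \to 0$ ensures that $2\delta_n \le \varphi < \min_e d(e)$ with probability tending to one, which by Corollary \ref{coro:cutoff} delivers $\widehat \T_z \sim P_n(\T_z)$ w.h.p. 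For the full tree $\widehat \T \sim \T$, the post-processing step of Remark \ref{rmk:pps_post} attaches leaf $i$ to $\widehat p(i) = \arg\max_j \widehat Z_{ij}$; under Assumption \ref{assum:z_dist} each row of $ZC$ has a unique positive entry whose gap to the zeros is bounded below by a positive constant, so the row-wise bound $\|\widehat Z - ZCP_n\|_{2\to\infty} = o_p(1)$ together with a union bound over the $n$ leaves recovers every parent assignment correctly w.h.p., yielding the final claim.

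The main obstacle, already handled externally, is the $2\to\infty$ perturbation bound for \texttt{vsp} at the precise rate $\Delta_n^{-.24}\log^{2.75} n$, which rests on entrywise eigenvector concentration for sparse random graphs in \cite{varimax_rohe}. The only new bookkeeping in my argument is the rate balancing across the three steps: the spectral bound $\sqrt{\Delta_n}$ must not exceed $n\rho_n e_n = \Delta_n e_n$ (forcing $e_n \gtrsim \Delta_n^{-1/2}$, satisfied since $\Delta_n^{-.24} \ge \Delta_n^{-1/2}$), and the cutoff $\varphi$ must lie in the asymptotic window $(2\delta_n,\, \min_e d(e))$, which is nonempty w.h.p.\ because $\delta_n = o_p(1)$ while the lower endpoint is a fixed positive constant.
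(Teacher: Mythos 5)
Your proposal is correct in structure and follows essentially the same route as the paper: invoke the \texttt{vsp} consistency and the spectral concentration of $A$ to satisfy the hypotheses of Theorem \ref{thm:tsgdist} with $e_n = \Delta_n^{-.24}\log^{2.75}n$, then verify the conditions of Theorem \ref{thm:cutoff}/Corollary \ref{coro:cutoff} and argue that $\varphi$ falls in the window between $2\delta_n \to 0$ and the fixed positive minimum edge length, exactly as in Appendix \ref{appendix:meta_theorem}. Two small points the paper checks that you skip: the raw \texttt{vsp} guarantee of \cite{varimax_rohe} holds only up to sign-flipped permutations, so the paper needs the sign-adjustment step and Lemma \ref{lemma:Zconverge_main_thm} before Theorem \ref{thm:tsgdist} (which requires consistency up to reorderings and positive scalings) can be applied; and Theorem \ref{thm:cutoff} also requires $\{i : deg(i) < 3\}\subseteq V_z$, which the paper verifies via the identifiability assumption that all internal nodes of $\T$ have degree $\geq 3$. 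Both are short verifications, so your argument is complete once they are added; your rate bookkeeping (including $\epsilon = O_p(e_n\rho_n)$ and $\|A-\mathscr{A}\|_2 = O_p(n\rho_n e_n)$, where the paper keeps the extra $\log^{1.5}n$ factor) matches the paper's.
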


\section{Wikipedia hyperlink network: recovered hierarchy reveals a world map}\label{sec:real_data}

This section studies the English Wikipedia hyperlink network from \citep{snapnets}. This network contains over 4 million Wikipedia pages and approximately one hundred million hyperlinks. Each page is a vertex in the network and $A_{ij} = 1$ indicates a hyperlink from page $i$ to page $j$. For simplicity, we discard pages without any English letters in the title and take the 3-incore component, i.e., the largest subgraph where every page receives links from at least 3 other pages. After this preprocessing, there are $2,464,429$ pages and $76,229,780$ links left.

Due to its asymmetricity, $A$ holds information about both the linking pattern and the being-linked pattern in this network. We investigate both of them by applying \pps to $AA^T$ and $A^TA$, respectively.
More details about how to deal with asymmetric adjacency matrices and interpret the results can be found in Appendix \ref{appendix:asymmetric}. Both trees are constructed with $k = 50$ in the first step, \(\widehat B^{nn} = \widehat Z_+^TA\widehat Z_+\) in the second step, and the NJ algorithm in the third step. Each leaf node is named by the title of the page with the largest element in the corresponding column of $\widehat Z$. Additional discussions about leaf node labeling can be found in Appendix \ref{appendix:bff}.

\begin{figure}[h]
    \centering
	\includegraphics[width=4in]{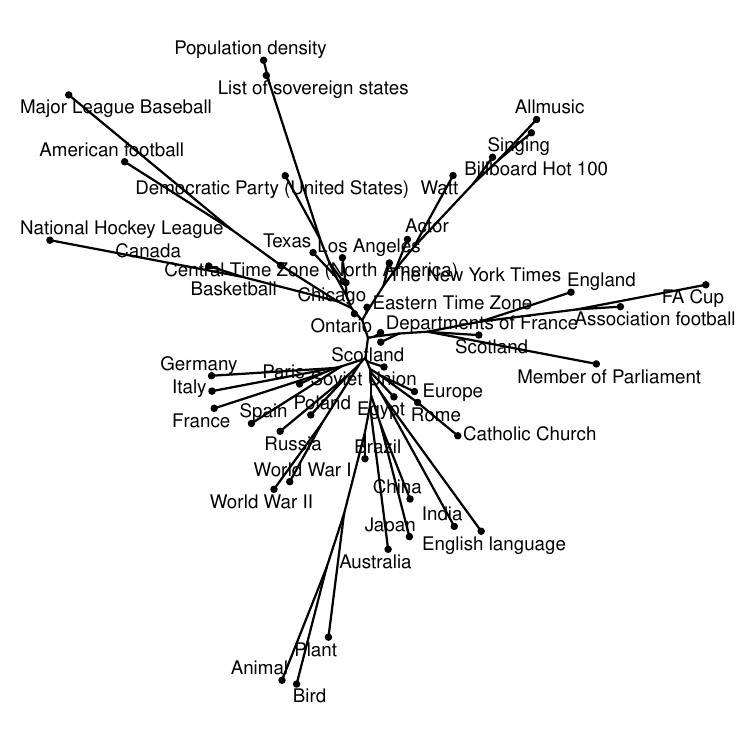}
 \vspace{-0.3in}
   \caption{Tree recovered from the being linked pattern of Wikipedia pages. On the bottom of the tree is a cluster of animals and plants. As we go up, there are clusters of Asian countries (China, Japan); European countries (Germany, Italy, France), the United Kingdom has its own cluster on the right (England, FA Cup); the top half of the tree is about the United States and Canada (North American countries), with metropolis (Chicago, Ontario) near the center of the tree.}
   \label{fig:wiki_beinglinked}
\end{figure}

Figure \ref{fig:wiki_beinglinked} displays the hierarchy structure recovered from the being linked pattern. From bottom to top, there are clusters of animals and plants, Asian countries, European countries, and North American countries. The majority of the tree is occupied by the clusters of European and North American countries, which is expected given that the dataset is derived from English Wikipedia pages. Apart from countries, the tree also exhibits representative cultures and events. For example, on the right side, there are ``Association football'' and ``FA Cup'' near the United Kingdom cluster; on the top, we can observe ``American football'', ``Major League Baseball'', ``Billboard Hot 100'', etc, which are all mainstream sports and entertainment events in American. In summary, this tree serves as a Wikipedia world map that reveals the relationship between countries and cultures across the world.
Unlike the being linked tree, where all leaf nodes are labeled with meaningful page names, the linking pattern tree (Figure \ref{fig:wiki_Z} in Appendix \ref{appendix:wiki_Z}) labels numerous nodes as ``List of something''. This is because pages titled ``List of something''  typically have a high number of outgoing links and are therefore considered to be important hubs in their cluster. Despite the discrepancy in node labeling, comparable clustering patterns can be observed in the linking tree.

\subsection{An informal diagnosis on tree topology}

Even if a data source has no hierarchical structure, \pps will still output a tree.  So, it is important to ask how well the \pps tree represents the data. 
This section provides an informal check for whether tree topology \textit{fits} the dataset. Specifically, we compare $\widehat D$ estimated by \texttt{TSGdist} in the second step with the distances re-estimated by NJ. Notice that the former one can be any matrix with positive elements 
and the latter one must be an additive distance on the tree. The diagnostic results for the Wikipedia hyperlink network are presented in Figure \ref{fig:diagnosis}, where the two matrices are quite similar. The second matrix appears to be a smoothed version of the first one, suggesting that a tree structure fits the data well, with idiosyncratic disturbances.

\begin{figure}[h]
    \centering
    \textbf{Comparing $\widehat D$ estimated by the \texttt{TSGdist} and the \texttt{NJ} algorithm performs an informal diagnosis of whether a $\T$-Stochastic Graph captures the latent structure well.}\\
    \vspace{0.05in}
	\includegraphics[width=3.5in]{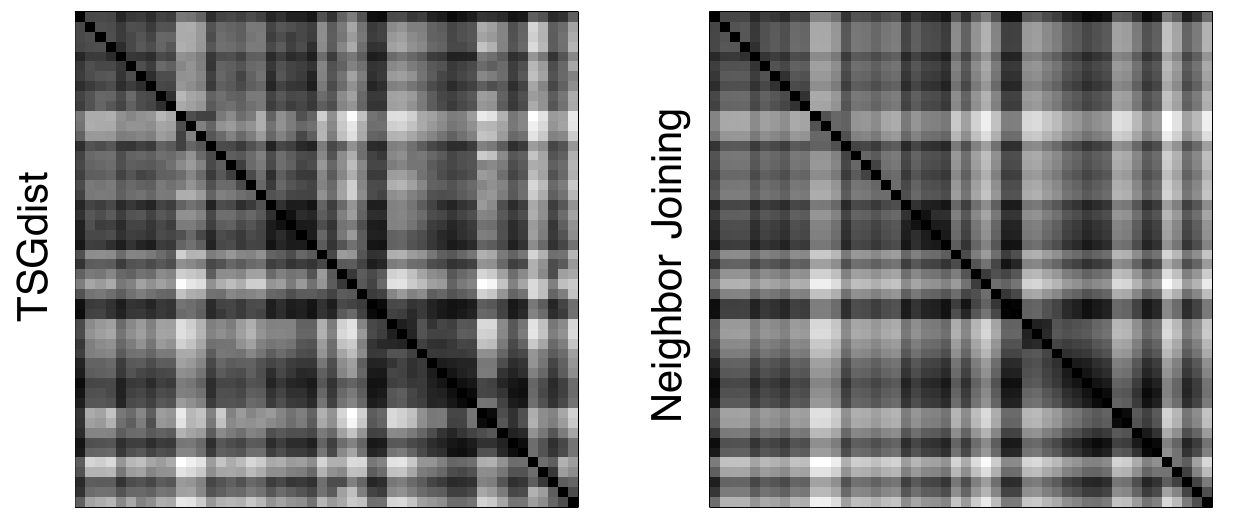}
 \vspace{-0.1in}
    \caption{Diagonsis plot of the being linked pattern in the Wikipedia hyperlink network, with rows and columns arranged following the leaf nodes ordering in the estimated tree (in a circular fashion).} 
   \label{fig:diagnosis}
\end{figure}

\section{Discussion}

Dimension reduction is a fundamental and crucial technique in modern multivariate analysis. However, the estimated dimensions may be hard to interpret especially when there are hundreds of them, which is usually true in large real datasets. Statistical tools are needed to make sense of these dimensions. While PCA provides an ordering of components based on eigenvalues or variance explained, there is no similar technique for factor analysis. The same issue holds for latent structure recovery problems in network analysis, where large networks are assumed to have low-rank structures, and techniques like spectral clustering are used to identify those dimensions. After latent dimensions or factors are estimated, the relationship between them is left unidentified. In many empirical settings, these dimensions appear to have some hierarchical structure.

In this paper, 
we propose a novel parameterization of latent hierarchies in social networks through an exponential transformation of the pairwise negative distance between leaf nodes in a tree graph. Despite its simplicity, this model can generalize to many previous common model classes and hierarchical models. Six alternative models are also proposed and discussed, they understand hierarchies from different angles, yet, are all equivalent to this simple model. Based on those equivalences, a three-step bottom up recovery procedure is proposed: the first step identifies clusters in the network, the second step serves as an aggregation tool that reduces error caused by instabilities of network observations; the third step recovers the underlying structure from this aggregation. We provide consistency guarantees for this three-step recovery procedure. Empirically, it reveals meaningful structure in large scale networks even with $k=100$ communities, a number far larger than one might be willing to interpret without a hierarchical interpretation.

\paragraph{Acknowledgements:}
Thank you to Cécile Ané, Sebastien Roch, and Bret Larget for teaching us phylogeny. Thank you to Keith Levin, Joshua Cape, Tiago P. Peixoto, Vince Lyzinski, and Richard Nerland for inspiring discussions regarding social networks. Thank you to Muzhe Zeng, Fan Chen, Alex Hayes, Auden Krauska, and Jitian Zhao for helpful discussions. This research is supported in part by NSF Grants DMS-1916378 and DMS-2023239 (TRIPODS).

\clearpage

\bibliographystyle{abbrvnat}
\bibliography{references}

\newpage
\appendix

\section{Parametric bootstrap for large networks}\label{appendix:parametric_bootstrap}

This section examines the parametric bootstrap of Li's model and our $\T$-Stochastic Graph. 
Specifically, observing the adjacency matrix of a graph, we can fit two different models.  First, we use the top-down bi-partition algorithm to estimate Li's model. Then, we use the \pps algorithm defined in Section \ref{sec:estimation} to estimate the more general $\T$-Stochastic Graph model.  
With these two models, we can simulate artificial networks and then calculate their splitting vectors. 

\begin{figure}[!ht] %
   \centering

\textbf{Diagnostic plots of ``parametric bootstraps'' from a non-ultrametric model (left) and an ultrametric model (right). Though the top-down bi-partition algorithm allows $\widehat \T_r$ to be imbalanced, the splitting vector still does not align with the real dataset.}
\vspace{0.05in}
   \includegraphics[width=6in]{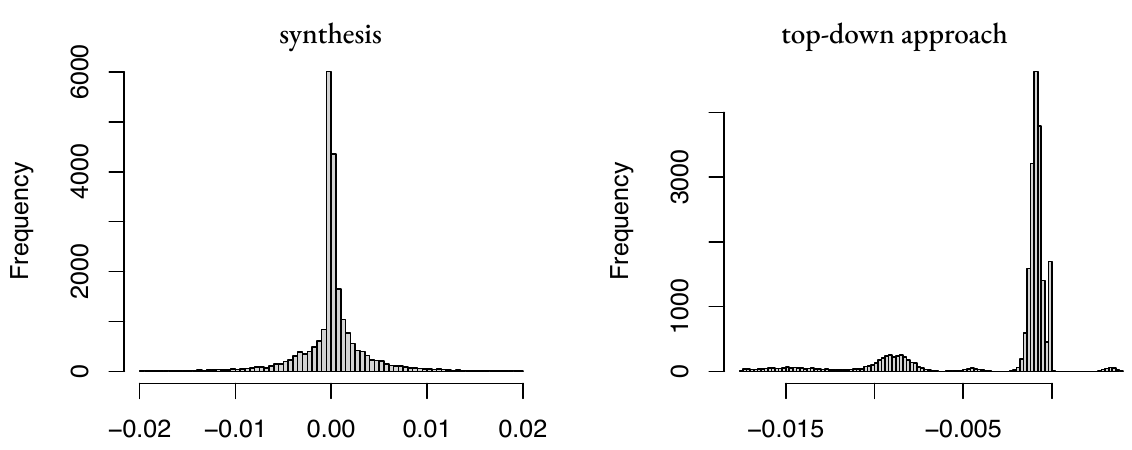} 
   \caption{These diagnostic plots are generated from networks that were simulated from two different statistical models.  Both statistical models were estimated from the citation network in Figure \ref{fig:journal_tree}, with one important difference: the model on the left was estimated with the \pps algorithm proposed in this paper and the model on the right was estimated with the \texttt{HCD-Spec} algorithm proposed by \cite{li}.
   The \pps network on the left recreates the single mode displayed in the original data's diagnostic plot (Figure \ref{fig:secondeighist}). The ultrametric procedure on the right fails to recreate this pattern. 
   Appendix \ref{sec:simulation} gives more details of this experiment. } \label{fig:split_vector_generated}
   
\end{figure}

Figure \ref{fig:split_vector_generated} displays the diagnostic plots of  parametric bootstrap networks fitted using the journal citation graph (same graph in Figure \ref{fig:journal_tree}). The diagnostic plot from the $\T$-Stochastic Graph network (left) has a similar single-mode structure compared to real data, while the diagnostic plot for Li's model (right) presents multiple modes. This multi-mode phenomenon appears in many other large networks as well (Figure \ref{fig:parametric_bootstrap})
One might notice that Theorem \ref{thm:ultrametric_split} implies two symmetric modes around zero, but the right panel displays asymmetric modes. This is because the top-down bi-partition algorithm allows the estimated tree to be unbalanced.
However, this tolerance for imbalance does not alleviate the fundamental problem: there are still multiple modes.

Unsurprisingly, the network that is simulated from an ultrametric model generates a diagnostic plot with multiple modes. These modes would enable efficient splitting. Unfortunately, the real data does not generate these multiple modes; \textit{they are an artifact of modeling assumptions}. 

\begin{figure}[!ht]
    \centering
    \includegraphics[angle = 270, width = 6in]{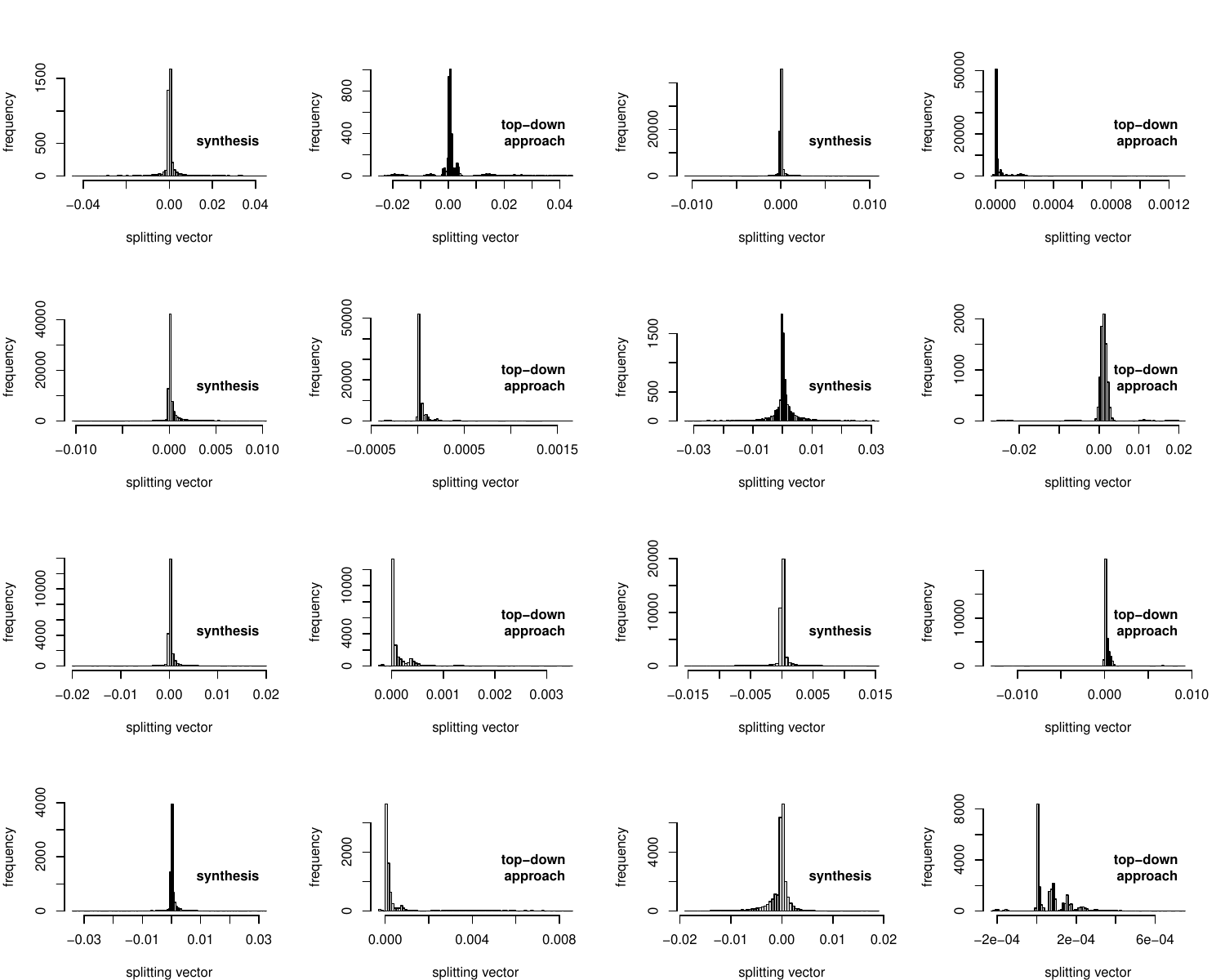}
    \caption{Parametric bootstrap plots for social networks in Table \ref{tab:splitvecs}. All plots are created in the same way as Figure \ref{fig:split_vector_generated}. For the Wikipedia page network, a random subsample of 24644 nodes (1\% of the original nodes) is taken since the original network is too large for the bi-partition algorithm to work.}
    \label{fig:parametric_bootstrap}
\end{figure}

\clearpage
\section{Relationship with Other hierarchical models}\label{appendix:other_models}

In modern literature, the idea of encoding hierarchies in social networks with latent tree graphs can be traced back to the Hierarchical Random Graph (HRG) proposed by \cite{clauset2008hierarchical}. 
In the HRG, each edge is a Bernoulli random variable where $\lambda_{ij}$ depends upon the lowest common ancestor of $i$ and $j$. If the pair of nodes $(u, v)$ have the same lowest common ancestor as nodes $(i, j)$, then $\lambda_{ij} = \lambda_{uv}$. 
Following this parameterization in HRGs, 
\cite{lei} and \cite{li} proposed the Binary Tree Stochastic Blockmodel (BTSBM) and the generalized Binary Tree Stochastic Blockmodel (GBTSBM) by combining the HRG and the SBM.

Although these models define $\lambda_{ij}$ in a different manner, \textbf{HRGs, BTSBMs, and GBTSBMs are all $\T$-Stochastic Graphs with certain restrictions}. To discuss how the $\T$-Stochastic Graph is more general than these three models, we introduce the following conditions. 
We say an unrooted tree $\T$ with distance $d(\cdot, \cdot)$ satisfies some of the following conditions if there exists a way to root $\T$ such that the rooted tree $\T_r$ with distance $d(\cdot, \cdot)$ satisfies those conditions. 

\begin{enumerate}
    \item[(C1)] \textbf{Ultrametric.} All leaf nodes are equidistant to the root.
    \item[(C2)] \textbf{Multilevel-Ultrametric.} Internal nodes at the same level\footnote{In a rooted tree $\T_r$, the level of node $u$ is the number of edges on the path from the root $r$ to node $u$.} are equidistant to the root.
    \item[(C3)] \textbf{Binary-Until-Leaves.} Every internal node has exactly two children, except for the parents of leaf nodes.
\end{enumerate}

Let $\mathscr{F}_{\text{TSG}}$ denote the set of distributions of all $\T$-Stochastic Graphs. Specifically, each element $f_{\T, d}$ in $\mathscr{F}_{\text{TSG}}$ represents the distribution of a random graph $A$ that is a $\T$-Stochastic Graph with distance $d(\cdot, \cdot)$. Similarly, let $\mathscr{F}_{\text{HRG}}$, $\mathscr{F}_{\text{BTSBM}}$, and $\mathscr{F}_{\text{GBTSBM}}$ denote sets of distributions for the HRG, the BTSBM, and the GBTSBM, respectively. 

Certain models impose specific constraints to establish consistency proofs for their proposed algorithms, such as the assortative assumption in the BTSBM. In the context of $\T$-Stochastic Graphs, most assumptions can be interpreted as whether negative edge weights (as in Remark \ref{rmk:negative_edge_weights})  are allowed. 
To make things neat, Theorem \ref{thm:model_relationship} assumes that all models may include negative edge weights. The formal definition of specific models and assumptions are available in Appendix \ref{appendix:other_models_definition} and the proofs of Theorem \ref{thm:model_relationship} can be found in Appendix \ref{appendix:other_models_equivalence}. In addition, Appendix \ref{appendix:other_models_comprehensive_equivalence} contains a more comprehensive theorem that examines model equivalence in light of the presence or absence of negative edge weights.

\begin{figure}[!ht] %
   \centering
   \includegraphics[width=5.5in]{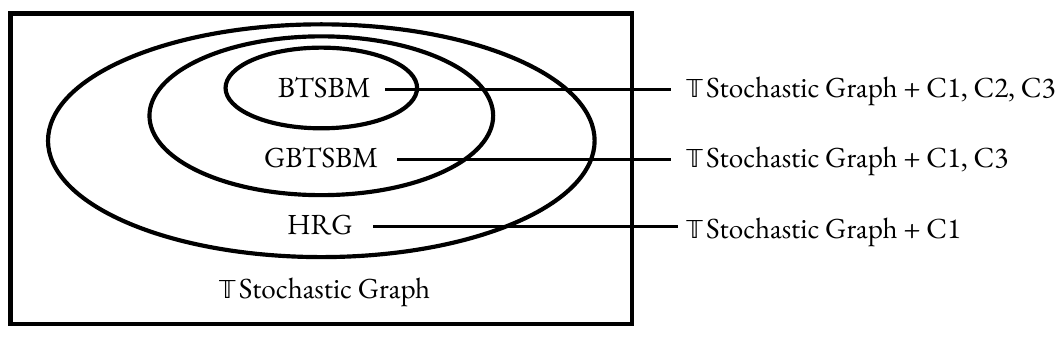} 
   \caption{This figure summarizes the result in Theorem \ref{thm:model_relationship}. Starting with a $\T$-Stochastic Graph, one can get an HRG by imposing the ultrametric condition, a generalized BTSBM by further forcing the binary condition, and a BTSBM if the Multilevel-Ultrametric condition is also satisfied.}
   \label{fig:venn_plot}
\end{figure}

\begin{theorem}\label{thm:model_relationship}
\footnote{
\cite{clauset2008hierarchical, li, lei} all assume $A_{ij}$ to be Bernoulli distributed, and we discuss those models under a more general semi-parametric setting where only $\E(A_{ij})$ are defined. If assuming Bernoulli distribution, then the equivalence still holds by reducing the $\T$-stochastic graph to be Bernoulli distributed. }
\leavevmode

$\mathscr{F}_{\text{HRG}} = \{f_{\T, d}\in \mathscr{F}_{\text{TSG}}: \T \mbox{ with distance $d(\cdot, \cdot)$ satisfies C1}\}, $

$\mathscr{F}_{\text{GBTSBM}} = \{f_{\T, d}\in \mathscr{F}_{\text{TSG}}: \T \mbox{ with distance $d(\cdot, \cdot)$  satisfies C1, C3}\}, $

$\mathscr{F}_{\text{BTSBM}} = \{f_{\T, d}\in \mathscr{F}_{\text{TSG}}: \T \mbox{ with distance $d(\cdot, \cdot)$  satisfies C1, C2, C3}\}.$

\end{theorem}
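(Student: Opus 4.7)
The plan is to prove each of the three set equalities by establishing both inclusions, working through a single common construction that translates between the ``node probability'' parameterization of the HRG-style models and the ``edge weight'' parameterization of $\T$-Stochastic Graphs. The key observation driving everything is that when $\T$ is ultrametric with root $r$, the height function $h(u) := d(u, \text{any leaf descendant of } u)$ is well-defined, and for any two leaves $i,j$ with lowest common ancestor $m$ one has $d(i,j) = 2h(m)$, so $\lambda_{ij} = \exp(-2h(m))$ depends on $(i,j)$ only through $m$. This is precisely the HRG parameterization.

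For the inclusion $\mathscr{F}_{\text{HRG}} \supseteq \{f_{\T,d}\in\mathscr{F}_{\text{TSG}}: \T \text{ satisfies C1}\}$, I would take an ultrametric $\T$-Stochastic Graph, root $\T$ at the point witnessing C1, and define $p_u := \exp(-2h(u))$ at each internal node; the observation above gives $\lambda_{ij}=p_{m(i,j)}$, matching an HRG. For the reverse inclusion, given an HRG with rooted tree $\T_r$ and node probabilities $\{p_u\}$, I set $h(u) := -\tfrac{1}{2}\log(p_u)$ for internal $u$ and $h(\text{leaf})=0$, and define edge weights along each parent-child pair $(u,v)$ by $w_{uv} := h(u)-h(v)$. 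Telescoping gives $d(i,m)=h(m)$ for any leaf $i$ descending from $m$, so $d(i,j)=2h(m(i,j))$ and hence $\exp(-d(i,j))=p_{m(i,j)}$, exhibiting the HRG as a $\T$-Stochastic Graph on an ultrametric tree. Because we are working in the regime that allows negative internal edge weights (per the theorem's hypothesis), monotonicity of $h$ along root-to-leaf paths need not hold and the construction goes through unconditionally.

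The GBTSBM and BTSBM cases reuse this construction and only require checking the additional topological conditions are preserved. For GBTSBM I would invoke its definition (Appendix \ref{appendix:other_models_definition}) that the underlying tree is binary except possibly at parents of leaves, which is exactly condition C3; the translation preserves the topology of $\T_r$, so the bijection between GBTSBMs and ultrametric binary-until-leaves $\T$-Stochastic Graphs is immediate from the HRG argument restricted to this subclass. For BTSBM, the extra restriction is that the node probabilities depend only on the level in $\T_r$; under the map $p_u \leftrightarrow h(u)=-\tfrac{1}{2}\log p_u$, this is exactly the statement that $h(u)$ depends only on the level, which is condition C2 (Multilevel-Ultrametric). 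Thus each of the three equalities reduces to the same core computation plus a check that C3 and C2 are preserved across the parameter change.

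The main obstacle I expect is bookkeeping rather than mathematical depth: making the definitions in Appendix \ref{appendix:other_models_definition} precise enough to justify ``HRG is determined by its node probabilities'' in a semi-parametric setting (where only $\E A_{ij}$ is fixed), and verifying that the ``binary-until-leaves'' condition in C3 matches the convention used in \cite{li, lei} rather than the stricter ``strictly binary'' convention; a similar care is needed to match the BTSBM's notion of ``levels'' with the level-indexed edge weights the construction produces. Once those correspondences are pinned down, the chain $\mathscr{F}_{\text{BTSBM}} \subset \mathscr{F}_{\text{GBTSBM}} \subset \mathscr{F}_{\text{HRG}} \subset \mathscr{F}_{\text{TSG}}$ is mirrored on the tree side by the chain of constraints $\text{C1+C2+C3} \Rightarrow \text{C1+C3} \Rightarrow \text{C1}$, and the three equalities follow.
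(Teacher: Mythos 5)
Your proposal is correct and is essentially the paper's own argument in a different parameterization: your height map $h(u)=-\tfrac{1}{2}\log p_u$ with edge weights $w_{uv}=h(u)-h(v)$ is exactly the paper's construction $\theta_{uv}=\sqrt{\pr(u)/\pr(v)}$, $w_{uv}=-\log\theta_{uv}$, and your reverse map $p_u=\exp(-2h(u))$ is the paper's $\pr(u)=\exp(-2c+2d(u,r))$. The GBTSBM and BTSBM cases are likewise handled as in the paper's lemmas, by reducing to the HRG case via the identity $\pr(p(i)\wedge p(j))=\pr(i\wedge j)$ and checking that C3 and C2 correspond to the binary-until-leaves and level-constant-probability restrictions.
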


Out of the three conditions C1, C2, and C3, the most crucial one that reveals the essential distinction between the $\T$-Stochastic Graphs and the other three models is the ultrametric condition. It's helpful to have some sense of it. The concept of ultrametric trees originates from research on phylogeny and evolution; it employs the molecular clock hypothesis \citep{zuckerkandl1962molecular, margoliash1963primary, kumar2005molecular} which assumes that all species evolve at the same rate, i.e. the substitution rate at a given site is the same across all species in the evolution tree. This assumption is not upheld in numerous real-world cases and could result in systematic biases in phylogenetic tree reconstructions \citep{gillespie1984molecular, pulquerio2007dates}. Consequently, the ultrametric assumption is no longer popular in modern phylogenetic studies. Another interpretation for ultrametricity is the dendrogram.  Dendrograms are usually produced by hierarchical clustering methods. Given a distance or dissimilarity matrix, clusters are combined recursively by certain linkage criteria. The well-known UPGMA (unweighted pair group method with arithmetic mean) \citep{sokal1958statistical} is such an agglomerative hierarchical clustering method that was used in phylogenetic tree reconstruction. This method is no longer commonly used because it heavily relies on the ultrametric assumption \citep{felsenstein1984distance}. One way to create ultrametric trees is cutting a dendrogram at the same height across all branches. Alternatively, cutting different branches at different heights leads to a non-ultrametric tree. In a decision tree or random forest, this is analogous to pruning. Previously, \cite{tantrum2003assessment} discussed the necessity of pruning in hierarchical clustering and proposed a non-ultrametric pruning process.

Apart from the HRG, the BTSBM, and the GBTSBM, there are some other hierarchical models that describe the latent structure from different perspectives. The Hierarchical Stochastic Blockmodel (HSBM) \citep{hrdpg} encodes hierarchies by extending the RDPG representation of the SBM to have a multi-layer mixture of point mass distributions. The formal definition can be found in Appendix \ref{appendix:other_models_hsbm_counter_example}. Although there are cases where certain HSBMs can be considered as $\T$-Stochastic Graphs, and vice versa, neither model completely encompasses the other (see Appendix \ref{appendix:other_models_hsbm_counter_example} for two counter-examples). 
There is another research direction for modeling hierarchical structures in social networks based on the Bayesian approach. In particular, \cite{peixoto2014hierarchical} presents a nested generative model where blocks in the previous level are considered as nodes in the next level, and edge counts are considered as edge multiplicities between each node pair. There are no elegant relationships between the $\T$-Stochastic Graph and this model as they employ different notions of hierarchies.

\subsection{Definitions of HRGs, GBTSBMs, and BTSBMs}\label{appendix:other_models_definition}

The formal definition of the HRG, the GBTSBM, and the BTSBM are given in Definitions \ref{def:hrg}, \ref{def:gbtsbm}, and \ref{def:btsbm}, respectively, using the notations established in this paper. All three models incorporate hierarchical structures represented by a rooted tree $\T_r = (V, E)$ and define $\lambda_{ij}$ based on a set of probabilities\footnote{the word probabilities does not imply $\sum_{u\in V\setminus V_\ell}\pr(u) = 1$, $\pr(u)$ is referred to as probabilities because \citep{clauset2008hierarchical, li, lei} all consider a Bernoulli distribution for $A_{ij}$ and thus $\lambda_{ij}$ is the probability that $A_{ij} = 1$.} $\{\pr(u): u\in V\setminus V_\ell\}$ for all internal nodes. In a rooted tree, we use $lvl(u)$ to denote the \textbf{level} of node $u$, $lvl(u) = k$ if there are $k$ edges on the path from the root to $u$. The root node $r$ is a level-0 node, i.e., $lvl(r) = 0$. Additionally, we use the notation $i\wedge j$ to represent the nearest common ancestor of nodes $i$ and $j$, as defined in Appendix \ref{appendix:tob}.

\begin{definition}\label{def:hrg}
The graph $A\in \R^{n\times n}$ is a $\T_r$-HRG as defined in \citep{clauset2008hierarchical} if
\begin{enumerate}
    \item rows and columns in $A$ are indexed by leaf nodes in $\T_r$,
    \item for any pair of nodes $i, j \in V_{\ell}$,
    \[\lambda_{ij}^{HRG} = \pr(i\wedge j).\]
\end{enumerate}
\end{definition}

\begin{definition}\label{def:gbtsbm}
The graph $A\in \R^{n\times n}$ is a $\T_r$-GBTSBM as defined in \citep{li} if $\T_r$ satisfies the Binary-Until Leaves condition (C3 in Appendix \ref{appendix:other_models}) and $A$ is an SBM with block assignments $z(\cdot)$ and connectivity matrix $B\in \R_+^{k\times k}$ such that 
such that

\begin{enumerate}
\item nodes in graph $A$ are indexed by nodes in $V_\ell$, the set of leaf nodes,
\item rows and columns of $B$ are indexed nodes in $V_z$, the set of parent nodes of leaf nodes,
\item node $i$ belongs to block $u$ if $u$ is the parent of $i$, i.e. $z(i) = u\Longleftrightarrow p(i) = u$,
\item for any pair of nodes $u, v \in V_z$, 
\[B_{uv} = \pr(u\wedge v),\]specifically, when $u=v$, the nearest common ancestor of $u$ and $v$ is the node itself, that is, $B_{uu} = \pr(u)$.
\end{enumerate}
\end{definition}

\begin{definition}\label{def:btsbm}
The graph $A$ is a $\T_r$-BTSBM as defined in \citep{lei} if $A$ is a GBTSBM and node probabilities $\pr(\cdot)$ further satisfy
\begin{equation}\label{eq:other_models_btsbm}
    \pr(m) = \pr(n)
\end{equation}
for any two nodes $m, n$ that are at the same level, i.e., $lvl(m) = lvl(n)$.
\end{definition}

\begin{remark}
(Binary Assumption) The HRG model also assumes a binary tree since it uses a dendrogram to represent $\T_r$. However, we do not categorize HRGs as binary structures for two reasons: Firstly, HRGs can represent non-binary structures since they allow $\pr(u) = \pr(v)$ for any neighbor nodes $u$ and $v$. Secondly, the algorithm employed in \citep{clauset2008hierarchical} can produce non-binary trees. Unlike HRGs, the BTSBMs and GBTSBMs heavily rely on the binary assumption since both \citep{li, lei} require $\pr(u) \neq \pr(v)$ to ensure distinct eigenvalues, which ensures the consistency of the recursive bi-partition algorithm.
\end{remark}

\citep{li, lei} examine the BTSBMs and GBTSBMs under specific assumptions, which are itemized as follows. The GBTSBMs are discussed in relation to the weak assortative assumption, while the BTSBMs are discussed with regard to both the assortative and disassortative assumptions.

\begin{enumerate}
    \item[(A1)] \textbf{Weak assortative.} $\pr(u) < \pr(v)$ when $u$ is the parent of $v$
    \item[(A2)] \textbf{Assortative.} $\pr(u) < \pr(v)$ when $lvl(u) < lvl(v)$
    \item[(A3)] \textbf{Disassortative.} $\pr(u) > \pr(v)$ when $lvl(u) < lvl(v)$
\end{enumerate}

\subsection{HRGs, BTSBMs, and generalized BTSBMs are all $\T$-Stochastic Graph (proof of Theorem \ref{thm:model_relationship})}\label{appendix:other_models_equivalence}

Theorem \ref{thm:model_relationship} is an outcome derived directly from the three lemmas presented below. Notably, the results outlined in Theorem \ref{thm:model_relationship} are established under the cases when
\begin{enumerate}
    \item None of the assumption (A1)-(A3) is assumed, that is, HRGs, GBTSBMs, and BTSBMs are defined as in Definitions \ref{def:hrg}, \ref{def:gbtsbm}, and \ref{def:btsbm}, respectively.
    \item $\T$-Stochastic Graphs allow for negative internal edge weights.
\end{enumerate}

\begin{lemma}\label{lemma:hrg->tsg}
Any $\T_r$-HRG is a $\T$-stochastic graph with $\T$ satisfies C1, and vice versa. 
\end{lemma}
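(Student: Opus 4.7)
\textbf{Proof plan for Lemma \ref{lemma:hrg->tsg}.} The plan is to establish both directions by an explicit construction that exploits the fact that in an ultrametric tree every internal node has a well-defined ``height'' equal to its common distance to any descendant leaf. The two constructions will be mutual inverses via the correspondence $\pr(u) \leftrightarrow \exp(-2h(u))$ between HRG node probabilities and TSG heights.

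For the forward direction ($\T_r$-HRG $\Rightarrow$ $\T$-Stochastic Graph satisfying C1), I would start from a rooted tree $\T_r$ equipped with internal-node probabilities $\{\pr(u)\}$ and define a height $h(u) = -\tfrac{1}{2}\log \pr(u)$ on internal nodes, with $h(\cdot) \equiv 0$ on leaves. I would then assign to each edge $(u,v)$ of $\T_r$, with $u$ the parent of $v$, the weight $w_{uv} = h(u) - h(v)$. A telescoping argument along any root-to-leaf path immediately gives $d(i, r) = h(r)$ for every leaf $i$, which establishes the ultrametric condition C1. Telescoping along the unique path $i \to (i \wedge j) \to j$ then gives $d(i,j) = 2h(i \wedge j) = -\log \pr(i \wedge j)$, so that $\exp(-d(i,j)) = \pr(i \wedge j) = \lambda_{ij}^{HRG}$ matches the $\T$-Stochastic Graph parameterization in Equation \eqref{eq:lambdadef}.

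For the reverse direction ($\T$-Stochastic Graph satisfying C1 $\Rightarrow$ $\T_r$-HRG), I would begin with a tree $\T$ and additive distance $d$ satisfying C1, and pick a root $r$ such that $d(i, r) = c$ for every leaf $i$. For any internal node $u$, I would define $h(u) = d(u, i)$ for some leaf descendant $i$ of $u$; because the path from such an $i$ to $r$ must pass through $u$, additivity together with ultrametricity gives $d(u, i) = c - d(u, r)$, which is independent of the particular leaf descendant chosen. Setting $\pr(u) = \exp(-2 h(u))$ on each internal node, and applying additivity at $u = i \wedge j$, yields $d(i,j) = d(i,u) + d(u,j) = 2h(u)$ and hence $\lambda_{ij} = \exp(-d(i,j)) = \pr(i \wedge j) = \lambda_{ij}^{HRG}$.

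The main obstacle is really bookkeeping: verifying that $h(u)$ is well defined on an ultrametric tree (independent of the leaf descendant used) and that the two constructions invert one another at the edge level. Both verifications reduce to telescoping along tree paths. One minor subtlety worth flagging is that, without permitting negative internal edge weights, the forward construction only yields $w_{uv} \geq 0$ when $\pr(\cdot)$ is non-decreasing as one moves away from the root (an assortative condition). Since the statement of Theorem \ref{thm:model_relationship} explicitly allows negative internal edge weights, and since HRGs are being read semi-parametrically through $\lambda_{ij}$ alone rather than through Bernoulli probabilities bounded in $[0,1]$, no further restriction on $\pr(\cdot)$ is needed and the two classes coincide as claimed.
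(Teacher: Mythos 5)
Your proposal is correct and follows essentially the same route as the paper: your height function $h(u)=-\tfrac12\log \pr(u)$ with edge weights $h(u)-h(v)$ is exactly the paper's construction $\theta_{uv}=\sqrt{\pr(u)/\pr(v)}$, $w_{uv}=-\log\theta_{uv}$ (with $\pr\equiv 1$ on leaves), and your reverse-direction assignment $\pr(u)=\exp(-2h(u))=\exp(-2c+2d(u,r))$ is the same formula the paper uses, verified by the same telescoping along the paths $i\to i\wedge j\to j$ and $r\to i$. Your closing remark about negative internal edge weights and the semi-parametric reading of $\lambda_{ij}$ matches how the paper treats the lemma as well.
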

\begin{proof}

We first show that any $\T_r$-HRG is a $\T$-stochastic graph with $\T$ satisfies (C1). This can be proved by assigning $\theta_{uv}$ to all edges in $\T_r$, and then defining distance as $d(u, v) = -\log(\theta_{uv})$. 
For any two nodes $u, v$ that are neighbors, without loss of generality, we assume $u$ is the parent of $v$ and define 
\begin{equation}\label{eq:other_models_define_theta}
    \theta_{uv} = \sqrt{\frac{\pr(u)}{\pr(v)}},
\end{equation}
notice that $\pr(\cdot)$ are defined only for internal node, if $v$ is a leaf node, we assume $\pr(v) = 1$. Then its easy to verify that 
\[ \prod_{(u, v)\in E(i, j)} \theta_{uv}= \prod_{(u, v)\in E(i, i\wedge j)} \theta_{uv} \prod_{(u, v)\in E(j, i\wedge j)} \theta_{uv} = \sqrt{\pr(i\wedge j)} \sqrt{\pr(i\wedge j)}  = \pr(i\wedge j).\]
To verify C1, just notice that for any leaf node $i$, $\prod_{(u, v)\in E(r, i)} \theta_{uv} = \sqrt{\pr(r)}$.

Now we show that any $\T$-stochastic graph with $\T$ satisfies (C1) is a $\T_r$-HRG. This is proved by assigning $\pr(u)$ to all internal nodes $u$.
Since $\T$ satisfies (C1), there exists a way to root $\T$ as $\T_r$ such that for all leaf nodes $i$, \[d(r, i)\equiv c,\] where $c$ is a constant. Define $\pr(u)$ as 
\begin{equation}\label{eq:other_models_define_P}
   \pr(u) = \exp(-2c+2d(u, r))
\end{equation}
Now consider any two leaf nodes $i, j$, the following equalities complete the proof
\begin{align*}
    \lambda^{TSG}_{ij} &= \exp(-d(i, j)) = \exp(-d(i, i\wedge j) - d(j, i\wedge j)) \\
    &= \exp(-d(i, r) + (r, i\wedge j) -d(j, r) + (r, i\wedge j))\\
    & = \exp(-2c + 2d(i\wedge j, r))\\
    & = \pr(i\wedge j)
\end{align*}

\end{proof}
\begin{lemma}\label{lemma:gbtsbm->hrg}
Any $\T_r$-GBTSBM is a $\T_r$-HRG with $\T_r$ satisfies C3, and vice versa.
\end{lemma}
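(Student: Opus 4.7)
The plan is to establish both directions via a single combinatorial identity about nearest common ancestors, namely
\[
i \wedge j \;=\; p(i) \wedge p(j)
\]
for any two distinct leaf nodes $i \neq j$ in $\T_r$, under the convention $u \wedge u = u$. This identity holds by a short case split: if $p(i) = p(j) = u$ then the unique $i$--$j$ path passes through $u$ and $u$ is by construction their nearest common ancestor; if $p(i) \neq p(j)$ then the $i$--$j$ path must traverse both $p(i)$ and $p(j)$, so the nearest common ancestor of $i$ and $j$ coincides with the nearest common ancestor of $p(i)$ and $p(j)$.

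For the forward direction, start from a $\T_r$-GBTSBM with internal-node probabilities $\{\pr(u) : u \in V \setminus V_\ell\}$ and connectivity matrix $B_{uv} = \pr(u \wedge v)$, and define an HRG on the same rooted tree $\T_r$ using the same probabilities. The binary-until-leaves condition C3 is already enforced by the GBTSBM definition, so nothing more is needed on the topology side. For any two distinct leaves $i, j$,
\[
\lambda_{ij}^{HRG} = \pr(i \wedge j) = \pr\bigl(p(i) \wedge p(j)\bigr) = B_{p(i), p(j)} = B_{z(i), z(j)} = \lambda_{ij}^{GBTSBM},
\]
so the two models coincide as random graphs.

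For the reverse direction, start from a $\T_r$-HRG with $\T_r$ satisfying C3 and probabilities $\{\pr(u)\}$. Define block assignments by $z(i) = p(i)$ so that blocks are indexed by $V_z$, and set $B_{uv} = \pr(u \wedge v)$ for every $u, v \in V_z$. This gives a well-defined SBM structure because every leaf has a unique parent in $V_z$, and the same chain of equalities as above yields $\lambda_{ij}^{HRG} = \lambda_{ij}^{GBTSBM}$ for all distinct leaves. Hence the resulting object is a $\T_r$-GBTSBM on the same rooted tree.

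The essentially only obstacle is the combinatorial identity $i \wedge j = p(i) \wedge p(j)$, which the case split above handles in a couple of lines. A minor secondary technical point is that the reverse direction tacitly invokes the full-rank requirement on $B$ coming from the SBM part of the GBTSBM definition; this is preserved under the standing identifiability convention (Remark \ref{rmk:identifiability}, together with the GBTSBM assumption that distinct internal nodes carry distinct $\pr$-values) which forces $B$ to be a nested matrix with distinct principal entries and hence nonsingular.
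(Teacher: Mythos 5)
Your proof is correct and follows essentially the same route as the paper's: the paper's argument is exactly the observation that $i\wedge j = p(i)\wedge p(j)$ for distinct leaves, followed by the chain $\lambda_{ij}^{GBTSBM} = B_{z(i)z(j)} = \pr(p(i)\wedge p(j)) = \pr(i\wedge j) = \lambda_{ij}^{HRG}$, which you reproduce (with a more explicit case split and a side remark on full rank that the paper leaves implicit).
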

\begin{proof}
As the nearest common ancestor of any pair of leaf nodes $i$ and $j$ is identical to the nearest common ancestor of $p(i)$ and $p(j)$, we obtain \[\lambda_{ij}^{GBTSBM} = B_{z(i)z(j)} = B_{p(i)p(j)} = \pr(p(i)\wedge p(j)) = \pr(i\wedge j) = \lambda_{ij}^{HTG}\]
\end{proof}

\begin{lemma}\label{lemma:btsbm->hrg}
Any $\T_r$-BTSBM is a $\T_r$-HRG with $\T_r$ satisfies C3 and C2, and vice versa.
\end{lemma}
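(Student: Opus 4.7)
The plan is to bootstrap off Lemma \ref{lemma:gbtsbm->hrg}. Since a $\T_r$-BTSBM is just a $\T_r$-GBTSBM subject to the extra level-homogeneity condition \eqref{eq:other_models_btsbm}, and Lemma \ref{lemma:gbtsbm->hrg} already identifies GBTSBMs with HRGs whose rooted tree satisfies C3, it suffices to prove that \eqref{eq:other_models_btsbm} corresponds precisely to the Multilevel-Ultrametric condition C2 on the same rooted tree $\T_r$.

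The bridge between the probability parameters $\pr(\cdot)$ and the tree distance $d(\cdot,\cdot)$ comes from the edge-weight assignment used in the proof of Lemma \ref{lemma:hrg->tsg}: for each edge $(u,v)\in E$ with $u$ the parent of $v$, set $\theta_{uv} = \sqrt{\pr(u)/\pr(v)}$ and $d(u,v) = -\log\theta_{uv}$. Telescoping along the unique path $r = u_0, u_1, \dots, u_\ell = u$ from the root to an internal node $u$ gives
\[
d(r,u) \;=\; \sum_{i=1}^{\ell} \tfrac{1}{2}\bigl(\log \pr(u_i) - \log \pr(u_{i-1})\bigr) \;=\; \tfrac{1}{2}\bigl(\log\pr(u) - \log\pr(r)\bigr).
\]
So $d(r,u)$ is a strictly monotone function of $\pr(u)$ alone, and in particular $d(r,m) = d(r,n)$ if and only if $\pr(m) = \pr(n)$ for any two internal nodes $m,n$.

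With this identity in hand, both directions are immediate. For the forward direction, if $A$ is a $\T_r$-BTSBM then Lemma \ref{lemma:gbtsbm->hrg} makes it a $\T_r$-HRG with $\T_r$ satisfying C3, and \eqref{eq:other_models_btsbm} converts instantly into $d(r,m) = d(r,n)$ whenever $lvl(m) = lvl(n)$, which is C2 at the same root $r$. For the converse, if $A$ is a $\T_r$-HRG with $\T_r$ satisfying both C2 and C3, then Lemma \ref{lemma:gbtsbm->hrg} makes it a $\T_r$-GBTSBM, and C2 together with the identity above forces $\pr(m) = \pr(n)$ at every level, which is exactly \eqref{eq:other_models_btsbm}.

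The main subtlety, rather than a true obstacle, is coordinating the rootings. Although C1 and C2 are phrased existentially over rootings of an unrooted tree, the rooting $r$ picked out by the HRG definition is the one that already realizes C1 in the TSG construction of Lemma \ref{lemma:hrg->tsg} (since under the $\theta_{uv}$ weights the distance from $r$ to every leaf equals $-\tfrac{1}{2}\log \pr(r)$). Consequently C2 only needs to be verified at this same root $r$, and both directions of the argument cleanly use that single rooting.
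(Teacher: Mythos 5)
Your proposal is correct and follows essentially the same route as the paper: reduce to Lemma \ref{lemma:gbtsbm->hrg} and then show that the level-homogeneity condition \eqref{eq:other_models_btsbm} is equivalent to C2 under the constructions $\theta_{uv}=\sqrt{\pr(u)/\pr(v)}$ and $\pr(u)=\exp(-2c+2d(u,r))$, which is exactly what the telescoping identity $d(r,u)=\tfrac{1}{2}(\log\pr(u)-\log\pr(r))$ encodes (the paper writes this as $\prod_{(u,v)\in E(r,m)}\theta_{uv}=\sqrt{\pr(r)/\pr(m)}$). Your closing remark about fixing the rooting at $r$ is a slightly more explicit treatment of a point the paper leaves implicit, but the substance of the argument is the same.
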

\begin{proof}
By Lemma \ref{lemma:hrg->tsg} and Lemma \ref{lemma:gbtsbm->hrg}, it suffices to demonstrate the following two things
\begin{enumerate}
    \item (C2) holds for the $\T$-Stochastic Graph obtained from the $\theta_{uv}$'s constructed using Equation \eqref{eq:other_models_define_theta}. Consider any two nodes $m, n$ at the same level. By definition, we have $\pr(n) = \pr(m)$, and thus
\[\prod_{(u, v)\in E(r, m)} \theta_{uv} = \sqrt{\frac{\pr(r)}{\pr(m)}} = \sqrt{\frac{\pr(r)}{\pr(n)}} = \prod_{(u, v)\in E(r, n)} \theta_{uv}.\]
    \item For any $\T$-Stochastic Graph with $\T$ satisfies (C2), probabilities $\pr(u)$ defined through Equation \eqref{eq:other_models_define_P} satisfies Equation \eqref{eq:other_models_btsbm}. To show this, consider any two nodes $m, n$ such that $lvl(m) = lvl(n)$, since $\T$ satisfies (C2), we have $d(m, r) = d(n, r)$ and thus 
    \[\pr(m) = \exp(-2c+2d(m, r)) = \exp(-2c+2d(n, r)) = \pr(n).\]
\end{enumerate}
\end{proof}

\subsection{Comprehensive Equivalence Theorem}\label{appendix:other_models_comprehensive_equivalence}

Let $\mathscr{F}_{TSG(+)}$ and $\mathscr{F}_{TSG(-)}$ represent the collection of distributions of $\T$-Stochastic Graphs with all internal edges being positive and negative, accordingly. It is important to note that the union of $\mathscr{F}_{TSG(+)}$ and $\mathscr{F}_{TSG(-)}$ is not $\mathscr{F}_{TSG}$ since some $\T$-Stochastic Graphs have both positive and negative edge weights. Similarly, $\mathscr{F}_{GBTSBM(A1)}$, $\mathscr{F}_{BTSBM(A2)}$, and $\mathscr{F}_{BTSBM(A3)}$ refer to the sets of distributions of GBTSBMs with the weak assortative assumption, BTSBMs with the assortative assumption, and BTSBMs with the disassortative assumption, respectively.

\begin{theorem}
\leavevmode

$\mathscr{F}_{GBTSBM(A1)} = \{f_{\T, d}\in \mathscr{F}_{TSG(+)}: \T \mbox{ with distance $d(\cdot, \cdot)$ satisfies C1, C3}\}, $

$\mathscr{F}_{BTSBM(A2)} = \{f_{\T, d}\in \mathscr{F}_{TSG(+)}: \T \mbox{ with distance $d(\cdot, \cdot)$ satisfies C1, C2, C3}\}, $

$\mathscr{F}_{BTSBM(A3)} = \{f_{\T, d}\in \mathscr{F}_{TSG(-)}: \T \mbox{ with distance $d(\cdot, \cdot)$ satisfies C1, C2, C3}\}, $

\end{theorem}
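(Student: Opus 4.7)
The plan is to reduce each of the three claimed equalities to Lemmas \ref{lemma:hrg->tsg}--\ref{lemma:btsbm->hrg}, which already identify the underlying set of $\T$-Stochastic Graphs having the relevant topology (C1 for HRGs, C1+C3 for GBTSBMs, C1+C2+C3 for BTSBMs), and then to check that the sign conventions (A1), (A2), (A3) translate exactly into the internal-edge-sign constraints defining $\mathscr{F}_{TSG(+)}$ and $\mathscr{F}_{TSG(-)}$. The key device is the bijection built inside Lemma \ref{lemma:hrg->tsg} between the HRG probabilities on internal nodes and distances from the root, namely
$$\pr(u) = \exp(-2c + 2\,d(u,r)) \quad\Longleftrightarrow\quad d(u,r) = c + \tfrac{1}{2}\log \pr(u),$$
together with the convention $\pr(\text{leaf}) = 1$ used there to build $\theta_{uv} = \sqrt{\pr(u)/\pr(v)}$ on leaf-to-parent edges. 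Under this bijection, sign comparisons between $\pr(u)$ and $\pr(v)$ become sign comparisons between $d(u,r)$ and $d(v,r)$.

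For the GBTSBM(A1) line, whenever $u$ is the parent of $v$ in $\T_r$ and both are internal, one has $d(v,r) = d(u,r) + w_{uv}$ where $w_{uv}$ is the internal edge weight. Hence the strict inequality $\pr(u) < \pr(v)$ required by (A1) is equivalent to $w_{uv} > 0$. Since every internal edge of $\T_r$ arises as such a parent/child pair between internal nodes, (A1) is equivalent to the statement that \emph{all} internal edge weights are strictly positive, i.e., to membership in $\mathscr{F}_{TSG(+)}$. Combined with Lemma \ref{lemma:gbtsbm->hrg} this gives the first identity.

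For the BTSBM(A2) and BTSBM(A3) lines, observe first that the BTSBM constraint \eqref{eq:other_models_btsbm}, namely $\pr(m) = \pr(n)$ for same-level $m,n$, is (under the same bijection) equivalent to $d(m,r) = d(n,r)$, i.e., to condition C2, which is already absorbed into Lemma \ref{lemma:btsbm->hrg}. Adding (A2) gives a strict increase of $\pr$ along every parent-child step between internal nodes, and by the same argument as for (A1), this is equivalent to all internal edge weights being strictly positive, so to membership in $\mathscr{F}_{TSG(+)}$. The disassortative case (A3) is handled identically with the inequality reversed: (A3) translates to $d(v,r) < d(u,r)$ for every internal parent-child pair, hence $w_{uv} < 0$ on every internal edge and membership in $\mathscr{F}_{TSG(-)}$. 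Both reverse directions reduce to plugging the sign-constrained edge weights into the $\pr$ formula above and reading off (A2) or (A3).

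The main obstacle I expect is the bookkeeping near $V_z$, since by Remark \ref{rmk:negative_edge_weights} the sign classes $\mathscr{F}_{TSG(\pm)}$ restrict only edges between \emph{internal} nodes, while $\pr(\cdot)$ is itself defined only on internal nodes. One must verify that conditions (A1)--(A3) place no hidden constraint on the leaf-to-parent edges, so that the correspondence is really with $\mathscr{F}_{TSG(+)}$ or $\mathscr{F}_{TSG(-)}$ rather than a strictly smaller set. This is immediate from the construction in Lemma \ref{lemma:hrg->tsg}: the convention $\pr(\text{leaf}) = 1$ forces $\theta_{i,p(i)} = \sqrt{\pr(p(i))}$, whose sign is determined by $\pr(p(i))$ alone and is untouched by the inequalities (A1)--(A3) between \emph{internal} $\pr$-values.
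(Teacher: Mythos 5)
Your proposal is correct and follows essentially the same route as the paper: take the topology-level equivalences from Lemmas \ref{lemma:hrg->tsg}--\ref{lemma:btsbm->hrg} as given, note that under the C2 constraint (A1) and (A2) coincide, and translate (A1)/(A3) into the sign of each internal edge weight via the explicit correspondences $\theta_{uv}=\sqrt{\pr(u)/\pr(v)}$ and $\pr(u)=\exp(-2c+2d(u,r))$ from Equations \eqref{eq:other_models_define_theta} and \eqref{eq:other_models_define_P}. Your closing observation that (A1)--(A3) impose nothing on leaf-to-parent edges (so the correspondence really is with $\mathscr{F}_{TSG(\pm)}$) is a sensible sanity check that the paper leaves implicit.
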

\begin{proof}
    First notice that when Equation \eqref{eq:other_models_btsbm} is satisfied, (A1) and (A2) are the same. Therefore to show the first two equivalence, we just need to show that (A1) is equivalent to positive edge weight. 
    
    Consider any two neighbor nodes $u, v$, without loss of generality, assume $u$ is the parent of $v$. When (A1) is satisfied, 
    \[\pr(u) < \pr(v),\]
    by Equation \eqref{eq:other_models_define_theta}, we have \[\theta_{uv}<1,\] and thus \[d(u, v) = -\log(\theta_{uv})>0.\] 
    When all edge weights are positive, \[d(r, u) = d(r, v) - d(u, v) < d(r, v),\] 
    by Equation \eqref{eq:other_models_define_P}, we have \[\pr(u)<\pr(v).\] 
    
    To show the third equivalence, it suffices to prove (A3) is equivalent to all edge weights being negative, using Equation \eqref{eq:other_models_define_theta} and \eqref{eq:other_models_define_P}, this is evident by the same logic as above.

\end{proof}

\begin{figure}[h] %
   \centering
   \includegraphics[width=6in]{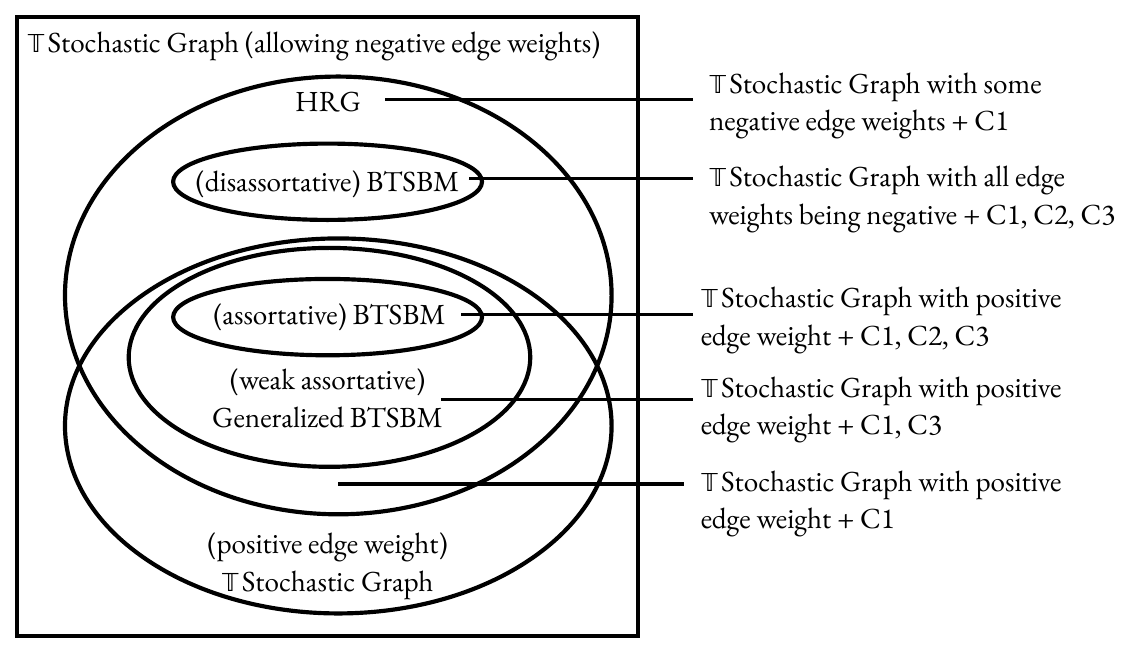} 
   \caption{This figure summarizes model relationships considering negative edge weights.}
   \label{fig:venn_plot_negative}
\end{figure}

\subsection{HSBMs: Definitions and Counter examples}\label{appendix:other_models_hsbm_counter_example}

Inspired by the RDPG representation of the SBM, the Hierarchical Stochastic Blockmodel (HSBM) \citep{hrdpg} describes hierarchies by extending the RDPG to have a multi-layer mixture of point mass distributions. To distinguish the HSBM from general SBM, \cite{hrdpg} requires that the SBM satisfies an ``Affinity Assumption'' defined by a hierarchy. Definition \ref{def:hsbm} formally defines the HSBM using notations in this paper.

\begin{definition}\label{def:hsbm}
The graph $A$ is an $\T_r$-HSBM as defined in \citep{hrdpg} if $A$ is an SBM with connectivity matrix $B$ such that 
\begin{enumerate}
    \item rows and columns in $B$ are indexed by leaf nodes in $\T_r$
    \item the elements in $B$ satisfies that 
    \begin{equation}\label{eq:hsbm}
        \max_{lvl(u\wedge v) = l_1} B_{uv} < \min_{lvl(u'\wedge v') = l_2} B_{u'v'}, \quad \forall \ l_1 < l_2,
    \end{equation}
    where $lvl(u)$ is the level of node $u$, as defined in Appendix \ref{appendix:other_models_definition}.
\end{enumerate}
\end{definition}

The HSBM does not directly specify probabilities using the tree structure; instead, it imposes constraints on the relative scales. It is noteworthy that a level $\ell$ HSBM is always a level $\ell-1$ HSBM, as decreasing the level simply reduces the constraints, and all SBMs are level-1 HSBMs. In order to avoid the degeneration of HSBMs into general SBMs, we concentrate on HSBMs with levels $>1$. While there are obvious distinctions between HSBMs and $\T$-Stochastic Graphs, such as the former being an SBM and the latter a DCSBM, this difference is not definitive. Below are two counterexamples illustrating that, even if we solely focus on the connectivity matrix $B$, the HSBM and the $\T$-Stochastic Graph are not interchangeable.

\begin{exmp} Consider a SBM with the connectivity matrix $B$ defined as $B_{uv} = \exp(D_{uv})$.

\begin{enumerate}
    \item ($\T$-Stochastic Graph is not HSBM) If $D_{uv} = d(u, v)$, where $d(\cdot, \cdot)$ is the additive distances constructed by edge weights in Figure \ref{fig:counterexample}, then this SBM is a $\T$-stochastic Graph and is not an HSBM (there is no way to root $\T$ such that Equation \eqref{eq:hsbm} is satisfied).
    \begin{figure}[htbp]
    \centering
	\includegraphics[width=2.5in]{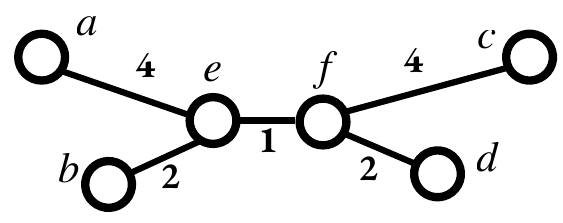}
   \caption{An example of a $\T$-Stochastic Graph that is not a HSBM}
   \label{fig:counterexample}
\end{figure}
    \item (HSBM is not $\T$-Stochastic Graph) If $D$ is defined as the following matrix, then this SBM is a HSBM and is not a $\T$-stochastic Graph.
\[D = \left(
\begin{matrix}
0 & 1 & 3 & 3 \\
1 & 0 & 3 & 4\\
3 & 3 & 0 & 2 \\
3 & 4 & 2 & 0
\end{matrix}\right)
\]
\end{enumerate}
\end{exmp}

\subsection{Relationship between the assortativity, the affinity, and the positive edge weights assumption}\label{appendix: other_models_assortativity_affinity}

In a multi-level HSBM, the graph can be partitioned into multiple subgraphs, and the affinity assumption imposes the condition that the density of connections within each subgraph is higher than the density of connections between subgraphs. This condition resembles the assortative assumption for BTSBMs and the weak assortative assumption for GBTSBMs. To illustrate the differences between these assumptions, we employ Figure \ref{fig:populationA} as an example.

\begin{figure}[h]
    \centering
	\includegraphics[width=2in]{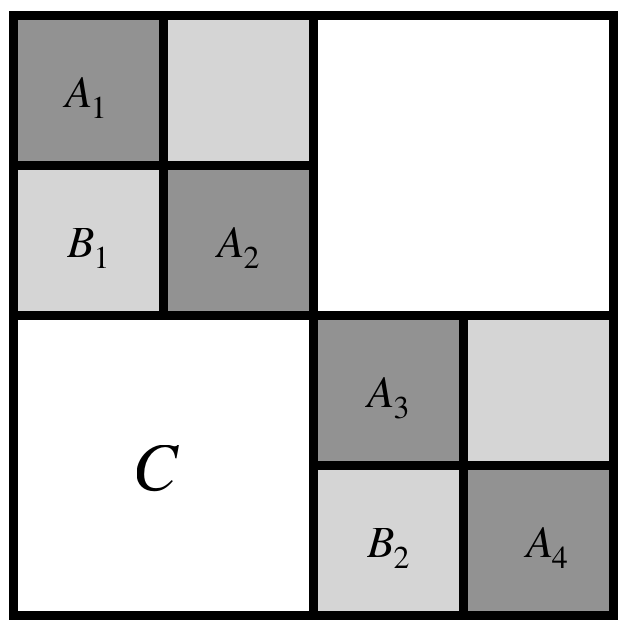}
   \caption{This figure presents a natural intuition for hierarchical models. Generally, in the population adjacency matrix ($\mathscr A$), elements in diagonal block matrix tend to be larger than off-diagonal block matrix, that is, $A_i>B_j>C$ for $i \in [4], j \in [2]$.}
   \label{fig:populationA}
\end{figure}

Among the models discussed, the BTSBM under the assortative assumption is the most restrictive model as it assumes equal probabilities for nodes at the same level and a strict increasing trend from the root to the leaves. Specifically, it assumes $A_1 = A_2 = A_3 = A_4$, $B_1 = B_2$, and $A_1>B_1>C$, where the notation ``$M>N$'' indicates that any element in matrix $M$ is greater than all elements in matrix $N$. In contrast, while the HSBM does not require blocks on the same level to be identical, it still imposes the affinity assumption. Specifically, in an HSBM, $A_1, A_2, A_3$, and $A_4$ are not necessarily the same, but they are all greater than $B_1$ and $B_2$. The generalized BTSBM is more flexible, as it only requires the weak assortative assumption, which demands the probability assigned to any node to be strictly greater than its parent. In particular, it only requires $B_1$ to be smaller than $A_1$ and $A_2$, but not necessarily smaller than $A_3$ and $A_4$.

On the other hand, the $\T$-Stochastic Graph does not impose any such assumptions but incorporates the increasing trend in its definition. Notably, $\lambda_{ij} = \exp(-d(i, j))$, which implies that, as long as the distance is positive, the more edges on the path from $i$ to $j$, the less likely that $i$ and $j$ are connected.

\clearpage
\section{Numerical Results}\label{sec:simulation}

This section investigates the empirical performance of \pps under several different tree structures by comparing it with two recursive bi-partition algorithms proposed by \citep{li, lei}. We refer to them as the Adjacency recursive bi-partition algorithm and the Laplacian recursive bi-partition algorithm, accordingly, based on how they calculate the splitting vector.
Recursive bi-partition is a simple and representative algorithm that recovers hierarchical structures in a top-down fashion. Studies by \cite{li} and \cite{lei} have demonstrated its consistency under the BTSBM and the generalized BTSBM. Additionally, it has displayed competitive performance compared to the regularized spectral clustering, Louvain's modularity method, and \cite{dasgupta2006spectral}'s recursive partition method \citep{li}.

For all subsequent simulations, $A_{ij}$ is generated from a Poisson distribution. The column ordering of $\widehat Z$ is determined by matching the columns of $\widehat Z$ to the true $Z$ using the Gale-Shapley algorithm.
All three algorithms are provided with the true $k$. For recursive bi-partition algorithms, we repeatedly split the community with the largest second eigenvalue until there are $k$ leaf nodes in the tree.

The following four measures of accuracy are used to compare the three algorithms, and the performance of each algorithm is averaged over 100 independent replications.

1. Robinson-Foulds Metric \citep{robinson1981comparison}: This is a popular tree comparison metric in phylogenetic studies. It measures the topology differences between two trees
by computing the number of non-trivial splits that differ between them. To eliminate the influence of tree size on the metric, we adopted the normalized Robinson-Foulds distance, which always falls within the range of $[0, 1]$.

2. Correct Recovery Rate: This is the ratio of instances in which the algorithm successfully recovers the true latent tree structure. In other words, it represents the frequency at which the Robinson-Foulds Metric is exactly zero. It's important to note that this is a strict criterion because any incorrectly inferred edge leads to an unsuccessful reconstruction.

3. Error in $\widehat B$: While the above two measures focus on the topology structure and ignore edge weights, this measure calculates the Frobenius norm of the difference between $\widehat B$ and $B$. It provides a sense of how well the edge lengths are estimated. For \pps, $\widehat B$ are re-estimated after the tree reconstruction. For the recursive bi-partition algorithm, the probabilities are estimated by averaging the number of edges between (or within) blocks, and we normalize $\widehat B$ to make sure all diagonal elements are equal to one.

4. Block Membership Error Rate: This is the percentage of the vertexes that are wrongly clustered. Since \texttt{vsp} gives a continuous estimate of the block membership via $\widehat Z$, we assign node $i$ to the block corresponding to the largest elements in row $i$, i.e., assign node $i$ to block $j = \underset{j}{\mathrm{argmax}} \ \widehat Z_{ij}$.

\subsection{Binary Tree}\label{sec:simu_binary}

We start with $\T_z$ being binary trees. In this section, we keep $k$ fixed at 32 and $n$ at 6400 while varying the expected degrees from $10$ to $70$. The \pps algorithm estimates $\widehat B^{nn} = \frac{1}{n^2}\left[Z^TAZ\right]_+$ in the second step and applies a neighbor-joining algorithm in the third step.

Both ultrametric (balanced binary tree) and non-ultrametric (random binary tree) settings are explored. The balanced binary trees have edge lengths of 0.5, resulting in a 0.36 between-blocks to within-blocks edge ratio. All random binary trees are generated through a branching process with edge lengths sampled independently from the uniform distribution. This is a commonly used phylogenetic tree generating procedure and further details can be found in \citep{paradis2012analysis}. To avoid nearly zero edge lengths, we add 0.1 to all generated lengths.

Figure \ref{fig:simu_binary_tree} presents the simulation results with respect to all four metrics. For a balanced binary tree, Adjacency recursive bi-partition method outperforms the other two methods, achieving high accuracy even when the expected degree is low. This is expected since the Adjacency recursive bi-partition algorithm is designed for this simple homogeneous case. However, as the expected degree increases, \pps exhibits compatible performance. 

\begin{figure}[!ht]
    \centering
	\includegraphics[width=6.5in]{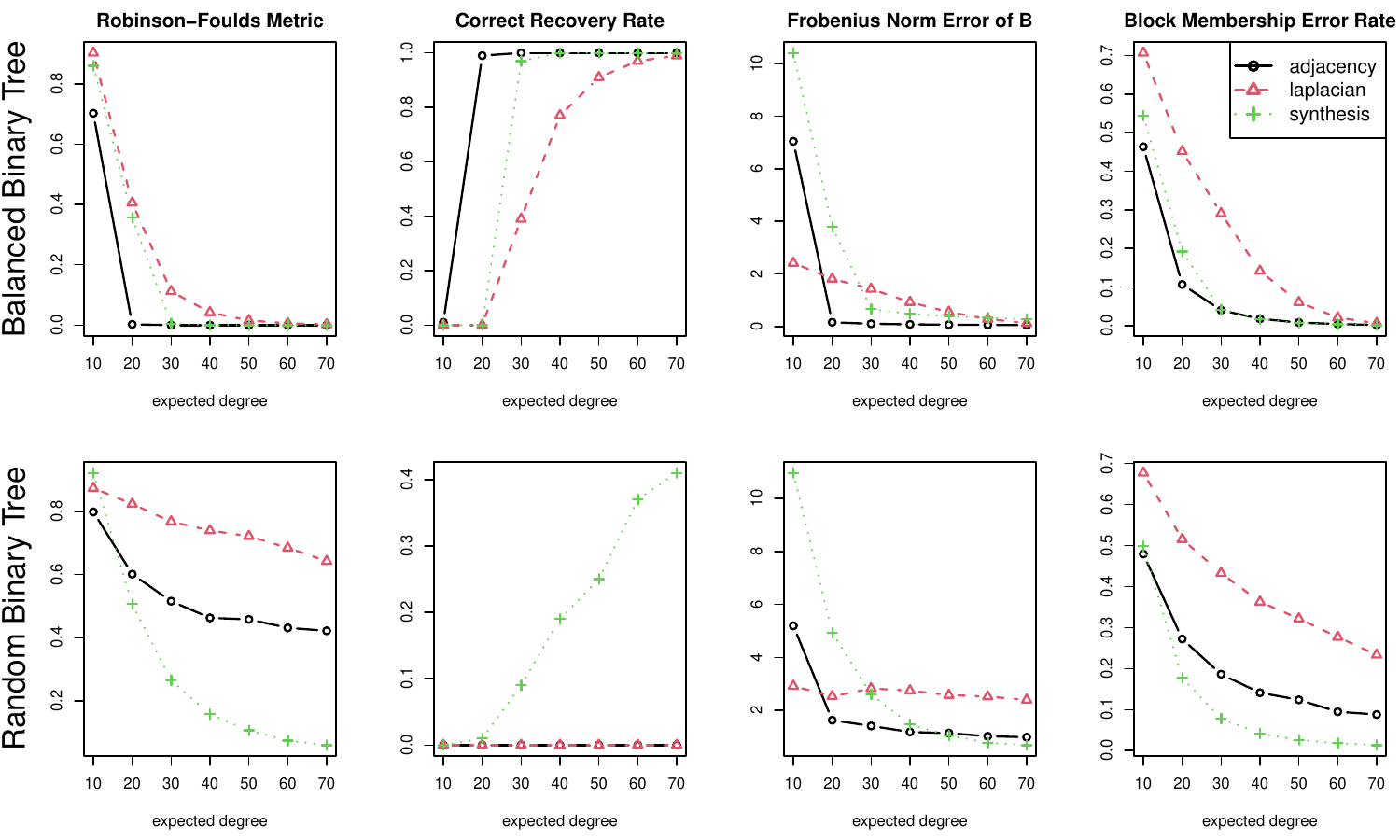}
   \caption{Performance of the Adjacency recursive bi-partition, the Laplacian recursive bi-partition, and the \pps under balanced binary trees and random binary trees with varying average degrees.}
   \label{fig:simu_binary_tree}
\end{figure}

For random binary trees, \pps achieves better performance under all four measures. Both recursive bi-partition algorithms suffer from systematic bias caused by the ultrametric assumption, failing to recover the correct tree structure even when the graph is dense. It is worth noting that the block membership error rate for recursive bi-partition algorithms decreases as the expected degree goes higher. This phenomenon can be explained by \citep{aizenbud2021spectral}, which presents the population results of splitting vectors in binary trees (without the need for ultrametricity). This study shows that the population splitting vector can correctly divide the tree, but it is not guaranteed to split the tree at the correct ``root''. In the context of $\T$-Stochastic Graphs, it implies that the recursive bi-partition algorithms could get the block membership correct, but are not able to recover the tree topology. To obtain a correct tree structure, one plausible approach is to merge subtrees after identifying block structures. This leads to an algorithm that is similar to our three-step recovery procedure. Instead of \texttt{vsp}, the first step applies a recursive bi-partition algorithm to identify block membership; the second step estimates the distance matrix $D$ or the connectivity matrix $B$; then the third step applies NJ on $\widehat D$ or other techniques such as the spectral merging step in \citep{aizenbud2021spectral} on $\widehat B$ to merge subtrees recursively.

\subsection{Multifurcating Tree}\label{sec:simu_multifurcating}
This section delves into scenarios where $\T_z$ is allowed to be non-binary. The most basic and representative case of multifurcating trees is a star-shaped tree with equal edge length. This is also an ultrametric, but non-binary tree. We examine the performance of three methods under the star-shaped tree as $k$ increases. The expected degrees are set to be 50 to make sure the instability is caused by multifurcation instead of sparsity. The edge lengths are set to be 1.5 so there is a significant difference between within-block edge density and between-block edge density. Since the Frobenius norm is affected by the dimension of $B$, we divide this metric by $k^2$ to make sure it's a fair comparison across different $k$ values. Apart from a star-shaped tree, we also explore the more general cases: random multifurcating trees. This is a challenging problem, we fix $k = 16$ and vary the expected degree from $20$ to $200$. The tree-generating process includes two steps. Firstly, a random binary tree is generated using the same algorithm described in Appendix \ref{sec:simu_binary}, then branches shorter than 0.5 are shrunk to zero. 

In this section, the number of expected nodes for each block is fixed to be 200, i.e. $n = 200k$, \pps estimates $\widehat B^{nn} = \frac{1}{n^2}\left[Z^TAZ\right]_+$ in the second step and then applies \texttt{SparseNJ} with a threshold suggested in Section \ref{sec:sparse_nj}.

\begin{figure}[!ht]
    \centering
	\includegraphics[width=6.5in]{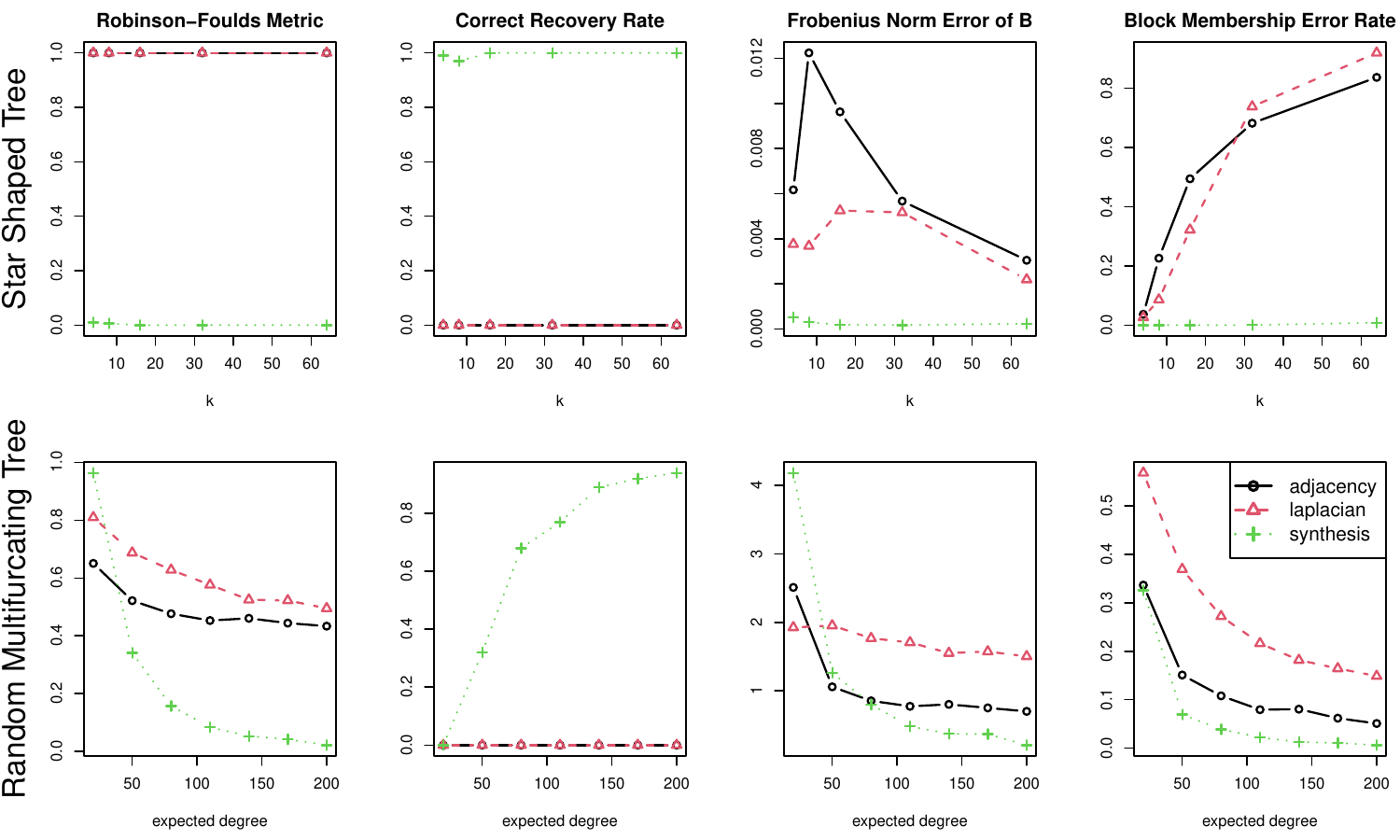}
   \caption{Performance of the Adjacency recursive bi-partition, the Laplacian  recursive bi-partition, and the \pps under star-shaped trees and random multifurcating trees.}
   \label{fig:simu_non_binary}
\end{figure}

Figure \ref{fig:simu_non_binary} presents the simulation results for both star-shaped trees and random multifurcating trees. Since recursive bi-partition algorithms output only binary trees, their correct recovery rate for these non-binary tree structures is always zero. Nevertheless, we report this measure to observe the performance of the \pps algorithm. 

For star-shaped trees, the \pps algorithm correctly recovers the tree structure in almost every replication. Importantly, its performance remains stable as $k$ increases. In contrast, both bi-partition algorithms present increasing bias as $k$ increases, especially for the block membership estimation. \cite{li} also presents simulations for this structure, but with a different interpretation. Li understands star-shaped trees as networks with no hierarchical communities because all between-community probabilities are the same. With the formulation of $\T$-Stochastic Graphs and its equivalence models, we provide a different understanding of this simple structure: a $\T$-Graphical Block model with $\T_z$ being a star-shaped tree. The number of blocks is set to be $k = 4, 8, 16, 32, 64$ intentionally to match the simulation in \citep{li}. Li points out that all methods fail completely with $k > 16$; whereas we show that \pps exhibit satisfying performance even when $k$ reaches 64. 

Regarding random multifurcating trees, \pps shows improving accuracy as the expected degree increases and can almost recover all trees correctly when the expected degree reaches 200. It produces a low Robinson-Foulds distance and block membership error rates even when the expected degree is low. Both bi-partition algorithms fail in this case as they are designed for ultrametric and binary structures. As the tree becomes more heterogeneous and multifurcates more, the \pps algorithm's advantages become more significant.

\newpage
\section{Dealing with Real-world Data}

\subsection{How to deal with asymmetric adjacency matrix}\label{appendix:asymmetric}
Real-world networks have different formats and only a small percentage of them are undirected with a symmetric adjacency matrix. As a matter of fact, an asymmetric adjacency matrix contains more information than a symmetric one, the question is how to apply \pps to it.

Due to the exchangeability of distance, i.e. $d(i, j) = d(j, i)$, \pps is discussed under the symmetric setting. However, it's not hard to adapt it for asymmetric cases. Let $A$ be the original asymmetric adjacency matrix, the most straightforward way is to ignore the direction of edges, and create a new adjacency $A^{undir} = A + A^T$. 

If one wants to distinguish between incoming edges and outgoing edges, another way is to define $A^{left} = AA^T$ and $A^{right} = A^TA$. Applying \pps to $A^{left}$ and $A^{right}$ can reveal two different latent structures in the network. Take the Wikipedia hyperlink data as an example, where $A_{ij} = 1$ implies a hyperlink from page $i$ to page $j$. In this case, $A^{left}_{ij}$ is the number of pages that both page $i$ and page $j$ link to. This presents the linking pattern of pages, two pages are considered to have similar patterns if they link to the same page. On the other hand, $A^{right}$ shows the being linked pattern, the similarity between two pages is measured by the number of pages that link to both of them. 

\subsection{How to name leaf nodes}\label{appendix:bff}

Reconstructing a tree without proper leaf node labels is meaningless if the goal is to interpret and understand the dataset. Since each leaf node corresponds to one column in $\widehat Z$, labels for leaf nodes can be obtained by interpreting factors reconstructed in  $\widehat Z$.

Let $x$ be a vector of text, where $x_i$ is some text that describes the vertex corresponding to row $i$ in $A$\footnote{Here $A$ denotes the input matrix of \pps in general, it could be $A^{undir}$, $A^{left}$, or $A^{right}$ as described in Appendix \ref{appendix:asymmetric}}. Then $\widehat Z_{ij}$ can be considered as a measure of how much $x_i$ can describe factor $j$.  

The simplest way to name leaf node $j$ is to choose $x_i$ such that $i = \underset{i}{\mathrm{argmax}} \ \widehat Z_{ij}$. When text in $x$ is succinct, concise, and representative, this strategy is efficient and creates meaningful labels. When the text is long, such as a sentence or even a paragraph, a better way is to extract the most important word. Let $X$ be a document-term matrix formed by $x$ such that $X_{il} = 1$ if $x_i$ contains word $l$. The naive way to measure the importance of word $l$ for factor $j$ is to calculate the correlation between the $l$th column of $X$ and the $j$th column of $\widehat Z$. Due to the sparse and heterogeneous nature of $\widehat Z$ and $X$, simple correlations are unstable. The ``best feature function'' (\texttt{bff}) \citep{10.1214/16-AOAS977} shows better performance in real data analysis. It measures the importance of word $l$ for factor $j$ as

\[\text{\bff}(j, l) = \sqrt{\frac{\sum_{i\in \text{in}(j)} \widehat Z_{ij}X_{ij}}{\sum_{i\in \text{in}(j)} \widehat Z_{ij}}} - \sqrt{\frac{\sum_{i\in \text{out}(j)} \widehat Z_{ij}X_{ij}}{\left|\text{out(j)}\right|}},\]
where sets $\text{in}(j)$ and $\text{out}(j)$ for block $j$ are defined as $\text{in}(j) = \{i: \widehat Z_{ij}\geq 0\}$, $\text{out}(j) = \{i: \widehat Z_{ij}<0\}$.

\subsection{Linking Pattern of the Wikipedia page link network}\label{appendix:wiki_Z}

Figure \ref{fig:wiki_Z} presents the linking pattern of the wikipedia page link network. This tree presents similar clustering structure as the being linked pattern tree (Figure \ref{fig:wiki_beinglinked}).

\begin{figure}[h]
    \centering
    \includegraphics[width = 5in]{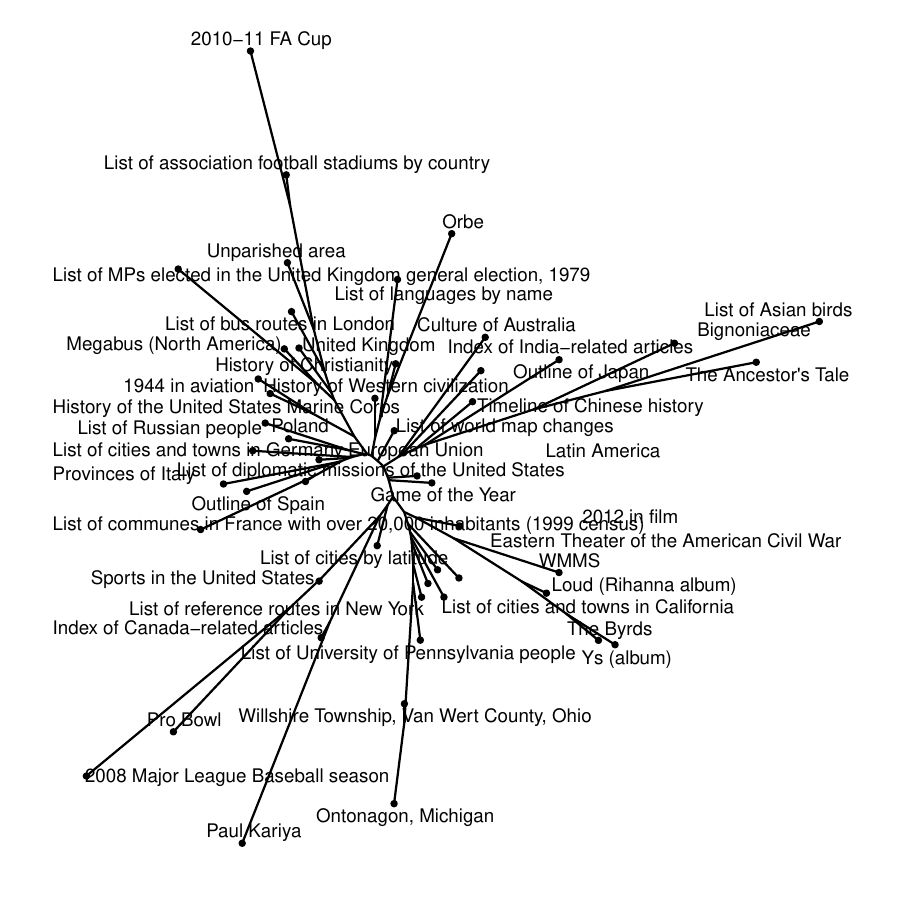}
    \vspace{-0.3in}
   \caption{Tree recovered from the linking pattern of Wikipedia pages. There are many ``list of something'' pages in the tree. Though the structure is not as clear as the being linked tree, there are still similar patterns. The bottom half of the tree is the North American cluster and the upper half of the tree is the European cluster, the Asian cluster, and the animal/plant cluster are mixed together on the right of the tree.}

   \label{fig:wiki_Z}
\end{figure}

\clearpage

\section{Robustness with respect to the selection of $k$}\label{appendix:robust_k}

The consistency of step 1 (\texttt{vsp}) in \pps relies on two things: $k$ remains constant as $n$ increases and the correct $k$ is chosen as input. However, \pps exhibits satisfying performance in both simulation and real data examples when these two conditions are violated. 

\subsection{Robustness to $k$ in simulated data: balanced binary tree and non-balanced tree}
This section briefly discusses the robustness of \pps by conducting simulations of $\T$ being a binary balanced tree and non-balanced tree in the  $\T$-Stochastic Graph.

Consider a $\T$-Stochastic Graph with $\T$ being a $n$ leaf node binary balanced tree, then in the equivalent $\T$-Graphical Blockmodel, the number of blocks $k = n/2$, and there are only two nodes in each block. This is an example of $k$ increasing at the same rate as $n$. In this situation, it is not plausible to choose the true $k$ as the input. However, \pps with an underestimated $k$ as input can still recover some top-level structures of the tree.

\begin{figure}[!ht]
    \centering
	\includegraphics[width=6in]{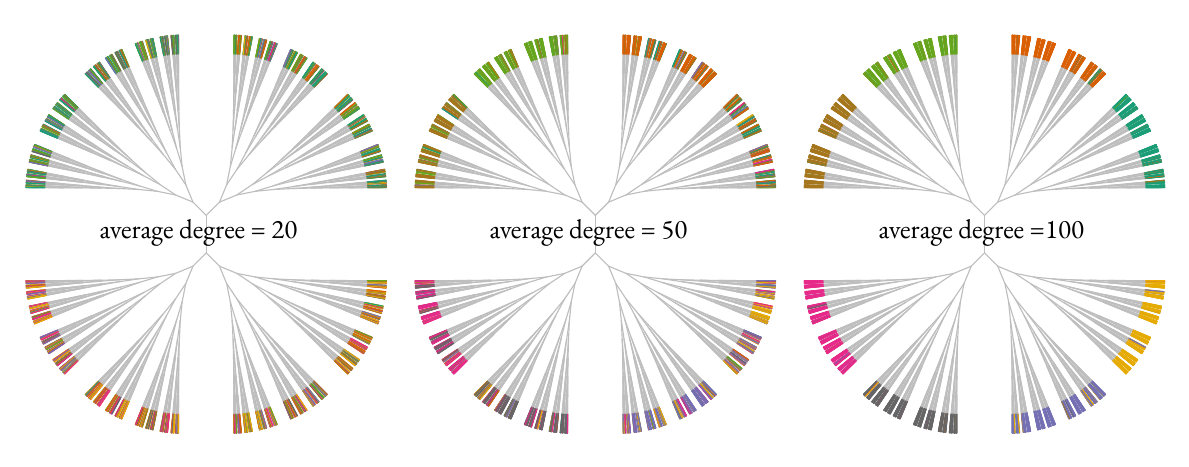}
   \caption{Performance of the first step algorithm on a balanced binary tree with $n = 2048$ leaf nodes. The true $k = 1024$ in this case and we test our algorithm with $k = 8$, hoping to recover the top structure of the tree. Leaf nodes are colored by their estimated clusters. As the average degree goes higher, the leaf nodes get finer clusters}
   \label{fig:robust_k_balance}
\end{figure}

Figure \ref{fig:robust_k_balance} plots the clustering result of the first step algorithm when $n = 2048$ (true $k = 1024$), and we choose $k = 8$ as the input. The true underlying tree structure (balanced binary tree) is plotted and leaf nodes are colored by the clustering result. Since we choose $k = 8$, we would hope the algorithm to recover the first 3 levels of the tree. That is, descendants of the same nodes in the 3rd level should be clustered into the same block. To explore how the performance is affected by the average degree, we plot the results of average degree = 20, 50, and 100 from left to right. We can observe that the first-step clustering results are still trustworthy with an underestimated $k$, and as the average degree increases, the clusters become more refined.

This tree example is relatively easy and clear due to its balanced and symmetric properties. Under more general cases where $\T$ is a random tree, \pps still gives reasonable clustering results. 

\begin{figure}[!ht]
\centering
\subfloat[]{%
  \includegraphics[clip,width=5.5in]{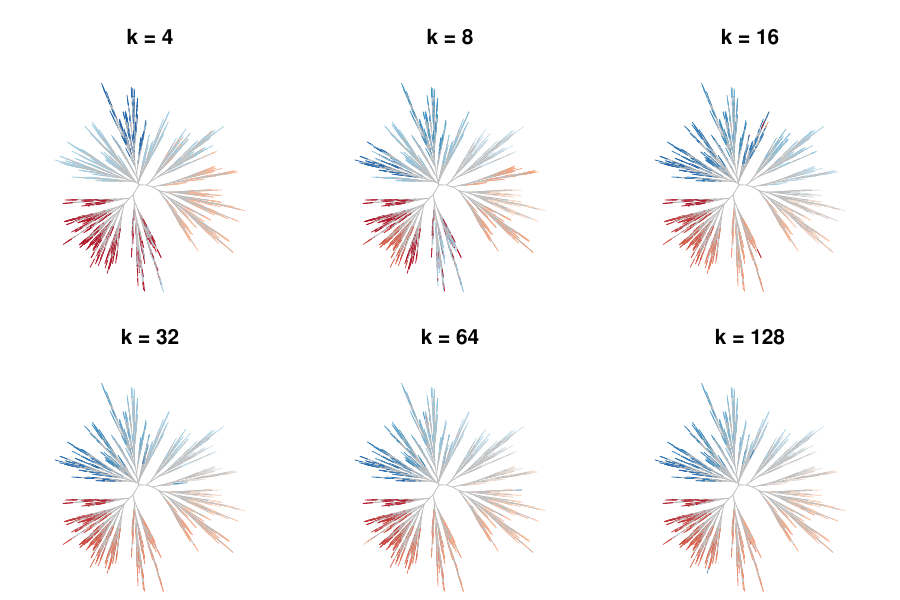}%
}

\subfloat[]{%
  \includegraphics[clip,width=5.5in]{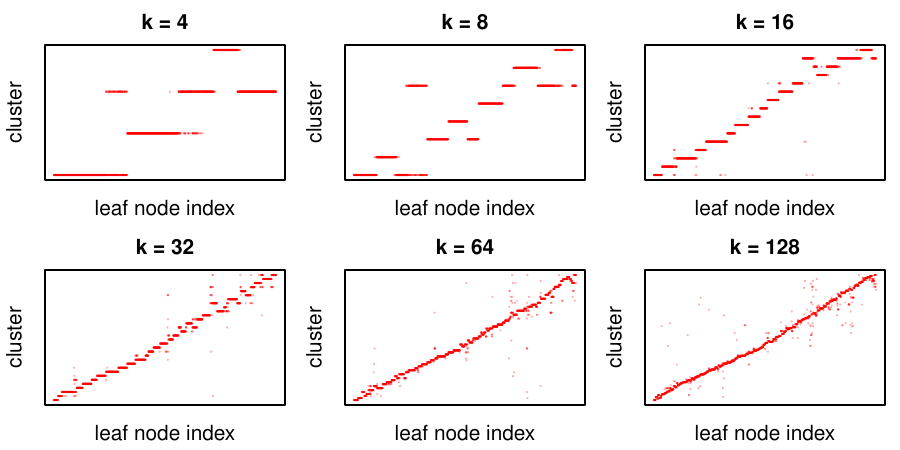}%
}
\caption{Performance of the first step algorithm on an unbalanced binary tree with $n = 2048$ leaf nodes. We test our algorithm with $k = 4, 8, 16, 32, 64, 128$. In Subfigure (a), leaf nodes are colored by their estimated clusters, In Subfigure (b), leaf nodes are plotted against their estimated clusters. As $k$ increases, the clusters get more refined.}
\label{fig:robustness_nonbalanced}
\end{figure}

Figure \ref{fig:robustness_nonbalanced} presents the result of a randomly generated tree with $n = 2048$ leaf nodes. In Subfigure (a), the true underlying tree shape is plotted and leaf nodes are colored by their cluster. A good estimation result should present similar color for leaf nodes that are close. In Subfigure (b), the estimated clusters are plotted versus the leaf node index. Still, a good result should cluster nodes with close indexes together. When $k = 4$, there is some inconsistency in the tree presented in Subfigure (a). In the upper half of the tree, the light blue color is cut into two parts by the dark blue color; in the lower half of the tree, the red is mixed with the light blue color. This light blue color corresponds to the third cluster in Subfigure (b), and we can observe that this cluster contains three discontinuous parts of the tree. Similar inconsistency can be observed when $k = 8$, and the result gets better as $k$ increases.

\subsection{Robustness to $k$ in real data: Journal Citation network 
}\label{sec:real_data_journal}

In this section, we examine a journal citation network derived from the Semantic Scholar dataset \citep{ammar-etal-2018-construction}, which comprises over 200 million academic publications. To analyze the journal citation patterns, we aggregate the paper citation network by journals and identify unique journal names by converting all letters to lowercase and removing punctuation. To simplify computation, we randomly sampled 5\% of the paper citations and set $A_{ij} = 1$ if there were more than five citations (in this random sample) from papers in journal $i$ to papers in journal $j$. After sampling and thresholding, the original dataset containing approximately 100,000 journals was reduced to 22,688.

While $A$ is a square matrix, it is not a symmetric matrix. We deal with this by applying \pps to $A^{undir} = A^T + A$. More details about dealing with asymmetric matrices can be found in Appendix \ref{appendix:asymmetric}. Due to the heterogeneous row and column sums, we also apply the degree-normalization and renormalization steps proposed in \citep{varimax_rohe}. All trees in this section are estimated with \(\widehat B^{nn} = \widehat Z_+^TA\widehat Z_+\) in the first step and a simple NJ algorithm in the second step. Leaf nodes are labeled with the most important word measured by \bff (see Appendix \ref{appendix:bff} for more details about \bff).

An important question here, and also in any other real data analysis, is how do we choose $k$? In applied multivariate statistics, estimating the number of latent dimensions is a fundamental problem. Most solutions try to find $k$ by locating a gap between eigenvalues, the widely used scree plot technique \citep{cattell1966scree} is a representative example. However, the real dataset hardly presents an eigengap. Figure \ref{fig:eigenvale} presents the eigenvalues of the degree normalized adjacency matrix $L$ for this dataset.

\begin{figure}[!ht]
    \centering
	\includegraphics[width=3in]{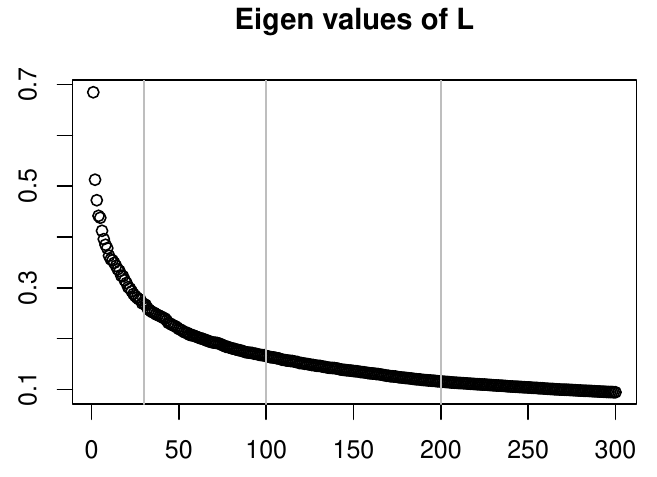}
   \caption{The first 300 eigenvalues of the normalized adjacency matrix for the journal citation network, with lines at $k = 30, 100, 200$. No clear eigengap can be observed.}
   \label{fig:eigenvale}
\end{figure}

With the $\T$-Stochastic graph and the \pps algorithm proposed, another way to think about this problem is to ask: are tree structures robust with respect to different choices of $k$? An ideal case is that as $k$ increases, the algorithm outputs more refined clusters and detailed tree structures. Though with some disturbance, we find that the general structure is stable across different selections of $k$ in this journal citation network. 

\begin{figure}[!ht]
    \centering
	\includegraphics[width=4.5in]{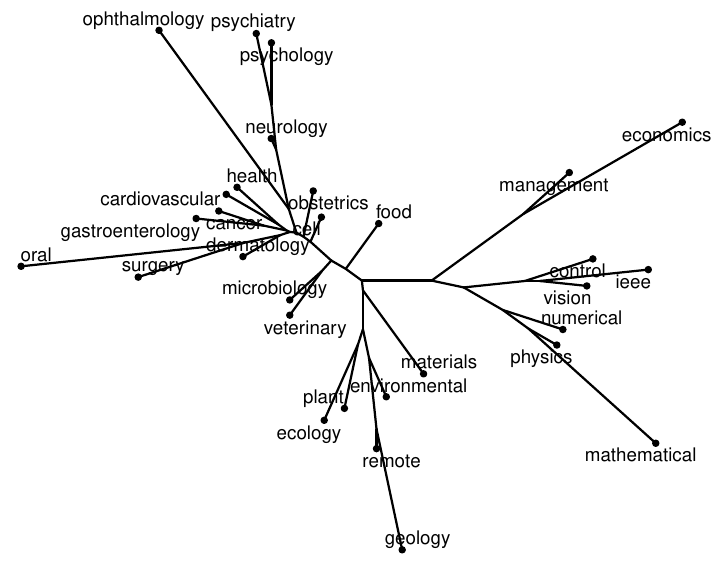}
   \caption{Journal citation tree with $k = 30$. This tree presents similar clustering structures as the journal citation tree with $k = 100$ (Figure \ref{fig:journal_tree}).}
   \label{fig:journal_k=30}
\end{figure}

Figure \ref{fig:journal_tree} plots the tree recovered from this dataset with $k = 100$. While there are more detailed group structures and connections, in general, the tree can be summarized as large clusters about \textit{social science (economics, management)}, \textit{math}, \textit{engineering}, \textit{earth science}, \textit{psychology}, and \textit{medical science}. Figure \ref{fig:journal_k=30} plots the tree recovered with $k = 30$, this smaller tree presents similar clusters and structures. The same pattern can be observed in journal citation trees with $k = 70$ and $k = 200$, presented in Figure \ref{fig:journal_k = 200} and \ref{fig:journal_k=70}, accordingly. Additionally, Figure \ref{fig:journal_k=70} also presents the detection of potentially non-tree-like structures.
\begin{figure}[!ht]
    \centering
   \textbf{Comparing the estimated tree distance matrix and the \texttt{TSGdist} estimation is helpful in detecting non-tree like structures}
   \vspace{.2in}
	\includegraphics[width=6in]{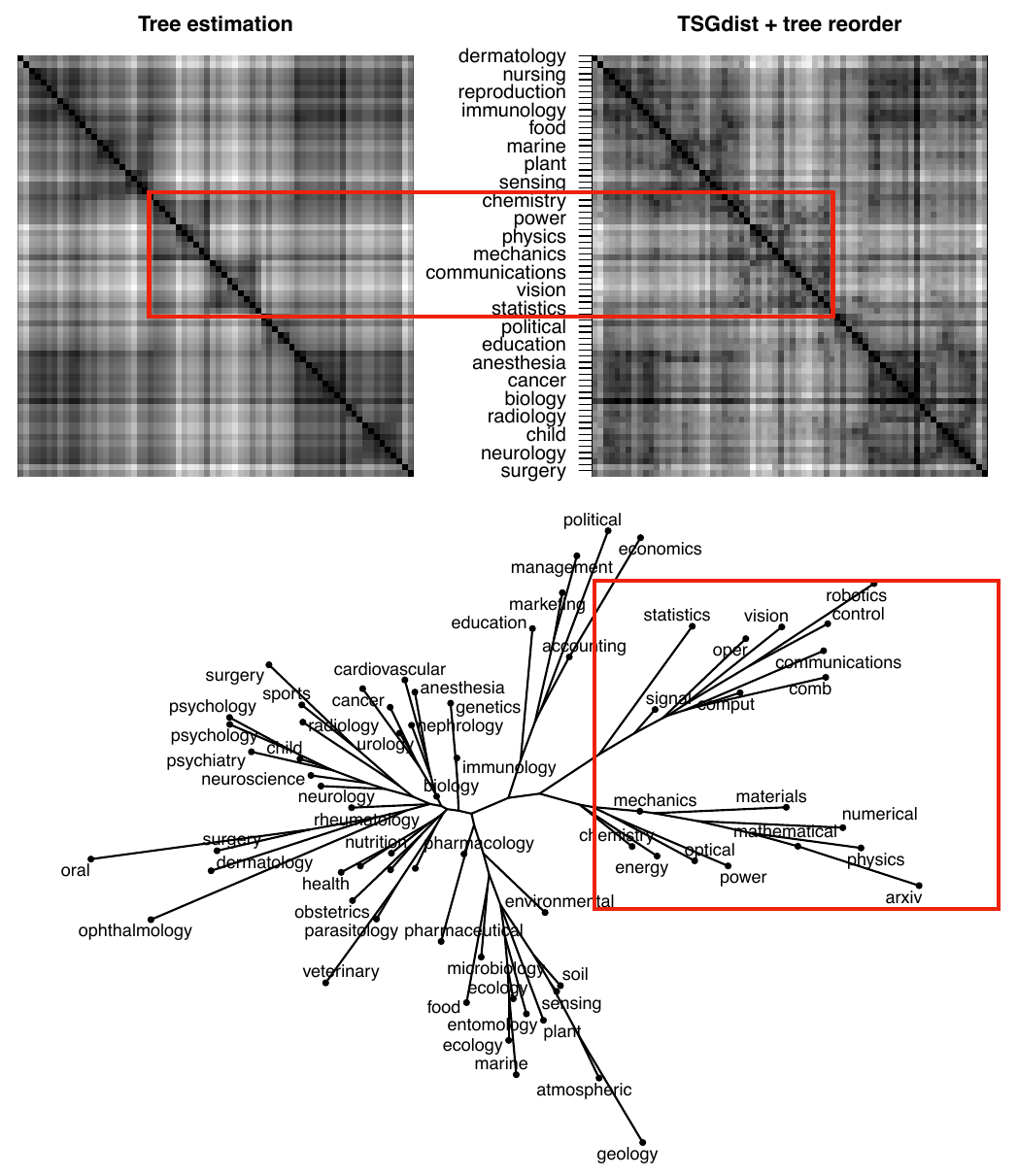}
	\vspace{-.2in}
   \caption{This is the Journal citation tree with $k = 70$ and its diagnosis. There are notable differences between the tree estimation and the \texttt{TSGdist} estimation around the \textit{engineering} cluster and the \textit{physics/material} cluster. Such differences usually indicate the existence of some non-tree-like structures that the neighbor-joining algorithm can't reconstruct. We conjecture that this is caused by the wide cross-disciplinary application of \textit{math/statistics}. The \textit{statistics/combinatory} factors are joined together with the \textit{engineering} cluster whereas the \textit{mathematical/arxiv} factors are combined with the \textit{physics/material} clusters.}
   \label{fig:journal_k=70}
\end{figure}

\begin{figure}[!ht]
    \centering
	\includegraphics[width=8in,angle=270]{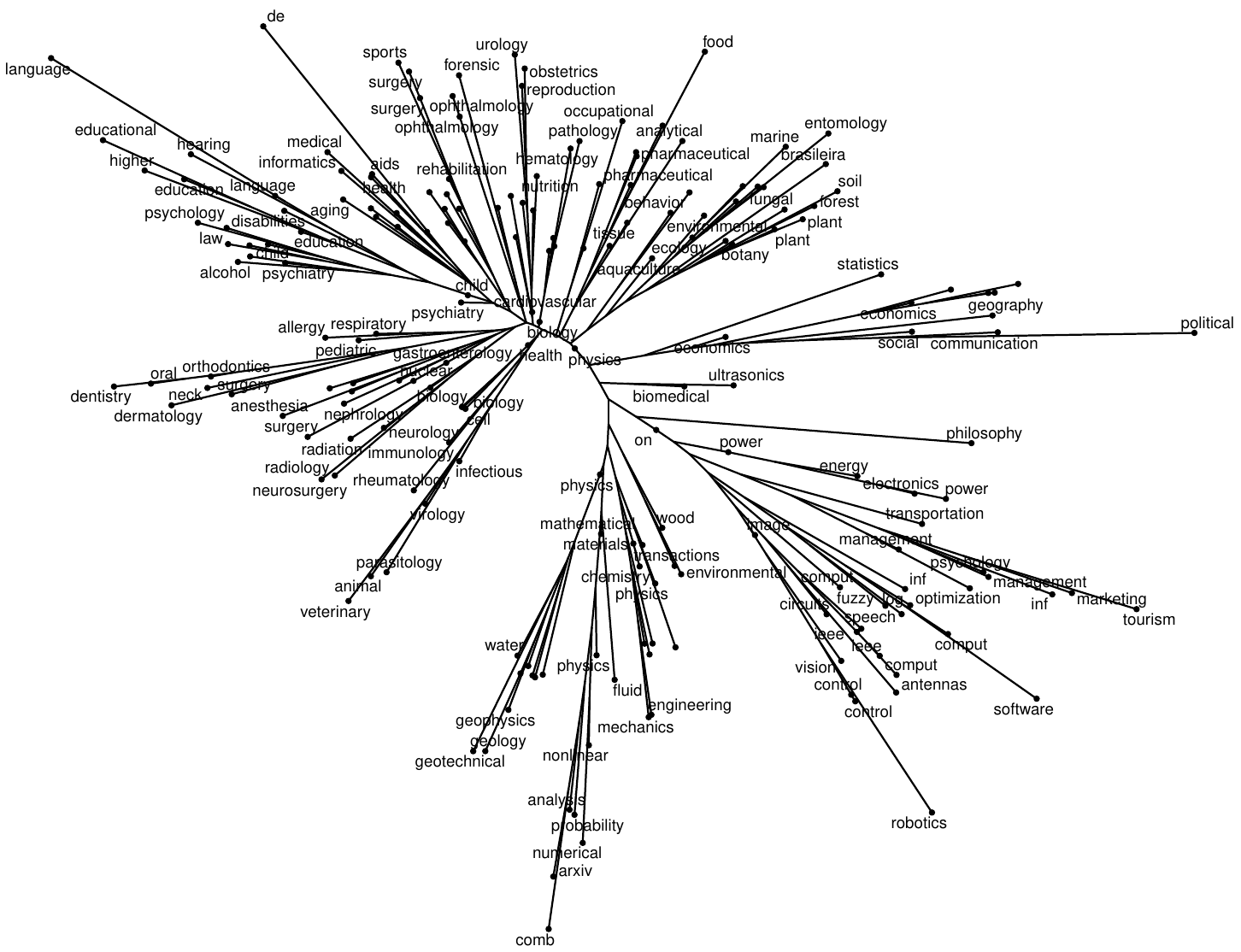}
   \caption{This figure presents the journal citation tree with $k = 200$. By comparing to Figure \ref{fig:journal_k=30}, \ref{fig:journal_tree}, and \ref{fig:journal_k = 200}, we can observe that recovered journal citation trees with different $k$ have similar structures.}
   \label{fig:journal_k = 200}
\end{figure}

\clearpage

\section{Proof for equivalence between six alternative models}

In a tree graph $\T = (V,E)$, if we take any two nodes $p, q \in V$, then there is only one path that connects them (without backtracking). Denote the set of nodes on this path, including nodes $p$ and $q$, as $V(p, q)$, and the set of edges along this path as $E(p, q)$.

Though the following sections, we use \(\lambda_{ij}^{TSG}, \ \lambda_{ij}^{GB}, \ \lambda_{ij}^{OB}\) to represent $\lambda_{ij}$'s in $\T$-Stochastic Graphs, $\T$-Graphical Blockmodels, and $\T_r$-Overlapping Blockmodel respectively. And denote the corresponding population adjacency matrix as \(\mathscr{A}^{TSG}, \mathscr{A}^{GB}, \mathscr{A}^{OB}\).

\subsection{$\T$-Graphical Blockmodel: Proof of Theorem \ref{thm:tsg_tgb}, Proposition \ref{prop:cov->dist} and \ref{prop:dist->cov} }\label{appendix:tsg_tgb}

The proof of Theorem \ref{thm:tsg_tgb} involves two directions:
\begin{enumerate}
    \item proving every $\T$-Graphical Blockmodel is a $\T$-Stochastic Graph,
    \item proving every $\T$-Stochastic Graph is a $\T$-Graphical Blockmodel.
\end{enumerate}
These two parts are established separately in the following sections (Appendix \ref{appendix:tgb->tsg} and Appendix \ref{appendix:tsg->tgb}). To demonstrate the first part, Proposition \ref{prop:cov->dist} is utilized, while the second part is shown using Proposition \ref{prop:dist->cov}. The proof of Proposition \ref{prop:cov->dist} and Proposition \ref{prop:dist->cov} can be found in Appendix \ref{appendix:cov->dist} and Appendix \ref{appendix:dist->cov}, respectively.

\subsubsection{Every $\T$-Graphical Blockmodel is a $\T$-Stochastic Graph}\label{appendix:tgb->tsg}

Given any $\T$-Graphical Blockmodel, construct a $\T$-Stochastic Graph with edge weight $w_{uv}$ such that 
\begin{enumerate}
    \item $w_{uv} = -\log(\vert \Sigma_{uv}\vert)$ for any pair of internal nodes $u, v$,
    \item $w_{i, p(i)} = -\log(\theta_i)$ for any leaf node $i\in V_\ell$ and its parent $p(i)$.
\end{enumerate}
It's easy to check that $w_{uv}$ is positive for all internal edges $(u, v)$ as $\vert \Sigma_{uv} \vert \in (0, 1)$. The proof is complete by
\begin{align*}
 \lambda^{TSG}_{ij} &= \exp(-d(i, j))  \\
 & = \exp\left(-w_{i, p(i)}\right)\exp\left(-w_{i, p(i)}\right) \exp(-d(p(i), p(j))) \\
 & = \theta_i \theta_j \exp(-d(p(i), p(j)))\\
 & = \theta_i \theta_j \exp\left(\sum_{(u, v)\in E(p(i), p(j))}-\log\left(\vert \Sigma_{uv} \vert\right)\right)  \quad \text{(by the construction above)}  \\
 & = \theta_i \theta_j \vert \Sigma_{p(i), p(j)}\vert \quad \text{(by Proposition \ref{prop:cov->dist})}\\
 & = \theta_i \theta_j \vert \Sigma_{z(i), z(j)}\vert = \theta_i \theta_j B_{z(i), z(j)}  \quad \text{(by Definition \ref{def:tgb})}\\
 & = \lambda^{GB}_{ij}
\end{align*}

\begin{remark}\label{appendix_rmk:multiply_c}
In the above proof for Theorem \ref{thm:tsg_tgb}, we construct edge weight on $\T$ such that $w_{i, p(i)} = log(1/\theta_i)$ for leaf node $i$. This means that when $\theta_i > 1$, the edge weight between leaf nodes $i$ and its parent is negative. In fact, one can multiply all $\theta_i$ by a constant $\sqrt{c}$ such that $$\lambda_{ij}^{GB} = c\,\theta_i \theta_j B_{z(i), z(j)} = c\exp(-d(i, j)),$$
that explains why considering $\lambda_{ij}^{TSG} = c \exp(-d(i, j))$ is equivalent to allowing $w_{i, p(i)}<0$ for any leaf node $i$.
\end{remark}

\subsubsection{Every $\T$-Stochastic Graph is a $\T$-Graphical Blockmodel}\label{appendix:tsg->tgb}

This part also serves as a complete proof for Theorem \ref{thm:tsg->dcsbm} as every $\T$-Graphical Blockmodel is a DCSBM by definition. Specifically, the matrix $B$ defined in Section \ref{sec:tsg->sbm} is a principal submatrix of the matrix $\Sigma$ defined in Proposition \ref{prop:cov->dist} and therefore is full rank.

To show that every $\T$-Stochastic Graph is a $\T$-Graphical Blockmodel, consider any $\T$-Stochastic Graph with additive distance $d(\cdot, \cdot)$. Define $\Sigma$ as
\begin{equation}\label{eq:construct_sigma}
    \Sigma_{ij} = \exp(-d(i, j)).
\end{equation}
Then by Proposition \ref{prop:dist->cov}, $\Sigma$ is positive definite, and the multivariate Gaussian distribution with covariance matrix $\Sigma$ is a GGM on $\T$. Therefore, one can construct a $\T$-Graphical Blockmodel with $\Sigma$, following Definition \ref{def:tgb}. Furthermore, define $\theta_i = \exp(-d(i, p(i)))$ for any leaf node $i$ and its parent $p(i)$.

The proof is completed by 
\begin{align*}
\lambda_{ij}^{GB} &= \theta_i \theta_j B_{z(i), z(j)} = \theta_i \theta_j B_{p(i), p(j)} =  \theta_i \theta_j \left\vert \Sigma_{p(i), p(j)} \right\vert \quad \text{(by Definition \ref{def:tgb})}\\
& = \theta_i \theta_j \Sigma_{p(i), p(j)} \quad \text{(since $\Sigma_{p(i), p(j)}$ is positive by Equation \eqref{eq:construct_sigma})}\\
& = \theta_i \theta_j \exp(-d(p(i), p(j)))  \quad \text{(by Proposition \ref{prop:dist->cov})}\\
& = \exp(-d(i, p(i))) \exp(-d(j, p(j))) \exp(-d(p(i), p(j))) \quad \text{(by the construction above)}\\
& = \lambda_{ij}^{TSG}
\end{align*}

\subsubsection{Proof of Proposition \ref{prop:cov->dist}}\label{appendix:cov->dist}

Proposition \ref{prop:cov->dist} is a direct result of Lemma \ref{lemma:additive}, which is a rephrase of Proposition 3 in \citep{choi2011learning} and is first discussed in \citep{erdHos1999few}. Lemma \ref{lemma:additive} shows that under a Gaussian Graphical Model, a negative log transformation of correlations builds an additive distance. Since Definition \ref{def:ggm} assumes $\Sigma_{ii} = 1$, correlations are equal to covariance and thus the proof is complete. Discussions about non-unit variance ($\Sigma_{ii} \neq 1$) can be found in Appendix \ref{appendix:covariance}

\begin{lemma}\label{lemma:additive}
	(\cite{erdHos1999few}) If a Multivariate Gaussian distribution $\mathcal P(\cdot)$ is a GGM on tree $\T = (V,E)$, denote the correlation between any two nodes (not necessarily neighbors) $p, q \in V$ as $\rho_{pq}$ and define \[d(p, q) = -\log \left(\lvert \rho_{pq}\rvert\right),\]
 then $d(\cdot, \cdot)$ is an additive distance on $\T$ in the sense that:
	\begin{equation}\label{eq:additive}
	    d(p, q) = \sum_{(u,v)\in E(p, q)}d(u, v) , \quad \forall p,q \in V,
	\end{equation}
 where $E(p, q)$ is the set of all edges along the path between nodes $k$ and $l$.
\end{lemma}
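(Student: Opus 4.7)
The plan is to reduce the additivity claim $-\log|\Sigma_{pq}| = \sum_{(u,v)\in E(p,q)} -\log|\Sigma_{uv}|$ to a multiplicative identity on covariances, $|\Sigma_{pq}| = \prod_{(u,v)\in E(p,q)} |\Sigma_{uv}|$, and then prove this by induction along the unique path in $\T$ connecting $p$ and $q$, using the global Markov property of Gaussian Graphical Models on trees.

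First I would establish the key local identity: for any node $m$ that lies on the path between $p$ and $q$ in $\T$, we have $\Sigma_{pq} = \Sigma_{pm}\,\Sigma_{mq}$. To see this, note that removing $m$ from $\T$ disconnects $p$ from $q$, so the global Markov property of GGMs (see e.g. Lauritzen 1996, which is already cited in the paper) implies $W_p \perp W_q \mid W_m$. For a centered Gaussian vector with unit variances this conditional independence translates into the vanishing of the partial covariance, i.e. $\Sigma_{pq} - \Sigma_{pm}\Sigma_{mm}^{-1}\Sigma_{mq} = 0$, which under the assumption $\Sigma_{mm}=1$ (Definition \ref{def:ggm}) collapses to $\Sigma_{pq} = \Sigma_{pm}\Sigma_{mq}$.

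Next I would iterate this identity along the path. Let $E(p,q) = \{(v_0,v_1),(v_1,v_2),\dots,(v_{k-1},v_k)\}$ with $v_0 = p$ and $v_k = q$. Each intermediate node $v_t$ lies on the path between $v_0$ and $v_k$, so by the local identity $\Sigma_{v_0 v_k} = \Sigma_{v_0 v_t}\,\Sigma_{v_t v_k}$. A straightforward induction on $k$ (peeling off one edge at a time) yields
\begin{equation*}
\Sigma_{pq} \;=\; \prod_{t=0}^{k-1} \Sigma_{v_t v_{t+1}}.
\end{equation*}
Taking absolute values and applying $-\log(\cdot)$ turns the product into the required sum, which proves \eqref{eq:additive} and hence exhibits $d(\cdot,\cdot)$ as additive, with edge weights $w_{uv} = -\log|\Sigma_{uv}|$. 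The assumption $0 < |\Sigma_{uv}| < 1$ from Definition \ref{def:ggm} ensures each $w_{uv}$ is well-defined and positive, so $d(\cdot,\cdot)$ is a bona fide distance.

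The main obstacle, and the only place genuine work is needed, is the Markov-to-covariance-factorization step, i.e.\ rigorously converting $W_p \perp W_q \mid W_m$ into $\Sigma_{pq} = \Sigma_{pm}\Sigma_{mq}$. This is a standard fact about multivariate Gaussians but deserves a one-line justification via the Schur complement formula for the conditional covariance; everything else is bookkeeping along the tree path. I would cite \citep{choi2011learning, erdHos1999few} for the statement and defer to standard GGM references for the Markov property itself.
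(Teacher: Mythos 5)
Your proof is correct, but note that the paper does not actually prove this lemma at all: it is imported verbatim from the literature (attributed to \cite{erdHos1999few} and described in Appendix \ref{appendix:cov->dist} as a rephrasing of Proposition 3 in \cite{choi2011learning}), and the paper's only added remark is that under Definition \ref{def:ggm} the unit-variance assumption makes correlations coincide with covariances. What you supply is a self-contained version of the standard argument behind that cited result: any interior node $m$ of the path separates $p$ from $q$ in $\T$, the Markov property then gives $W_p \indep W_q \mid W_m$, and for a Gaussian this zero partial covariance reads $\Sigma_{pq} = \Sigma_{pm}\Sigma_{mm}^{-1}\Sigma_{mq} = \Sigma_{pm}\Sigma_{mq}$, after which peeling off one edge at a time yields the product formula and hence additivity of $-\log|\rho_{pq}|$. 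Two small points deserve explicit care if you write this up. First, Definition \ref{def:ggm} as printed states $(\Sigma^{-1})_{ij}=0 \Rightarrow (i,j)\notin E$, which by itself is not the Markov property you invoke; you should read (or restate) the GGM assumption in the standard direction, $(i,j)\notin E \Rightarrow (\Sigma^{-1})_{ij}=0$, since that is what encodes conditional independence along $\T$. Second, your separation argument uses the \emph{global} Markov property, whereas zero precision entries give only the pairwise property; for Gaussians with nonsingular $\Sigma$ the density is positive, so pairwise implies global, and a one-line remark to that effect closes the gap. With those clarifications your argument is complete, and arguably more informative than the paper's citation, at the cost of reproving a known fact.
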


\begin{remark}\label{rmk:differnt_def_additive}
    Although the definitions of additive distance in our paper and in \citep{erdHos1999few} are presented differently, they are actually equivalent. It's easy to check that distances define in our approach satisfy Equation \eqref{eq:additive}, and thus is an additive distance in \citep{erdHos1999few}. On the other had, given any $d(\cdot, \cdot)$ that is additive as defined in \citep{erdHos1999few} (satisfies Equation \eqref{eq:additive}), let edge weights $w_{uv} = d(u, v)$ for all edges $(u, v)\in E$, and define distance $d_w(\cdot, \cdot)$ using $\{w_{uv}\}$, then $d_w(p, q) = d(p, q)$ for any pair of nodes $p, q\in V$.
\end{remark}

\subsubsection{Proof of Proposition \ref{prop:dist->cov}}\label{appendix:dist->cov}

The proof of Proposition \ref{prop:dist->cov} requires the following two lemmas.

\begin{lemma}\label{lemma:semi_positive}
    Consider a tree $\T = (V, E)$ with $n$ leaf nodes and a set of non-negative edge weights $\{w_{uv} \geq 0\}$ (we allow $w_{uv} = 0$ in this lemma), define $\Sigma^\ell \in \R^{n\times n}$ as 
    \[\Sigma^\ell_{ij} = \exp(-d(i, j)),\]
    for any pair of leaf nodes $i, j \in V_\ell$. Then $\Sigma^\ell$ is positive semidefinite.
\end{lemma}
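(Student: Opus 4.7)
The plan is to construct an explicit Gaussian vector on the leaves whose covariance matrix is exactly $\Sigma^\ell$; since covariance matrices are always positive semidefinite, this will immediately give the result. The key observation is that, because $w_{uv}\ge 0$, each quantity $\alpha_{uv}:=\exp(-w_{uv})$ lies in $(0,1]$ and can serve as a valid correlation coefficient on an edge.

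Concretely, I would root $\T$ at an arbitrary node $r$ and build a Gaussian tree process top-down. Let $Z_r\sim N(0,1)$. For every non-root node $v$ with parent $u$, define
\[
    Z_v \;=\; \alpha_{uv} Z_u + \varepsilon_v, \qquad \varepsilon_v \sim N\!\left(0,\, 1-\alpha_{uv}^2\right),
\]
with all $\varepsilon_v$'s mutually independent and independent of $Z_r$. A simple induction on depth shows $\mathrm{Var}(Z_v)=\alpha_{uv}^2 \cdot 1 + (1-\alpha_{uv}^2) = 1$ for every $v\in V$. This construction is well defined because $\alpha_{uv}\in(0,1]$; in the degenerate case $w_{uv}=0$ we simply get $Z_v=Z_u$ almost surely, which is harmless.

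Next, for any two leaves $i,j\in V_\ell$, let $a$ denote their lowest common ancestor in the rooted tree, and let $P_i$, $P_j$ be the edge sets on the paths from $a$ to $i$ and from $a$ to $j$, respectively. Iterating the recursion along $P_i$ gives $Z_i = \bigl(\prod_{e\in P_i}\alpha_e\bigr) Z_a + \eta_i$ where $\eta_i$ is a sum of $\varepsilon$'s independent of $Z_a$; similarly for $Z_j$, with $\eta_j$ independent of both $Z_a$ and $\eta_i$ because the edge sets $P_i$ and $P_j$ are disjoint. Therefore
\[
    \mathrm{Cov}(Z_i,Z_j) \;=\; \prod_{e\in P_i}\alpha_e \cdot \prod_{e\in P_j}\alpha_e \;=\; \prod_{e\in E(i,j)} e^{-w_e} \;=\; e^{-d(i,j)} \;=\; \Sigma^\ell_{ij}.
\]
Hence $\Sigma^\ell$ is the covariance matrix of the Gaussian vector $(Z_i)_{i\in V_\ell}$, and is therefore positive semidefinite.

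The main thing to verify carefully is the independence bookkeeping in the last step (that $P_i$ and $P_j$ share no edges, so the accumulated noise along the two branches is independent), together with the boundary case where some edge weights vanish. Neither of these is hard: disjointness of $P_i$ and $P_j$ is immediate from the definition of the lowest common ancestor in a tree, and zero edge weights merely collapse two adjacent $Z$-variables. No appeal to Proposition \ref{prop:cov->dist} or \ref{prop:dist->cov} is needed here; in fact, this lemma is a building block for those results.
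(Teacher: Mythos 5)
Your construction is correct, and it takes a genuinely different route from the paper. The paper proves Lemma \ref{lemma:semi_positive} by a purely linear-algebraic induction on the number of internal nodes: for a star tree it writes $\Sigma^\ell = \theta\theta^T + D$ with $D$ diagonal and nonnegative, and for the inductive step it contracts an internal edge $(p,q)$ and expresses $\Sigma^\ell$ as $\theta_{pq}\widetilde\Sigma + \operatorname{blockdiag}\bigl((1-\theta_{pq})\widetilde\Sigma_{11},(1-\theta_{pq})\widetilde\Sigma_{22}\bigr)$, a sum of positive semidefinite matrices. You instead exhibit $\Sigma^\ell$ directly as the covariance matrix of an explicit Gaussian Markov process on the rooted tree ($Z_v=\alpha_{uv}Z_u+\varepsilon_v$ with $\alpha_{uv}=e^{-w_{uv}}\in(0,1]$), which settles positive semidefiniteness immediately; the only bookkeeping is the unit-variance induction, the disjointness of the two branches below the lowest common ancestor (so the accumulated noise terms are independent of each other and of $Z_a$ — note $\eta_i$ is a weighted sum of the $\varepsilon$'s, with weights given by downstream $\alpha$-products, though only independence and zero mean matter), and the harmless degenerate case $w_{uv}=0$. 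Your argument is shorter and more conceptual, and it actually yields more: it realizes the covariances among \emph{all} nodes, not just leaves, which is essentially the content the paper later extracts in Proposition \ref{prop:dist->cov} via the zero-weight-twig trick; there is no circularity, since your process is built from scratch. What the paper's approach buys in exchange is a proof that stays entirely within matrix algebra (no probabilistic machinery or appeal to the existence of possibly degenerate Gaussian vectors) and an explicit decomposition that mirrors the edge-contraction structure used elsewhere in their arguments.
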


\begin{lemma}\label{lemma:additive-reverse}
A multivariate Gaussian distribution $N(0, \Sigma)$ is local Markov\footnote{The definition of local Markov can be found in Appendix \ref{appendix:additive-reverse}} on tree $\T = (V, E)$ if the correlations between any pair of nodes $p,q \in V$ are multiplicative:
\begin{equation}\label{eq:multiplicative}
    \rho_{pq}  = \prod_{(u,v)\in E(p, q)} \rho_{uv}, \quad \forall p,q \in V.
\end{equation}
\end{lemma}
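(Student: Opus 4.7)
The plan is to recognize the hypothesized Gaussian as the law of a canonical tree-Markov process built from the edge correlations, so that the Markov property is inherited for free. Pick any $r \in V$ as a root and let $p(v)$ denote the parent of $v \neq r$ in the rooted tree. Writing $\rho_v := \rho_{p(v), v}$ for the edge correlations already present in $\Sigma$ (which lie in $(-1,1)$ by the standing assumption $|\Sigma_{ij}| < 1$ in Definition \ref{def:ggm}), I construct an auxiliary Gaussian $Y = (Y_v)_{v \in V}$ recursively: draw $Y_r \sim N(0,1)$ and, working down the tree, set $Y_v = \rho_v Y_{p(v)} + \sqrt{1 - \rho_v^2}\,\eta_v$ with the $\eta_v$ i.i.d.\ $N(0,1)$ and independent of $Y_r$ and one another. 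Each $1 - \rho_v^2$ is strictly positive, so the recursion is well-defined.

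A short induction gives $\mathrm{Var}(Y_v) = 1$ for every $v$. For any pair $i, j$ with least common ancestor $a$, expanding $Y_i$ and $Y_j$ back to $Y_a$ writes each as a scalar multiple of $Y_a$ (with coefficient a product of $\rho$'s along the path from $a$) plus independent noise from the two sibling branches; a routine calculation then yields $\mathrm{Cov}(Y_i, Y_j) = \prod_{(u,v) \in E(i,j)} \rho_{uv}$. By hypothesis (Equation \eqref{eq:multiplicative}) this product equals $\rho_{ij} = \Sigma_{ij}$, so $Y$ and the given $X \sim N(0, \Sigma)$ are mean-zero Gaussians with identical covariance matrices and are therefore identically distributed. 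Any Markov property of $Y$ transfers verbatim to $X$.

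It remains to show $Y$ is local Markov on $\T$. By construction, $Y$ is a Bayesian network on the tree rooted at $r$: each $Y_v$ is a linear function of $Y_{p(v)}$ and an independent $\eta_v$, so by the directed Markov property combined with independence of the $\eta$'s across sibling subtrees, conditioning on the neighborhood $\{p(v)\} \cup \mathcal{C}(v)$ of $v$ in $\T$ separates $Y_v$ from everything else in $V$, which is exactly local Markov at $v$. The step needing the most care is this passage from the directed structure of the recursion to the undirected tree local Markov property; I would either verify it by a short conditioning argument split into the leaf and internal-node cases, or appeal to the standard equivalence on trees between directed and undirected Markov properties (see, e.g., \citep{lauritzen1996graphical}), noting that the choice of root was arbitrary.
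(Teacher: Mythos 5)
Your proof is correct, but it proceeds by a genuinely different route than the paper. The paper verifies the local Markov property directly: for a node $X$ with neighbors $Z_0,\dots,Z_m$ and a non-neighbor $Y$ whose path to $X$ passes through $Z_0$, it writes out the joint covariance implied by Equation \eqref{eq:multiplicative}, computes the conditional covariance given $Z_0$ to get $\mathrm{Cov}\bigl(Y,(X,Z_1,\dots,Z_m)\mid Z_0\bigr)=0$, and then uses weak union plus joint Gaussianity to aggregate over all non-neighbors $Y$ — a self-contained linear-algebra argument requiring no auxiliary construction. You instead build the canonical rooted autoregressive process $Y_v=\rho_v Y_{p(v)}+\sqrt{1-\rho_v^2}\,\eta_v$, check that its covariance is the path-product of edge correlations, identify it in law with the given $N(0,\Sigma)$ via Equation \eqref{eq:multiplicative}, and transfer the Markov property from the directed construction. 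Your coupling argument is clean and arguably more conceptual (it explains \emph{why} multiplicativity characterizes tree-Markov Gaussians), at the cost of leaning on the directed-to-undirected equivalence you flag at the end; that appeal is legitimate here because every node of the rooted tree has a single parent, so the moral graph is the undirected tree itself and factorization with respect to the DAG yields the undirected global (hence local) Markov property with no positivity assumption. Two small points: your identification $\rho_{ij}=\Sigma_{ij}$ uses the unit-variance normalization $\Sigma_{ii}=1$ of Definition \ref{def:ggm} (for general variances one rescales coordinates, which is exactly the closing remark in the paper's own proof), and nothing in your construction actually requires $1-\rho_v^2>0$ strictly — it degrades gracefully to the degenerate case — so the reliance on $|\Sigma_{ij}|<1$ is a convenience rather than a necessity.
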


To prove Proposition \ref{prop:dist->cov}, consider an additive distance $d(\cdot, \cdot)$ on $\T = (V, E)$. Without loss of generality, assume $V = [N]$ and the set of leaf nodes $V_\ell = [n]$. Define the matrix $\Sigma \in \R^{N\times N}$ as 
\[\Sigma_{ij} = \exp(-d(i, j))\]
for any pair of nodes $i, j\in V$. Construct a new tree $\widetilde \T = \left(\widetilde V, \widetilde E\right)$ by attaching twigs with zero edge weight to all internal nodes in $V$, that is
\[\widetilde V = V \cup \{N+1, N+2, \cdots, 2N - n\}, \ \widetilde E = E \cup \{(n+i, N+i): i\in [N - n]\}\]
\[\widetilde w_{u, v} = 
\begin{cases}
    w_{u, v} & \text{if $u, v\in [N]$}\\
    0 & \text{otherwise}
\end{cases}.
\]
Define the matrix $\widetilde \Sigma^\ell \in \R^{N\times N}$ as 
\[\widetilde \Sigma^\ell_{ij} = \exp\left(-\widetilde d(i, j)\right)\] for any pair of leaf nodes $i, j\in \widetilde V_{\ell}$, then $\widetilde \Sigma^\ell = \Sigma$. By Lemma \ref{lemma:semi_positive}, $\widetilde \Sigma^\ell$ is positive semidefinite, therefore $\Sigma$ is positive semidefinite.

Since $\Sigma$ is positive semidefinite, we can define a multivariate Gaussian distribution $\mathcal{P}(\cdot)$ with mean zero and variance $\Sigma$. Then by Lemma \ref{lemma:additive-reverse}, $\mathcal{P}(\cdot)$ is local Markov on $\T$. 

What is left is to show that $\Sigma$ is invertible. This is because when $\Sigma$ is invertible, the joint probability density function is continuous and positive, thus the local Markov property implies the global Markov property. Together with the fact that zero entry in the inverse covariance matrix of a joint Gaussian distribution implies conditional independence, the proof is completed.

The proof of $\Sigma$ being invertible is by contradiction. Since $\Sigma$ is the covariance matrix of a Gaussian distribution, if $\Sigma$ is not full rank, then there exists a random variable $X$ that can be represented as a linear combination of others. Given the local Markov property, $X$ is independent of all other nodes given its neighbor nodes. Therefore $X$ must be a linear combination of its neighbor nodes. Denote the neighbors of $X$ as $Z_1, Z_2, \cdots, Z_m$, then there exists $c = (c_1, \cdots, c_m)^T$ such that \begin{equation}\label{eq:variables_linear_relationship}
    X = c_1Z_1 + c_2Z_2\cdots + c_m Z_m,
\end{equation}
Denote correlations $\rho_{X, Z_i}$ as $z_i$, then 
\begin{equation}\label{eq:sigma_representation}
    Cov(X, Z_1, \cdots, Z_m) = 
    \begin{bmatrix}
    1 & z_1 & z_2 & \cdots & z_m\\
    z_1 & 1 & z_1z_2 &\cdots & z_1z_m\\
    z_2 & z_1z_2& 1 & \cdots & z_2z_m\\
    \vdots & \vdots &\vdots & \ddots & \vdots \\
    z_m & z_1z_m & z_2z_m &\cdots &1
    \end{bmatrix} = \begin{bmatrix}
        1 & z^T \\
        z & \Gamma \\
    \end{bmatrix}
\end{equation}
where $z = (z_1, \cdots, z_m)^T$, $\Gamma = Cov(Z_1, \cdots, Z_m)$. 
Notice that 
\begin{equation}\label{eq:gamma_decompose}
    \Gamma = {\underbrace {\textstyle \text{diag}(1-z_1, \cdots, 1-z_m)}_{\displaystyle \mathclap{D}}} + zz^T,
\end{equation}
 where $\text{diag}(x)$ is a diagonal matrix with its $i$th diagonal elements being $x_i$. 

By the Sherman Morrison formula \citep{bartlett1951inverse} and the decomposition in Equation \eqref{eq:gamma_decompose}, $\Gamma$ is invertible and 
\begin{equation}\label{eq:gamma_inverse}
   \Gamma^{-1} = D^{-1} - \frac{D^{-1}zz^TD^{-1}}{1+z^T D^{-1} z}. 
\end{equation}
Other than this, Equation \eqref{eq:variables_linear_relationship} and \eqref{eq:sigma_representation} implies 
\begin{equation}\label{eq:linear_in_sigma}
    \begin{cases} 
    z = \Gamma c, \\
    1 = z^Tc
    \end{cases}.
\end{equation}
By Equation \eqref{eq:gamma_inverse} and \eqref{eq:linear_in_sigma}, 
\begin{equation}\label{eq:contradiction}
    1 = z^T \Gamma^{-1} z.
\end{equation}
It's easy to calculate 
\begin{equation}
    \left[\Gamma^{-1}\right]_{ij} = 
   \begin{dcases}
    \frac{1}{1-z_i^2} - \left. z_i^2 \middle/ \left(\left(1-z_i^2\right)^2\left(1+\sum_{i = 1}^m \frac{z_i^2}{1-z_i^2}\right)\right) \right.,  & i = j\\
    - \left. z_iz_j \middle/ \left(\left(1-z_i^2\right) \left(1-z_j^2\right)\left(1+\sum_{i = 1}^m \frac{z_i^2}{1-z_i^2}\right)\right) \right., & i\neq j\\
    \end{dcases},
\end{equation}
let $C_i = \dfrac{z_i^2}{1-z_i^2}$, then the right side of Equation \eqref{eq:contradiction} does not equal to its left side:
\begin{align*}
z^T \Gamma^{-1} z & = \sum_{i = 1}^m \left (C_i - \dfrac{C_i^2}{1+\sum_{\ell = 1}^mC_\ell}\right) - 2\sum_{i\neq j}\dfrac{C_iC_j}{1+\sum_{\ell = 1}^m C_\ell} \\
& = \dfrac{1}{1+\sum_{\ell = 1}^m C_\ell}\left(\left(1+\sum_{i = 1}^mC_i\right)\sum_{i = 1}^mC_i -\sum_{i = 1}^m C_i^2 - 2\sum_{i\neq j}C_iC_j\right)\\
& = \dfrac{\sum_{\ell = 1}^m C_\ell}{1+\sum_{\ell = 1}^m C_\ell}\neq 1
\end{align*}

\subsubsection{Proof of Lemma \ref{lemma:additive-reverse}}\label{appendix:additive-reverse}

\begin{definition}\label{def:local_markov}
A probability distribution $\mathcal{P}(\cdot)$ is local Markov with respect to $\T$ if the conditional distribution of any node $X_i$ given its neighbors is independent of the remaining nodes, i.e. $X_i\indep X_{V/\{i\cup nbd(i)\}}| X_{nbd(i)}$. 
\end{definition}

 Consider $X$ be any node in tree $\T$, let $Y$ be one of its non-neighbor nodes, and let $Z_0,Z_1,\cdots,Z_m$ be the neighbor nodes of $X$. By definition, the shortest path between $X$ and $Y$ must go through some $Z_i$, w.l.o.g. assume the path between $X$ and $Y$ go through $Z_0$. Denote $\rho_{X,Z_0} = x, \rho_{Y,Z_0} = y, \rho_{X,Z_i} = z_i, i = 1,\cdots,m$.

We first consider the case where $X,Y,Z_0,\cdots,Z_m$ all have unit variance, then  
\begin{equation*}
    Cov(Z_0,Y,X,Z_1,\cdots,Z_m) = \begin{bmatrix}
    1 & y & x & xz_1 & \cdots & xz_m\\
    y & 1 & xy & xyz_1 & \cdots & xyz_m\\
    x & xy & 1 & z_1 &\cdots & z_m\\
    xz_1 & xyz_1 & z_1& 1 & \cdots & z_1z_m\\
    \vdots & \vdots & \vdots &\vdots & \ddots & \vdots \\
    xz_m & xyz_m & z_m & z_1z_m &\cdots &1
    \end{bmatrix}
\end{equation*}.

Let $W_1 = (Y,X,Z_1,\cdots,Z_m)$, $W_2 = Z_0$, then
\begin{align*}
\Sigma_{1|2} &= \Sigma_{11}-\Sigma_{12}\Sigma_{22}^{-1}\Sigma_{21}\\
& = \begin{bmatrix}
    1 & xy & xyz_1 & \cdots & xyz_m\\
    xy & 1 & z_1 &\cdots & z_m\\
    xyz_1 & z_1& 1 & \cdots & z_1z_m\\
    \vdots & \vdots &\vdots & \ddots & \vdots \\
    xyz_m & z_m & z_1z_m &\cdots &1
    \end{bmatrix}-
    \begin{bmatrix}
    y\\
    x\\
    xz_1\\
    \vdots\\
    xz_m
    \end{bmatrix}
    \begin{bmatrix}
     y & x & xz_1 & \cdots & xz_m
    \end{bmatrix}\\
& = \begin{bmatrix}
     1-y^2 & \mathbf{0}_{1\times m}\\
     \mathbf{0}_{m\times 1} & A_{m\times m}
    \end{bmatrix} (A\text{ is }Cov(X,Z_1,\cdots,Z_m| Z_0))
\end{align*}

Since $(Z_0,Y,X,Z_1,\cdots,Z_m)$ follows multivariate normal distribution, a zero covariance matrix implies independence, i.e. $Y\indep X, Z_1,\cdots,Z_m | Z_0$. By weak union rules of conditional independence, this implies $Y\indep X | Z_0, Z_1,\cdots,Z_m$. 

Therefore, for any non-neighbor nodes $Y$, $Cov(Y, X| Z_0,Z_1,\cdots Z_m) = 0$, denote all non-neighbor nodes as $Y_1,\cdots,Y_l$, then $Cov((Y_1,\cdots,Y_l),X|  Z_0,Z_1,\cdots Z_m) = \mathbf{0}_{l\times 1}$. Since all nodes follows multivariate normal distribution, this implies $X_i\indep X_{V/\{i\cup nbd(i)\}}| X_{nbd(i)}$ for any nodes $X_i$.

For variables $X,Y,Z_0,\cdots,Z_m$ that don't have unit variance, just consider $X^* = X/\sigma_X,Y^* = Y/\sigma_Y,Z_0^* = Z_0/\sigma_{Z_0},\cdots,Z_m^* = Z_m/\sigma_{Z_m}$, the conditional independence between $X^*$ and $Y^*$ given $Z_0^*,\cdots,Z_m^*$ implies the conditional independence between $X$ and $Y$ given $Z_0,\cdots,Z_m$.

\subsubsection{Proof of Lemma \ref{lemma:semi_positive}}

For notation simplicity, we drop the superscript $\ell$ on $\Sigma^\ell$ and use $\Sigma$ to represent the matrix defined for all leaf nodes.
The proof is established through induction on the number of internal nodes. 
\paragraph{Lemma \ref{lemma:semi_positive} is true if $\T$ is a star shaped tree.} 
For any star-shaped tree with $n$ leaf nodes $1, \cdots, n$, and one internal nodes $n+1$, let $\theta_i = \exp(-d(i, n+1))$ for any leaf nodes $i\in [n]$, then 
        \[\Sigma = 
            \begin{bmatrix}
            1 & \theta_1\theta_2 &\cdots & \theta_1\theta_n\\
            \theta_1\theta_2& 1 & \cdots & \theta_2\theta_n\\
            \vdots &\vdots & \ddots & \vdots \\
            \theta_1\theta_n & \theta_2\theta_n &\cdots &1
            \end{bmatrix} = \theta\theta^T + D,\]
        where $\theta = (\theta_1, \theta_2, \cdots, \theta_n)$ and $D$ is a diagonal matrix with $D_{ii} = 1-\theta_i^2$. When $w_{uv}\geq 0$, $\theta_i \in (0, 1]$ for any $i\in [n]$. It's easy to verify that both $\theta\theta^T$ and $D$ is semi-positive definite and thus $\Sigma$ is also semi-positive definite.
        
\paragraph{If Lemma \ref{lemma:semi_positive} is true for $m\geq 1$ internal nodes, it is also true for $m+1$ internal nodes.} 

For any tree with $n$ leaf nodes and $m+1$ internal nodes, there exists two internal nodes $p$ and $q$ that are neighbors. Delete edge $(p, q)$ splits tree $\T$ into two components, w.l.o.g., we assume
        \begin{equation}
            \Sigma = 
            \begin{bmatrix}
            \Sigma_{11} & \Sigma_{12} \\
            \Sigma_{12}^T & \Sigma_{22}
            \end{bmatrix},
        \end{equation}
        where $\Sigma_{11}$ and $\Sigma_{22}$ are the sub-matrices corresponding to leaf nodes in the first component and the second component, respectively. Now construct a new tree $\widetilde \T = \left(\widetilde V, \widetilde E\right)$ with edge weights $\widetilde w$ and $\widetilde \Sigma$ by shrinking edge $(p, q)$ to zero and deleting node $p$. That is, 
        \begin{equation*}
        \widetilde V = V\setminus\{p\}, \ \widetilde E =  E \mathbin{\big\backslash}
        {\underbrace {\{(p, z): \forall z \text{ such that } (p, z)\in E\}}_{\displaystyle \mathclap{E_p}}} \bigcup {\underbrace {\{(q, z): \forall z\neq q \text{ such that } (p, z)\in E\}}_{\displaystyle \mathclap{E_q}}}
        \end{equation*}
        For any edge $(q, z)$ in the newly added edge set $E_q$, let \(\widetilde {w}_{qz} = w_{pz}\); for all edges $(u, v)\in \widetilde E$ that are originally presented in $E$, let \(\widetilde {w}_{uv} = w_{uv}\).

        Then $\widetilde \Sigma$ has a relationship to $\Sigma$: 
            \[\widetilde \Sigma =   
            \begin{bmatrix}
            \widetilde \Sigma_{11} & \widetilde \Sigma_{12} \\
            \widetilde\Sigma_{12}^T & \widetilde \Sigma_{22}
            \end{bmatrix}
             = 
            \begin{bmatrix}
            \Sigma_{11} & \left.\Sigma_{12}\middle /\theta_{pq}\right. \\
            \left.\Sigma_{12}^T\middle /\theta_{pq}\right. & \Sigma_{22}
            \end{bmatrix},\]
    where $\theta_{pq} = \exp(-w_{pq}) \in (0, 1]$. This implies that 
    \[
    \Sigma = \theta_{pq} \widetilde \Sigma + \begin{bmatrix}
            (1-\theta_{pq})\widetilde \Sigma_{11} & 0 \\
            0 & (1-\theta_{pq})\widetilde \Sigma_{22}
            \end{bmatrix}.
            \]
    Notice that $\widetilde \T$ is a tree with $m$ internal nodes, therefore $\widetilde \Sigma$ is positive semidefinite. As principal submatrices of $\widetilde \Sigma$, $\widetilde \Sigma_{11}$ and $\widetilde \Sigma_{22}$ are also positive semidefinite. This suggests that $\Sigma$ is the summation of two positive semidefinite matrices and thus is positive semidefinite.

\subsection{$\T_r$-Overlapping Blockmodel: Proof of Theorem \ref{thm:tsg->tob} and Theorem \ref{thm:tob->tsg}}\label{appendix:tob}

For any two nodes $i,j \in \T_r$, define the  \textbf{nearest common ancestor} of $i$ and $j$, $i\wedge j$, as the node in $V(i,j)$ that is closest to the root $r$.

\subsubsection{Proof of Theorem \ref{thm:tsg->tob}}\label{appendix:tsg->tob}
Consider any $\T$-Stochastic Graph and a rooted tree $\T_r$ with $r$ being any internal node in $V$. We want to construct a $\T_r$-Overlapping Blockmodel that is equivalent.

In order to achieve this, we define $\theta_i$ for any node $i\in V$, excluding the root node, as $\theta_i = \exp(-d(i, r))$. For the root node $r$, we set $\theta_r = 1$. These $\theta_i$ values will be used in our construction of the $\T_r$-Overlapping Blockmodel.

Specifically, we use the $\theta_i$ values for leaf nodes as the degree corrected parameter, while the $\theta_i$ values for internal nodes are used to define the connectivity matrix $B$. In particular, let $B$ be a diagonal matrix with 
\begin{equation}\label{eq:define_b_tob}
    B_{uu} = 
    \begin{dcases}
    \frac{1}{\theta_u^2}-\frac{1}{\theta_{p(u)}^2}  & u\neq r \\
    1 & u = r
    \end{dcases}
\end{equation}
for any internal node $u\neq r$ and let $B_{rr} = 1$. We can verify that $B_{uu} > 0$ since \[\frac{\theta_u}{\theta_p(u)} = \exp\left(-d(u, r) + d(p(u), r)\right) = \exp\left(-w_{u, p(u)}\right) < 1\] 
Then the following inequalities complete the proof

\begin{equation}\label{eq:tob_tsg}
    \begin{aligned}
        \lambda_{ij}^{OB} & = \theta_i \theta_j Z_i^T BZ_j \\
        & = \theta_i \theta_j \sum_{\scaleto{u \, \in \, V(r, i\wedge j)}{9.5pt}} B_{uu}\\
        & = \theta_i \theta_j \left(1 + \sum_{\scaleto{ u \, \in \, V(r, i\wedge j)\setminus \{r\}}{9.5pt}}  \left(\frac{1}{\theta_u^2}-\frac{1}{\theta_{p(u)}^2}\right)\right) \\
        &=  \frac{\theta_i \theta_j}{\theta_{i\wedge j}^2} = \exp(-d(i, j)) = \lambda^{TSG}_{ij}
    \end{aligned}
\end{equation}

\subsubsection{Proof of Theorem \ref{thm:tob->tsg}}\label{appendix:tob->tsg}

Given any $\T_r$-Overlapping Blockmodel with diagonal $B$ and node specific parameter $\theta_i$, construct $\theta_u$ for every internal node $u$ through the following procedure:
\begin{enumerate}
    \item set $\theta_r = 1$ for root $r$,
    \item for any internal node $u$ with $\theta_{p(u)}$ defined, let $\theta_u = \left. 1 \middle / \sqrt{(B_{uu} + 1/\theta^2_{p(u)} )}\right.$,
    \item repeat step 2 until $\theta_u$ is defined for all internal node $u$.
\end{enumerate}
Then $\theta_u$ satisfy Equation \eqref{eq:define_b_tob}. 

Now we construct a $\T$-Stochastic Graph by defining edge weight $w_{uv}$. For any pair of neighbor nodes $(u, v)\in V$, one of them must be the parent of the other in $\T_r$, without loss of generality, assume $u = p(v)$ and
construct edge weight $w_{uv}$ as 
\[w_{uv} = -\log(\theta_u) + \log\left(\theta_{p(u)}\right).\]
Since $\theta_u$ satisfy Equation \eqref{eq:define_b_tob} and $B_{uu}\in \R^+$, $w_{uv}$ is positive if $u$ and $v$ are all internal nodes. The proof is completed using the same equalities (Equation \eqref{eq:tob_tsg}) as in the pervious section.

\subsection{Stochastic Processes that generate graphs with equivalent distributions}

Lemma \ref{lemma:fastrg} \citep{rohe2018note}  is useful to prove equivalence between $\T$-Stochastic Graphs and edge generators. With Lemma \ref{lemma:fastrg}, the equivalence proof only need to show that the probability of sampling edge $(i, j)$ is proportional to $\lambda^{TSG}_{ij}$.

Specifically, for any random graph $A$ such that $A_{ij} \overset{i.d.}{\sim} \text{Poisson}(\lambda_{ij})$, if an edge generator $\mathcal{P}(\cdot)$ samples edge $(i, j)$ with probability proportional to \(\lambda_{ij}\),
then Algorithm \ref{alg:graph_generator} with sparsity parameter $\zeta = \sum_{i\neq j}\lambda_{ij}$ and edge generator $\mathcal{P}(\cdot)$ generates graphs with the same distribution as $A$.

On the other direction, given a sparsity parameter $\zeta$ and an edge generator $\mathcal{P}(\cdot)$ such that the probability $\mathcal{P}(\cdot)$ samples $(i, j)$ is proportional to $\lambda^{TSG}_{ij}$ for some $\T$-Stochastic Graph. Define another random graph $A$ with 
\[\lambda_{ij} = c \lambda^{TSG}_{ij},\] where \[c = \frac{\zeta}{\sum_{i\neq j} \lambda^{TSG}_{ij}}.\]
Then $A$ is also a $\T$-Stochastic Graph by Theorem \ref{thm:similar} (consider $w^1$ being the edge weight in the original $\T$-Stochastic Graph, and construct $w^2$ with $\delta_i \equiv -\log\left(\sqrt{c}\right)$). Further, $\lambda_{ij}$ in this new graph $A$ satisfies $\zeta = \sum_{i\neq j}\lambda_{ij}$.

\begin{lemma}\label{lemma:fastrg}
(Lemma 1 in \citep{rohe2018note}) Let $A\in \R^{n\times n}$ be a random graph with $A_{ij} \overset{i.d.}{\sim} \text{Poisson}(\lambda_{ij})$ for any $i\neq j$. Then
conditioned on $\sum_{i\neq j}A_{ij} = m$,
\[(A_{12}, A_{13}, \cdots, A_{n-1, n}) \sim \text{Multinomial}\left(m, \ \lambda  \middle / \sum_{i\neq j} \lambda_{ij}\right),\]
where $\lambda = (\lambda_{12}, \lambda_{13}, \cdots, \lambda_{n-1, n})$.
\end{lemma}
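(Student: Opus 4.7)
The plan is to prove this by a direct conditional-probability calculation, exploiting the classical fact that independent Poissons, conditioned on their sum, become Multinomial. Let $N = \binom{n}{2}$ be the number of unordered pairs and, for convenience, re-index the entries above the diagonal as $A_1, \dots, A_N$ with rates $\lambda_1, \dots, \lambda_N$, and set $\Lambda = \sum_k \lambda_k$ and $S = \sum_k A_k$.

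First, I would record two standard facts about Poisson distributions. Since $A_1, \dots, A_N$ are independent with $A_k \sim \text{Poisson}(\lambda_k)$, their joint pmf at any nonnegative integer vector $(a_1,\dots,a_N)$ is
\[
\Pr(A_1 = a_1, \dots, A_N = a_N) \;=\; \prod_{k=1}^{N} \frac{\lambda_k^{a_k} e^{-\lambda_k}}{a_k!} \;=\; e^{-\Lambda}\prod_{k=1}^{N} \frac{\lambda_k^{a_k}}{a_k!}.
\]
By the standard sum-of-independent-Poissons result, $S \sim \text{Poisson}(\Lambda)$, so $\Pr(S = m) = e^{-\Lambda} \Lambda^m / m!$ for each $m \ge 0$.

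Next, I would compute the conditional pmf on the event $\{S = m\}$. For any vector $(a_1,\dots,a_N)$ of nonnegative integers with $\sum_k a_k = m$, the event $\{A_1 = a_1, \dots, A_N = a_N\}$ is contained in $\{S = m\}$, so
\[
\Pr(A_1 = a_1, \dots, A_N = a_N \mid S = m) \;=\; \frac{\Pr(A_1 = a_1, \dots, A_N = a_N)}{\Pr(S = m)}
\;=\; \frac{e^{-\Lambda}\prod_k \lambda_k^{a_k}/a_k!}{e^{-\Lambda}\Lambda^m/m!}.
\]
The $e^{-\Lambda}$ factors cancel, and regrouping $\Lambda^m = \prod_k \Lambda^{a_k}$ (using $\sum_k a_k = m$) yields
\[
\Pr(A_1 = a_1, \dots, A_N = a_N \mid S = m) \;=\; \frac{m!}{\prod_k a_k!}\prod_{k=1}^{N}\left(\frac{\lambda_k}{\Lambda}\right)^{a_k},
\]
which I would then identify as the pmf of a $\text{Multinomial}(m, \lambda/\Lambda)$ distribution, completing the proof. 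For vectors $(a_1,\dots,a_N)$ with $\sum_k a_k \ne m$, the conditional probability is trivially zero, matching the Multinomial support.

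There is essentially no obstacle here; this is a textbook computation. The only care is bookkeeping: making sure the indexing over unordered pairs $\{i,j\}$ with $i \ne j$ matches the Multinomial count $N = \binom{n}{2}$ and that the use of symmetry ($A_{ij} = A_{ji}$) in the original $n \times n$ indexing is handled consistently so the sum $\sum_{i \ne j} A_{ij}$ in the statement is compatible with the enumeration used in the calculation.
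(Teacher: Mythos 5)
Your computation is correct: it is the classical fact that independent Poisson variables conditioned on their sum are jointly Multinomial, and your joint-pmf/cancellation argument establishes exactly that, with the indexing over the $\binom{n}{2}$ unordered pairs handled appropriately. The paper itself does not reprove this lemma but simply cites it as Lemma 1 of \citet{rohe2018note}, and your argument is the standard proof of that cited result, so there is nothing further to reconcile.
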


\subsubsection{$\T_r$-Top Down Stochastic Process: Definitions}\label{appendix:tds_definition}

\begin{definition}\label{def:tds}
    The $\T_r$-Top Down Stochastic Process is a Markov process $X_t$ such that 
    \begin{enumerate}
    \item the state space consists of all the internal nodes in $\T_r$ plus an absorbing state $S$,
    \item it always starts from the root, i.e., $X_0 = r$, 
    \item it only goes to the absorbing state or a child node, i.e., $X_{t+1} \in \ S\cup\mathcal{C}(X_t)$.
\end{enumerate}
\end{definition}

\begin{definition}\label{def:tdg}
    The $\T_r$-Top Down Generator 
    is parameterized by a $\T_r$-Top Down Stochastic Process and a set of degree parameters $\theta_i\in \R^+$ for each leaf node $i$. Specifically, it generates random edge $(I, J)$ with probability
    \begin{align}
        \pr\left(I = i, J = j \mid X_{\tau-1}\right) \propto \theta_i\theta_j \mathds{1}\left\{i, j \in \text{desc}(X_{\tau-1}) \right\}, \ \forall i\neq j \in V_\ell
    \end{align}
\end{definition}

\subsubsection{$\T_r$-Top Down Stochastic Process: Proof of Theorem \ref{thm:tsg->tds}}\label{appendix:tds_proof}

Theorem \ref{thm:tsg->tds} is a direct result of Lemma \ref{lemma:fastrg}, Lemma \ref{lemma:bridge_td}, Lemma \ref{lemma:bridge_ob}, and the equivalence between $\T_r$-Overlapping Blockmodels and $\T$-Stochastic Graphs (Theorem \ref{thm:tsg->tob} and Theorem \ref{thm:tob->tsg}). Specifically, we introduce a Bridge Generator (Definition \ref{def:bridge}) to connect the $\T_r$-Top Down Generator and the $\T_r$-Graphical Overlapping Blockmodel. 

The proof of Lemma \ref{lemma:bridge_td} and \ref{lemma:bridge_ob} can be found in Appendix \ref{appendix:bridge_td} and \ref{appendix:bridge_ob}.

\begin{definition}\label{def:bridge}
    The $\T_r$-Bridge Generator is parameterized by a random variable $X$ that takes values in all internal nodes in $\T_r$ and a set of degree parameters $\theta_i\in \R^+$ for each leaf node $i$. Specifically, it generates random edge $(I, J)$ with probability
     \begin{align}
        \pr\left(I = i, J = j \mid X\right) \propto \theta_i\theta_j \mathds{1}\left\{i, j \in \text{desc}(X) \right\}, \ \forall i\neq j \in V_\ell
    \end{align}
\end{definition}

\begin{lemma}\label{lemma:bridge_td}
Any $\T_r$-Bridge Generator has an equivalent $\T_r$-Top Down Generator, and vice versa. Two edge generators are equivalent if they sample edges with the same probability distribution.
\end{lemma}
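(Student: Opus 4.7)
The plan is to observe that both generators share the same second stage: given an internal node, they both sample a pair of leaf descendants with probabilities proportional to $\theta_i\theta_j$. So the lemma reduces to matching the marginal law of the internal node used in that second stage, i.e., showing that for any probability distribution $\pi$ on internal nodes of $\T_r$ there is a Markov process of the form in Definition \ref{def:tds} whose penultimate state $X_{\tau-1}$ has distribution $\pi$, and conversely.

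One direction is essentially free: given a $\T_r$-Top Down process, simply let the Bridge Generator's random variable be $X := X_{\tau-1}$, keep the same $\{\theta_i\}$, and the joint law of (selected internal node, sampled edge) is identical by construction. So any Top Down Generator is a Bridge Generator.

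For the converse direction, given a target distribution $\pi$ on the internal nodes of $\T_r$, I would define for each internal node $u$ the aggregated weight
\[
 \pi_{\mathrm{sub}(u)} \;:=\; \pi_u \;+\; \sum_{v \text{ internal descendant of } u} \pi_v,
\]
and set the Top Down transition probabilities at $u$ to $\pr(X_{t+1}=S \mid X_t=u) = \pi_u/\pi_{\mathrm{sub}(u)}$ and $\pr(X_{t+1}=v \mid X_t=u) = \pi_{\mathrm{sub}(v)}/\pi_{\mathrm{sub}(u)}$ for each internal child $v$ of $u$. These are nonnegative and sum to one because $\{u\}$ together with the sets $\mathrm{sub}(v)$, as $v$ ranges over the internal children of $u$, partition $\mathrm{sub}(u)$. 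A short induction on the depth of $u$ in $\T_r$ then gives $\pr(\text{the process reaches } u) = \pi_{\mathrm{sub}(u)}$, and multiplying by the stopping probability at $u$ yields $\pr(X_{\tau-1}=u) = \pi_u$. Reusing the same $\{\theta_i\}$ produces an equivalent Top Down Generator.

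The only mild obstacle is the corner case $\pi_{\mathrm{sub}(u)} = 0$, where the transition ratios above are undefined. In that situation the parent of $u$ already assigns probability zero to entering $\mathrm{sub}(u)$, so the process almost surely never visits $u$ and we may set its outgoing transitions arbitrarily; this choice does not affect the marginal of $X_{\tau-1}$, so the construction still delivers the desired equivalence.
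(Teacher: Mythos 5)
Your proof is correct and follows essentially the same route as the paper: the easy direction sets the Bridge variable $X := X_{\tau-1}$, and the converse defines transitions via the aggregated subtree mass (your $\pi_{\mathrm{sub}(u)}$ is exactly the paper's $\mathscr{S}(u)$), with the telescoping/induction argument giving $\pr(X_{\tau-1}=u)=\pi_u$. Your explicit treatment of the $\pi_{\mathrm{sub}(u)}=0$ corner case is a minor refinement the paper leaves implicit.
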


\begin{lemma}\label{lemma:bridge_ob}
Any $\T_r$-Bridge Generator has an equivalent $\T_r$-Overlapping Blockmodel with diagonal $B$ matrix in the sense that 
\[\pr((I, J) = (i, j))\propto \lambda^{OB}_{ij},\]
and vice versa. 
\end{lemma}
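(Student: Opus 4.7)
The plan is to match parameters between the two models after rewriting both edge distributions in a common functional form. The leaf degree parameters $\{\theta_i\}_{i \in V_\ell}$ will be identified on both sides; the real work is to put $\{B_{uu}\}_{u \in V \setminus V_\ell}$ in bijection with $\{\pr(X = u)\}_{u \in V \setminus V_\ell}$.

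First I would expand $\lambda_{ij}^{OB}$. Because $B$ is diagonal and $Z_{iu} = \mathds{1}\{u \in \text{anc}(i)\}$ in a $\T_r$-Overlapping Blockmodel, the bilinear form $Z_i^T B Z_j$ collapses, and using the equivalence $u \in \text{anc}(i) \cap \text{anc}(j) \Longleftrightarrow i, j \in \text{desc}(u)$,
\[
\lambda_{ij}^{OB} \;=\; \theta_i \theta_j \sum_{u \in V \setminus V_\ell} B_{uu}\, \mathds{1}\{i, j \in \text{desc}(u)\}.
\]
Next, conditioning on the node $X$ in the Bridge Generator (Definition \ref{def:bridge}),
\[
\pr\bigl((I, J) = (i, j)\bigr) \;=\; \sum_{u \in V \setminus V_\ell} \pr(X = u)\, \frac{\theta_i \theta_j\, \mathds{1}\{i, j \in \text{desc}(u)\}}{N(u)},
\]
where $N(u) = \sum_{s \neq t \in V_\ell} \theta_s \theta_t\, \mathds{1}\{s, t \in \text{desc}(u)\}$ is the normalizer of the conditional edge distribution given $X = u$.

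These two expressions have the same functional form, so all that remains is matching the per-node weights. For the Bridge $\Rightarrow$ OB direction I set $B_{uu} = \pr(X = u)/N(u)$ and keep the $\theta_i$'s, which yields $\pr((I,J)=(i,j)) = \lambda_{ij}^{OB}$ outright. For the OB $\Rightarrow$ Bridge direction I set $\pr(X = u) \propto B_{uu} N(u)$; swapping the order of summation, the normalizing constant comes out to $C := \sum_{s \neq t \in V_\ell} \lambda_{st}^{OB}$, and the computation yields $\pr((I,J)=(i,j)) = \lambda_{ij}^{OB}/C$, the required proportionality.

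The main obstacle --- really a bookkeeping point --- is ensuring $N(u) > 0$ for every internal $u$ on which $X$ places positive mass, so that the map $B_{uu} = \pr(X = u)/N(u)$ is well-defined. Under Remark \ref{rmk:identifiability}, every internal node of $\T$ has degree at least $3$, so in the rooted tree $\T_r$ every internal node has at least two children, and a short induction on subtree height yields at least two leaf descendants; hence $N(u) > 0$. The analogous positivity concern in the reverse direction is to keep $B_{uu} > 0$ for all internal $u$, which matches the OSBM full-rank requirement when $B$ is diagonal and corresponds exactly to $\pr(X = u) > 0$ on the Bridge side.
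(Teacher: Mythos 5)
Your proof is correct and follows essentially the same route as the paper's: your $N(u)$ is the paper's $\vartheta(u)$, and the parameter matching $B_{uu} = \pr(X=u)/\vartheta(u)$ (resp. $\pr(X=u)\propto B_{uu}\vartheta(u)$) is exactly the construction used there, with the same interchange-of-sums computation establishing proportionality. The only addition is your explicit check that $\vartheta(u)>0$ via the degree-$\ge 3$ assumption, a minor point the paper leaves implicit.
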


\subsubsection{Proof of Lemma \ref{lemma:bridge_td}}\label{appendix:bridge_td}
Given any $\T_r$-Top Down Generator with Markov process $X_t$ and parameters $\theta_i$, define a Bridge Generator with the same set of $\theta_i$ and random variable $X$ such that 
\begin{equation}\label{eq:bridge}
    \pr(X = m) = \pr(X_{\tau-1} = m)
\end{equation} for any internal nodes $m$. Then $X_{\tau-1}$ and $X$ has the same distribution and these two generators are equivalent.

Now consider any $\T_r$-Bridge Generator with random variable $X$ and parameters $\theta_i$. Define a $\T_r$-Top Down Generator with the same set of $\theta_i$ and a Markov process $X_t$ such that 
\begin{equation}
\pr(X_{t+1} = h \mid X_t = g)  = 
\begin{dcases}
\frac{\pr(X = h)}{\mathscr{S}(h)} & h \text{  is the absorbing state $S$} \\
\frac{\mathscr{S}(g)}{\mathscr{S}(h)} & h \text{  is an internal node}
\end{dcases},
\end{equation}
where \[\mathscr{S}(h) = \sum_{\displaystyle m \in \, \{h\}\cup \text{desc}(h)} \pr(X = m).\]

Consider any internal node $m$ at level $k$ of the tree, denote nodes on the path from root node $r$ to $m$ as $u_0, u_1, u_2, \cdots, u_k$, where $u_0 = r$ and $u_k = m$. For notation simplicity, define \[\pr(g\rightarrow h) \coloneqq \pr(X_{t+1} = h\mid X_t = g),\] then the following equalities shows that $X_{\tau-1}$ and $X$ has the same distribution.
\begin{align*}
    \pr(X_{\tau-1} = u_k) & = \pr(u_k\rightarrow S)\prod_{i = 0}^k \pr(u_i\rightarrow u_{i+1})\\
    & = \frac{\pr(X = u_k)}{\mathscr{S}(u_k)} \prod_{i = 0}^k p(u_i\rightarrow u_{i+1})\\
    & = \frac{\pr(X = u_k)}{\mathscr{S}(u_k)}  \ \pr(u_{k-1}, u_k)\prod_{i = 1}^{k-1} \pr(u_i\rightarrow u_{i+1})\\
    & = \frac{\pr(X = u_k)}{\mathscr{S}(u_k)} \frac{\mathscr{S}(u_k)}{\mathscr{S}(u_{k-1})}\prod_{i = 0}^{k-1} \pr(u_i\rightarrow u_{i+1})\\
    & = \frac{\pr(X = u_k)}{\mathscr{S}(u_{k-1})}\prod_{i = 0}^{k-1} \pr(u_i\rightarrow u_{i+1})\\
    & = \frac{\pr(X = u_k)}{\mathscr{S}(u_0)}\\
    & = \pr (X = u_k)
\end{align*}

\subsubsection{Proof of Lemma \ref{lemma:bridge_ob}}\label{appendix:bridge_ob}
Given a $\T_r$-Overlapping Blockmodel with $\lambda_{ij}^{OB} = \theta_i \theta_j Z_i ^T BZ_j$, where $Z_i\in \{0, 1\}^K$ is defined as $Z_{iu} = \mathds{1}\{ i\in \text{desc}(u)\}\}$ and $B\in \R^{K\times K}$ is a diagonal matrix. Denote $\vartheta(u)$ as the summation of $\theta_i\theta_j$ for all leaf node descedent of $u$, i.e., 
\[
\vartheta(u) = 
\sum_{\scaleto{i\neq j \text{ and } i, j \in \text{desc}(u) \cap V_\ell}{9.5pt}} \theta_i \theta_j.
\]

Define a $\T_r$-Bridge Generator with the same set of $\theta_i$ and random variable $X$ such that 
\[\pr(X = u)\propto B_{uu} \, \vartheta(u).\]
Then the probability that this $\T_r$-Bridge Generator selects leaf nodes $i$, $j$ is proportional to $\lambda_{ij}^{OB}$
\begin{equation}\label{eq:bridge_ob}
\begin{aligned}
    \pr(I = i, J = j) &= \sum_{\scaleto{u \, \in \, V(r, i\wedge j)}{9.5pt}} \pr(X = u) \frac{\theta_i\theta_j}{\vartheta(u)}\\
    & \propto \sum_{\scaleto{u \, \in \, V(r, i\wedge j)}{9.5pt}} B_{uu} \theta_i\theta_j \\
     &= \theta_i \theta_j \sum_{\scaleto{u \, \in \, V(r, i\wedge j)}{9.5pt}} B_{uu}\\
    & = \theta_i \theta_j Z_i^T BZ_j \\
    & = \lambda_{ij}^{OB}
\end{aligned}
\end{equation}

Similarly, given any $\T_r$-Bridge Generator with a set of $\theta_i$ and a random variable $X$, one can define a $\T_r$-overlapping Blockmodel with the same set of $\theta_i$ and 
\[B_{uu} = \pr(X = u) / \vartheta(u).\]
Then by the above equalities (Equation \eqref{eq:bridge_ob}), $\lambda_{ij}^{OB}$ of this $\T_r$-Overlapping Blockmodel is proportional to the probability that the $\T_r$-Bridge Generator selects leaf nodes $i$, $j$.

\subsubsection{$\T$-Bottom Up Stochastic Process: Definitions}\label{appendix:bus_definition}

\begin{definition}\label{def:bus}
    The $\T$-Bottom Up Stochastic Process is a Markov process $X_t$ such that 
    \begin{enumerate}
        \item the state space consists of all the directed edges in $\T_{dir}$,
        \item each leaf node $i$ is assigned with a probability $\pi_i$ s.t. $\sum_{i\in V_\ell} \pi = 1$ and 
        \begin{equation}\label{eq:bus_initial_state}
            \pr( \, X_0 = \langle i, p(i)\rangle \, ) = \pi_i,
        \end{equation}
        \item any edge that ends at a leaf node is an absorbing state, 
        \item there exists a set of constants $\{c_{uv}: (u, v)\in E\}$ with $c_{uv} = c_{vu}$ such that for any non-absorbing state $\langle r, s \rangle$,
        \begin{equation}\label{eq:bus_transition_prob}
            \pr(X_{t+1} = \langle u, v\rangle \mid X_t = \langle r, s\rangle) \propto c_{uv} \mathds{1}\{s = u, v\neq r\}.
        \end{equation}
        Notice that $c_{uv} = c_{vu}$ here.

    \end{enumerate}
\end{definition}

\begin{definition}\label{def:bug}
    The $\T$-Bottom Up Generator is fully parameterized by a $\T$-Bottom Up Stochastic Process. It generates edge $(I, J)$ with probability
        \[\pr(I = i, J = j) = \pr(X_0 = \langle i, p(i)\rangle, X_\tau = \langle p(j), j \rangle)\]
\end{definition}

To simplify notations, the following notation is used in all subsequent sections about $\T$-Bottom Up Stochastic Process (Appendix \ref{appendix:tsg->bus}, \ref{appendix:bus->tsg}, \ref{appendix:non_symmetric_bu}, and \ref{appendix:negative_edge_bus}). Consider any two nodes $s$, $t$ connected by the path $(u_0 = s, u_1, u_2, \cdots, u_k = t)$, and another node $w$ that is a neighbor of node $s$. Given that the current state $X_t = \langle w, s\rangle$, the random walk can only reach $\langle u_{k-1}, u_k = t\rangle$ after exactly $k$ steps. For notation simplicity, we define
\begin{equation}\label{eq:simplification_1}
    \pr(s\rightarrow t \mid w) \coloneqq \pr\left(X_{t+k} = \langle u_{k-1}, v\rangle \mid X_t = \langle w, u\rangle\right),
\end{equation}
where $k$ is the number of edges in $E(s, t)$. Notice that when $w = u_1$, $\pr(s\rightarrow t \mid w) = 0$ since the random walk is non-backtracking;  when $k = 0$, that is, $s = t$, $\pr(s\rightarrow t \mid w) = 1$. For any leaf node $i\in V_{\ell}$, we employ a further simplification
\begin{equation}\label{eq:simplification_2}
    \pr(i\rightarrow u) \coloneqq \pr(p(i)\rightarrow u \mid i).
\end{equation}

\subsubsection{$\T$-Bottom Up Stochastic Process: Proof of Theorem \ref{thm:tsg->bus}}
\label{appendix:tsg->bus}

Consider a $\T$-Stochastic Graph with additive distance $d(\cdot, \cdot)$, this proof constructs an equivalent $\T$-Bottom up process by defining a set of $\{c_{uv}: (u, v)\in E\}$ and $\{\pi_i: i\in V_\ell\}$. We employ the notation $\pr(i\rightarrow u)$ defined in Appendix \ref{appendix:bus_definition} (Equation \eqref{eq:simplification_2}). 

The proof proceeds by first defining a set of $\{\widetilde c_{uv}: (u, v)\in E\}$ where $\widetilde c_{uv}$ does not necessarily equals to $\widetilde c_{vu}$, but satisfy Equation \eqref{eq:prop_bu} (the existence is by Lemma \ref{lemma:non_symmetric_bu}). Then define $\{c_{uv}: (u, v)\in E\}$ using algorithm \ref{alg:symmetriz_edge_constant}. By Equation \eqref{eq:symmetriz_edge_constant}, $\{c_{uv}\}$ adjust $\{\widetilde c_{uv}\}$ to be symmetric, that is, $c_{uv} = c_{vu}$, without altering the local ratio, that is, consider any node $u$, $c_{uw}\propto \widetilde c_{uw}$ for all of its neighbor $w$. Therefore, $\{c_{uv}\}$ defines a $\T$-Bottom Up Stochastic Process that satisfy Equation \eqref{eq:prop_bu}.

With this $\T$-Bottom Up Stochastic Process, we can construct a $\T$-Bottom Up Generator by defining $\{\pi_i, i\in V_\ell\}$  as 
\begin{equation}\label{eq:define_pi_bu}
    \pi_i = \dfrac{\mathscr{C}(i)}{\sum_{k = 1}^n\mathscr{C}(k)} \, ,
\end{equation}
where $\mathscr{C}(i)$ is the summation of $\lambda_{ij}$ for all other $j\neq i$:
\[\mathscr{C}(i) = \sum_{\{j: j\neq i\}}\lambda_{ij}.\]
Then the $\T$-Bottom Up Generator is \textbf{symmetric} as 
\begin{equation}
\begin{aligned}
    \frac{\pi_i \pr( i \rightarrow j)}{\pi_j\pr(j \rightarrow i)} &= \left. \left(\pi_i \frac{\lambda_{ij}}{\mathscr{C}(i)}\right) \middle/ \left(\pi_j \frac{\lambda_{ji}}{\mathscr{C}(j)}\right)\right. \text{ (by Equation \eqref{eq:prop_bu})}\\[10pt]
    & =  \left. \left(\frac{\pi_i}{\mathscr{C}(i)}\right) \middle/ \left(\frac{\pi_j}{\mathscr{C}(j)}\right)\right. \text{ (since $\lambda_{ij} = \lambda_{ji}$)}\\[10pt]
    & = 1 \text{ (by Equation \eqref{eq:define_pi_bu})}.
\end{aligned}
\end{equation}

We can verify that the probability of choosing edge $(i, j)$ is proportional to $\lambda_{ij}$ as
\begin{equation}
\begin{aligned}
        \pr((I, J) = (i, j)) & = \pi_i \pr( i\rightarrow j)+\pi_j\pr(j\rightarrow i)\\[10pt]
        & = 2 \, \pi_i \, \pr(i\rightarrow j)\\[5pt]
        & = 2 \, \dfrac{\mathscr{C}(i)}{\sum_{k = 1}^n\mathscr{C}(k)} \, \frac{\lambda_{ij}}{\mathscr{C}(i)} \\[5pt]
        & = 2 \, \dfrac{\lambda_{ij}}{\sum_{k = 1}^n\mathscr{C}(k)} \\[5pt]
        & = \dfrac{\lambda_{ij}}{\sum_{i \neq j} \lambda_{ij}} \, .
\end{aligned}
\end{equation}

\begin{lemma}\label{lemma:non_symmetric_bu}
   For any $\T$-Stochastic Graph, there exists a $\T$-Bottom Up Stochastic Process $X_t$ with a set of constants \(\left\{\widetilde c_{uv}: (u, v)\in E\right\}\) where $\widetilde c_{uv} = \widetilde c_{vu}$ \textbf{not} necessarily hold, such that for any leaf node $i$,
   \begin{equation}\label{eq:prop_bu}
      \pr(i \rightarrow j ) \propto \lambda_{ij}, 
   \end{equation}
   where $j \neq i$ is another leaf node in $\T$.
\end{lemma}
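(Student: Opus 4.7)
The plan is to write down an explicit formula for the directed edge constants $\widetilde c_{uv}$ in terms of the additive distance on $\T$, and then verify the claim by a telescoping calculation along the unique path connecting $i$ and $j$. For each directed edge $(u,v) \in E$, let $L(u,v) \subseteq V_\ell$ denote the set of leaves that lie on the $v$-side of the edge, i.e.\ leaves $h$ whose unique $u$-to-$h$ path in $\T$ begins with the edge $(u,v)$. I will then define
\[\widetilde c_{uv} \;=\; \sum_{h \in L(u,v)} \exp(-d(u,h)).\]
Because $L(u,v)$ and $L(v,u)$ are complementary sets of leaves and the reference node inside the distance is different, in general $\widetilde c_{uv} \neq \widetilde c_{vu}$; this is precisely why the lemma must drop symmetry. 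When $v$ itself is a leaf, $L(u,v) = \{v\}$, so the formula collapses to $\widetilde c_{uv} = \exp(-w_{uv})$.

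The core algebraic step is a single telescoping identity: for every internal node $v$ and every neighbor $u$ of $v$,
\[\widetilde c_{uv} \;=\; \exp(-w_{uv})\sum_{\substack{w \,:\, (v,w)\in E \\ w \neq u}} \widetilde c_{vw}.\]
This follows by substituting $d(u,h) = w_{uv} + d(v,h)$ for $h \in L(u,v)$ and partitioning $L(u,v)$ as the disjoint union $\bigsqcup_{w \neq u} L(v,w)$. A direct consequence is that the normalizing denominator appearing in the non-backtracking transition rule at $v$ arriving from $u$ is exactly $\exp(w_{uv})\,\widetilde c_{uv}$ whenever $v$ is internal.

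Now I would fix two distinct leaves $i$ and $j$ with unique connecting path $i = u_0, u_1, \ldots, u_k = j$; all intermediate nodes $u_1,\ldots,u_{k-1}$ are internal. Writing $\pr(i \rightarrow j)$ as the product of the $k-1$ transition probabilities along the path and substituting the simplified denominator from the telescoping identity at each internal stop, all intermediate factors of $\widetilde c$ cancel and the product collapses to
\[\pr(i \rightarrow j) \;=\; \frac{\widetilde c_{u_{k-1} u_k}}{\widetilde c_{u_0 u_1}}\prod_{t=0}^{k-2}\exp(-w_{u_t u_{t+1}}).\]
Since $u_k = j$ is a leaf, $\widetilde c_{u_{k-1} u_k} = \exp(-w_{u_{k-1}u_k})$, so the numerator combines with the remaining product into $\exp(-d(i,j)) = \lambda_{ij}$. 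The denominator $\widetilde c_{u_0 u_1} = \widetilde c_{i,p(i)} = \sum_{h\neq i}\exp(-d(i,h))$ depends only on $i$, which is exactly what proportionality in $j$ requires.

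The main obstacle is the asymmetric boundary behavior at the absorbing leaf: the telescoping identity genuinely requires $v$ to be internal (it needs at least two outgoing directions besides $u$ in order to form the partition of $L(u,v)$), and naively applying it at the terminal leaf would produce a degenerate denominator because a leaf has no outgoing non-backtracking transitions. The remedy is to apply the identity only at the $k-1$ internal stops along the path and to evaluate the last edge constant directly via $\widetilde c_{p(j),j} = \exp(-w_{p(j),j})$. A small secondary point is that every $\widetilde c_{uv}$ must be strictly positive so that all transition probabilities are well defined; this follows from Remark \ref{rmk:identifiability}, which guarantees $deg(u) \geq 3$ at every internal node and hence that $L(u,v)$ always contains at least one leaf.
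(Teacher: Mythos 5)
Your proof is correct, but it takes a genuinely different route from the paper. The paper's argument (Appendix \ref{appendix:non_symmetric_bu}) is an inductive construction over ``half-trees'': it defines the directed constants level by level, setting $\widetilde c_{ru} = \theta_{r,\ell(u/r)}/\pr(u\rightarrow \ell(u/r)\mid r)$ for a chosen reference leaf $\ell(u/r)$ in each half-tree, and proves by induction (Lemmas \ref{lemma:bus1} and \ref{lemma:bus2}) that the process is ``properly defined'' on half-trees of every level, so the constants are only specified implicitly through previously computed absorption probabilities. You instead give a closed-form choice, $\widetilde c_{uv} = \sum_{h\in L(u,v)}\exp(-d(u,h))$, and verify the claim by the single identity $\sum_{w\neq u}\widetilde c_{vw} = \exp(w_{uv})\,\widetilde c_{uv}$ at internal $v$, which makes the product of transition probabilities along the unique $i$-to-$j$ path telescope to $\lambda_{ij}/\widetilde c_{i,p(i)}$. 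Your version is more elementary (no induction, no arbitrary reference-leaf choices) and more informative: it identifies the proportionality constant explicitly as $\widetilde c_{i,p(i)} = \sum_{h\neq i}\lambda_{ih} = \mathscr{C}(i)$, which is exactly the quantity the paper later needs in Appendix \ref{appendix:tsg->bus} when defining $\pi_i$ and proving symmetry, so your formula would slightly streamline that subsequent step as well. The paper's recursive formulation, on the other hand, is structured so that the same half-tree machinery carries over verbatim to the symmetrization argument via Algorithm \ref{alg:symmetriz_edge_constant}. Two minor remarks: your appeal to Remark \ref{rmk:identifiability} for positivity is unnecessary, since $L(u,v)$ is nonempty for every edge of any tree (the component on the $v$-side either is the single leaf $v$ or contains a degree-one node of $\T$ other than $v$); and your handling of the boundary — applying the telescoping identity only at the internal nodes $u_1,\dots,u_{k-1}$ and evaluating $\widetilde c_{p(j),j}=\exp(-w_{p(j),j})$ directly — is exactly the right fix, so there is no gap there.
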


\begin{algorithm}
	\caption{Symmetrize edge constant $c_{uv}$}
	\label{alg:symmetriz_edge_constant}
	\KwIn{$\{\widetilde c_{uv}: (u, v)\in E\}$ with no guarantee for $c_{uv} = c_{vu}$}
	\KwOut{$\{c_{uv}: (u, v)\in E\}$ with $c_{uv} = c_{vu}$}
	choose any internal node $r$ as the root, and consider the rooted tree $\T_r$, denote the level of $\T_r$ as $L$, denote $\mathscr{L}^{(i)}$, $i \in [L]$, as the set of level $i$ nodes in $\T_r$\;
 set $c_{ru} = \widetilde c_{ru}$ for all $u\in \mathscr{L}^{(1)}$\;
	\For{$i \gets 1$ \KwTo $L$}{
	\For{$u \in  \mathscr{L}^{(i)} $}{
 set $c_{uw}$ for any node $w$ that is a neighbor of $u$ as
 \begin{equation}\label{eq:symmetriz_edge_constant}
     \scaleto{c_{uw} = \frac{c_{p(u), u}\, \widetilde c_{uw}}{\widetilde c_{u, p(u)}}}{30pt}
 \end{equation}
    }
    }
\end{algorithm}

\subsubsection{Proof of Lemma \ref{lemma:non_symmetric_bu}}\label{appendix:non_symmetric_bu}

We use the notation $\theta_{uv}$ to denote $\exp(-d(u, v))$ for any two nodes $u, v \in V$, notice that $\theta_{ij} = \lambda_{ij}$ for any two leaf nodes $i, j\in V_\ell$. We also employ the notations $\pr(s\rightarrow t \mid w)$ and $\pr(i\rightarrow u)$ defined in Appendix \ref{appendix:bus_definition} (Equation \eqref{eq:simplification_1} and \eqref{eq:simplification_2}). This proof aims to construct a $\T$-Bottom Up Stochastic Process with \(\left\{\widetilde c_{uv}: (u, v)\in E\right\}\) that satisfy Equation \eqref{eq:prop_bu}. Importantly, there is \textbf{no} requirement for $\widetilde c_{uv} = \widetilde c_{vu}$. Throughout the proof, the term ``$\T$-Bottom Up Stochastic Process'' refer to the Stochastic Process in Definition \ref{def:bus} without the requirement for $\widetilde c_{uv} = \widetilde c_{vu}$.

The key idea is to define $\widetilde c_{uv}$ sequentially, starting from edges connected to leaf nodes. To do this, we introduce the following definition of ``half-tree''. 

\begin{definition}
For any two neighbor nodes $r$ and $w$ in tree $\T$, deleting edge $(r, w)$ splits the tree into two subtrees. Root these subtrees at node $r$ and $w$, and denote the rooted trees as $\T_{r/w}$ and $\T_{w/r}$. We call any tree created in this fashion a \textbf{half-tree} of $\T$. Some examples of half trees are displayed in Figure \ref{fig:bus_subtree}.
\end{definition}

\begin{definition}
We say a $\T$-Bottom Up Stochastic Process is properly defined on a half-tree $\T_{r/w}$ if 
\begin{equation}\label{eq:half_tree_proper}
    \pr(r\rightarrow i\mid w)\propto \theta_{ri}
\end{equation}
for any leaf node $i$ in $\T_{r/w}$
\end{definition}

\begin{definition}
We say a $\T$-Bottom Up Stochastic Process is level $k$ properly defined if it is properly defined on all half-trees with level $\leq k$.
\end{definition}

The proof proceeds by induction on the level of half trees. The level of a tree is defined as the maximum level of all its nodes. For the definition of the level of a particular node, please refer to Appendix \ref{appendix:other_models_definition}. The induction is achieved by proving Lemma \ref{lemma:bus1} and Lemma \ref{lemma:bus2} below. By these two lemmas, a $\T$-Bottom Up Stochastic Process can be defined on tree $\T$ such that it is properly defined on all half trees. Lemma \ref{lemma:non_symmetric_bu} is an immediate result of this by noticing that 
\[\frac{\lambda_{ij}}{\lambda_{i\ell}} = \frac{\theta_{p(i), j}}{\theta_{p(i), \ell}}\]
for any leaf nodes $i, j, \ell$, 
and that $\T_{p(i)/i}$ for any leaf node $i$ is a half tree of $\T$.

\begin{lemma}\label{lemma:bus1}
Any $\T$-Bottom Up Stochastic Process is level 0 half properly defined.
\end{lemma}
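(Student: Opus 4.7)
The plan is to verify the base case directly from the definitions, as the statement is essentially degenerate once unpacked. A half-tree $\T_{r/w}$ has level $0$ precisely when its node set consists of a single node, namely $r$; back in $\T$ this forces $r$ to be a leaf whose unique neighbor is $w$. Consequently, the only leaf of $\T_{r/w}$ is $r$ itself, so the condition ``properly defined on $\T_{r/w}$'' reduces to checking the single relation $\pr(r \rightarrow r \mid w) \propto \theta_{rr}$.

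By the notational convention established in Appendix \ref{appendix:bus_definition} (the $k=0$ case of Equation \eqref{eq:simplification_1}), the walk needs zero steps, so $\pr(r \rightarrow r \mid w) = 1$. On the other side, $\theta_{rr} = \exp(-d(r,r)) = \exp(0) = 1$. Thus the required proportionality holds with constant $1$, automatically and for every choice of the edge parameters $\{\widetilde c_{uv}\}$ that defines a $\T$-Bottom Up Stochastic Process. No assumption beyond the bare definition of the process is invoked.

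Since the conclusion is forced by the conventions on $\pr(s \rightarrow t \mid w)$ when $s = t$, there is no real obstacle here and no free parameter to constrain; the lemma simply pins down the base case of the induction on tree levels. The actual content of the construction will live in Lemma \ref{lemma:bus2}, whose inductive step must choose $\widetilde c_{uv}$ on each newly-exposed edge so that the ratios $\pr(r \rightarrow i \mid w)/\pr(r \rightarrow j \mid w) = \theta_{ri}/\theta_{rj}$, already valid on smaller half-trees by the inductive hypothesis, are preserved as the half-tree grows by one level. That is where the additive distance structure of $d(\cdot,\cdot)$ must genuinely be used, rather than the trivial identity $d(r,r)=0$ exploited here.
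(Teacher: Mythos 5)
Your proof is correct and matches the paper's argument: both simply unpack the definition to see that the only leaf of a level-$0$ half-tree $\T_{r/w}$ is $r$ itself and that $\pr(r\rightarrow r\mid w)=1$ by the $k=0$ convention, so the proportionality condition holds vacuously. Your extra observation that $\theta_{rr}=\exp(-d(r,r))=1$ is harmless additional detail but not needed beyond what the paper does.
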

\begin{proof}
For any level 0 half tree $\T_{r, w}$, there is only one leaf node, which is $r$, and $\pr(r\rightarrow r \mid w) = 1$.
\end{proof}

\begin{lemma}\label{lemma:bus2}
If there exists a level $k$ properly defined $\T$-Bottom Up Stochastic Process, then there exists a level $k+1$ properly defined $\T$-Bottom Up Stochastic Process.
\end{lemma}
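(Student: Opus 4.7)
The plan is to exhibit a single $\T$-Bottom Up Stochastic Process whose transition constants are given, for every directed edge $\langle r, v\rangle$, by the explicit formula
\[
\widetilde c_{r,v} \;:=\; \theta_{rv}\, S_v^{(r)},
\qquad
S_v^{(r)} \;:=\; \sum_{\ell\text{ leaf of }\T_{v/r}} \theta_{v\ell},
\]
and to show directly that it is level-$h$ properly defined for every $h \le k+1$; the conclusion of the lemma is then the case $h = k+1$. The key structural feature is that $S_v^{(r)}$ is an intrinsic quantity of the half-subtree $\T_{v/r}$, so $\widetilde c_{r,v}$ is attached to the directed edge $\langle r,v\rangle$ alone and in particular does not depend on which outer half-tree we happen to be examining.

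I would verify proper definedness by induction on the level $h$, with base case $h=0$ supplied by Lemma~\ref{lemma:bus1}. For the inductive step, fix a half-tree $\T_{r/w}$ of level exactly $h$ with $1 \le h \le k+1$, and list the neighbors of $r$ other than $w$ as $v_1, \dots, v_m$. Each $\T_{v_j/r}$ has level at most $h-1$, so the inductive hypothesis gives $\pr(v_j \to i \mid r) = \theta_{v_j, i} / S_{v_j}^{(r)}$ for any leaf $i$ of $\T_{v_j/r}$. Conditioning on the first transition out of $\langle w, r \rangle$ then yields
\[
\pr(r \to i \mid w) \;=\; \frac{\widetilde c_{r,v_{j(i)}}}{\sum_{j=1}^{m}\widetilde c_{r,v_j}} \cdot \frac{\theta_{v_{j(i)},i}}{S_{v_{j(i)}}^{(r)}} \;=\; \frac{\theta_{r,v_{j(i)}}\, \theta_{v_{j(i)},i}}{\sum_{j=1}^{m}\theta_{r,v_j}\,S_{v_j}^{(r)}},
\]
where $j(i)$ is the unique index such that $i$ lies in $\T_{v_{j(i)}/r}$. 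By multiplicativity of $\theta$ along the unique tree path, the numerator equals $\theta_{ri}$ and the denominator equals $\sum_{\ell\text{ leaf of }\T_{r/w}}\theta_{r\ell}$, which is precisely the proportionality required by Equation~\eqref{eq:half_tree_proper}.

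The main obstacle is bookkeeping rather than any hidden algebraic subtlety: a fixed directed edge $\langle r, v \rangle$ appears inside many different half-trees across different levels, and a priori one might worry that the value of $\widetilde c_{r,v}$ required to validate one half-tree conflicts with what is required by another. This concern dissolves because $\theta_{rv}\,S_v^{(r)}$ depends solely on the intrinsic subtree $\T_{v/r}$, so the single global assignment above automatically discharges every such constraint. The hypothesis that a level-$k$ properly defined process already exists is not strictly used by this construction, but the lemma statement follows immediately once the explicit process above is exhibited.
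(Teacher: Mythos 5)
Your proof is correct, but it takes a genuinely different route from the paper's. The paper actually uses the hypothesis: starting from a level-$k$ properly defined process, it redefines only the constants $\widetilde c_{ru}$ on edges whose hanging half-tree $\T_{u/r}$ has level $k$, anchoring each new constant to a single reference leaf via $\widetilde c_{ru} = \theta_{r,\ell(u/r)}/\pr(u\rightarrow \ell(u/r)\mid r)$, and then verifies the proportionality \eqref{eq:half_tree_proper} on level-$(k+1)$ half-trees through a two-case ratio computation (both leaves in a common lower-level half-tree versus leaves in two different hanging subtrees). You instead exhibit one explicit global assignment $\widetilde c_{r,v} = \theta_{rv}S_v^{(r)}$ and show by induction on the level, carrying the strengthened invariant $\pr(r\rightarrow i\mid w) = \theta_{ri}/S_r^{(w)}$, that this single process is properly defined at every level; the lemma's hypothesis is never invoked and the stated conclusion is the case $h=k+1$. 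Your induction is sound: each hanging half-tree $\T_{v_j/r}$ has level at most $h-1$, the first-step decomposition together with multiplicativity of $\theta$ along tree paths (from additivity of $d$) telescopes the numerator to $\theta_{ri}$, the denominator collapses to $S_r^{(w)}$, the constants are positive, and since they are attached to directed edges only (as the non-symmetric setting of Lemma \ref{lemma:non_symmetric_bu} permits) there is no conflict across half-trees. What your route buys: it proves Lemma \ref{lemma:non_symmetric_bu} outright, rendering the Lemma \ref{lemma:bus1}/Lemma \ref{lemma:bus2} scaffolding unnecessary, and it identifies the absorption probabilities in closed form rather than only up to proportionality. What the paper's route buys: it needs nothing beyond ratio identities and one reference leaf per subtree, so normalizing sums never have to be computed. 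One presentational point: state explicitly that your inductive invariant is the exact normalized identity (not merely the proportionality supplied by Lemma \ref{lemma:bus1}), with the trivial base case $\theta_{vv}/S_v^{(r)} = 1$ when $v$ is a leaf.
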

\begin{proof}

Given a level $k$ properly defined $\T$-Bottom Up Stochastic Process, we construct a level $k+1$ properly defined $\T$-Bottom Up Stochastic Process by redefining all $\widetilde c_{ru}$ for neighbor nodes $(r, u)$ such that $\T_{u/r}$ is a level $k$ tree.

Specifically, for each $\T_{u/r}$ that is a level $k$ tree, select a leaf node in $\T_{u/r}$ and denote it as $\ell(u/r)$, define 
\[\widetilde c_{ru} = \frac{\theta_{r,\ \ell(u/r)}}{\pr(u\rightarrow \ell(u/r)\mid r)}.\]
As this definition does not change $c_{uv}$ in any level $\leq k$ half trees, the $\T$-Bottom Up Stochastic Process is still level $k$ properly defined. We just need to show it is properly defined on all level $k+1$ half trees.

Consider any two leaf nodes $i$ and $j$ that belong to a level $k+1$ half tree $\T_{r/w}$. If they also belongs to the same level $\leq k$ half tree $\T_{p/q}$ that is a subtree of $\T_{r/w}$, then 
\begin{equation}\label{eq:bus_proof_1}
    \begin{aligned}
        \frac{\pr(r\rightarrow i \mid w)}{\pr(r\rightarrow j \mid w)} & = \frac{\pr(r\rightarrow p \mid w) \pr(p\rightarrow i \mid q)}{\pr(r\rightarrow p \mid w) \pr(p\rightarrow j \mid q)}\\[10pt]
        & = \frac{\pr(p\rightarrow i \mid q)}{\pr(p\rightarrow j \mid q)} = \frac{\theta_{pi}}{\theta_{pj}} \text{ (because the process is properly defined on $\T_{p/q}$)}\\[10pt]
        & = \frac{\theta_{pi}\theta_{rp}}{\theta_{pj}\theta_{r p}} = \frac{\theta_{ri}}{\theta_{rj}} \,.
    \end{aligned}
\end{equation}
If nodes $i$ and $j$ are not part of the same level $\leq k$ half tree, then they must belong to two distinct level $k$ half trees, as shown in Figure \ref{fig:bus_subtree}. 
\begin{figure}[h]
    \centering
    \includegraphics[width = 3in]{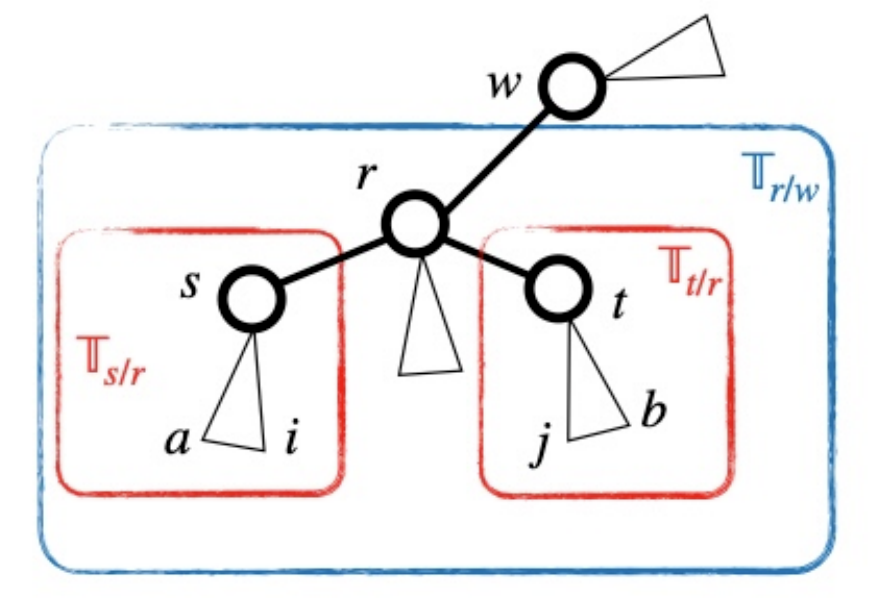}
    \caption{Triangles are subtrees and circles are internal nodes. Node $i$, $j$ are leaf nodes in subtree $\T_{s/r}$ and $\T_{t/r}$. Nodes $a = \ell(r/s)$, $b = \ell(r/t)$ are leaf nodes that help defining $\widetilde c_{rs}$ and $\widetilde c_{rt}$.}
    \label{fig:bus_subtree}
\end{figure}
Specifically, we denote the subtrees containing nodes $i$ and $j$ as $\T_{s/r}$ and $\T_{t/r}$. We also use $a$ and $b$ to denote the leaf nodes that help define $\widetilde c_{rs}$ and $\widetilde c_{rt}$, that is, $a = \ell(s/r)$ and $b = \ell(t/r)$. Then we have 
    \begin{align*}
        \frac{\pr(r\rightarrow i \mid w)}{\pr(r\rightarrow j \mid w)} 
        & = \dfrac{\pr(r\rightarrow s \mid w) \, \pr(s\rightarrow i \mid r)}{\pr(r\rightarrow t \mid w) \, \pr(t\rightarrow j \mid r)}\\[10pt]
         &= \dfrac{\widetilde c_{rs} \, \pr(s\rightarrow i \mid r)}{\widetilde c_{rt} \, \pr(t\rightarrow j \mid r)} \text{ (transition probabilities are proportional $\widetilde c_{uv}$ by definition)}\\[10pt]
        & = \left.\left(\frac{\theta_{ra}}{\pr(s\rightarrow a\mid r)} \, \pr(s\rightarrow i \mid r)\right) \middle / \left(\frac{\theta_{rb}}{\pr(t\rightarrow b\mid r)} \, \pr(t\rightarrow j \mid r)\right)\right. \text{ (plug in $\widetilde c_{ru}$ and $\widetilde c_{rv}$)}\\[10pt]
        & = \left.\left(\frac{\theta_{ra} \,  \theta_{si}}{\theta_{sa}} \right) \middle / \left(\frac{\theta_{rb} \, \theta_{tj}}{\theta_{tb}} \right)\right. \text{ (the process is level $k$ properly defined)}\\[10pt]
        & = \dfrac{\theta_{rs}\theta_{si}}{\theta_{rt}\theta_{tj}} = \frac{\theta_{ri}}{\theta_{rj}} \, . \stepcounter{equation}\tag{\theequation}\label{eq:bus_proof_2}
    \end{align*}

The proof is completed by combining Equation \eqref{eq:bus_proof_1} and Equation \eqref{eq:bus_proof_2}, which together demonstrate that Equation \eqref{eq:half_tree_proper} is valid for all level $k+1$ half trees.

\end{proof}

\subsubsection{$\T$-Bottom Up Stochastic Process: Proof of Theorem \ref{thm:bus->tsg}}\label{appendix:bus->tsg}

This section employs the notations $\pr(s\rightarrow t \mid w)$ and $\pr(i\rightarrow u)$ defined in Appendix \ref{appendix:bus_definition} (Equation \eqref{eq:simplification_1} and \eqref{eq:simplification_2}). 

We begin by demonstrating that $\{\pi_i: i\in V_\ell\}$ can be defined to make the $\T$-Bottom Up Generator symmetric. Choose a leaf node $i$ and set $c_i = 1$; for any other leaf node $j$, define $c_j$ as follows:
\[c_j = \frac{c_i \pr( i \rightarrow j)}{\pr(j\rightarrow i)}.\]
Normalize $c_i$ to obtain $\pi_i$:
\[\pi_i = \frac{c_i}{\sum_{k} c_k}.\]
It is evident that the symmetric condition in Equation \eqref{eq:symmetric} holds for leaf node $i$ and any other leaf node $j\neq i$: 
\begin{equation}\label{eq:symmetric_1}
\frac{\pr(X_0 = \langle i, p(i) \rangle, X_{\tau} = \langle p(j), j \rangle)}{\pr(X_0 = \langle j, p(j) \rangle, X_{\tau} = \langle p(i), i \rangle)}  = \frac{\pi_i \pr( i \rightarrow j)}{\pi_j \pr(j\rightarrow i)} = 1.
\end{equation}
To demonstrate that the symmetric condition also holds for two leaf nodes $j, \ell$ that are distinct from node $i$, additional notations are introduced, as shown in Figure \ref{fig:bus_symmetric}.
\begin{figure}[h]
    \centering
    \includegraphics[width = 2.5in]{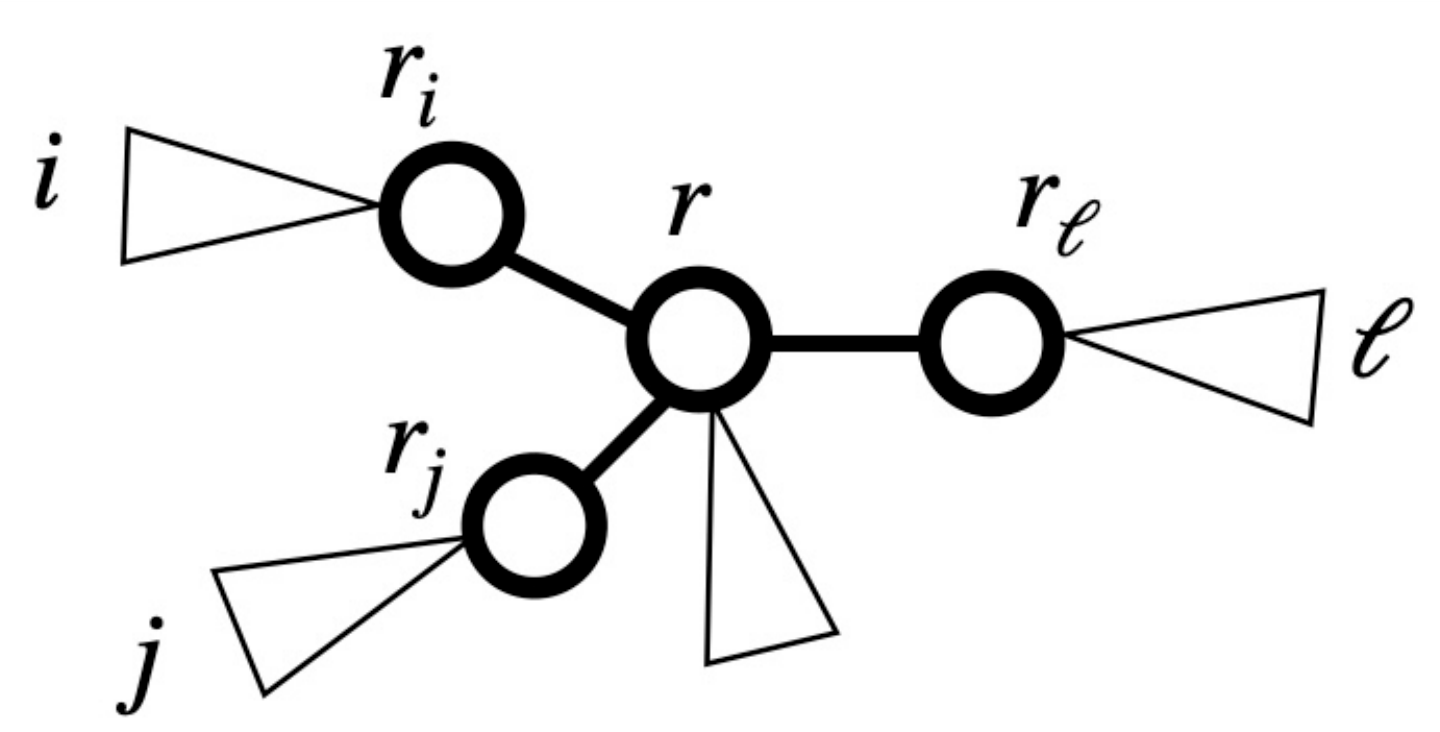}
    \caption{Triangles are subtrees and circles are internal nodes. Node $i, j, \ell$ are leaf nodes in the corresponding subtrees. Node $r$ is the median node of leaf nodes $i, j, \ell$, and nodes $r_i, r_j, r_\ell$ are neighbor nodes of $r$ that lie on the path between them.}
    \label{fig:bus_symmetric}
\end{figure}
 Specifically, we denote the median node\footnote{the definition of median node can be found in Section \ref{sec:hse}, Definition \ref{def:median}} of $i, j$, and $\ell$ as $r$; we also use $r_i$, $r_j$, and $r_\ell$ to denote neighbor nodes of $r$ that lie on the paths $(r,i)$, $(r,j)$, and $(r,\ell)$, respectively. Then we have
\begin{align*}
    \frac{\pi_j }{\pi_\ell } &= \left.\frac{ \pr(i\rightarrow j)}{ \pr(j\rightarrow i)} \middle/ \frac{ \pr(i\rightarrow \ell)}{\pr(\ell\rightarrow i)}\right.  \ (\text{by the construction of $\pi_j$ above})\\[5pt]
    & = \frac{\pr(i\rightarrow j)}{\pr(i\rightarrow \ell)} \times \frac{\pr(\ell\rightarrow i)}{\pr(j\rightarrow i)} \\[5pt]
    &= \frac{\pr(i\rightarrow r)\ \pr(r\rightarrow j \mid r_i)}{\pr(i\rightarrow r) \ \pr(r\rightarrow \ell \mid r_i)} \times \frac{\pr(\ell\rightarrow r)\ \pr(r\rightarrow i \mid r_\ell)}{\pr(j\rightarrow r)\ \pr(r\rightarrow i \mid r_j)}\\[5pt]
    & =  \frac{\pr(\ell\rightarrow r) \ \pr(r\rightarrow j \mid r_i) \ \pr(r\rightarrow i \mid r_\ell)}{\pr(j\rightarrow r) \ \pr(r\rightarrow \ell \mid r_i)\ \pr(r\rightarrow i \mid r_j)} \times \frac{\pr(r\rightarrow j \mid r_\ell) \ \pr(r\rightarrow \ell \mid r_j)}{\pr(r\rightarrow j \mid r_\ell) \ \pr(r\rightarrow \ell \mid r_j)} \\
    &   \text{\qquad (cancel out $\pr(i\rightarrow r)$, rearrange the equation, and multiply by one)}\\[5pt]
    & = \frac{\pr(\ell\rightarrow r)\ \pr(r\rightarrow j \mid r_\ell)}{\pr(j\rightarrow r) \ \pr(r\rightarrow \ell \mid r_j)}
    \times
    \frac{\pr(r\rightarrow j \mid r_i) \ \pr(r\rightarrow \ell \mid r_j)\ \pr(r\rightarrow i \mid r_\ell) }{\pr(r\rightarrow \ell \mid r_i)\ \pr(r\rightarrow i \mid r_j) \ \pr(r\rightarrow j \mid r_\ell)} \text{ \, (rearrange)}\\[5pt]
    & = \frac{\pr(\ell\rightarrow r)\ \pr(r\rightarrow j \mid r_\ell)}{\pr(j\rightarrow r) \ \pr(r\rightarrow \ell \mid r_j)} \times \frac{c_{rj} c_{r\ell} c_{ri}}{c_{r\ell} c_{ri} c_{rj}} \text{ \,  (transition probabilities are proportional to $c_{uv}$)} \\[5pt]
    & = \frac{\pr(\ell\rightarrow j)}{\pr(j\rightarrow \ell)}\, . \stepcounter{equation}\tag{\theequation}\label{eq:symmetric_2}
\end{align*}

When considered together, Equation \eqref{eq:symmetric_1} and Equation \eqref{eq:symmetric_2} demonstrate that the $\T$-Bottom Up generator is symmetric. Our next task is to prove that this generator produces graphs that are $\T$-HSE. Denote the probability of sampling edge $(i,j)$ as $\lambda_{ij}^{BU}$, then

\vspace{-0.2in}
\begin{align*}
\frac{\lambda^{BU}_{i\ell}}{\lambda^{BU}_{j\ell}} &= \frac{\pi_i \pr(i\rightarrow \ell) + \pi_\ell \pr(\ell\rightarrow i)}{\pi_j \pr(j\rightarrow \ell)+ \pi_\ell \pr(\ell\rightarrow j)} \\[5pt]
&= \frac{2\pi_\ell \pr(\ell\rightarrow i)}{2\pi_\ell \pr(\ell\rightarrow j)}\\[5pt]
& = \frac{\pr(\ell\rightarrow i)}{\pr(\ell\rightarrow j)}\\[5pt]
& = \frac{\pr(\ell\rightarrow r) \ \pr(r\rightarrow i \mid r_\ell )}{\pr(\ell\rightarrow r) \ \pr(r\rightarrow j \mid r_\ell )} \\[5pt]
& = \frac{c_{ri}}{c_{rj}}.
\end{align*}
Since the last line, $c_{ri}/c_{rj}$, only depends on $\ell$ through the median node $r$, thus the proof is completed by defining $c_{ij}^{(r)} =  c_{ri}/c_{rj}$.

\subsubsection{$\T$-Bottom Up Stochastic Process: an equivalent condition for negative edge weights}\label{appendix:negative_edge_bus}

\begin{theorem}\label{thm:bus_negative}
(Equivalent condition for negative edge weights) Consider any internal edge $(e, f)$ and neighbor nodes labeled as in Figure \ref{fig:subtree}, then 
\[w_{ef} < 0 \Longleftrightarrow \frac{\pr(X_{t+1} = \langle e, b \rangle \mid X_{t} = \langle a, e \rangle)}{\pr(X_{t+2} = \langle f, d \rangle \mid X_{t} = \langle a, e \rangle)} < \frac{\pr(X_{t+2} = \langle e, b \rangle \mid X_{t} = \langle c, f \rangle)}{\pr(X_{t+1} = \langle f, d \rangle \mid X_{t} = \langle c, f \rangle)}\]
\end{theorem}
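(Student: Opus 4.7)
The plan is to reduce the displayed inequality to a simple algebraic condition on the edge transition constants $\{c_{uv}\}$, and then translate that condition into a statement about $\theta_{ef} \coloneqq \exp(-w_{ef})$ using the explicit form of the $c_{uv}$'s produced by the construction in Theorem \ref{thm:tsg->bus} (and Lemma \ref{lemma:non_symmetric_bu}). I would begin by writing out the one- and two-step transition probabilities from Definition \ref{def:bus}. Since $e$ has neighbors $\{a,b,f\}$ and $f$ has neighbors $\{c,d,e\}$, the non-backtracking rule gives, for example, $\pr(X_{t+1}=\langle e,b\rangle\mid X_t=\langle a,e\rangle)=c_{eb}/(c_{eb}+c_{ef})$ and $\pr(X_{t+2}=\langle f,d\rangle\mid X_t=\langle a,e\rangle) = \frac{c_{ef}}{c_{eb}+c_{ef}}\cdot\frac{c_{fd}}{c_{fc}+c_{fd}}$, with analogous expressions starting from $\langle c,f\rangle$. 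Substituting the four probabilities into both sides of the stated inequality, the common factors $1/(c_{eb}+c_{ef})$ and $1/(c_{fd}+c_{ef})$ cancel and the inequality collapses to the clean form
\[
(c_{ea}+c_{eb})(c_{fc}+c_{fd}) \;<\; c_{ef}^{\,2}.
\]

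Next I would express the $c_{uv}$'s explicitly in terms of edge weights. For any two neighbor nodes $u,v$, let $\alpha_{u/v} = \sum_{i\in V_\ell\cap\T_{u/v}} \exp(-d(u,i))$ denote the total ``leaf mass'' of the half-tree $\T_{u/v}$ as seen from $u$. Summing the identity $\pr(e\to i\mid a)\propto \exp(-d(e,i))$ from Lemma \ref{lemma:non_symmetric_bu} over leaves $i$ in each of the three subtrees hanging off $e$ (through $a$, $b$, and $f$) and using $\theta_{eu}\theta_{ui} = \theta_{ei}$ on the paths, I obtain the ratio
\[
c_{ea}:c_{eb}:c_{ef} \;=\; \theta_{ea}\alpha_{a/e}\,:\,\theta_{eb}\alpha_{b/e}\,:\,\theta_{ef}\alpha_{f/e},
\]
and an entirely symmetric identity at $f$. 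Combined with the recursive relations $\alpha_{e/f} = \theta_{ea}\alpha_{a/e}+\theta_{eb}\alpha_{b/e}$ and $\alpha_{f/e} = \theta_{fc}\alpha_{c/f}+\theta_{fd}\alpha_{d/f}$, these give
\[
\frac{c_{ea}+c_{eb}}{c_{ef}} \;=\; \frac{\alpha_{e/f}}{\theta_{ef}\,\alpha_{f/e}}, \qquad \frac{c_{fc}+c_{fd}}{c_{ef}} \;=\; \frac{\alpha_{f/e}}{\theta_{ef}\,\alpha_{e/f}}.
\]
Multiplying, the $\alpha$-factors cancel exactly and
\[
\frac{(c_{ea}+c_{eb})(c_{fc}+c_{fd})}{c_{ef}^{\,2}} \;=\; \frac{1}{\theta_{ef}^{\,2}}.
\]
Consequently, the reduced inequality holds if and only if $\theta_{ef} > 1$, i.e.\ $\exp(-w_{ef}) > 1$, i.e.\ $w_{ef} < 0$, which is exactly the claimed equivalence.

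The main obstacle is keeping the normalization consistent at the two endpoints $e$ and $f$: the symmetrization step in Algorithm \ref{alg:symmetriz_edge_constant} may rescale the $c$'s at each node, so one cannot simply read off absolute values of $c_{uv}$ from the local formulas produced by Lemma \ref{lemma:non_symmetric_bu}. However, that algorithm was designed precisely to preserve every local ratio $c_{eu}/c_{ev}$ at $e$ (and similarly at $f$), and the reduced inequality depends on the $c$'s only through these local ratios together with the shared value $c_{ef}$; the product $\frac{c_{ea}+c_{eb}}{c_{ef}}\cdot\frac{c_{fc}+c_{fd}}{c_{ef}}$ is precisely such a scale-invariant quantity, so the computation is unaffected. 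Once this observation is in hand, the remainder of the argument is the two-line algebraic collapse above.
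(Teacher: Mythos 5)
Your proposal is correct, but it takes a genuinely different route from the paper's proof. You first reduce the displayed inequality to the scale-invariant algebraic condition $(c_{ea}+c_{eb})(c_{fc}+c_{fd})<c_{ef}^{\,2}$ and then evaluate the local ratios in closed form, $c_{ea}:c_{eb}:c_{ef}=\theta_{ea}\alpha_{a/e}:\theta_{eb}\alpha_{b/e}:\theta_{ef}\alpha_{f/e}$, using half-tree leaf masses; the telescoping identities $\alpha_{e/f}=\theta_{ea}\alpha_{a/e}+\theta_{eb}\alpha_{b/e}$ and $\alpha_{f/e}=\theta_{fc}\alpha_{c/f}+\theta_{fd}\alpha_{d/f}$ then collapse the product to $\theta_{ef}^{-2}$, giving $w_{ef}<0$. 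The paper never computes the $c_{uv}$'s: it multiplies both sides of the inequality by a common factor of walk-segment probabilities so that each ratio becomes a ratio of leaf-to-leaf absorption probabilities, invokes $\pr(A\rightarrow B)\propto\lambda_{AB}$ from the established equivalence, and reads off $\exp(2d(e,f))<1$ directly from the additive distances (essentially the four-point condition discussed in Appendix \ref{appendix: negative_edge_weight}). Your route buys an explicit characterization in the generator's own parameters, and your observation that the reduced inequality depends only on local ratios together with $c_{ef}=c_{fe}$ correctly disposes of the symmetrization/normalization issue; the paper's route is shorter and, because it uses only $\pr(i\rightarrow j)\propto\lambda_{ij}$, applies verbatim whichever of the two objects (the Bottom-Up generator or the $\T$-Stochastic Graph) is taken as given. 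One small completeness point: you derive the $c$-ratio identity from the construction in Lemma \ref{lemma:non_symmetric_bu} (the graph-to-generator direction); to cover an arbitrary symmetric generator, as in the setting of Theorem \ref{thm:bus->tsg}, note that the same identity follows from the equivalence itself: condition on a leaf $A$ in the subtree behind $a$, use the Markov property to write $\pr(A\rightarrow j)=\pr(A\rightarrow e)\,\pr(e\rightarrow j\mid a)$, so that $\pr(e\rightarrow j\mid a)\propto\exp(-d(e,j))$, and sum over the leaves of $\T_{b/e}$ and $\T_{f/e}$; also, obtaining all three ratios at $e$ requires conditioning on two different arrival directions, not one as your phrase ``each of the three subtrees'' suggests. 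With those two sentences added, the argument is complete.
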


\begin{figure}[!ht] %
   \centering
   \includegraphics[width=3in]{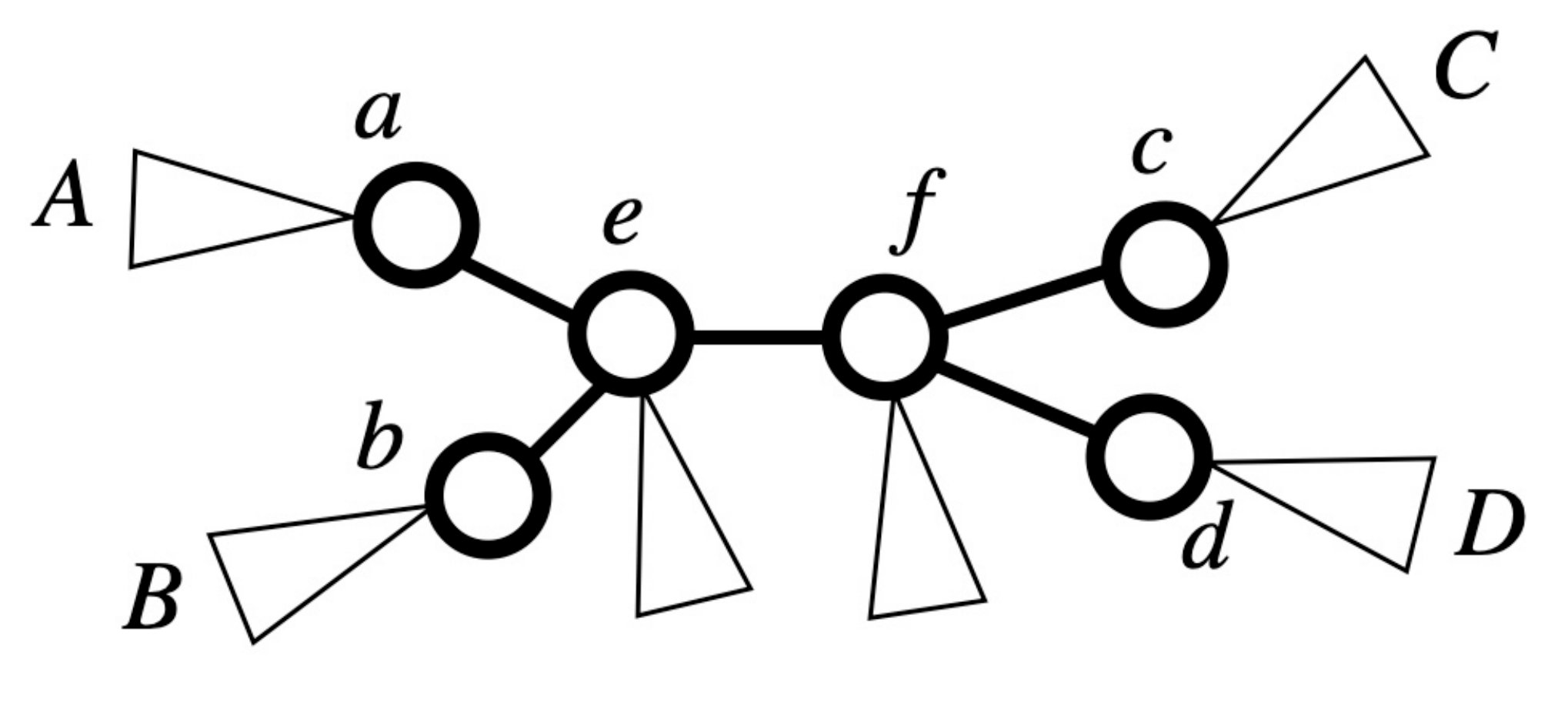} 
   \caption{An example of local structures for internal edge $(e, f)$. Triangles are subtrees and circles are leaf nodes. Nodes $A, B, C, D$ are leaf nodes in the corresponding subtrees.}
   \label{fig:subtree}
\end{figure}

\begin{proof}
This proof utilize the notations $\pr(s\rightarrow t \mid w)$ and $\pr(i\rightarrow u)$ defined in Appendix \ref{appendix:bus_definition} (Equation \eqref{eq:simplification_1} and \eqref{eq:simplification_2}). Consider the tree structure displayed in Figure \ref{fig:subtree}, where nodes $A, B, C, D$ represent leaf nodes from subtrees linked to nodes $a, b, c, d$, respectively, then the proof is complete by the following equivalences:
\[\frac{\pr(e\rightarrow b \mid a)}{\pr(e\rightarrow d \mid a)}<\frac{\pr(f\rightarrow b \mid c)}{\pr(f\rightarrow d \mid c)}\]
\[\big\Updownarrow\]
\[\frac{\pr(A\rightarrow e) \ \pr(e\rightarrow b \mid a) \ \pr(b\rightarrow B \mid e)}{\pr(A\rightarrow e) \ \pr(e\rightarrow d \mid a) \ \pr(d\rightarrow D \mid e)}< \frac{\pr(C \rightarrow f) \ \pr(f\rightarrow b \mid c) \ \pr(b\rightarrow B \mid e)}{\pr(C\rightarrow f) \ \pr(f\rightarrow d \mid c) \ \pr(d\rightarrow D \mid e)}\]
\[\big\Updownarrow\]
\[\frac{\lambda^{TSG}_{AB}}{\lambda^{TSG}_{AD}}<\frac{\lambda^{TSG}_{CB}}{\lambda^{TSG}_{CD}}\]
\[\big\Updownarrow\]
\[\exp\{-d(A, B) + d(A, D)\} < \exp\{-d(C, B) + d(C, D)\} \]
\[\big\Updownarrow\]
\[\exp\{-d(A, e) - d(e, B) +d(A, e) + d(e, f)+d(f, D)\} < \exp\{-d(C, f) - d(f, e) - d(e, B)  + d(C, f) + d(f, D)\} \]
\[\big\Updownarrow\]
\[\exp(2d(e, f))<1\]
\[\big\Updownarrow\]
\[d(e, f)<0.\]
\end{proof}

\subsection{Hierarchical Stochastic Equivalence: Proof of Theorem \ref{thm:tse->tsg} and Theorem \ref{thm:tsg->tse}} \label{sec:hsetheorem}

\subsubsection{Proof of Theorem \ref{thm:tse->tsg}}\label{appendix:tse->tsg}
Before proceeding to the formal proof, we first offer some insight of it. Consider a graph $A$ with $\lambda_{ij}$ such that $A$ is $\T$-HSE. The aim is to create positive values $\theta_{uv}$ on all edges $(u,v)$ in $\T$ such that $\lambda_{ij}$ is equal to the product of the $\theta_{uv}$'s for all edges $(u,v)\in E(i,j)$. Then an equivalent $\T$-Stochastic Graph can be constructed by defining edge weights $w_{uv}$ as $-\log(\theta_{uv})$. 

We would construct $\theta$ values layer by layer, starting from leaf nodes. Consider any two leaf nodes $i$ and $j$ that share the same parent node, denoted as $r$. Then, for any other leaf node $\ell$, the median node of nodes $i$, $j$, and $\ell$ is node $r$, as displayed in Figure \ref{fig:hse_intuition}. Given the $\lambda$ values, we have two equations and two unknowns (which can be solved):
\begin{equation}
\begin{dcases}
    \lambda_{ij} = \theta_{ir}  \theta_{rj},\\[5pt]
    \dfrac{\lambda_{i\ell}}{\lambda_{j\ell}} = \frac{\theta_{ir}}{ \theta_{rj}}.
\end{dcases}
\end{equation}
\begin{figure}[h]
    \centering
    \includegraphics[width=1.5in]{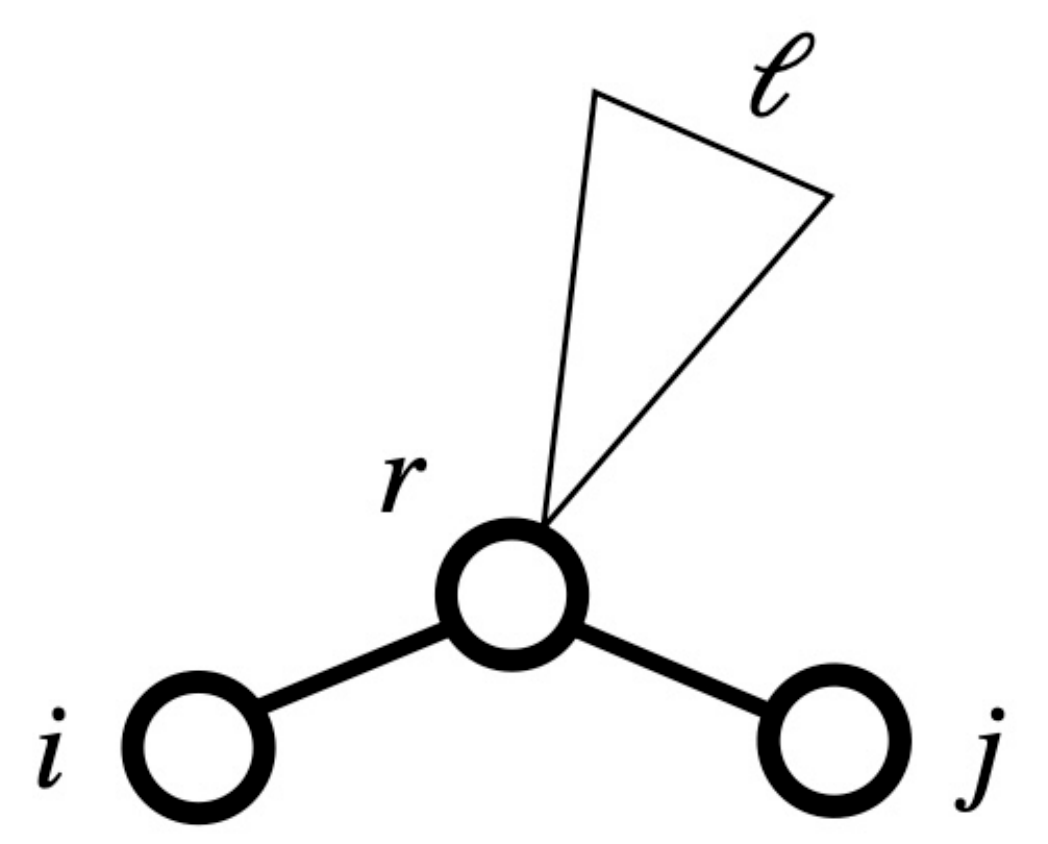}
    \caption{The triangle represents a subtree. Nodes $i, j$ are leaf nodes; node $r$ is the parent of $i, j$, that is, $r = p(i) = p(j)$; node $\ell$ is another leaf node in the rest part of the tree.}
    \label{fig:hse_intuition}
\end{figure}

Because the $\lambda_{ij}$ are positive, the $\theta$'s can be solved as positive values too.
After constructing these $\theta$'s for all leaf nodes, we proceed to construct the $\theta$'s for the edges above their parents. Iteratively, solving these equations constructs $\theta$ values that are consistent with the $\lambda$'s and defines edge weights $w$ for an equivalent $\T$-Stochastic Graph.

The formal proof is carried out using induction on the tree level. To ensure that ``levels'' are appropriately defined, Lemma \ref{lemma:level1} and Lemma \ref{lemma:levelk} discuss Theorem \ref{thm:tsg->tse} under a rooted tree $\T_r$. It is noteworthy that $\T_r$-HSE and $\T$-HSE are equivalent for any root node $r$, as the definition of HSE (Definition \ref{def:hse}) does not involve the root node. The level of a tree is defined as the maximum level of all its nodes. For the definition of the level of a particular node, please refer to Appendix \ref{appendix:other_models_definition}.

\begin{lemma}\label{lemma:level1}
Theorem \ref{thm:tse->tsg} holds if $\T_r$ is a level one tree.
\end{lemma}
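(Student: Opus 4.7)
The plan is to exploit the simple geometry of a level-one rooted tree: $\T_r$ is a star with $r$ at the center and all other nodes being leaves $i_1,\dots,i_n$ attached directly to $r$. For any three leaves $i,j,\ell$ the median is automatically $r$, so the HSE hypothesis reduces to a ratio condition $\lambda_{i\ell}/\lambda_{j\ell}=c_{ij}^{(r)}$ that does not depend on $\ell$. The goal is to build edge weights $\{w_{i,r}\}$ (possibly negative) with $\lambda_{ij}=\exp(-d(i,j))$, which on this star is exactly $\lambda_{ij}=\theta_i\theta_j$ for $\theta_i:=\exp(-w_{i,r})$.

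Concretely, I would fix two leaves $i_1,i_2$ and set
\[
\theta_{i_1}:=\sqrt{\lambda_{i_1 i_2}\,c_{i_1 i_2}^{(r)}},\qquad \theta_{i_2}:=\sqrt{\lambda_{i_1 i_2}/c_{i_1 i_2}^{(r)}},
\]
and then for every other leaf $i_k$ set $\theta_{i_k}:=\lambda_{i_1 i_k}/\theta_{i_1}$. Since the theorem assumes $\lambda_{ij}>0$ and HSE forces $c_{ij}^{(r)}>0$, each $\theta_i$ is a well-defined positive real, though it may exceed one, which is precisely why allowing $w_{i,r}<0$ is necessary.

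The verification has two cases. Pairs involving $i_1$ satisfy $\theta_{i_1}\theta_{i_k}=\lambda_{i_1 i_k}$ by construction, and $\theta_{i_1}\theta_{i_2}=\lambda_{i_1 i_2}$ by direct algebra. For a pair $(i_j,i_k)$ with neither index equal to $1$, I would compute
\[
\theta_{i_j}\theta_{i_k}=\frac{\lambda_{i_1 i_j}\lambda_{i_1 i_k}}{\theta_{i_1}^2}=\frac{\lambda_{i_1 i_j}\lambda_{i_1 i_k}}{\lambda_{i_1 i_2}\,c_{i_1 i_2}^{(r)}},
\]
and then apply HSE twice: once with $(i,j,\ell)=(i_1,i_2,i_k)$ to rewrite $c_{i_1 i_2}^{(r)}=\lambda_{i_1 i_k}/\lambda_{i_2 i_k}$, and once with $(i,j,\ell)=(i_1,i_2,i_j)$ (or equivalently the identity $\lambda_{i_1 i_j}/\lambda_{i_2 i_j}=c_{i_1 i_2}^{(r)}$) to collapse the right-hand side to $\lambda_{i_j i_k}$. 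Setting $w_{i_k,r}:=-\log(\theta_{i_k})$ then delivers an equivalent $\T$-Stochastic Graph with (possibly negative) edge weights.

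There is no genuine obstacle in this base case; the only point requiring care is a consistency check that the construction of $\theta_{i_1}$ is insensitive to the auxiliary leaf $i_2$ chosen, which follows from the same HSE ratio identity. This base case will anchor the induction on tree level carried out in Lemma~\ref{lemma:levelk}.
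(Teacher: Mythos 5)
Your construction is exactly the paper's: fix two leaves, set $\theta_{i_1}=\sqrt{\lambda_{i_1 i_2}c^{(r)}_{i_1 i_2}}$ and $\theta_{i_2}=\sqrt{\lambda_{i_1 i_2}/c^{(r)}_{i_1 i_2}}$, extend by $\theta_{i_k}=\lambda_{i_1 i_k}/\theta_{i_1}$, and verify the product condition case by case, so the proposal is correct and follows essentially the same route as the paper. One minor bookkeeping note: the final collapse of $\lambda_{i_1 i_j}\lambda_{i_2 i_k}/\lambda_{i_1 i_2}$ to $\lambda_{i_j i_k}$ actually invokes the HSE constant $c^{(r)}_{i_j i_2}$ (namely $\lambda_{i_1 i_j}/\lambda_{i_1 i_2}=\lambda_{i_j i_k}/\lambda_{i_2 i_k}$) rather than the instance $\lambda_{i_1 i_j}/\lambda_{i_2 i_j}=c^{(r)}_{i_1 i_2}$ you cite, but this is an immediate application of Definition \ref{def:hse} and mirrors the paper's own chain of substitutions.
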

\begin{proof}
To begin with, it is important to observe that a level one tree is a star-shaped tree consisting of a single internal node that also serves as the root node $r$. Furthermore, for any triplet of leaf nodes $i, j, \ell$, it holds that $m(i, j, \ell) = r$ and thus 
\begin{equation}\label{eq:hse_level1_0}
    \lambda_{i, \ell} /\lambda_{j, \ell} = c^{(r)}_{ij}
\end{equation}
For notation simplification, we drop the superscript $(r)$ in $c^{(r)}_{ij}$ in the following proof (this is valid as all triplets of leaf nodes share the same median node).

Pick any two leaf nodes $i, j$ and define 
\begin{equation}\label{eq:hse_level1_1}
    \begin{dcases}
    \theta_{ir} = \sqrt{\lambda_{ij}c_{ij}},\\[5pt]
    \theta_{jr} = \sqrt{\lambda_{ij}/c_{ij}}.
    \end{dcases}
\end{equation}
For any other leaf node $\ell$, define
\begin{equation}\label{eq:hse_level1_2}
    \theta_{\ell r} = \lambda_{i \ell}/\theta_{ir}
\end{equation}
Now we want to show that under this definition, $\lambda^{HSE}_{pq} = \theta_{qr}\theta_{pr}$ for any two leaf nodes $p$ and $q$. We prove this by discussing all possibilities of how $(p, q)$ involves nodes $(i, j)$:

\begin{enumerate}
    \item $p = i, q = j$: plug in Equation \eqref{eq:hse_level1_1} for $\theta_{ir}$ and $\theta_{jr}$, \[\theta_{ir}\theta_{jr} = \lambda_{ij}\] 
    \item $p = i, q \neq j$: plug in Equation \eqref{eq:hse_level1_2} for $\theta_{qr}$, 
    \begin{equation}\label{eq:hse_level1_3}
        \theta_{ir}\theta_{qr} = \theta_{ir} \times \lambda_{i q}/\theta_{ir} = \lambda_{i q}
    \end{equation}
    \item $p = j, q \neq i$: the second equality uses the result in Equation \eqref{eq:hse_level1_3}, the last equality follows from the HSE property in Equation \eqref{eq:hse_level1_0}
    \begin{equation*}
        \theta_{jr}\theta_{qr} 
        = \frac{\theta_{jr}}{\theta_{ir}} \times \theta_{ir}\theta_{qr} 
        = \frac{1}{c_{ij}} \times  \lambda_{iq} 
        = \lambda_{jq} 
   \end{equation*}
    \item $p, q \neq i, j$: 
        \begin{align*}
        \theta_{pr}\theta_{qr} 
        &= \frac{\lambda_{ip}\lambda_{iq}}{\theta_{ir}^2} 
        \quad \text{(plug in Equation \eqref{eq:hse_level1_2})} \\[5pt]
        &= \frac{\lambda_{ip}}{\lambda_{pq}} \times \frac{\lambda_{iq}\lambda_{pq}}{\theta_{ir}^2}
        \quad \text{(multiply both the numerator and the denominator by $\lambda_{pq}$)} \\[5pt]
        &= \frac{\lambda_{ij}}{\lambda_{jq}}\times \frac{\lambda_{iq}\lambda_{pq}}{\theta_{ir}^2}
        \quad \text{(by HSE property)} \\[5pt]
        & = \frac{\lambda_{iq}}{\lambda_{jq}} \times \frac{\lambda_{ij}\lambda_{pq}}{\theta_{ir}^2 }
        \quad \text{(rearrange)} \\[5pt]
        & = \frac{c_{ij}\lambda_{ij}\lambda_{pq}}{\theta_{ir}^2 }
         \quad \text{(by HSE property)} \\[5pt]
         & = \lambda_{pq}
        \quad \text{(by Equation \eqref{eq:hse_level1_1})}
   \end{align*}
\end{enumerate}

\end{proof}

\begin{lemma}\label{lemma:levelk}
If Theorem \ref{thm:tse->tsg} holds when $\T_r$ is a level $k$ tree, then it also holds when $\T_r$ is a level $k+1$ tree.
\end{lemma}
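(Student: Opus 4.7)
The plan is a one-level contraction: perform the Lemma \ref{lemma:level1} construction simultaneously at every deepest branching point of $\T_r$, strip off the level-$(k+1)$ leaves, and apply the inductive hypothesis to a level-$k$ tree. Let $V^{(k)}$ denote the internal nodes of $\T_r$ sitting at level $k$; because $\T_r$ has depth $k+1$, every child of such a node is at level $k+1$ and hence a leaf, and the identifiability assumption $deg(v)\ge 3$ forces each $v\in V^{(k)}$ to have at least two leaf children $i_1^v,\ldots,i_{m_v}^v$.

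For each $v\in V^{(k)}$ and any leaf $\ell$ outside the children of $v$, the median $m(i_p^v,i_q^v,\ell)=v$, so HSE supplies a consistent constant $c_{i_p^v i_q^v}^{(v)}$. Running the computation of Lemma \ref{lemma:level1} locally at $v$ yields positive numbers $\{\theta_{i_p^v v}\}_{p=1}^{m_v}$ satisfying simultaneously
\[\lambda_{i_p^v i_q^v}=\theta_{i_p^v v}\,\theta_{i_q^v v}\qquad\text{and}\qquad \theta_{i_p^v v}/\theta_{i_q^v v}=c_{i_p^v i_q^v}^{(v)}.\]
I would then form the contracted tree $\widetilde{\T}_r$ by deleting every level-$(k+1)$ leaf and promoting each $v\in V^{(k)}$ to a leaf; by construction $\widetilde{\T}_r$ is a level-$k$ tree. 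On $\widetilde{\T}_r$, define
\[\widetilde{\lambda}_{pq}=\lambda_{pq},\qquad \widetilde{\lambda}_{vq}=\lambda_{i_1^v q}/\theta_{i_1^v v},\qquad \widetilde{\lambda}_{vw}=\lambda_{i_1^v i_1^w}\big/\bigl(\theta_{i_1^v v}\,\theta_{i_1^w w}\bigr),\]
for original leaves $p,q$ and new leaves $v,w$. The identity $\theta_{i_p^v v}/\theta_{i_1^v v}=c_{i_p^v i_1^v}^{(v)}=\lambda_{i_p^v q}/\lambda_{i_1^v q}$ (the last equality from HSE at median $v$) shows that these definitions are independent of the choice of representative child, and each $\widetilde{\lambda}$ value is strictly positive.

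Next I would check that $\widetilde{\lambda}$ is $\widetilde{\T}_r$-HSE. For any triple of leaves in $\widetilde{\T}_r$, replace each new leaf by any of its original leaf children; the median of the resulting triple in $\T_r$ is the same node of the underlying graph as the median of the original triple in $\widetilde{\T}_r$. A short case analysis over how many of the three leaves are new then shows that every HSE ratio in $\widetilde{\T}_r$ reduces to the corresponding HSE ratio in $\T_r$ times a factor that depends only on this median. The inductive hypothesis applied to the level-$k$ tree $\widetilde{\T}_r$ therefore furnishes edge weights $\{\widetilde{w}_{uv}\}$ (possibly negative on internal edges) that realise $\widetilde{\lambda}$ as a $\widetilde{\T}_r$-Stochastic Graph. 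Combining these with $w_{i_p^v v}=-\log\theta_{i_p^v v}$ on the removed edges produces edge weights on all of $\T_r$, and unwinding the definition of $\widetilde{\lambda}$ in three cases (both endpoints new, exactly one new, none new) shows $\lambda_{ij}=\exp\bigl(-\sum_{(u,v)\in E(i,j)}w_{uv}\bigr)$ for every pair of leaves $i,j$ of $\T_r$.

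The main obstacle is the bookkeeping that ties the local Lemma \ref{lemma:level1} constructions at different cherries together. At each $v\in V^{(k)}$ the local argument reconstructs $m_v$ numbers from only $m_v$ equations and is free to rescale the entire set $\{\theta_{i_p^v v}\}$ by the choice of pivot pair $(i_1^v,i_2^v)$; for the contracted $\widetilde{\lambda}$ to be both well defined and HSE, the rescaling forced at $v$ must exactly match the HSE constants $c_{\cdot\,\cdot}^{(v)}$ that $v$ shares with the rest of the tree. The ratio identity $\theta_{i_p^v v}/\theta_{i_q^v v}=c_{i_p^v i_q^v}^{(v)}$, rather than merely the product formula $\lambda_{i_p^v i_q^v}=\theta_{i_p^v v}\theta_{i_q^v v}$, is precisely what enforces this compatibility and is the crucial technical point that makes the induction go through.
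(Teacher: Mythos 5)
Your proposal is correct and follows essentially the same route as the paper's proof: strip the level-$(k+1)$ leaves, run the Lemma \ref{lemma:level1} computation locally at each level-$k$ node to get the twig parameters $\theta_{i,p(i)}$, define the induced intensities on the contracted level-$k$ tree (your $\widetilde{\lambda}$ coincides with the paper's $\lambda^k$), verify they remain HSE, invoke the inductive hypothesis, and recombine via the product/ratio identities. The compatibility point you flag (that $\theta_{i_p^v v}/\theta_{i_q^v v}=c^{(v)}_{i_p^v i_q^v}$, not just the product formula, is what ties the local constructions to the rest of the tree) is exactly the mechanism the paper's case analysis relies on.
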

\begin{proof}
Given a set of $\lambda_{ij}$ that is HSE on a $k+1$-level tree $\T_r$, denote the set of nodes at level $\ell$ as $\mathscr{L}^\ell$. Notice that $\mathscr{L}^{k+1}$ only contains leaf nodes, that is, $\mathscr{L} \subset V_\ell$. Deleting all nodes in $\mathscr{L}^{k+1}$ creates a level $k$ tree, denote this tree as $\T^{k}_r$ and denote the set of leaf nodes of $\T^{k}_r$ as $V_{\ell}^k$. See Figure \ref{fig:hse_level_k} for an example of $\T_r$.

\begin{figure}[h]
    \centering
    \includegraphics[width = 5.5in]{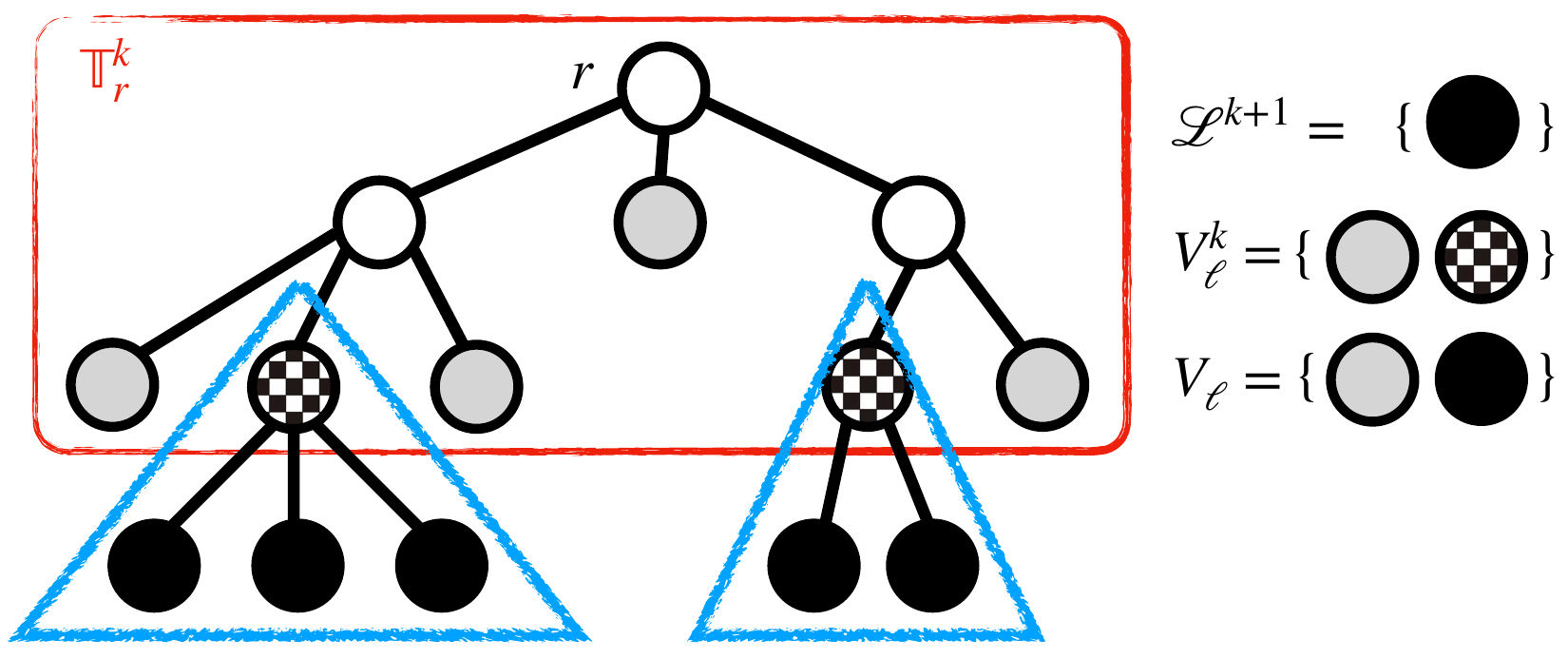}
    \caption{An example of 3-level rooted tree $\T_r$}
    \label{fig:hse_level_k}
\end{figure}

The proof can be summarized into three steps
\begin{enumerate}
    \item define $\theta_{i, p(i)}$ values for each level $k+1$ leaf node $i \in \mathscr{L}^{k+1}$ (edges in the blud triangles in Figure \ref{fig:hse_level_k}),
    \item create a set of $\{\lambda_{uv}^k: u, v\in V_{\ell}^k\}$ values such that $\lambda_{uv}^k$ is $\T_r^k$-HSE. Since $\T_r^k$ is a level $k$ tree, there exists a set of $\theta$'s for edges in $\T_r^k$ (edges in the red rectangle in Figure \ref{fig:hse_level_k}),
    \item show that $\theta_{uv}$ constructed in the first step and the second step together defines a set of valid $\theta$'s for $\T_r$.
\end{enumerate}

\paragraph{Define $\theta_{i, p(i)}$}: 
For each node $u \in V_{\ell}^k \setminus V_{\ell}$, there must exists at least two level $k+1$ leaf nodes connected to it, denote them as $i$ and $j$. Then the median node of $i, j$, and any other leaf nodes $\ell \in V_\ell$ is node $u$. Define 
\begin{equation}\label{eq:hse_levelk_1}
    \begin{dcases}
    \theta_{iu} = \sqrt{\lambda_{ij}c^{(u)}_{ij}},\\[5pt]
    \theta_{ju} = \sqrt{\lambda_{ij}/c^{(u)}_{ij}},\\[5pt]
    \theta_{\ell u} = \lambda_{i \ell}/\theta_{iu} \quad \text{for $\forall \ell \in V_\ell$}.\\
    \end{dcases}
\end{equation}
It is easy to verify that for any two leaf nodes $p, q\in V_{\ell}$, 
\begin{equation}\label{eq:hse_levelk_prod}
\theta_{up}\theta_{uq} = \lambda_{pq}
\end{equation}
holds if at least one of the nodes $p, q$ is a neighbor node of node $u$. The details are omitted since it is similar to the proof of Lemma \ref{lemma:level1}. Notice that this procedure defines more $\theta_{ui}$ values than we need (we only need $\theta_{ui}$ for $i \in \mathscr{L}^{k+1}$), those $\theta_{ui}$ defined for $i \in V_{\ell}\setminus \mathscr{L}^{k+1}$ are used to define $\lambda_{uv}$'s in the following step.

\paragraph{Define $\lambda_{uv}$ for $\T_r^k$}: Consider two nodes $u, v$ that are leaf nodes in $\T^k_r$. Depending on whether $u, v$ are also leaf nodes in $\T_r$, there are three cases for defining $\lambda^k_{uv}$:

\begin{enumerate}
    \item If both $u$ and $v$ belong to $V_\ell$, define 
    \begin{equation}\label{eq:hse_levelk_def_1}
        \lambda^k_{uv} = \lambda_{uv}
    \end{equation}
    \item If only one of the nodes belongs to $V_\ell$, then the other node belongs to $V_\ell^k \setminus V_\ell$. In this case, we define 
    \begin{equation}\label{eq:hse_levelk_def_2}
        \lambda^k_{uv} = \theta_{uv}
    \end{equation}
    where $\theta_{uv}$ is defined in the previous step.
    \item If neither $u$ nor $v$ belongs to $V_\ell$, then both $u, v \in V_\ell^k \setminus V_\ell$. This implies that, for each of them, there must be one leaf node in $\T_r$ connected to it, denoted the leaf nodes connected to $u$ and $v$ as $\mathscr{L}(u)$ and $\mathscr{L}(v)$, accordingly. Then we define 
    \begin{equation}\label{eq:hse_levelk_def_3}
        \lambda_{uv}^k = \frac{\lambda_{\mathscr{L}(u), \mathscr{L}(v)}}{\theta_{u, \mathscr{L}(u)}\theta_{v, \mathscr{L}(v)}},
    \end{equation}
    where $\theta_{u, \mathscr{L}(u)}$ values are defined in the previous step.
\end{enumerate}

We will now demonstrate that $\lambda^k$ is HSE on $\T_r^k$. To begin, consider any three leaf nodes, $u$, $v$, and $w$, in $\T_r^k$. Let $i$, $j$, and $\ell$ denote the leaf nodes in $\T_r$ that are neighbors to, or identical with, nodes $u$, $v$, and $w$, respectively. For instance, if $u$ is also a leaf node in $\T_r$, then $i=u$; if $u$ is not a leaf node in $\T_r$, then $i$ is a leaf node in $\T_r$ that is a neighbor node of $u$, specifically, we would choose $i = \mathscr{L}(u)$, where $\mathscr{L}(u)$ is the notation used in Equation \eqref{eq:hse_levelk_def_3}. Under this setup, 
\[m(u, v, w) = m(i, j, \ell).\]
To establish the HSE of $\lambda^k$, we consider all possible cases of whether nodes $u$, $v$, and $w$ belong to $V_\ell$. We show that, under all cases, 
\begin{equation}\label{eq:hse_levelk_prop}
  \frac{\lambda^k_{uw}}{\lambda^k_{vw}} = \psi_{uv} \cdot  \frac{\lambda_{i\ell}}{\lambda_{j\ell}},   
\end{equation}
where $\varphi_{uv}$ is a constant that only depends on $i$ and $j$. Then the HSE of $\lambda^k$ comes from the fact that $\lambda$ is HSE.

\begin{enumerate}
    \item If both $u, v$ belong to $V_\ell$, Equation \eqref{eq:hse_levelk_prop} holds with $\psi_{uv} = 1$
    
        \begin{itemize}
        
            \item when $w\in V_{\ell}$, by definition in Equation \eqref{eq:hse_levelk_def_1}, 
            \[\lambda^k_{uw}/\lambda^k_{vw} = \lambda_{i\ell}/\lambda_{j\ell}\]
            
            \item when $w\in  V_k\setminus V_\ell$, we have  
            \[\lambda^k_{uw}/\lambda^k_{vw} = \theta_{iw}/\theta_{jw} = \dfrac{\theta_{iw} \theta_{w\ell}}{\theta_{jw}\theta_{w\ell}} = \lambda_{i\ell}/\lambda_{j\ell},\]
            where the first equality is by definition in Equation \eqref{eq:hse_levelk_def_2} and last equality is by Equation \eqref{eq:hse_levelk_prod}.
            
        \end{itemize}
    
    \item If only one of $u, v$ belongs to $V_\ell$, without loss of generality, assume $u\in V_{\ell}$ and $v\in V_k\setminus V_\ell$. Equation \eqref{eq:hse_levelk_prop} holds with $\psi_{ij} = \theta_{vj}$, notice that $j = \mathscr{L}(v)$, therefore $\theta_{vj}$ only depends on $v$.
        \begin{itemize}
        
            \item when $w\in V_{\ell}$,
            \[\lambda^k_{uw}/\lambda^k_{vw} = \lambda_{i\ell}/\theta_{v\ell} = \theta_{vj}\frac{\lambda_{i\ell}}{\theta_{vj} \theta_{v\ell}} = \theta_{vj} \frac{\lambda_{i\ell}}{\lambda_{j\ell}},\]
            where the first equality is by definition in Equation \eqref{eq:hse_levelk_def_1} and \eqref{eq:hse_levelk_def_2}, and the last equality is by Equation \eqref{eq:hse_levelk_prod}.
            
            \item when $w\in  V_k\setminus V_\ell$,  
            \[\lambda^k_{uw}/\lambda^k_{vw} = \frac{\theta_{iw}}{\lambda_{j\ell}/\left(\theta_{vj}\theta_{w\ell}\right)} = \theta_{vj} \frac{\theta_{iw}\theta_{w\ell}}{\lambda_{j\ell}}= \theta_{vj} \frac{\lambda_{i\ell}}{\lambda_{j\ell}}, \]
            where the first equality is by definition in Equation \eqref{eq:hse_levelk_def_2}  and \eqref{eq:hse_levelk_def_3}, and the last equality is by Equation \eqref{eq:hse_levelk_prod}.
            
        \end{itemize}
        
    \item If neither $u$ nor $v$ belongs to $V_\ell$. Equation \eqref{eq:hse_levelk_prop} holds with $\psi_{uv} = \theta_{vj}/\theta_{ui}$, notice that $j = \mathscr{L}(v)$ and $i = \mathscr{L}(u)$, therefore $\psi_{uv}$ only depends on $u, v$.
            
            \begin{itemize}
        
            \item when $w\in V_{\ell}$,
            \[\lambda^k_{uw}/\lambda^k_{vw} = \theta_{u\ell}/\theta_{v\ell} = \frac{\theta_{vj}}{\theta_{ui}} \cdot \frac{\theta_{u\ell}\theta_{ui}}{\theta_{v\ell}\theta_{vj}} = \frac{\theta_{vj}}{\theta_{ui}} \cdot \frac{\lambda_{i\ell}}{\lambda_{j\ell}},\]
            where the first equality is by definition in Equation \eqref{eq:hse_levelk_def_2}, and the last equality is by Equation \eqref{eq:hse_levelk_prod}.
            
            \item when $w\in  V_k\setminus V_\ell$,  
            \[\lambda^k_{uw}/\lambda^k_{vw} = \frac{\lambda_{i\ell}/\left(\theta_{ui}\theta_{w\ell}\right)}{\lambda_{j\ell}/\left(\theta_{vj}\theta_{w\ell}\right)} =\frac{\theta_{vj}}{\theta_{ui}} \cdot \frac{\lambda_{i\ell}}{\lambda_{j\ell}}, \]
            where the first equality is by definition in Equation \eqref{eq:hse_levelk_def_3}.
            
        \end{itemize}
\end{enumerate}

\paragraph{Verify the product of $\theta$'s equals to $\lambda_{ij}$}: We want to show that 
\begin{equation}\label{eq:hse_levelk_prod_equal}
    \lambda_{ij} = \prod_{(u, v)\in E(i, j)}\theta_{uv}
\end{equation}
for any pair of nodes $i, j \in V_\ell$. Notice that the product of $\theta_{uv}$'s defined for edges in $\T^k_r$ equals to $\lambda^k$. When at least one of nodes $i, j$ are in $V_{\ell}^k$, Equation \eqref{eq:hse_levelk_prod_equal} holds by the definition of $\lambda^k$'s in Equation \eqref{eq:hse_levelk_def_1} and \eqref{eq:hse_levelk_def_2}. When both $i$ and $j$ are in $V_{\ell} \setminus V_k$, denote their parent nodes as $u$, $v$, accordingly, then find another node $\ell \in V_\ell$.
\begin{align*}
     \prod_{(u, v)\in E(i, j)}\theta_{uv} 
     &= \theta_{iu}\,\theta_{jv}\, \lambda^k_{u, v} \\
     &= \frac{\theta_{iu}\,\theta_{jv}\, \lambda_{\mathscr{L}(u), \mathscr{L}(v)}}{\theta_{\mathscr{L}(u), u} \theta_{\mathscr{L}(v), v}}
     \quad \text{(by definition in Equation \eqref{eq:hse_levelk_def_3})}
     \\[5pt]
     &= \frac{\theta_{iu} \, \theta_{uj} \, \theta_{jv} \, \theta_{vi}\, \lambda_{\mathscr{L}(u), \mathscr{L}(v)}}{\theta_{\mathscr{L}(u), u} \, \theta_{uj} \, \theta_{\mathscr{L}(v), v} \, \theta_{vi}}
     \\[5pt]
     & = \frac{\lambda_{ij} \, \lambda_{ij} \, \lambda_{\mathscr{L}(u), \mathscr{L}(v)}}{ \lambda_{\mathscr{L}(u), j} \, \lambda_{\mathscr{L}(v), i}}
    \quad \text{(by Equation \eqref{eq:hse_levelk_prod})}\\[5pt]
    & = \lambda_{ij} \cdot \frac{\lambda_{ij} \, \lambda_{\mathscr{L}(u), \mathscr{L}(v)}}{ \lambda_{\mathscr{L}(u), j} \, \lambda_{\mathscr{L}(v), i}}\\[5pt]
    & = \lambda_{ij} \cdot \scaleto{\frac{c_{i, \, \mathscr{L}(u)}^{m\left(i,\, \mathscr{L}(u),\, j\right)}}{c_{i, \, \mathscr{L}(u)}^{m\left(i,\, \mathscr{L}(u), \,\mathscr{L}(v)\right)}}}{50pt}
    \quad \text{(since $\lambda$ is $\T_r$-HSE)}
    \\[5pt]
    & = \lambda_{ij} \quad \text{(since $m(i, \mathscr{L}(u), \mathscr{L}(v)) =  m(i, \mathscr{L}(u), j)$)}.
\end{align*}

\end{proof}

\subsubsection{Proof of Theorem \ref{thm:tsg->tse}}\label{appendix:tsg->tse}

Suppose we have three leaf nodes $i,j,\ell$ and the path from $i$ to $\ell$ passes through the median node $m$.  So,  $d(i,\ell) = d(i, m) + d(m, \ell)$. Similarly for $j$ and $\ell$.  Because $e^{x+y} = e^x e^y$, both the numerator and the denominator in the ratio \eqref{eq:hse}, contain $\exp(- d(m, \ell))$. Canceling these terms removes any dependence on $\ell$.

\newpage
\section{Additional Notes about Models}

\subsection{Negative edge weight}\label{appendix: negative_edge_weight}

In a tree $\T$, we say an edge is an \textbf{internal edge} if it connects two internal nodes, and an \textbf{external edge} if it connects a leaf node and its parent. Negative distances for internal edges and external edges are very different. The subsequent lemma is useful for demonstrating why modifying the weights of external edges has minimal impact on the overall structure.

\begin{theorem}\label{thm:similar}
Consider a tree $\T = (V, E)$ and two sets of edge weights $\{w^1_{uv}\}$ and $\{w^2_{uv}\}$ on $\T$ such that
\begin{enumerate}
    \item $w^1_{uv} = w^2_{uv}$ for any pair of non leaf nodes $u, v$,
    \item $w^1_{i, p(i)} = w^2_{i, p(i)} + \delta_i$ for any leaf nodes $i$ and its parent $p(i)$.
\end{enumerate}
Let $d^1$, $d^2$ be two additive distances defined by $w^1, w^2$, respectively. Define two matrices $\Sigma^1, \Sigma^2 \in \R^{N\times N}$ as \(\Sigma^f_{ij} = \exp(-d^f(i, j))\) for $f = 1, 2$. Then 
\begin{equation}\label{eq:similar}
    \Sigma^1 = C\Sigma^2 C,
\end{equation}
where $C$ is a diagonal matrix with 
\begin{equation}
C_{ii} = 
\begin{cases}
    \exp(-\delta_i) & \text{if $i$ is a leaf node}\\
    1 & \text{otherwise}
\end{cases}
\end{equation}

\end{theorem}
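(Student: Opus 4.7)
The plan is to prove the identity \eqref{eq:similar} entry-by-entry by analyzing how the edge-weight shifts $\{\delta_i\}$ propagate into the additive distance $d^1(i,j)$. The key observation is that because the two weight systems agree on every internal edge, the difference $d^1(i,j)-d^2(i,j)$ can only pick up contributions from external edges lying on the shortest path $E(i,j)$; but the only external edges that can ever appear on such a path are the edges $(i,p(i))$ and $(j,p(j))$, because any external edge is incident to a leaf, and a leaf has degree one, so it can only sit at an endpoint of a path between two nodes.

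The proof then splits into cases based on whether $i$ and $j$ are leaves. First I would introduce the convenient convention $\delta_u = 0$ for every internal node $u$, so that $C_{uu} = \exp(-\delta_u)$ uniformly over $V$. With this convention, I claim that for any $i \ne j$,
\begin{equation*}
d^1(i,j) - d^2(i,j) \;=\; \delta_i + \delta_j .
\end{equation*}
This follows by checking the three sub-cases: (a) both $i,j$ internal, in which case no external edges lie on $E(i,j)$ and both sides are zero; (b) exactly one of $i,j$ is a leaf, say $i$, in which case the only external edge on $E(i,j)$ is $(i,p(i))$, contributing $\delta_i$; (c) both $i,j$ are leaves, in which case $E(i,j)$ contains the two external edges $(i,p(i))$ and $(j,p(j))$, contributing $\delta_i+\delta_j$. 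Exponentiating the negated identity gives
\begin{equation*}
\Sigma^1_{ij} \;=\; e^{-\delta_i}\, \Sigma^2_{ij}\, e^{-\delta_j} \;=\; C_{ii}\,\Sigma^2_{ij}\,C_{jj} \;=\; (C\Sigma^2 C)_{ij}.
\end{equation*}

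For the diagonal $i=j$ the statement reduces to $1 = C_{ii}^2 \cdot 1$, which in general holds only when $\delta_i=0$; since the matrix identity is only used via off-diagonal entries (recall that $\lambda_{ij}$ in Definition \ref{def:tsg} is defined only for $i\ne j$, and in the GGM setting $\Sigma_{ii}=1$ by convention), I would either restate \eqref{eq:similar} as an equality of off-diagonal entries, or normalize after conjugation by redefining the diagonal to keep $\Sigma^1_{ii}=1$. Either way, the computation above shows the substantive content. The only "obstacle" is this cosmetic diagonal issue; the combinatorics of paths in a tree make the main identity essentially immediate once the convention $\delta_u=0$ for internal $u$ is adopted.
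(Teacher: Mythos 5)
Your proof is correct and is essentially the paper's own argument: the paper disposes of Theorem \ref{thm:similar} with the single remark that the identity can be verified directly, and your entrywise computation via the path decomposition $d^1(i,j)-d^2(i,j)=\delta_i+\delta_j$ (with the convention $\delta_u=0$ for internal $u$) is exactly that verification. Your diagonal caveat is also fair—with $d^f(i,i)=0$ the stated matrix identity fails on diagonal entries of leaf nodes with $\delta_i\neq 0$—but every use of the theorem in the paper (e.g.\ Corollary \ref{coro:negative_external} and the edge-generator equivalences) only invokes off-diagonal entries between distinct leaves, so restricting the identity to $i\neq j$ as you suggest loses nothing.
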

\begin{proof}
    The proof can be achieved by verifying Equation \eqref{eq:similar} directly.
\end{proof}

The following corollary is a direct result of Theorem \ref{thm:similar}.

\begin{corollary}\label{coro:negative_external}
    Consider a $\T$-Stochastic Graph with $\lambda_{ij} = \exp(-d(i, j))$ for any pair of leaf nodes $i, j \in V_\ell$. If $d(\cdot, \cdot)$ is an additive distance on $\T$ with all internal edge weights being positive (external edge weights can be negative), then there exists an additive distance $\widetilde d(\cdot, \cdot)$ on $\T$ with all edge weights being positive (external edge weights are also negative) and a constant $c>0$ such that \[\lambda_{ij} = \exp(-d(i, j)) = c\exp\left(-\widetilde d(i, j)\right)\]
\end{corollary}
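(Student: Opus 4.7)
The plan is to apply Theorem \ref{thm:similar} with a \emph{uniform} shift $\delta_i \equiv \delta$ across all external edges, so that the diagonal matrix $C$ becomes a scalar multiple of the identity on the leaf block, which is exactly what turns the matrix equation $\Sigma^1 = C\Sigma^2 C$ into a scalar rescaling $\lambda_{ij} = c\exp(-\widetilde d(i,j))$.

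Concretely, let $w^1$ denote the edge weights defining the given additive distance $d$, and choose a real number $\delta$ with
\[
\delta \;<\; \min_{i\in V_\ell}\, w^1_{i,p(i)}.
\]
Since there are only finitely many leaves, such a $\delta$ exists; note that if some $w^1_{i,p(i)}$ is negative we simply take $\delta$ even more negative. Define $w^2$ by keeping all internal edge weights unchanged, $w^2_{uv}=w^1_{uv}$ for non-leaf $u,v$, and setting $w^2_{i,p(i)} = w^1_{i,p(i)} - \delta$ for every leaf $i\in V_\ell$. By the choice of $\delta$, every external weight $w^2_{i,p(i)}$ is strictly positive, and the internal weights inherit positivity from $w^1$; hence $w^2$ defines an additive distance $\widetilde d$ on $\T$ with all edge weights positive.

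Now $w^1$ and $w^2$ satisfy the hypotheses of Theorem \ref{thm:similar} with the common shift $\delta_i \equiv \delta$. Its conclusion gives $\Sigma^1 = C\Sigma^2 C$, where $C$ is diagonal with $C_{ii} = \exp(-\delta)$ on every leaf entry and $C_{ii}=1$ elsewhere. Restricting both sides to the leaf block $V_\ell \times V_\ell$, the matrix $C$ acts as the scalar $\exp(-\delta) I$, so for any two leaves $i,j$,
\[
\lambda_{ij} \;=\; \exp(-d(i,j)) \;=\; \exp(-2\delta)\,\exp(-\widetilde d(i,j)).
\]
Setting $c := \exp(-2\delta) > 0$ completes the proof. (This is precisely the hinted choice $\delta = -\log(\sqrt{c})$.)

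The only step requiring any care is ensuring that a single uniform $\delta$ works for every leaf simultaneously; this is immediate because $V_\ell$ is finite, so $\min_{i\in V_\ell} w^1_{i,p(i)}$ exists and we can pick $\delta$ strictly below it. The rest is a direct application of Theorem \ref{thm:similar} and a bookkeeping substitution, with no further obstacle.
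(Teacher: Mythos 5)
Your proof is correct and follows essentially the same route as the paper: apply Theorem \ref{thm:similar} with a uniform shift $\delta_i \equiv \delta$ chosen strictly below $\min_{i\in V_\ell} w^1_{i,p(i)}$, so that $C$ acts as the scalar $\exp(-\delta)$ on the leaf block and $c = \exp(-2\delta)$. The paper's one-line proof uses the same idea with the constant shift $\delta_i \equiv 1-\min_{j\in V_\ell} w^1_{j,p(j)}$ (whose sign as written appears slipped; your explicit choice $\delta < \min_i w^1_{i,p(i)}$ is the intended one and matches the hint $\delta = -\log(\sqrt{c})$ in Remark \ref{appendix_rmk:multiply_c}).
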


\begin{proof}
Using Theorem \ref{thm:similar} and consider $w^1$ and $w^2$ such that $\delta_i \equiv 1-\min_{j\in V_\ell} w^1_{j, p(j)}$ for all $i\in V_\ell$. The proof is complete by considering $d = d^1$ and $\widetilde d = d^2$.
\end{proof}

Corollary \ref{coro:negative_external} shows that for any $\T$-Stochastic Graph with negative external edge weights, one can construct a set of new edge weights $\widetilde w_{uv} > 0$ such that $\lambda_{ij} = c\exp\left(-\widetilde d(i, j)\right)$. However, this is not the case for negative internal edge weights. 

\begin{figure}[h]
    \centering
    \includegraphics[width = 2.5in]{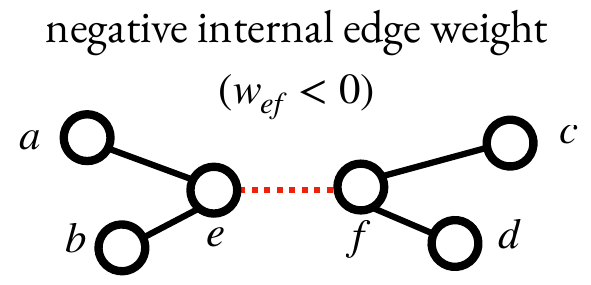}
    \caption{four leaf node tree with internal edge weights being negative}
    \label{fig:negative_internal_edge}
\end{figure}

We show this by contradiction. Consider a binary tree $\T$ with four leaf nodes shown in Figure \ref{fig:negative_internal_edge}. When the internal edge weight $w_{ef}$ is negative, we have 
\begin{equation}\label{eq:negative_internal_edge}
    d(a, b) + d(c, d) > d(a, c) + d(b, d) = d(a, d)+d(b, c).
\end{equation}
If there exists $\widetilde w_{uv}>0$ such that $\lambda_{ij} = c\exp\left(-\widetilde d(i, j)\right)$, then 
\begin{equation}\label{eq:positive_internal_edge}
    \widetilde d(a, b) + \widetilde d(c, d) < \widetilde d(a, c) + \widetilde d(b, d) = \widetilde d(a, d)+\widetilde d(b, c).
\end{equation}
However, $d(i, j) = \widetilde d(i, j) + log(1/c)$ for any pair of leaf nodes $i, j$, this means Equation \eqref{eq:negative_internal_edge} and \eqref{eq:positive_internal_edge} contradict with each other.

In fact, Equation \eqref{eq:positive_internal_edge} is also known as the four-point condition in phylogenetic studies and is widely used to identify topology structure in tree reconstruction problems.

\subsubsection{Negative edge weight in HSE}\label{appendix:neg_edge_hse}

Negative internal edge weights are inevitable in HSE graphs. One can find that the proof of Theorem \ref{thm:tsg->tse} doesn't require edge weights to be positive. In other words, any $\T$-Stochastic Graph with some negative internal edges is still HSE.

This aligns with the original Stochastic Equivalence condition proposed by \cite{holland}. 
In SBMs, diagonal elements in the connectivity matrix $B$ are not required to be larger than off-diagonal elements, that is, edges within the same block could be sparser than edges between blocks. 

Consider an SBM with 2 blocks and 4 nodes, where each block contains two nodes, and the connectivity matrix $B$ is defined as $B = \begin{bmatrix}
0.25 & 0.5 \\
0.5 & 0.25
\end{bmatrix}$. Also, consider a $\T$-Stochastic Graph with $\T$ being a four leaf node tree as in Figure \ref{fig:negative_internal_edge}, and define edge weight $w_{uv}$ as 
\begin{align*}
    w_{uv} = 
    \begin{cases}
    \log(0.5) \qquad &\text{if $u$ and $v$ are all internal nodes,} \\
    \log(2)  &\text{otherwise}. 
    \end{cases}
\end{align*}
It can be verified that these two models are equivalent, and this is an example of a $\T$-Stochastic Graph with negative internal edge weight ($\log(0.5)$ is negative) being Stochastic Equivalent and Hierarchical Stochastic Equivalent.

\subsection{None zero mean vector and general covariance matrix for Gaussian Graphical Model}\label{appendix:covariance}

Definition \ref{def:ggm} constrains $\mu = 0$ and $\Sigma_{ii} = 1$ for all $i$. This section shows that the equivalence between models can be extended to the general cases without these two assumptions. Since GGMs are only involved in the definitions of the $\T$-Graphical Blockmodel and the $\T$-Graphical-RDPG, we are going to discuss the equivalence between these two models and the $\T$-Stochastic Graph. Specifically, we show that

\begin{enumerate}
    \item When $\mu \neq 0$, Theorem \ref{thm:tsg_tgb} remains valid, Theorem \ref{thm:tsg_tg-rdpg} holds if we applies a centering step to the latent positions.
    \item When $\Sigma_{ii}\neq 1$, both Theorem \ref{thm:tsg_tgb} and Theorem \ref{thm:tsg_tg-rdpg} remain valid.
\end{enumerate}

\subsubsection{Non zero mean vector ($\mu\neq 0$)}

\paragraph{$\T$-Graphical Blockmodel}: Since Definition \ref{def:tgb} solely considers $\Sigma$ and does involve $\mu$, the validity of Theorem \ref{thm:tsg_tgb} remains unaffected when removing the constraint of $\mu=0$.

\paragraph{$\T$-Graphical RDPG}: If the mean vector $\mu\neq 0$, a centering step can be taken for all features. Specifically, instead of using $x_i$ as the latent position for node $i$, we consider $x_i - \bar{x}_i$, where $\bar{x}_i$ is the mean of all elements in $x_i$. According to the Law of Large Numbers, Theorem \ref{thm:tsg_tg-rdpg} holds as $<x_i - \bar{x}_i, x_j - \bar{x}j>/q$ converges asymptotically to $\Sigma{ij}$.

\subsubsection{Non unit variance in the covariance matrix ($\Sigma_{ii}\neq 1$)}

Since this extension only affects the $\T$-Graphical RDPG and the $\T$-Graphical Blockmodel, we only need to show one direction of the equivalence: any $\T$-Graphical RDPG (or $\T$-Graphical Blockmodel) defined by a GGM with general covariance matrix is a $\T$-Stochastic Graph.

\paragraph{$\T$-Graphical Blockmodels with $\Sigma_{ii}\neq 1$ are $\T$-Stochastic Graphs.}
In a $\T$-Graphical Blockmodel, $\mathscr{A} = ZBZ^T$ with $B = \Sigma$. If we have a general covariance matrix $\Sigma = SRS$, where $R$ is the corresponding correlation matrix and $S$ is a diagonal matrix with $S_{ii} = \sqrt{\Sigma_{ii}}$, then $\mathscr{A} = Z\Sigma Z^T = ZSRSZ^T = Z_S R Z_S^T$, where $Z_S = ZS$. Since $R$ can be considered as a covariance matrix with unit variance, and $Z_S$ is a valid block membership matrix for DCSBM, this $\T$-Graphical Blockmodel is equivalent to a $\T$-Graphical Blockmodel with a unit variance GGM. Hence any $\T$-Graphical Blockmodel with a general covariance matrix is still a $\T$-Stochastic Graph.

\paragraph{$\T$-Graphical RDPGs with $\Sigma_{ii}\neq 1$ are $\T$-Stochastic Graphs.}
Using the same notation as in Section \ref{sec:tg-rdpg}, let $s_i = \sqrt{\Sigma_{ii}}$ and denote $x_i/s_i$ as $\widetilde x_i$. By the Law of Large Numbers and Proposition \ref{prop:cov->dist}, there exists an additive distance $\widetilde d(\cdot,\cdot)$ on $\T$ such that  \[\left|\langle \widetilde  x_i, \widetilde x_j \rangle\right| \xrightarrow{q\rightarrow \infty} \exp\left(- \widetilde d(i, j)\right),\]
Consider a $\T$-Stochastic Graph with edge weights $w_{uv}$ defined as
\vspace{0.1in}
\begin{align*}
   w_{uv} =
  \begin{cases}
  \widetilde d(u, v) \qquad \qquad &\text{if both $u$ and $v$ are internal nodes,}\\
  \widetilde d(u, v) + log(1/s_u)  &\text{if $u$ is a leaf node and $v$ is the neighbor node of $u$, i.e. $v = p(u)$,}\\
  \widetilde d(u, v) + log(1/s_v) &\text{if $v$ is a leaf node and $u$ is the neighbor node of $v$, i.e. $u = p(v)$.}\\
  \end{cases}
\end{align*} 
\vspace{0.2in}
The proof is completed by the following equalities
\begin{align*}
\widetilde \lambda_{ij}^{RDPG} &= q^{-1} |\langle x_i, x_j \rangle| \\
&= s_is_jq^{-1} |\langle x^*_i, x^*_j \rangle| \\
&\rightarrow s_is_j\exp\left(- \widetilde d(i, j)\right)\\
&= \exp\left(- \widetilde d(i, j) - \log(1/s_i) - \log(1/s_j)\right)\\
& = \exp(- d(i, j) )\\
& = \lambda^{TSG}_{ij}
\end{align*}

\newpage
\section{Algorithms: \texttt{vsp} and \texttt{SpaseNJ}}

\subsection{\texttt{vsp}: consistent estimation of DCSBMs}\label{appendix:vsp}

The Vintage Sparse PCA (\texttt{vsp}) algorithm proposed by \cite{varimax_rohe} estimates $Z$ through two primary steps: spectral embedding and Varimax rotation. The first step applies eigen decomposition on $A$ to construct a matrix $U$ that shares the same column space as $Z$; the second step finds an orthogonal rotation matrix $R_U$ by maximizing the Varimax function $v(R,U)$, which is defined as
\begin{equation}
\label{varimax}
	v(R,U) = \sum_{l = 1}^k\frac{1}{n}\sum_{i = 1}^n\left([UR]^4_{il}-\left(\frac{1}{n}\sum_{q = 1}^n[UR]_{ql}^2\right)^2\right),
\end{equation}
and outputs $\widehat Z = UR_U$. It should be noted that the Varimax objective function remains unchanged when the columns of $UR$ are reordered or flipped in sign. As a result, the maximizer of Varimax is an equivalence class that allows for sign flips and column reordering.

The original \texttt{vsp} algorithm does not include a sign-adjustment step since it studies the more general semi-parametric setting where $Z_{ij}$ is allowed to be negative. In the context of $\T$-Stochastic Graphs, the column signs are identifiable as the DCSBM assumes nonnegative values for $\theta_i$. We utilize this fact to add a sign-adjustment step. This additional step removes the dependence on sign flip operation while not affecting the asymptotic performance of $\widehat Z$. Proofs can be found in Appendix \ref{appendix:sign_adjust}.

\begin{algorithm}
	\caption{\texttt{vsp} (step 1)}
	\label{vsp}
\vspace{0.03in}
	\KwIn{ \\ \Indp \Indp
        adjacency matrix $A\in \R^{n\times n}$;\\
        number of blocks $k$.}
	\KwOut{ \\ \Indp \Indp
        block membership matrix $\widehat Z \in \R^{n\times k}$.}
  \vspace{0.05in}
	Spectral Embedding: compute the top $k$ eigenvectors $u_1, \cdots, u_k\in \R^{n}$ of $A$ and place them into columns of $U\in \R^{n\times k}$\;
	Varimax Rotation:  find an orthogonal matrix $R_U$ that maximize the Varimax function, \[R_U = \mathrm{argmax}_{R\in \mathcal{O}(k)} \ v(R, U),\] where \(\mathcal{O}(k) = \{R\in \mathbb{R}_{k\times k}: R^TR = RR^T = I_k\}\) is the set of orthogonal matrices in $\R^{k\times k}$\; 
	Sign Adjustment: let $\widehat{Z} = \sqrt{n}UR_U$, calculate a diagonal matrix $\widehat D\in \R^{k\times k}$ with \(\widehat D_{jj} = \text{sgn}\left(\sum_{i = 1}^n\widehat Z_{ij}\right)\), where $\text{sgn}(0) = 0$ and if $x\neq 0$, $\text{sgn}(x) = x/|x|$; then, set $\widehat{Z} \leftarrow \widehat{Z}\widehat D$\;
\end{algorithm}

\subsection{\texttt{SparseNJ}: NJ and edge thresholding}\label{appendix:sparse_nj_nj_algorithm}

The NJ algorithm is an agglomerative clustering method that iteratively combines a pair of nodes iteratively. It consists of two main parts: the selection criterion and the updating criterion. 
Algorithm \ref{alg:nj} provides a detailed outline of the NJ procedure, with superscript $(t)$ indicating the quantities at the $t$th iteration. Algorithm \ref{alg:sparse_nj} describes the procedure of the \texttt{SparseNJ} algorithm, which involves applying NJ first and then shrinking short edges to zero.

\begin{algorithm}
\caption{Neighbor Joining}
	\label{alg:nj}
	\KwIn{
 \\ \Indp \Indp
 pairwise distances $\widehat D\in \R^{k\times k}$ between nodes in $V_o$}
	\KwOut{ \\ \Indp \Indp
 binary tree $\widehat \T_{nj} = \left(\widehat V_{nj}, \widehat E_{nj}\right)$ with edge distances $\left\{\widehat d(u, v): (u, v)\in \widehat E_{nj}\right\}$}
 \vspace{0.1in}
  Initialization: let $\widehat V_{nj} \leftarrow V_o$, $\widehat E_{nj} = \{\}$, $V_o^{(1)} = V_o$, $\widehat d^{(1)}(i, j) \leftarrow \widehat D_{ij}$ for any pair of nodes $i, j\in V_o$\;
  \vspace{0.1in}
\For{$t\gets1$ \KwTo $k-2$}{
  \textbf{Selection}: choose the pair of nodes $\left(i^{(t)}, j^{(t)}\right)\in V_o^{(t)}$ such that the following criterion is minimized \[S^{(t)}_{i,j} = (k - 2) \widehat d^{(t)}(i, j)-\sum_{m = 1}^k \widehat d^{(t)}(i, m) - \sum_{m = 1}^k \widehat d^{(t)}(j, m),\]
and combine nodes $i^{(t)}$ and $j^{(t)}$ by adding a new node $x^{(t)}$ as their parent:
  \begin{align*}
     & V_o^{(t+1)} \leftarrow V_o^{(t)} \cup \{x^{(t)}\} \setminus \left\{i^{(t)}, j^{(t)}\right\},\\
     & \widehat V_{nj} \leftarrow \widehat V_{nj} \cup \{x^{(t)}\}, \quad \widehat E_{nj} \leftarrow \widehat E_{nj} \cup \left\{(i^{(t)}, x^{(t)}), (j^{(t)}, x^{(t)})\right\};
  \end{align*}
  \textbf{Updating}: update the pairwise distances between nodes in $V_o$ as
  \begin{equation}
  \widehat d^{(t+1)}(u, \ell) =
      \begin{dcases}
        \widehat d^{(t)}(u, \ell) & \text{ if $u, \ell \neq x^{(t)}$}\\
          \frac{1}{2}\,\widehat d^{(t)}\left(u, i^{(t)}\right)+\frac{1}{2}\,\widehat d^{(t)}\left(u, j^{(t)}\right) & \text{ if $u = x^{(t)}$}
      \end{dcases}.
  \end{equation}
    Then estimate the distance of $\left(i^{(t)}, x^{(t)}\right)$ as
    \begin{equation}\label{eq:distance_update}
    \begin{aligned}
    \widehat d\left(i^{(t)}, x^{(t)}\right) &= 
    \frac{1}{2}\, \widehat d^{(t)}\left(i^{(t)}, j^{(t)}\right) 
    + \frac{1}{2}\, \widehat d^{(t)}_{-j}\left(i^{(t)}\right)
    -\frac{1}{2}\, \widehat d^{(t)}_{-i}\left(j^{(t)}\right) \\
   & \qquad  -\frac{1}{2}\, \sum_{m = 1}^{t-1}\mathds{1}\left\{i^{(t)} = x^{(m)}\right\} \, \widehat d^{(m)}\left(i^{(m)}, j^{(m)}\right),
    \end{aligned}
    \end{equation}
    where $\widehat d^{(t)}_{-j}\left(i^{(t)}\right)$ is the mean of the distances between node $i$ and any other node $u\neq i^{(t)} , j^{(t)}$, that is \[
    \widehat d^{(t)}_{-j}\left(i^{(t)}\right) = \frac{1}{k-t-1}\sum_{\scaleto{u \, \in \, 
    \left.V^{(t)}_o \middle \backslash \left\{i^{(t)}, j^{(t)}\right\}\right.}{18pt}} \widehat d^{(t)}\left(i^{(t)}, u\right).\]
    The same idea holds for estimating $\widehat d\left(j^{(t)}, x^{(t)}\right)$\;
  }
  \vspace{0.1in}
  After $k-2$ iteration, there are only two nodes left in $V_o^{(k-1)}$, denote them as $u$ and $v$, then update $\widehat E_{nj} \leftarrow \widehat E_{nj} \cup \left\{(u, v)\right\}$ and estimate the edge distance as 
  \begin{equation}\label{eq:distance_update_final}
      \widehat d(u, v) = d^{(k-1)}(u, v) - \frac{1}{2}\,d^{(k-2)}\left(i^{(k-2)}, j^{(k-2)}\right).
  \end{equation}
\end{algorithm}

\begin{algorithm}[H]
\caption{\texttt{SparseNJ} (step 3)}
	\label{alg:sparse_nj}
	\KwIn{
 \\ \Indp \Indp
 pairwise distances $\widehat D$ between nodes in $V_o$; \\
    cutoff value $\varphi$.}
	\KwOut{ \\ \Indp \Indp
 tree estimation $\widehat \T = \left(\widehat V, \widehat E\right)$.}
	\textbf{Estimation}: apply the NJ algorithm on $\widehat D$ to reconstruct a tree $\widehat \T_{nj} = \left(\widehat V_{nj}, \widehat E_{nj}\right)$ and edge distances $\left\{\widehat d(u, v): (u, v)\in \widehat E_{nj}\right\}$ (Algorithm \ref{alg:nj} provides the detailed NJ algorithm)\;
 \vspace{0.1in}
	\textbf{Shrinking}: set $\widehat V \rightarrow \widehat V_{nj}$, $\widehat E = \widehat E_{nj}$, then update them by looping through all edges in $\widehat E_{nj}$ \\ \Indp
 \vspace{0.1in}
 \For{$(u, v)\in \widehat E_{nj}$}{
 \vspace{0.05in}
\If{$\widehat d\left(u, v\right) \leq \varphi$}{
\vspace{0.05in}
combine nodes $u$ and $v$ by 
\begin{align*}
& \widehat V \leftarrow \widehat V \setminus \{u\}\\
& \widehat E \leftarrow \widehat E \mathbin{\big\backslash}
        \left\{(u, z): \forall z, \text{ s.t. } (u, z)\in \widehat E\right\} \bigcup 
        \left\{(v, z): \forall z\neq u, \text{ s.t. } (v, z)\in \widehat E\right\}
\end{align*}
}
}
\end{algorithm}

\clearpage

\section{Asymptotic Behaviour of \pps (Algorithm \ref{alg:pps})}\label{appendix: meta_algorithm_theory}

\subsection{Proof of Theorem \ref{thm:tsgdist}}\label{appendix:tsgdist}

We first show that
\begin{equation}\label{eq:B_converge}
    \left\|\widehat B - P_n^TBP_n \right\|_{F} = O_p(e_n).
\end{equation}
This can be decomposed into two parts: the first part shows that $\widehat B^{nn}$ and $\rho_n B$ are close up to a scaling matrix (Proposition \ref{prop:tildeB}); the second part (presented in this section) shows that $\widehat S^{-1}$ is a good estimate of $S$ and thus $\widehat B = \left[\widehat S^{-1}\widehat B^{nn}\widehat S^{-1}\right]_{\le 1}$ is a good estimate of $B$. The proof of Proposition \ref{prop:tildeB} can be found in Appendix \ref{appendix:tildeB}.

\begin{prop}\label{prop:tildeB}
Under the conditions in Theorem \ref{thm:tsgdist}, there exists a sequence of reordering matrix $P_n\in \mathscr{P}(k)$ such that
\[\left\|\widehat B^{nn} - \rho_nP_n^TSBSP_n\right\|_{F} = O_p\left(\rho_n e_n \right),\]
where $S$ is a random diagonal matrix with positive elements such that $S_{jj}\asymp 1$, and $\widehat B^{nn}$ can be one of three options: the original estimator proposed in Algorithm \ref{algo:tsg_dist} or the two alternatives discussed in Remark \ref{rmk:bnn_estimator}.
\end{prop}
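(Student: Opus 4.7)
The plan is to quantify how each of the three candidates for $\widehat B^{nn}$ concentrates around an explicit population analogue. Let $\widetilde Z = ZCP_n$, where $P_n$ is the reordering matrix supplied by the hypothesis of Theorem \ref{thm:tsgdist}, so that $\|\widehat Z - \widetilde Z\|_{2\to\infty} = O_p(e_n)$; because $\widetilde Z$ is entrywise nonnegative, the same bound holds for $\|\widehat Z_+ - \widetilde Z\|_{2\to\infty}$. Since every row of $Z$ has a single nonzero entry, $Z^TZ$ is diagonal, and under Assumption \ref{assum:z_dist} together with the bounded-support condition on $f$, its scaled diagonal $[Z^TZ/n]_{uu}$ is bounded away from zero and infinity on a probability-one event. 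I would take $S = CZ^TZ/n$ for the default estimator and for the third estimator of Remark \ref{rmk:bnn_estimator}, and $S = C^{-1}$ for the second (inverted) estimator; in each case $S$ is diagonal and satisfies $S_{jj}\asymp 1$.

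The core of the argument is the decomposition
\begin{equation*}
\widehat Z^T A \widehat Z - \widetilde Z^T \mathscr{A} \widetilde Z \;=\; (\widehat Z - \widetilde Z)^T A\, \widehat Z \;+\; \widetilde Z^T A\, (\widehat Z - \widetilde Z) \;+\; \widetilde Z^T (A - \mathscr{A})\, \widetilde Z,
\end{equation*}
with each piece to be bounded in Frobenius norm. Using $\|\widehat Z - \widetilde Z\|_F \le \sqrt{n}\, \|\widehat Z - \widetilde Z\|_{2\to\infty} = O_p(\sqrt{n}\, e_n)$, together with $\|\widetilde Z\|_2 = O_p(\sqrt n)$ (from $\|\widetilde Z\|_2^2 \le \|C\|_2^2\, \|Z^TZ\|_2 = O(n)$) and $\|A\|_2 \le \|\mathscr{A}\|_2 + \|A - \mathscr{A}\|_2 = O_p(n\rho_n)$, each of the two cross terms is $O_p(n^2 \rho_n e_n)$. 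For the stochastic fluctuation I would use the rank-$k$ bound $\|\widetilde Z^T (A - \mathscr{A}) \widetilde Z\|_F \le \sqrt{k}\, \|\widetilde Z\|_2^2\, \|A - \mathscr{A}\|_2 = O_p(n^2 \rho_n e_n)$. A direct algebraic computation gives the signal term $\widetilde Z^T \mathscr{A} \widetilde Z = \rho_n P_n^T C Z^TZ B Z^TZ C P_n = n^2 \rho_n P_n^T S B S P_n$, so dividing through by $n^2$ yields the claimed $O_p(\rho_n e_n)$ rate for the unclipped quantity. The clipping $[\cdot]_+$ in the default estimator cannot worsen the rate: $B_{uv} = |\Sigma_{uv}| \ge 0$ and $S$ has positive diagonal, so the target $\rho_n P_n^T S B S P_n$ is entrywise nonnegative, and the pointwise inequality $|\max(m,0) - t| \le |m - t|$ for $t \ge 0$ transfers directly to the Frobenius norm.

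For the two alternatives in Remark \ref{rmk:bnn_estimator}, the additional ingredient is control of $(\widehat Z_+^T \widehat Z_+)^{-1}$. Because $\widetilde Z^T \widetilde Z = P_n^T C^2 Z^TZ P_n$ is diagonal with minimum eigenvalue of order $n$, and the perturbation satisfies $\|\widehat Z_+^T \widehat Z_+ - \widetilde Z^T \widetilde Z\|_2 \le 2 \|\widehat Z_+\|_2\, \|\widehat Z_+ - \widetilde Z\|_2 = O_p(n\, e_n)$, a standard resolvent argument gives $\|(\widehat Z_+^T \widehat Z_+)^{-1} - (\widetilde Z^T \widetilde Z)^{-1}\|_2 = O_p(e_n/n)$. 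Telescoping this through the triple product $(\widehat Z_+^T \widehat Z_+)^{-1} \widehat Z_+^T A \widehat Z_+ (\widehat Z_+^T \widehat Z_+)^{-1}$ and invoking the previous paragraph's bound on the middle factor delivers the same $O_p(\rho_n e_n)$ rate around $\rho_n P_n^T C^{-1} B C^{-1} P_n$; the third estimator is handled by dropping the inverses and repeating the first calculation verbatim.

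The main obstacle will be juggling the competing scalings in $n$ and $\rho_n$ through the clipping and inversion steps without losing a factor of $\sqrt n$ anywhere. In particular, the bound on $\widetilde Z^T(A - \mathscr{A}) \widetilde Z$ must exploit that the product lives in $\R^{k\times k}$, giving $\|\cdot\|_F \le \sqrt{k}\, \|\cdot\|_2$; a naive row-by-row bound would lose $\sqrt n$ and break the target rate. A secondary technical point is that $Z$, and hence $S$, is random; this is handled by carrying out the estimates on the probability-one event where $[Z^TZ/n]_{uu} \asymp 1$ uniformly in $u$, an event guaranteed by Assumption \ref{assum:z_dist}.
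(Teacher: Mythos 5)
Your proposal is correct and follows essentially the same route as the paper: consistency of $\widehat Z$ (and $\widehat Z_+$) in the two-to-infinity norm, the spectral bound on $A-\mathscr{A}$, a telescoping decomposition of the quadratic form around $\widetilde Z^T\mathscr{A}\widetilde Z=n^2\rho_n P_n^TSBSP_n$, entrywise monotonicity of the clipping, and a perturbation bound on $(\widehat Z_+^T\widehat Z_+)^{-1}$ for the inverted estimator, with the same choices $S=C^{-1}$ and $S=CZ^TZ/n=\widetilde\Sigma C^{-1}$. The only real deviation is cosmetic: you bound $\|\widetilde Z\|_2=O_p(\sqrt n)$ directly from the diagonal, bounded $Z^TZ$ rather than via the paper's Matrix Bernstein lemma, which is fine.
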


\begin{prop}\label{prop:S}
Under the conditions in Theorem \ref{thm:tsgdist}, let $S$ and $P_n$ be the same diagonal matrix and reordering matrices as in Proposition \ref{prop:tildeB}, then
\begin{equation}\label{s-shat}
    \left \|\widehat S^{-1} - \rho_n^{-.5}P_n^TS^{-1}P_n\right\|_F = O_p\left(\rho_n^{-.5}e_n\right),
\end{equation}
\begin{equation}\label{shat}
    \left \|\widehat S^{-1}\right\|_F = O_p(\rho_n^{-.5}).
\end{equation}
\end{prop}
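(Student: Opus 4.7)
}
The plan is to transfer the Frobenius bound on $\widehat B^{nn}$ from Proposition \ref{prop:tildeB} to a diagonal-entry-wise bound on $\widehat S_{ii}$ via square-root smoothness, then invert. A key simplification is that $B$ is a correlation-type matrix of the underlying GGM (Definitions \ref{def:ggm} and \ref{def:tgb}), so all of its diagonal entries equal $1$. Consequently, for every index $i$ there is a unique $j^{*}(i)$ such that
\[
\bigl(P_n^{T} S B S P_n\bigr)_{ii} \;=\; S_{j^{*}(i)\, j^{*}(i)}^{2} \;=\; \bigl(P_n^{T} S^{2} P_n\bigr)_{ii}.
\]
Combined with Proposition \ref{prop:tildeB}, the entry-wise extraction
$\bigl|\widehat B^{nn}_{ii} - \rho_n S_{j^{*}(i)j^{*}(i)}^{2}\bigr| \le \|\widehat B^{nn} - \rho_n P_n^{T} S B S P_n\|_F = O_p(\rho_n e_n)$
is available for each $i \in [k]$.

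Next I would pass to square roots. Since $S_{jj}\asymp 1$, we have $\rho_n S_{j^{*}(i)j^{*}(i)}^{2} \asymp \rho_n$, so together with $e_n \to 0$ the estimate $\widehat B^{nn}_{ii} \asymp \rho_n$ holds with probability tending to one (the regularization $\epsilon$ only helps here). Using the identity $\sqrt a - \sqrt b = (a-b)/(\sqrt a + \sqrt b)$ on $a=\widehat B^{nn}_{ii}$ and $b=\rho_n S_{j^{*}(i)j^{*}(i)}^{2}$, both of which are $\asymp \rho_n$, yields
\[
\bigl|\widehat S_{ii} - \sqrt{\rho_n}\,(P_n^{T} S P_n)_{ii}\bigr| \;=\; \frac{O_p(\rho_n e_n)}{O_p(\sqrt{\rho_n})} \;=\; O_p\!\bigl(\sqrt{\rho_n}\, e_n\bigr).
\]
Inverting by the identity $1/a - 1/b = (b-a)/(ab)$ with $a = \widehat S_{ii}$, $b = \sqrt{\rho_n}\,(P_n^{T} S P_n)_{ii}$, both $\asymp \sqrt{\rho_n}$, gives
\[
\bigl|\widehat S^{-1}_{ii} - \rho_n^{-.5}(P_n^{T} S^{-1} P_n)_{ii}\bigr| \;=\; \frac{O_p(\sqrt{\rho_n}\, e_n)}{O_p(\rho_n)} \;=\; O_p\!\bigl(\rho_n^{-.5} e_n\bigr).
\]
Since $\widehat S^{-1}$ and $\rho_n^{-.5} P_n^{T} S^{-1} P_n$ are both $k\times k$ diagonal matrices with $k$ fixed, summing the entry-wise bounds yields \eqref{s-shat}. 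Then \eqref{shat} follows immediately from the triangle inequality $\|\widehat S^{-1}\|_F \le \|\rho_n^{-.5} P_n^{T} S^{-1} P_n\|_F + \|\widehat S^{-1} - \rho_n^{-.5} P_n^{T} S^{-1} P_n\|_F = O_p(\rho_n^{-.5}) + O_p(\rho_n^{-.5} e_n) = O_p(\rho_n^{-.5})$, using $S_{jj}\asymp 1$ and $e_n\to 0$.

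The main obstacle is technical rather than conceptual: one must argue that the denominators $\widehat S_{ii}$ and $\rho_n S_{j^{*}(i)j^{*}(i)}^{2}$ are bounded below by a quantity of order $\sqrt{\rho_n}$ and $\rho_n$, respectively, on an event of probability tending to one, so that the square-root and reciprocal perturbation identities do not blow up the $O_p$ rates. This reduces to the lower bound $S_{jj}\asymp 1$ supplied by Proposition \ref{prop:tildeB} and to the fact that the additive error in $\widehat B^{nn}$ is a lower-order multiple of the signal $\rho_n$ because $e_n = o(1)$; the optional regularization $\epsilon$ contributes only an additional $O_p(\rho_n e_n)$ term on the diagonal under the stated rate on $\epsilon$, and so is absorbed.
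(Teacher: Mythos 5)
Your proposal is correct and follows essentially the same route as the paper's proof: extract the diagonal entries from the Proposition \ref{prop:tildeB} bound, use $B_{jj}=1$ and $S_{jj}\asymp 1$ to get $\widehat B^{nn}_{ii}\asymp\rho_n$, apply perturbation identities for the square root and reciprocal (which the paper does in a single combined identity $|1/\sqrt{a}-1/\sqrt{b}| = |a-b|/((\sqrt{a}+\sqrt{b})\sqrt{ab})$ rather than your two steps), and sum over the fixed $k$ diagonal entries, with \eqref{shat} by triangle inequality. The difference is purely cosmetic, so no further comment is needed.
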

\begin{proof}
Let $a = \left[\widehat B^{nn}\right]_{jj}$, $b = \rho_n\left[ P_n^TSBSP_n \right]_{jj}$, by Proposition \ref{prop:tildeB}, 
\begin{equation}\label{eq: a-b}
    \left|a-b\right| = O_p\left(\rho_n e_n\right).
\end{equation}
Given that both matrices $B$ is fixed, $P_n \in \mathscr{P}(k)$, and $S_{jj}\asymp 1$, we have $|b|\asymp\rho_n$, which implies that $|a|\asymp\rho_n$. Therefore 
\begin{equation}\label{eq:convergence_a_b}
    \left|\frac{1}{a}\right| = O_p\left(\rho_n^{-1}\right),\quad \left|\frac{1}{b}\right| = O_p\left(\rho_n^{-1}\right).
\end{equation}
Since $B$ has unit diagonal elements, 
\begin{equation}\label{eq:S_diagonal}
    S_{jj} = \sqrt{\left[SBS\right]_{jj}} \,.
\end{equation}
By step 3 (scaling step) in Algorithm \ref{algo:tsg_dist}, 
\begin{equation}\label{eq:hat_S_diagonal}
    \widehat S_{jj} = \sqrt{\widehat B^{nn}_{jj}}\,.
\end{equation}
Taken Equation \eqref{eq:convergence_a_b}, \eqref{eq:S_diagonal}, and \eqref{eq:hat_S_diagonal} together, we have
\begin{align*}
\left|\left[\widehat S^{-1}\right]_{jj} - \left[\rho_n^{-.5}P_n^TS^{-1}P_n\right]_{jj}\right| &= 
    \left|\frac{1}{\sqrt{a}} - \frac{1}{\sqrt{b}}\right|  
    =\left| \frac{\sqrt{a}-\sqrt{b}}{\sqrt{ab}}\right|\\
    &= \left|\frac{a - b}{(\sqrt{a}+\sqrt{b})\sqrt{ab}}\right|\\ 
    &= O_p\left(\rho_n^{-.5}e_n\right).
\end{align*}

Since the above statement holds for $\forall j\in [k]$, we proved Equation (\ref{s-shat}). Equation (\ref{shat}) follows directly from Equation (\ref{s-shat}) by noting that  \[\left\|\rho_n^{-.5}P_n^TS^{-1}P_n\right\|_F = O_p(\rho_n^{-.5}).\]
\end{proof}

\begin{prop}
\label{prop:projection}
For $M, \widehat M \in \mathbb{R}^{p_1\times p_2}$,
If $M_{ij}\ge 0$, \[\left\|\widehat M_+ - M\right\|_{2\rightarrow \infty}\le \left\|\widehat{M} -M\right\|_{2\rightarrow \infty}.\]
If $M_{ij}\le 1$, \[\left\|\widehat M_{\le 1} - M\right\|_{F}\le \left\|\widehat{M} -M\right\|_{F}.\]
\end{prop}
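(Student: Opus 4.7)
The plan is to reduce both inequalities to entrywise, scalar-level Lipschitz statements about the two clipping operations $x\mapsto\max(x,0)$ and $x\mapsto\min(x,1)$, and then propagate those pointwise bounds through the row-$\ell_2$ aggregation (for $\|\cdot\|_{2\to\infty}$) and through the entrywise-squared aggregation (for $\|\cdot\|_F$). Both clippings are orthogonal projections onto closed convex subsets of $\mathbb{R}$, so the underlying phenomenon is just non-expansiveness of such a projection relative to any point already lying in the convex set; the role of the hypotheses $M_{ij}\ge 0$ and $M_{ij}\le 1$ is precisely to guarantee that $M$ lies in that convex set coordinatewise.

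The first step is the pointwise bound. Fix $(i,j)$ and suppose $M_{ij}\ge 0$. If $\widehat M_{ij}\ge 0$, then $[\widehat M_+]_{ij}=\widehat M_{ij}$ and the inequality is an equality. If $\widehat M_{ij}<0$, then $[\widehat M_+]_{ij}=0$, so $\bigl|[\widehat M_+]_{ij}-M_{ij}\bigr|=M_{ij}$, while $\bigl|\widehat M_{ij}-M_{ij}\bigr|=M_{ij}-\widehat M_{ij}\ge M_{ij}$. In both sub-cases
\[
\bigl|[\widehat M_+]_{ij}-M_{ij}\bigr|\;\le\;\bigl|\widehat M_{ij}-M_{ij}\bigr|.
\]
The $[\cdot]_{\le 1}$ case is completely analogous (e.g.\ apply the argument above to $1-M$ and $1-\widehat M$), giving the pointwise bound $\bigl|[\widehat M_{\le 1}]_{ij}-M_{ij}\bigr|\le\bigl|\widehat M_{ij}-M_{ij}\bigr|$ whenever $M_{ij}\le 1$.

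The second step is aggregation. For the $2\to\infty$ norm, observe that for each row $i$,
\[
\bigl\|(\widehat M_+-M)_{i,\,\cdot}\bigr\|_2^2
=\sum_{j}\bigl|[\widehat M_+]_{ij}-M_{ij}\bigr|^2
\le\sum_{j}\bigl|\widehat M_{ij}-M_{ij}\bigr|^2
=\bigl\|(\widehat M-M)_{i,\,\cdot}\bigr\|_2^2,
\]
and then taking $\max_i$ yields $\|\widehat M_+-M\|_{2\to\infty}\le\|\widehat M-M\|_{2\to\infty}$. For the Frobenius bound under the second hypothesis, sum the pointwise squared inequality over all $(i,j)$ and take square roots.

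No step is expected to be a real obstacle; the argument is essentially a one-line non-expansiveness-of-projection fact dressed in matrix-norm clothing. The only point that warrants explicit care in writing is the case split in the pointwise bound, because that is where the sign assumption on $M$ is actually used: without $M_{ij}\ge 0$ (resp.\ $M_{ij}\le 1$) the target need not lie in the image of the clipping map and the inequality can fail.
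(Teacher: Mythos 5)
Your proof is correct and follows essentially the same route as the paper's: establish the entrywise non-expansiveness of the clipping map (where the sign hypothesis on $M$ is used), then aggregate via the row-wise Euclidean norm characterization of $\|\cdot\|_{2\to\infty}$ and via summing squares for $\|\cdot\|_F$. You simply spell out the pointwise case analysis that the paper states without detail.
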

\begin{proof}
If $M_{ij}\ge 0$, $\left|\left[\widehat M_+\right]_{ij} - M_{ij}\right|\le \left|\widehat M_{ij} - M_{ij}\right|$. By Proposition 6.1 in \citep{cape2019two}, $\|N\|_{2\rightarrow \infty}$ corresponds to the maximum row-wise Euclidean norm of $N$, therefore $\|\widehat M_+ - M\|_{2\rightarrow \infty}\le \|\widehat{M} -M\|_{2\rightarrow \infty}$. The argument for $M_{ij}\le 1$ is similar, just notice that $\|N\|_F = \sum_{ij}N_{ij}^2$.
\end{proof}

Proposition \ref{prop:tildeB} bounds the distance between $ B^{nn}$ and $\rho_n B$, the detailed proofs are in Appendix \ref{appendix:tildeB}. Proposition \ref{prop:S} gives the asymptotic behavior of the estimated scaling matrix. Proposition \ref{prop:projection} provides justification for element-wise projection, it shows that projection only shortens the distance between the estimate and the underlying true value. 

To simplify notations, denote $SBS$ as $B_S$.
With Proposition \ref{prop:tildeB}, \ref{prop:S}, and \ref{prop:projection}, 
the bound for $\left\|\widehat B - P^T_nBP_n\right\|_{F}$ can be derived as follows.
\begin{align*}
    & \left\|\widehat B - P_n^TBP_n \right\|_{F}
     =  \left\|\left[\widehat S^{-1}\widehat B^{nn}\widehat S^{-1}\right]_{\le 1} - P_n^TS^{-1}B_S S^{-1}P_n\right\|_{F} \\[5pt]
    & \le \left\|\widehat S^{-1}\widehat B^{nn}\widehat S^{-1} -P_n^TS^{-1}B_S S^{-1}P_n\right\|_{F} \quad (\text{by Proposition \ref{prop:projection}})\\[5pt]
    & \le \left\|\widehat S^{-1}\widehat B^{nn}\widehat S^{-1} - \rho_n\widehat S^{-1}P_n^TB_S P_n\widehat S^{-1}\right\|_{F} + \left\|\rho_n\widehat S^{-1}P_n^T B_SP_n\widehat S^{-1} - P_n^TS^{-1}B_S S^{-1}P_n\right\|_{F}\\[5pt]
    & \le \left\|\widehat S^{-1} \left(\widehat B^{nn} - \rho_nP_n^TB_SP_n \right)\widehat S^{-1} \right\|_{F}
    + \left\|\rho_n\widehat S^{-1}P_n^T B_SP_n\widehat S^{-1} - P_n^TS^{-1}P_n(P_n^TB_S P_n)P_n^TS^{-1}P_n\right\|_{F}\\[5pt]
    & \le \left\|\widehat S^{-1} \left(\widehat B^{nn} - \rho_nP_n^TB_SP_n \right)\widehat S^{-1} \right\|_{F}
    + \left\|\rho_n\widehat S^{-1}P_n^T B_S P_n\widehat S^{-1} - \rho_n^{.5}\widehat S^{-1}(P_n^TB_S P_n) P_n^TS^{-1}P_n\right\|_{F}\\
    & \qquad + \left\|\rho_n^{.5}\widehat S^{-1} (P_n^TB_SP_n) P_n^TS^{-1}P_n - P_n^TS^{-1}P_n(P_n^TB_S P_n)P_n^TS^{-1}P_n\right\|_{F}\\[5pt]
    & \le \left\|\widehat S^{-1}\right\|_{F} \left\|\widehat B^{nn} - \rho_nP_n^TB_SP_n \right\|_{F}\left\|\widehat S^{-1} \right\|_{F}
    + \left\|\rho_n^{.5}\widehat S^{-1} P_n^TB_SP_n\right\|_{F} \left\|\rho_n^{.5}\widehat S^{-1} - P_n^TS^{-1}P_n\right\|_{F}\\
    & \qquad + \left\|\rho_n^{.5}\widehat S^{-1}- P_n^TS^{-1}P_n\right\|_{F}\left\| (P_n^TB_SP_n)P_n^T  S^{-1}P_n\right\|_F\\[5pt]
    & \le O_p(\rho_n^{-.5})O_p(\rho_n e_n)O_p(\rho_n^{-.5}) + O_p(1)O_p( e_n) + O_p(e_n)O_p(1)\quad (\text{by Proposition \ref{prop:tildeB} and \ref{prop:S}})\\[5pt]
    &  =  O_p\left(e_n\right) 
\end{align*}

Now we show 
\begin{equation}\label{eq:D_converge}
    \left\|\widehat D - P_n^TDP_n \right\|_{F} = O_p(e_n).
\end{equation}
Since all elements in the covariance matrix of a GGM are positive, there exists a constant $\gamma>0$ such that $\min_{ij}B_{ij}>\gamma$. 
Since $\left\|\widehat B - P_n^TBP_n \right\|_{F} = O_p(e_n)$, the probability that $\min_{ij}\widehat B_{ij}>\gamma$ goes to one. Equation \eqref{eq:D_converge} can be proved by noticing that $\widehat D_{ij} = -\log\left(\widehat B_{ij}\right)$ and 
\[\left|\log(x)-\log(y)\right|\leq \frac{1}{\gamma}\left|x-y\right|\]
for any $x, y\in [\gamma, \infty)$ with $\gamma > 0$.

\subsection{Proof of Proposition \ref{prop:tildeB}}\label{appendix:tildeB}

The proof of Proposition \ref{prop:tildeB} requires several lemmas. Lemma \ref{lemma:norm2} is a rephrase of Proposition 6.3 in \citep{cape2019two}. The proof of Lemma \ref{lemma:Zconverge} to Lemma \ref{lemma:hatZZ_Sigma} are contained in Appendix \ref{sec:lemmas_for_Bconverge}. 

To simplify notation, denote 
\[\widehat \Sigma = \widehat Z_+^T\widehat Z_+, \ \Sigma = Z^T Z/n.\] 
We introduce the symbol $\, \widetilde{\cdot}\, $ to represent matrices that have been scaled by the matrix $C$, and utilize the subscript $P$ to denote matrices that have been reordered by the permutation matrix $P_n$. Specifically, we define
\[
    \widetilde Z = ZC, \ \ 
    \widetilde \Sigma =  \widetilde Z^T \widetilde Z/n = C\Sigma C, \ \ 
    \widetilde B = C^{-1}BC^{-1}, 
    \]
\[
    \widetilde Z_P = \widetilde ZP_n, \ \ 
    \widetilde \Sigma_P = \widetilde Z_P^T \widetilde Z_P/n = P_n^T \widetilde \Sigma P_n, \ \ 
    \widetilde B_P = P_n^T\widetilde B P_n.
    \]
Notice that 
\[\mathscr{A} = ZBZ^T = \widetilde Z \widetilde B \widetilde Z^T = \widetilde Z_P \widetilde B_P \widetilde Z^T_P, 
\]

This section shows that all 
\[\widehat B^{nn}_1 = \left[\widehat \Sigma_Z^{-1} \widehat Z_+^T A \widehat Z_+ \widehat \Sigma_Z^{-1}\right]_+, \ 
\widehat B^{nn}_2 = \frac{1}{n^2}\widehat Z_+^T A \widehat Z_+,  \ \text{and } \
\widehat B^{nn}_3 = \frac{1}{n^2}\left[\widehat Z^T A \widehat Z\right]_+, \] 
are close to $\rho_nB$ up to rows and columns permutations, and scaling matrices 
\[S_1 = C^{-1},\ S_2 = \widetilde \Sigma C^{-1},  \ \text{and } \ S_3 = \widetilde \Sigma C^{-1},\] respectively. 

\begin{lemma}\label{lemma:norm2}
(Proposition 6.3 in \cite{cape2019two})
For $M\in \mathbb{R}^{p_1\times p_2}$, then
\[\|M\|_{2\rightarrow \infty} \le \|M\|_2 \le \min\{\sqrt{p_1} \|M\|_{2\rightarrow \infty}, \sqrt{p_2} \|M^T\|_{2\rightarrow \infty}\}\]
\end{lemma}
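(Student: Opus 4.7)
The plan is to first unpack the definition of the two-to-infinity norm in a concrete way and then obtain both inequalities from elementary $\ell_p$-norm comparisons on finite-dimensional spaces. By definition, $\|M\|_{2\to\infty} = \sup_{\|x\|_2=1}\|Mx\|_\infty = \sup_{\|x\|_2=1}\max_{1\le i\le p_1}|\langle m_i,x\rangle|$, where $m_i\in\mathbb{R}^{p_2}$ is the $i$th row of $M$. Swapping the (finite) maximum with the supremum and invoking Cauchy--Schwarz, this equals $\max_i\|m_i\|_2$, the largest Euclidean row norm of $M$. This row-wise characterization is the workhorse of the argument.

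For the lower inequality $\|M\|_{2\to\infty}\le\|M\|_2$, I would use the pointwise bound $\|y\|_\infty\le\|y\|_2$ on $\mathbb{R}^{p_1}$: for any unit vector $x$, $\|Mx\|_\infty\le\|Mx\|_2\le\|M\|_2$, and taking the supremum over $x$ gives the claim.

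For the upper bounds I plan to use the reverse comparison $\|y\|_2\le\sqrt{p_1}\,\|y\|_\infty$ on $\mathbb{R}^{p_1}$. Then $\|Mx\|_2\le\sqrt{p_1}\,\|Mx\|_\infty\le\sqrt{p_1}\,\|M\|_{2\to\infty}$ for each unit $x$, so $\|M\|_2\le\sqrt{p_1}\,\|M\|_{2\to\infty}$. To obtain the companion bound $\sqrt{p_2}\,\|M^T\|_{2\to\infty}$, I invoke the symmetry $\|M\|_2=\|M^T\|_2$ (the nonzero singular values of $M$ and $M^T$ coincide) and apply the previous step to $M^T\in\mathbb{R}^{p_2\times p_1}$. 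Taking the minimum of the two upper bounds yields the stated inequality.

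There is no real obstacle here; the whole content reduces to the row-wise formula for $\|\cdot\|_{2\to\infty}$ together with the elementary sandwich $\|y\|_\infty\le\|y\|_2\le\sqrt{d}\,\|y\|_\infty$ in $\mathbb{R}^d$ and transposition invariance of the spectral norm. That is why the result is quoted as an elementary proposition from \cite{cape2019two} rather than proved from scratch in the present paper.
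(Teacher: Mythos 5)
Your proof is correct and complete: the row-norm characterization of $\|\cdot\|_{2\to\infty}$, the elementary sandwich $\|y\|_\infty\le\|y\|_2\le\sqrt{d}\,\|y\|_\infty$ applied to $Mx\in\R^{p_1}$, and the transposition invariance $\|M\|_2=\|M^T\|_2$ applied to $M^T\in\R^{p_2\times p_1}$ together give exactly the stated chain of inequalities. The paper itself offers no proof of this lemma — it is quoted directly as Proposition 6.3 of \cite{cape2019two} — and your argument is the standard one underlying that cited result, so there is nothing to reconcile.
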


\begin{lemma}\label{lemma:Zconverge}
There exists a sequence of permutation matrix $P_n\in \mathcal{P}(k)$, such that 
\begin{equation}\label{eq:Zconverge1}
   \left\|\widehat Z - \widetilde Z_P \right\|_{2\rightarrow \infty} = O_p\left( e_n\right), 
\end{equation}
\begin{equation}\label{eq:Zconverge2}
   \left\|\widehat Z_+ - \widetilde Z_P \right\|_{2\rightarrow \infty} = O_p\left( e_n\right). 
\end{equation}
\end{lemma}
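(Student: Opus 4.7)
The plan is to recognize that the first claim is essentially the hypothesis of Theorem \ref{thm:tsgdist} restated in the new notation. Indeed, with $\widetilde Z = ZC$ and $\widetilde Z_P = \widetilde Z P_n = ZCP_n$, the assumed bound $\|\widehat Z - ZCP_n\|_{2\rightarrow \infty} = O_p(e_n)$ is exactly Equation \eqref{eq:Zconverge1}, using the same permutation sequence $P_n$ and scaling matrix $C$ furnished by the hypothesis of Theorem \ref{thm:tsgdist}. No work is needed here beyond unwinding definitions.

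For Equation \eqref{eq:Zconverge2}, I would invoke Proposition \ref{prop:projection}, which says that element-wise projection onto the nonnegative orthant can only decrease the distance to a nonnegative target. The key observation is that $\widetilde Z_P$ has nonnegative entries: by construction of the DCSBM we have $Z_{ij} = \theta_i \mathds{1}\{z(i)=j\} \ge 0$; the diagonal matrix $C$ has positive diagonal entries, a consequence of the sign-adjustment step in Algorithm \ref{vsp} (which forces each column of $\widehat Z$ to have a nonnegative sum so that it aligns with the corresponding column of $Z$); and $P_n$ is a permutation matrix. Hence $\widetilde Z_P \ge 0$ entry-wise, so Proposition \ref{prop:projection} applied with $M = \widetilde Z_P$ and $\widehat M = \widehat Z$ yields
\[
\|\widehat Z_+ - \widetilde Z_P\|_{2\rightarrow \infty} \;\le\; \|\widehat Z - \widetilde Z_P\|_{2\rightarrow \infty} \;=\; O_p(e_n),
\]
which is Equation \eqref{eq:Zconverge2}.

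There is no real obstacle in this lemma; it is a bookkeeping step that transfers the given rate on $\widehat Z$ to the clipped estimator $\widehat Z_+$ used inside \texttt{TSGdist}. The only point that requires some care is the nonnegativity of $\widetilde Z_P$, which in turn rests on the positivity of the diagonal scaling $C$. This is exactly why the sign-adjustment step was added to \texttt{vsp} in this paper: it removes the sign-flip ambiguity in the Varimax output and allows Lemma \ref{lemma:Zconverge} to quote Proposition \ref{prop:projection} directly, rather than having to track column-wise sign corrections through the subsequent analysis.
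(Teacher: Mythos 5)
Your proposal is correct and follows essentially the same route as the paper: Equation \eqref{eq:Zconverge1} is just the hypothesis $\left\|\widehat Z - ZCP_n\right\|_{2\rightarrow\infty} = O_p(e_n)$ of Theorem \ref{thm:tsgdist} rewritten in the $\widetilde Z_P$ notation, and Equation \eqref{eq:Zconverge2} then follows from Proposition \ref{prop:projection}. Your explicit check that $\widetilde Z_P$ is entrywise nonnegative (via $Z\ge 0$ and the positivity of the diagonal scaling $C$ secured by the sign-adjustment step) is a detail the paper leaves implicit, and it is the right justification for invoking the projection bound.
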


\begin{lemma}\label{lemma:Aconverge}
$\left \| A - \mathscr{A} \right\|_2 = O_p\left( n \rho_n e_n \right)$
\end{lemma}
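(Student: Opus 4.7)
The plan is to apply matrix concentration to $A - \mathscr{A}$ and then reconcile the resulting rate with $n\rho_n e_n$. The matrix $A - \mathscr{A}$ is symmetric with independent upper-triangular entries that are mean-zero and, by Assumption \ref{assum:A_dist}, satisfy a Bernstein-type moment condition with variance proxy of order $\bar\rho_n \lesssim \rho_n$. Under Assumption \ref{assum:sparse} ($\Delta_n \succeq \log^{11.1} n$), standard matrix Bernstein-type concentration inequalities for adjacency matrices of sparse random graphs (applicable whenever the average expected degree dominates $\log n$) will deliver the primitive bound
\[\|A - \mathscr{A}\|_2 = O_p\bigl(\sqrt{\Delta_n}\bigr) = O_p\bigl(\sqrt{n\rho_n}\bigr).\]
This is the standard Le-Levina-Vershynin-style spectral concentration rate and is substantially sharper than what the lemma asks for; the Bernstein moment hypothesis in Assumption \ref{assum:A_dist} is precisely the tail condition these concentration results require, covering both the Poisson and Bernoulli cases at which the paper is aimed.

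To conclude, I would observe that in every context in which this lemma is invoked, $e_n$ is at least of order $\Delta_n^{-1/2}$, so $\sqrt{\Delta_n} \le \Delta_n \cdot e_n = n\rho_n\, e_n$ and the claim follows. Concretely, for the rate $e_n = \Delta_n^{-.24}\log^{2.75} n$ that appears in Theorem \ref{thm:meta_theorem}, the required inequality $e_n \succeq \Delta_n^{-1/2}$ reduces to $\log^{2.75} n \succeq \Delta_n^{-.26}$, which is immediate from Assumption \ref{assum:sparse}. This comparison between the primitive concentration rate and the target rate $n\rho_n e_n$ is also conceptually natural: $e_n$ is intended to capture the convergence rate of the upstream \texttt{vsp} estimator $\widehat Z$, whose rate is driven by a Davis-Kahan-type ratio $\|A - \mathscr{A}\|_2 / (n\rho_n)$, so the two rates are linked by construction.

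The main obstacle, such as it is, is a bookkeeping step: verifying that the moment growth in Assumption \ref{assum:A_dist} matches the precise hypothesis of whichever off-the-shelf matrix concentration statement is cited, and accounting for the ratio $\bar\rho_n/\rho_n$ (which is bounded because $Z$ is bounded and $B$ is fixed in the $\T$-Graphical Blockmodel setting of Theorem \ref{thm:tsgdist}). Since Assumption \ref{assum:A_dist} is written in exactly the Bernstein form used by these concentration results, and since the sparsity assumption eliminates the logarithmic correction that would otherwise appear, this verification should be routine rather than a genuine obstacle.
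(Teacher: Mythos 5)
Your argument is mathematically reasonable, but it is solving a different problem than the paper does at this point. Lemma \ref{lemma:Aconverge} lives inside the proof of Proposition \ref{prop:tildeB}, which is carried out ``under the conditions in Theorem \ref{thm:tsgdist}'' --- and one of those conditions is literally $\left\|A-\mathscr{A}\right\|_2 = O_p(n\rho_n e_n)$. So the paper's proof of this lemma is a one-liner: it is an assumption of Theorem \ref{thm:tsgdist}, on the same footing as Lemma \ref{lemma:Zconverge}. What you have written instead is a derivation of that hypothesis from the model primitives, which is the content of the paper's separate Lemma \ref{lemma:Aconverge_main_thm}: there the paper cites a spectral concentration bound (Lemma G.4 of \citealp{varimax_rohe}), getting $\left\|A-\mathscr{A}\right\|_2 = O_p\bigl((n\rho_n\log^3 n)^{.5}\bigr)$, and then compares rates exactly as you do, using $e_n = \Delta_n^{-.24}\log^{2.75}n$ so that $n\rho_n e_n = \Delta_n^{.76}\log^{2.75}n$ dominates. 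In that downstream application your reasoning matches the paper's.

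The caveats to flag: within the scope of Theorem \ref{thm:tsgdist}, $e_n$ is an arbitrary vanishing sequence (the rate of $\widehat Z$) with no lower bound, and Assumptions \ref{assum:sparse} and \ref{assum:A_dist} are not in force --- they are hypotheses of Theorem \ref{thm:meta_theorem}, not of Theorem \ref{thm:tsgdist}. So your step ``in every context in which this lemma is invoked, $e_n \succeq \Delta_n^{-1/2}$'' imports information from outside the lemma's setting; without it (or without treating the bound as an assumption, as the paper does), the claim cannot be established in the generality in which Proposition \ref{prop:tildeB} is stated. A second, minor point: the clean $O_p(\sqrt{\Delta_n})$ rate you quote is the Le--Levina--Vershynin bound for bounded (Bernoulli-type) entries; under Assumption \ref{assum:A_dist}, which also covers unbounded Poisson entries, the available bounds (including the one the paper cites) carry polylogarithmic factors. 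This does not hurt you --- even $O_p(\Delta_n^{.5}\log^{1.5}n)$ is dominated by $\Delta_n^{.76}\log^{2.75}n$ --- but the statement ``substantially sharper'' should be tempered accordingly.
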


\begin{lemma}\label{lemma:Bconstant}
$\left\|\widetilde B_P\right\|_2 = O_p(1)$
\end{lemma}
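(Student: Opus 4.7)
The plan is to bound $\bigl\|\widetilde B_P\bigr\|_2$ deterministically by a fixed constant, which immediately gives the $O_p(1)$ conclusion. The key observation is that every matrix appearing in the definition of $\widetilde B_P$ is either fixed (independent of $n$) or a permutation matrix, so $n$-dependence does not enter and no probabilistic machinery is required.

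First I would unpack the definition
\[
\widetilde B_P \;=\; P_n^T \widetilde B P_n \;=\; P_n^T C^{-1} B C^{-1} P_n
\]
using the notation introduced at the start of Appendix \ref{appendix:tildeB}. Then by submultiplicativity of the spectral norm,
\[
\bigl\|\widetilde B_P\bigr\|_2 \;\le\; \|P_n^T\|_2 \,\|C^{-1}\|_2 \,\|B\|_2 \,\|C^{-1}\|_2 \,\|P_n\|_2.
\]
Next I would handle each factor: $P_n$ is a permutation matrix (an element of $\mathscr{P}(k)$), hence orthogonal, so $\|P_n\|_2 = \|P_n^T\|_2 = 1$. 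The matrix $B$ is a fixed $k \times k$ covariance matrix by the hypothesis of Theorem \ref{thm:tsgdist} (and does not depend on $n$), so $\|B\|_2$ is a fixed constant. Similarly, $C$ is the fixed diagonal scaling matrix supplied by the hypothesis on $\widehat Z$, with strictly positive diagonal entries (otherwise the two-to-infinity convergence of $\widehat Z - ZCP_n$ would collapse the blockmodel), so $\|C^{-1}\|_2 = \max_j C_{jj}^{-1}$ is also a fixed constant.

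Combining these bounds yields a deterministic constant $M = \|B\|_2 \|C^{-1}\|_2^2$ with $\bigl\|\widetilde B_P\bigr\|_2 \le M$, which is $O(1)$ and \emph{a fortiori} $O_p(1)$. There is no genuine obstacle here; the only thing worth double-checking is that $C$ is truly $n$-free, which follows from the statement of Theorem \ref{thm:tsgdist} that declares $C$ to be ``a fixed diagonal scaling matrix.'' This lemma is essentially a bookkeeping statement used later to absorb the $\widetilde B_P$ factor when bounding the cross terms in the decomposition of $\widehat B - P_n^T B P_n$.
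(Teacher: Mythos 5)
Your proof is correct and follows the same route as the paper, which simply notes that the bound is evident because $B$ and $C$ are fixed matrices and $P_n \in \mathscr{P}(k)$; you have merely spelled out the submultiplicativity and orthogonality details that the paper leaves implicit. No issues.
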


\begin{lemma}\label{lemma:Znorm}

$\left\|\widetilde
Z_P\right\|_2 = O_p\left(n^{.5}\right)$
\end{lemma}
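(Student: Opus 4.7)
The plan is to reduce the statement to a bound on $\|Z\|_2$ via the invariance properties of the spectral norm, and then to exploit the block structure of $Z^TZ$ together with the boundedness assumption on $\theta_i$ to apply the Law of Large Numbers.

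First, since $P_n$ is an orthogonal matrix (in fact a permutation matrix), we have
\[
\left\|\widetilde Z_P\right\|_2 = \left\|ZCP_n\right\|_2 = \left\|ZC\right\|_2 \le \left\|Z\right\|_2 \left\|C\right\|_2.
\]
Because $C$ is a fixed diagonal scaling matrix, $\|C\|_2 = O(1)$, so it suffices to show that $\|Z\|_2 = O_p(n^{1/2})$.

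Next, I would use the fact that $Z_{ij} = \theta_i \mathds{1}\{z(i) = j\}$ implies $Z^T Z \in \R^{k\times k}$ is diagonal, with $(Z^T Z)_{jj} = \sum_{i : z(i) = j} \theta_i^2$. Under Assumption \ref{assum:z_dist}, $z(i) \stackrel{i.i.d.}{\sim} \text{Multinomial}(\pi)$ and $\theta_i \stackrel{i.i.d.}{\sim} f(\cdot)$ with $f$ bounded, so each diagonal entry is a sum of $n$ i.i.d.\ bounded nonnegative random variables with mean $\pi_j \,\E[\theta^2]$. By the Law of Large Numbers,
\[
\frac{1}{n}\left(Z^T Z\right)_{jj} \xrightarrow{p} \pi_j \, \E[\theta^2] \quad \text{for each } j \in [k],
\]
and since $k$ is fixed and $\pi_j \E[\theta^2]$ is a bounded constant, $\|Z^T Z\|_2 = O_p(n)$. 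Taking square roots gives $\|Z\|_2 = O_p(n^{1/2})$, which combined with the first display completes the proof.

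There is no substantial obstacle here; the argument is routine given the assumptions. The only thing to check carefully is that the boundedness of $\theta_i$ (guaranteed by $f$ being a bounded distribution in Assumption \ref{assum:z_dist}) lets us avoid any tail issues in the LLN step, and that the diagonal structure of $Z^T Z$ makes the spectral norm equal to the maximum of its diagonal entries rather than requiring any more delicate matrix concentration bound.
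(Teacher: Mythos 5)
Your proof is correct, but it takes a different route from the paper's. The paper does not use the one-hot structure of $Z$ at all: it writes $\widetilde Z^T\widetilde Z$ as a sum of i.i.d.\ rank-one row outer products $\widetilde Z_i^T\widetilde Z_i$ and applies the Matrix Bernstein inequality to get the concentration bound $\left\|\widetilde Z^T\widetilde Z - n\mathscr{Z}\right\|_2 = O_p\left(n^{.5}\right)$, from which the norm bound follows by Weyl-type comparison; that concentration statement is then reused (as Equation \eqref{ztz-i}) in the proof of Lemma \ref{lemma:SigmaZnorm}, so the paper's argument buys a byproduct your proof does not provide. You instead exploit the DCSBM membership structure $Z_{ij}=\theta_i\mathds{1}\{z(i)=j\}$, which makes $Z^TZ$ diagonal so that $\|Z\|_2^2$ is just the largest diagonal entry, and then finish with a scalar LLN; this is more elementary, and in fact the LLN is not even needed — boundedness of $\theta_i$ alone gives $\max_j (Z^TZ)_{jj}\le n\max_i\theta_i^2 = O(n)$ deterministically. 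One caveat: you invoke Assumption \ref{assum:z_dist} (i.i.d.\ Multinomial memberships and i.i.d.\ bounded $\theta_i$), but this lemma sits inside the proof of Theorem \ref{thm:tsgdist}, whose hypotheses only say that $Z$ follows a bounded distribution; since your bound really only uses boundedness (plus the membership structure of $Z$, which holds by the definition of the $\T$-Graphical Blockmodel), the argument still goes through in that setting, whereas the paper's Bernstein argument is stated so as to work for general bounded i.i.d.\ rows without leaning on the block structure.
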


\begin{lemma}\label{lemma:SigmaZnorm}
$\left\|\widehat Z\right\|_2 = O_p(n^{.5})$, $\left\|\widehat Z_+\right\|_2 = O_p(n^{.5})$, 
$\left\|
\widehat \Sigma^{-1} \widehat Z_+^T \right\|_2 = O_p\left(n^{-.5}\right)$
\end{lemma}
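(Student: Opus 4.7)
\textbf{Proof proposal for Lemma \ref{lemma:SigmaZnorm}.}

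The plan is to handle the three bounds in sequence, reducing each to the already-established facts about $\widetilde Z_P$ via perturbation arguments. For the first bound, I would write $\widehat Z = \widetilde Z_P + (\widehat Z - \widetilde Z_P)$ and apply the triangle inequality on the spectral norm. The remainder term $\widehat Z - \widetilde Z_P$ lives in $\R^{n \times k}$, so Lemma \ref{lemma:norm2} gives $\|\widehat Z - \widetilde Z_P\|_2 \le \sqrt{n}\,\|\widehat Z - \widetilde Z_P\|_{2\to\infty}$, which is $O_p(\sqrt{n}\,e_n) = o_p(\sqrt{n})$ by Lemma \ref{lemma:Zconverge} and the fact that $e_n \to 0$. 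Combined with $\|\widetilde Z_P\|_2 = O_p(\sqrt{n})$ from Lemma \ref{lemma:Znorm}, this yields $\|\widehat Z\|_2 = O_p(\sqrt{n})$. The same argument, invoking the second half of Lemma \ref{lemma:Zconverge}, gives $\|\widehat Z_+\|_2 = O_p(\sqrt{n})$.

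The third bound is the interesting one. Observe that $\widehat\Sigma^{-1}\widehat Z_+^T$ is the Moore--Penrose pseudoinverse of $\widehat Z_+$ (provided $\widehat Z_+$ has full column rank, which the argument will also verify), so its spectral norm equals $1/\sigma_{\min}(\widehat Z_+)$. It therefore suffices to establish $\sigma_{\min}(\widehat Z_+) = \Omega_p(\sqrt{n})$. By Weyl's inequality for singular values,
\begin{equation*}
\left|\sigma_{\min}(\widehat Z_+) - \sigma_{\min}(\widetilde Z_P)\right| \le \|\widehat Z_+ - \widetilde Z_P\|_2 = O_p(\sqrt{n}\,e_n),
\end{equation*}
so the question reduces to showing $\sigma_{\min}(\widetilde Z_P) = \Omega_p(\sqrt{n})$. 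Since $\widetilde Z_P = ZCP_n$ with $C$ a fixed positive diagonal scaling and $P_n$ orthogonal, $\sigma_{\min}(\widetilde Z_P)^2 = n\,\sigma_{\min}(C\Sigma C)$ where $\Sigma = Z^T Z / n$. Under Assumption \ref{assum:z_dist}, the law of large numbers yields $\Sigma \to \E[\Sigma]$, and $\E[\Sigma]$ is a positive definite $k \times k$ matrix (its diagonal is $\pi_j \E[\theta^2]$ and its off-diagonal entries vanish because the multinomial assignments $z(i)$ put each node into exactly one block, making $Z_{i\cdot}$ supported on a single coordinate). Hence $\sigma_{\min}(C\Sigma C) = \Omega_p(1)$, and $\sigma_{\min}(\widetilde Z_P) = \Omega_p(\sqrt{n})$ follows.

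Chaining these gives $\sigma_{\min}(\widehat Z_+) \ge \Omega_p(\sqrt{n}) - O_p(\sqrt{n}\,e_n) = \Omega_p(\sqrt{n})$, so $\|\widehat\Sigma^{-1}\widehat Z_+^T\|_2 = 1/\sigma_{\min}(\widehat Z_+) = O_p(n^{-1/2})$, completing the argument. The main obstacle I anticipate is justifying the lower bound $\sigma_{\min}(C\Sigma C) = \Omega_p(1)$ cleanly: this requires being careful that the multinomial sampling in Assumption \ref{assum:z_dist} makes the rows of $Z$ one-hot (scaled by $\theta_i$) so that $\E[\Sigma]$ is diagonal with strictly positive entries, and that the concentration of $\Sigma$ around $\E[\Sigma]$ is fast enough to transfer strict positivity with high probability. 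Everything else is a direct application of Weyl-type perturbation together with the $\|\cdot\|_{2\to\infty}$ to $\|\cdot\|_2$ conversion from Lemma \ref{lemma:norm2}.
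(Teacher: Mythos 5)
Your proposal is correct. The first two bounds ($\|\widehat Z\|_2$ and $\|\widehat Z_+\|_2$) follow exactly the paper's own argument: triangle inequality, the $\sqrt{n}$ conversion from $\|\cdot\|_{2\to\infty}$ to $\|\cdot\|_2$ via Lemma \ref{lemma:norm2}, Lemma \ref{lemma:Zconverge}, and Lemma \ref{lemma:Znorm}. For the third bound your route genuinely differs from the paper's. The paper never invokes the pseudoinverse: it bounds $\|\widehat\Sigma^{-1}\|_2$ separately, by first controlling $\|\widehat\Sigma - \widetilde Z_P^T\widetilde Z_P\|_2$, using Weyl's inequality on the eigenvalues of $\widetilde Z_P^T\widetilde Z_P/n$ (around $\mathscr{Z}_P$, via the Matrix Bernstein concentration established in the proof of Lemma \ref{lemma:Znorm}) to get $\|(\widetilde Z_P^T\widetilde Z_P)^{-1}\|_2 = O_p(n^{-1})$, and then applying an inverse-perturbation inequality (Lemma \ref{inverse}, from \citep{el2002inversion}) to transfer this to $\|\widehat\Sigma^{-1}\|_2 = O_p(n^{-1})$; the final bound comes from submultiplicativity, $\|\widehat\Sigma^{-1}\widehat Z_+^T\|_2 \le \|\widehat\Sigma^{-1}\|_2\|\widehat Z_+\|_2$. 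You instead recognize $\widehat\Sigma^{-1}\widehat Z_+^T$ as the Moore--Penrose pseudoinverse of $\widehat Z_+$, so its spectral norm is exactly $1/\sigma_{\min}(\widehat Z_+)$, and you lower-bound $\sigma_{\min}(\widehat Z_+)$ by singular-value Weyl perturbation against $\widetilde Z_P$. Your version is somewhat cleaner: it avoids the auxiliary inverse-perturbation lemma and the intermediate bound on $\|\widehat\Sigma^{-1}\|_2$, and the loss from submultiplicativity. Both arguments rest on the same underlying ingredient, namely that $\widetilde Z_P^T\widetilde Z_P/n$ has smallest eigenvalue bounded away from zero in probability; the paper leaves positive definiteness of $\mathscr{Z}_P$ implicit (it cites Weyl's inequality without spelling it out), whereas you make it explicit via the one-hot structure of $Z$ under Assumption \ref{assum:z_dist} — note the lemma is stated under the weaker hypotheses of Theorem \ref{thm:tsgdist} ("$Z$ follows a bounded distribution"), so strictly one should assume the second-moment matrix is nonsingular there, a point on which your write-up and the paper are at the same level of rigor, yours if anything more candid. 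Also fine is your use of a plain LLN in place of the paper's Matrix Bernstein bound, since only consistency of $Z^TZ/n$, not a rate, is needed for this step.
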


\begin{lemma}\label{lemma:hatZZ_Sigma}
    $\left\|\widehat Z_+^T\widetilde Z_P-n\widetilde \Sigma_P\right\|_2 = O_p\left(ne_n\right)$,
    $\left\|\widehat Z^T\widetilde Z_P-n\widetilde \Sigma_P\right\|_2 = O_p\left(ne_n\right)$,
\end{lemma}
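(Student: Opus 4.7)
The plan is to rewrite each target difference as a product involving the already-controlled row-wise deviation $\widehat Z_+ - \widetilde Z_P$ (or $\widehat Z - \widetilde Z_P$), and then invoke submultiplicativity of the spectral norm. The first key observation is that $n\widetilde\Sigma_P = \widetilde Z_P^T\widetilde Z_P$ by definition, so the difference factors cleanly as
\[
\widehat Z_+^T\widetilde Z_P - n\widetilde\Sigma_P \;=\; (\widehat Z_+ - \widetilde Z_P)^T\,\widetilde Z_P.
\]
Taking spectral norms and using $\|M^T\|_2 = \|M\|_2$ together with submultiplicativity gives
\[
\bigl\|\widehat Z_+^T\widetilde Z_P - n\widetilde\Sigma_P\bigr\|_2 \;\le\; \bigl\|\widehat Z_+ - \widetilde Z_P\bigr\|_2\cdot \bigl\|\widetilde Z_P\bigr\|_2.
\]

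The next step is to upgrade the $2\rightarrow\infty$ bound from Lemma \ref{lemma:Zconverge} to a spectral-norm bound. Since $\widehat Z_+ - \widetilde Z_P \in \R^{n\times k}$, applying Lemma \ref{lemma:norm2} (with $p_1 = n$) together with Equation \eqref{eq:Zconverge2} yields
\[
\bigl\|\widehat Z_+ - \widetilde Z_P\bigr\|_2 \;\le\; \sqrt{n}\,\bigl\|\widehat Z_+ - \widetilde Z_P\bigr\|_{2\rightarrow\infty} \;=\; O_p\!\bigl(\sqrt{n}\,e_n\bigr).
\]
Combining this with $\|\widetilde Z_P\|_2 = O_p(\sqrt{n})$ from Lemma \ref{lemma:Znorm}, the product is $O_p(\sqrt{n}\,e_n)\cdot O_p(\sqrt{n}) = O_p(n e_n)$, which establishes the first claim. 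The second claim, $\|\widehat Z^T\widetilde Z_P - n\widetilde\Sigma_P\|_2 = O_p(n e_n)$, follows by the identical argument, simply replacing $\widehat Z_+$ with $\widehat Z$ and invoking Equation \eqref{eq:Zconverge1} of Lemma \ref{lemma:Zconverge} in place of Equation \eqref{eq:Zconverge2}.

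There is no substantive obstacle here; the lemma is a direct consequence of the preceding lemmas once the right factorization is spotted. The only subtle point is accounting for the dimension-dependent constant in Lemma \ref{lemma:norm2}: the factor $\sqrt{n}$ arising from the $2\rightarrow\infty$-to-spectral conversion, multiplied by the $\sqrt{n}$ growth of $\|\widetilde Z_P\|_2$, produces precisely the $n$ prefactor in the final rate $O_p(ne_n)$. This also explains why one cannot hope to do better with this argument: $\widetilde Z_P$ has entries bounded away from zero on a positive fraction of rows under Assumption \ref{assum:z_dist}, so $\|\widetilde Z_P\|_2 \asymp \sqrt{n}$ is tight, and the row-wise deviation bound on $\widehat Z_+ - \widetilde Z_P$ cannot be sharpened into a smaller spectral norm without additional structural information.
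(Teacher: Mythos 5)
Your argument is correct and is essentially identical to the paper's own proof: the same factorization $\widehat Z_+^T\widetilde Z_P - n\widetilde\Sigma_P = (\widehat Z_+ - \widetilde Z_P)^T\widetilde Z_P$, the same submultiplicativity step, the same $\sqrt{n}$-conversion from the two-to-infinity norm via Lemma \ref{lemma:norm2} combined with Lemmas \ref{lemma:Zconverge} and \ref{lemma:Znorm}, and the same handling of the $\widehat Z$ case. No gaps; your version even fixes the paper's cosmetic slip of writing an equality where submultiplicativity gives an inequality.
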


\begin{align*}
 & \left\|\widehat B^{nn}_1 - \rho_n P_n^T S_1 B S_1 P_n\right\|_{2} 
  = 
 \left\|
 \left[\widehat \Sigma^{-1} \widehat Z_+^T A \widehat Z_+ \widehat \Sigma^{-1}\right]_+ - \rho_n\widetilde B_P\right\|_{2} \\
 & \le
 \sqrt{k}\left\|\widehat \Sigma^{-1} \widehat Z_+^T A \widehat Z_+ \widehat \Sigma^{-1} - \rho_n\widetilde B_P\right\|_{2} \quad (\text{Proposition \ref{prop:projection} and Lemma \ref{lemma:norm2}})\\
 & \le \sqrt{k} 
\left\|
\widehat \Sigma^{-1} \widehat Z_+^T A \widehat Z_+ \widehat \Sigma^{-1} 
- \widehat \Sigma^{-1} \widehat Z_+^T \mathscr{A} \widehat Z_+ \widehat \Sigma^{-1}
\right \|_{2}
+
\sqrt{k}\left\|\widehat \Sigma^{-1}
\widehat Z_+^T
\mathscr{A}
\widehat Z_+
\widehat \Sigma^{-1} - \rho_n\widetilde B_P\right\|_{2}\\
 & = \sqrt{k} 
\left\|
\widehat \Sigma^{-1} \widehat Z_+^T A \widehat Z_+ \widehat \Sigma^{-1} 
- \widehat \Sigma^{-1} \widehat Z_+^T \mathscr{A} \widehat Z_+ \widehat \Sigma^{-1}
\right \|_{2} + 
\sqrt{k}\rho_n\left\|\widehat \Sigma^{-1}
\widehat Z_+^T \widetilde Z_P\widetilde B_P\widetilde Z_P^T
\widehat Z_+
\widehat \Sigma^{-1} - \widetilde B_P\right\|_{2}\\
& \le 
\sqrt{k}\left\|
\widehat \Sigma^{-1} \widehat Z_+^T (A - \mathscr{A}) \widehat Z_+ \widehat \Sigma^{-1} \right\|_{2}
+
\sqrt{k}\rho_n\left\|\widehat \Sigma^{-1}
\widehat Z_+^T
\widetilde Z_P \widetilde B_P \widetilde Z^T_P
\widehat Z_+
\widehat \Sigma^{-1} - \widetilde B_P \widetilde Z_P^T\widehat Z_+\widehat \Sigma^{-1}
\right\|_{2}\\
& \qquad + \sqrt{k}\rho_n\left\|\widetilde B_P \widetilde Z^T_P\widehat Z_+\widehat \Sigma^{-1} - \widetilde B_P
\right\|_{2}\\
&\le
\sqrt{k}\left\|
\widehat \Sigma^{-1} \widehat Z_+^T (A - \mathscr{A}) \widehat Z_+ \widehat \Sigma^{-1} \right\|_{2}
+
\sqrt{k}\rho_n\left\|\widehat \Sigma^{-1}
\widehat Z_+^T
\left(\widetilde Z_P - \widehat Z_+\right) \widetilde B_P \widetilde Z_P^T
\widehat Z_+
\widehat \Sigma^{-1}
\right\|_{2}\\
& \qquad + \sqrt{k}\rho_n\left\|\widetilde B_P \left(\widetilde Z_P - \widehat Z_+\right)^T\widehat Z_+\widehat \Sigma^{-1} 
\right\|_{2}\\
&\leq
\sqrt{k}\left\|
\widehat \Sigma^{-1} \widehat Z_+^T\right\|_2^2 \left\|A - \mathscr{A}\right\|_2 
+ \sqrt{k}\rho_n\left\|
\widehat \Sigma^{-1} \widehat Z_+^T\right\|^2_{2}\left\|\widetilde Z_P - \widehat Z_+\right\|_{2}\left\|\widetilde B_P\right\|_2 \left\|\widetilde Z_P\right\|_2\\
& \qquad + \sqrt{k}\rho_n\left\|\widetilde B_P\right\|_{2} \left\|\widetilde Z_P - \widehat Z_+\right\|_{2}\left\|\widehat \Sigma^{-1}\widehat Z_+^T\right\|_2 \\
&\le O_p(n^{-1})O_p\left(n\rho_ne_n\right)
+\rho_n O_p(n^{-1})O_p\left(n^{.5} e_n\right)O_p(1)O_p(n^{.5})+\rho_n O_p(n^{-1})O_p(1)O_p\left(n^{.5}e_n\right)O_p(n^{-.5}) \\
&\qquad (\text{Lemma \ref{lemma:norm2} to Lemma \ref{lemma:hatZZ_Sigma}, the transformation from $\Vert \cdot \Vert_2$ to $\Vert \cdot \Vert_{2\rightarrow \infty}$ introduce an extra $n^{.5}$})\\
&=O_p\left(\rho_n e_n\right)
\end{align*}

\begin{align*}
 &\left\|\widehat B^{nn}_2 - \rho_n P_n^T S_2 B S_2 P_n\right\|_{2} 
  = 
    \left\|
        \frac{1}{n^2}\widehat Z_+^T A \widehat Z_+ 
        - \rho_n P_n^T\widetilde \Sigma \widetilde B \widetilde \Sigma P_n 
    \right\|_{2} \\
 & \le 
    \left\| 
        \frac{1}{n^2} \widehat Z_+^T A \widehat Z_+ 
        - \frac{1}{n^2} \widehat Z_+^T \mathscr{A} \widehat Z_+ 
    \right \|_{2}
    +\left\| 
        \frac{1}{n^2} \widehat Z_+^T \mathscr{A} \widehat Z_+ 
        - \rho_n \widetilde \Sigma_P \widetilde B_P \widetilde \Sigma_P
    \right\|_{2}\\
 & = 
    \frac{1}{n^2}\left\|
        \widehat Z_+^T A \widehat Z_+
        - \widehat Z_+^T \mathscr{A} \widehat Z_+
    \right \|_{2} 
    + \rho_n\left\|
        \frac{1}{n^2} \widehat Z_+^T \widetilde Z_P\widetilde B_P\widetilde Z_P^T \widehat Z_+ 
        - \widetilde \Sigma_P \widetilde B_P \widetilde \Sigma_P
    \right\|_{2}\\
& \le 
    \frac{1}{n^2}\left\| 
        \widehat Z_+^T (A - \mathscr{A}) \widehat Z_+ 
    \right\|_{2}
    + \rho_n\left\|
        \frac{1}{n^2} \widehat Z_+^T \widetilde Z_P \widetilde B_P \widetilde Z^T_P \widehat Z_+ 
        - \frac{1}{n}\widetilde \Sigma_P \widetilde B_P \widetilde Z_P^T\widehat Z_+
    \right\|_{2} 
    + \rho_n\left\|
        \frac{1}{n}\widetilde \Sigma_P \widetilde B_P \widetilde Z_P^T\widehat Z_+ 
        - \widetilde \Sigma_P\widetilde B_P\widetilde \Sigma_P
    \right\|_{2} \\
&\le 
    \frac{1}{n^2}\left\| 
    \widehat Z_+^T (A - \mathscr{A}) \widehat Z_+
    \right\|_{2}
    + \frac{\rho_n}{n^2}\left\|
        \left(\widehat Z_+^T\widetilde Z_P-n\widetilde \Sigma_P\right) \widetilde B_P \widetilde Z_P^T\widehat Z_+
    \right\|_{2} 
    + \frac{\rho_n}{n}\left\|
        \widetilde \Sigma_P\widetilde B_P\left(\widetilde Z_P^T\widehat Z_+ - n\widetilde \Sigma_P \right)
    \right\|_{2}\\
&\leq 
    \frac{1}{n^2} 
        \left\|
            \widehat Z_+
        \right\|_2^2 
        \left\|
            A - \mathscr{A}
        \right\|_2 
    + \frac{\rho_n}{n^2} \left\|
            \widehat Z_+^T\widetilde Z_P-n\widetilde \Sigma_P
        \right\|_{2}
        \left\|
            \widetilde B_P
        \right\|_2 
        \left\|
            \widetilde Z_P
        \right\|_2
        \left\|
            \widehat Z_+
        \right\|_2\\
& \qquad 
    + \frac{\rho_n}{n}
        \left\|
             \widetilde \Sigma_P \widetilde B_P
        \right\|_{2} 
        \left\|
            \left(\widehat Z_+^T\widetilde Z_P-n\widetilde \Sigma_P\right)^T
        \right\|_{2} \\
&\le 
    \frac{1}{n^2} 
        O_p\left(n\right) 
        O_p\left(n\rho_ne_n\right)
    +\frac{\rho_n}{n^2}
        O_p\left(ne_n\right)
        O_p\left(1\right)
        O_p\left(n^{.5}\right)
        O_p\left(n^{.5}\right)
    +\frac{\rho_n}{n}
        O_p\left(1\right)
        O_p\left(n e_n\right)\\
&\qquad (\text{Lemma \ref{lemma:norm2} to Lemma \ref{lemma:hatZZ_Sigma}})\\
&=O_p\left(\rho_n e_n\right)
\end{align*}

\begin{align*}
 &\left\|\widehat B^{nn}_3 - \rho_n P_n^T S_3 B S_3 P_n\right\|_{2} = 
 \left\|\frac{1}{n^2} \left[ \widehat Z^T A \widehat Z\right]_{+} - \rho_n P_n^T\widetilde \Sigma \widetilde B \widetilde \Sigma P_n \right\|_{2} \\
 &\leq 
 \sqrt{k}\left\| \frac{1}{n^2} \widehat Z^T A \widehat Z - \rho_n\widetilde \Sigma_P \widetilde B_P \widetilde \Sigma_P\right\|_{2} (\text{Proposition \ref{prop:projection} and Lemma \ref{lemma:norm2}})\\
 & \le 
\sqrt{k} \left\| \frac{1}{n^2} \widehat Z^T A \widehat Z
- \frac{1}{n^2} \widehat Z^T \mathscr{A} \widehat Z
\right \|_{2}
+
\sqrt{k} \left\| \frac{1}{n^2}
\widehat Z^T
\mathscr{A}
\widehat Z - \rho_n\widetilde \Sigma_P \widetilde B_P \widetilde \Sigma_P\right\|_{2}\\
 & = \frac{\sqrt{k}}{n^2}\left\| \widehat Z^T (A - \mathscr{A}) \widehat Z \right\|_{2} + 
\sqrt{k} \rho_n\left\|
\frac{1}{n^2} \widehat Z^T \widetilde Z_P\widetilde B_P\widetilde Z_P^T
\widehat Z - \widetilde \Sigma_P \widetilde B_P \widetilde \Sigma_P\right\|_{2}\\
& \leq
    \frac{\sqrt{k}}{n^2}\left\| 
        \widehat Z^T (A - \mathscr{A}) \widehat Z 
    \right\|_{2}
    + \sqrt{k}\rho_n\left\|
        \frac{1}{n^2} \widehat Z^T \widetilde Z_P \widetilde B_P \widetilde Z^T_P \widehat Z 
        - \frac{1}{n}\widetilde \Sigma_P \widetilde B_P \widetilde Z_P^T\widehat Z
    \right\|_{2} 
    + \sqrt{k}\rho_n\left\|
        \frac{1}{n}\widetilde \Sigma_P \widetilde B_P \widetilde Z_P^T\widehat Z 
        - \widetilde \Sigma_P\widetilde B_P\widetilde \Sigma_P
    \right\|_{2} \\
&\le 
    \frac{\sqrt{k}}{n^2}\left\| 
    \widehat Z^T (A - \mathscr{A}) \widehat Z
    \right\|_{2}
    + \frac{\sqrt{k}\rho_n}{n^2}\left\|
        \left(\widehat Z^T\widetilde Z_P-n\widetilde \Sigma_P\right) \widetilde B_P \widetilde Z_P^T\widehat Z
    \right\|_{2} 
    + \frac{\sqrt{k}\rho_n}{n}\left\|
        \widetilde \Sigma_P\widetilde B_P\left(\widetilde Z_P^T\widehat Z - n\widetilde \Sigma_P \right)
    \right\|_{2}\\
&\leq 
    \frac{\sqrt{k}}{n^2} 
        \left\|
            \widehat Z
        \right\|_2^2 
        \left\|
            A - \mathscr{A}
        \right\|_2 
    + \frac{\sqrt{k}\rho_n}{n^2} \left\|
            \widehat Z^T\widetilde Z_P-n\widetilde \Sigma_P
        \right\|_{2}
        \left\|
            \widetilde B_P
        \right\|_2 
        \left\|
            \widetilde Z_P
        \right\|_2
        \left\|
            \widehat Z
        \right\|_2\\
& \qquad 
    + \frac{\rho_n}{n}
        \left\|
             \widetilde \Sigma_P \widetilde B_P
        \right\|_{2} 
        \left\|
            \left(\widehat Z^T\widetilde Z_P-n\widetilde \Sigma_P\right)^T
        \right\|_{2} \\
&\le 
    \frac{1}{n^2} 
        O_p\left(n\right) 
        O_p\left(n\rho_ne_n\right)
    +\frac{\rho_n}{n^2}
        O_p\left(ne_n\right)
        O_p\left(1\right)
        O_p\left(n^{.5}\right)
        O_p\left(n^{.5}\right)
    +\frac{\rho_n}{n}
        O_p\left(1\right)
        O_p\left(n e_n\right)\\
&\qquad (\text{Lemma \ref{lemma:norm2} to Lemma \ref{lemma:hatZZ_Sigma}})\\
&=O_p\left(\rho_n e_n\right)
\end{align*}

\subsection{Proof of Lemma \ref{lemma:Zconverge} to \ref{lemma:hatZZ_Sigma}}\label{sec:lemmas_for_Bconverge}

\subsubsection{Proof of Lemma \ref{lemma:Zconverge}, \ref{lemma:Aconverge}, and \ref{lemma:Bconstant}}
Lemma \ref{lemma:Aconverge} and Equation \eqref{eq:Zconverge1} in Lemma \ref{lemma:Zconverge} are assumptions in Theorem \ref{thm:tsgdist}. Equation \eqref{eq:Zconverge2} in Lemma \ref{lemma:Zconverge} holds by Proposition \ref{prop:projection}. 
Lemma \ref{lemma:Bconstant} is evident given that $C$, $B$ are fixed matrix and $P_n\in \mathscr{P}(k)$.

\subsubsection{Proof of Lemma \ref{lemma:Znorm}}
\begin{lemma}\label{Matrix Bernstein Inequality} (Matrix Bernstein Inequality \citep{tropp2012user})
Let $X_1, X_2, \cdots, X_n$ be independent random $n\times n$ symmetric matrix. Assume $\|X_i-\mathbb{E}(X_i)\|_2\le M, \forall i$. Write $v^2 = \|\sum_i Var(X_i)\|_2$, $X = \sum_i X_i$. Then for any $a > 0$, 
\[\mathbb{P}(\|X-\mathbb{E}(X)\|_2 \ge a)\le 2n\exp\left(-\frac{a^2}{2v^2+2Ma/3}\right)\]
\end{lemma}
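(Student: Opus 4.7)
The plan is to follow the standard Laplace-transform approach developed by Ahlswede--Winter and refined by Tropp, which reduces the scalar Bernstein argument to a matrix cumulant generating function bound enabled by Lieb's concavity theorem. Let $Y_i = X_i - \E X_i$ and $S = \sum_i Y_i$, so that $\E Y_i = 0$, $\|Y_i\|_2 \le 2M$ (or simply $M$ after an obvious redefinition of the bound), and $\sum_i \E Y_i^2 = \sum_i \mathrm{Var}(X_i)$ has spectral norm $v^2$. Since $\|S\|_2 = \max\{\lambda_{\max}(S), \lambda_{\max}(-S)\}$, it suffices to prove the one-sided tail bound
\[
\pr(\lambda_{\max}(S) \ge a) \le n \exp\!\left(-\frac{a^2}{2v^2 + 2Ma/3}\right),
\]
and then double by a union bound over $\pm S$.

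First I would apply the matrix Chernoff / Markov step: for every $\theta > 0$,
\[
\pr(\lambda_{\max}(S) \ge a) \le e^{-\theta a}\,\E\bigl[\mathrm{tr}\,\exp(\theta S)\bigr],
\]
using $e^{\theta \lambda_{\max}(S)} \le \mathrm{tr}\,\exp(\theta S)$ (all eigenvalues of $\exp(\theta S)$ are positive). Next I would invoke Lieb's concavity theorem, which implies Tropp's master subadditivity inequality for the matrix cumulant generating function:
\[
\E\bigl[\mathrm{tr}\,\exp(\theta S)\bigr] \le \mathrm{tr}\,\exp\!\Bigl(\sum_i \log \E\,e^{\theta Y_i}\Bigr).
\]
This is the deepest ingredient and the step I expect to be the main obstacle, because the scalar identity $\log \E e^{\theta (Y_1+Y_2)} = \log \E e^{\theta Y_1} + \log \E e^{\theta Y_2}$ fails for non-commuting matrices; only Lieb's theorem rescues additivity after passing through the trace exponential.

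The remaining work is essentially scalar. Using $\E Y_i = 0$, $\|Y_i\|_2 \le M$, and the elementary bound $e^x \le 1 + x + \tfrac{x^2}{2}\,\phi(x)$ with $\phi(x) = (e^x - 1 - x)/(x^2/2)$, one obtains the operator inequality
\[
\log \E\,e^{\theta Y_i} \preceq \frac{\theta^2/2}{1 - M\theta/3}\,\E Y_i^2
\]
valid for $0 < \theta < 3/M$. Substituting and noting $\mathrm{tr}\,\exp(H) \le n\,e^{\lambda_{\max}(H)}$ yields
\[
\pr(\lambda_{\max}(S) \ge a) \le n \exp\!\left(-\theta a + \frac{\theta^2 v^2 / 2}{1 - M\theta/3}\right).
\]
Finally I would optimize by choosing $\theta = a/(v^2 + Ma/3)$, which gives the stated Bernstein exponent $a^2 / (2v^2 + 2Ma/3)$. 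Doubling the prefactor to absorb the $\pm S$ union bound produces the $2n$ in the conclusion. The computation beyond Lieb's theorem is routine calculus; the only nontrivial move is verifying that $\theta < 3/M$ under the optimizing choice, which holds automatically since $Ma/(3(v^2 + Ma/3)) < 1$.
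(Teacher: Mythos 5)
The paper does not prove this lemma at all: it is imported verbatim from Tropp (2012), so the only ``proof'' in the paper is the citation. Your proposal correctly reconstructs the standard argument from that source --- the matrix Chernoff/Laplace-transform step $\pr(\lambda_{\max}(S)\ge a)\le e^{-\theta a}\,\E\,\mathrm{tr}\exp(\theta S)$, Lieb's concavity theorem yielding the master bound $\E\,\mathrm{tr}\exp(\theta S)\le \mathrm{tr}\exp\bigl(\sum_i\log\E e^{\theta Y_i}\bigr)$, the Bernstein moment-generating-function bound $\log\E e^{\theta Y_i}\preceq \tfrac{\theta^2/2}{1-M\theta/3}\,\E Y_i^2$ for $0<\theta<3/M$, and the choice $\theta=a/(v^2+Ma/3)$, which indeed gives the exponent $a^2/(2v^2+2Ma/3)$ and satisfies $\theta<3/M$; the union bound over $\pm S$ with dimension $n$ gives the prefactor $2n$. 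Two trivial points: the hypothesis is already $\|X_i-\E X_i\|_2\le M$, so $\|Y_i\|_2\le M$ directly and no factor-of-two adjustment is needed, and your ``elementary bound'' with $\phi(x)=(e^x-1-x)/(x^2/2)$ is an identity --- the actual content is that $\phi$ is nondecreasing and $\phi(x)\le(1-x/3)^{-1}$ for $0<x<3$, which is what justifies the operator inequality you state; with that reading, the argument is complete and matches the cited proof.
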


Let $\widetilde Z_i$ be the $i$th row of $\widetilde Z$. Define $X_i = \widetilde Z_i^T \widetilde Z_i$, $X = \sum_i X_i$. Notice that $X_i$ are independently identically distributed and $X = \widetilde Z^T\widetilde Z$.
Since $Z_{ij}$ follows bounded distribution and $C$ is a fixed scaling matrix. There exist constants $C_1, C_2$ such that $\widetilde Z_{1j}^2 \le C_1, \ Var(\widetilde Z_{1j}^2)\le C_2,\ \forall j \in [k]$. Therefore, \[\|X_i - E(X_i)\|_2  = \left\|X_i - \widetilde \Sigma\right\|_2\le C_1,\]
\begin{align*}
    v^2 &= \left\|\sum_i Var(X_i)\right\|_2\\
    & = n \|Var(X_1)\|_2 \\
    & = n \max_{j} Var(\widetilde Z_{1j}^2)\\
    & \le nC_2.
\end{align*}
Denote $\E\left(\widetilde Z_i^T \widetilde Z_i\right)$ as $\mathscr{Z}$, then by Lemma \ref{Matrix Bernstein Inequality},
\[\mathbb{P}\left(\left\|\widetilde Z^T \widetilde Z- n\mathscr{Z} \right\|_2 \ge a\right)\le 2k\exp\left(-\frac{a^2}{2nC_2+2C_1a/3}\right).\]
This implies $\left\|\widetilde Z^T \widetilde Z- n\mathscr{Z} \right\|_2 = O_p\left(n^{.5}\right)$, thus
\begin{align*}
    \left\|\widetilde Z\right\|_2
    &= \sqrt{\left\|\widetilde Z^T\widetilde Z\right\|_2}\\
    &\le \sqrt{\left\|\widetilde Z^T\widetilde Z - n\mathscr{Z} \right\|_2 + \left\|n\mathscr{Z} \right\|_2}\\
    & =  \sqrt{O_p\left(n^{.5}\right) +  O_p\left(n\right)}\\
    & = O_p\left(n^{.5}\right).
\end{align*} 
Since $\widetilde Z_P = \widetilde Z P_n$, $\left\|\widetilde Z_P\right\|_2 = O_p(n^{.5})$

\subsubsection{Proof of Lemma \ref{lemma:SigmaZnorm}}
\begin{lemma}\label{inverse}
(Equation 1.1 in \citep{el2002inversion}) For $M, \ \Delta M \in \mathbb{R}^{p_1\times p_1}$, $\det M \neq 0$, $\det (M + \Delta M) \neq 0$, then
\[\left\|(M + \Delta M)^{-1} - M^{-1}\right\|_2 \le \left\|M^{-1}\right\|^2_2 \left\|\Delta M\right\|_2\]
\end{lemma}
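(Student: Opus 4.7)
The plan is to prove this perturbation bound by combining the standard resolvent-style identity for the difference of two inverses with submultiplicativity of the spectral norm. First I would establish the algebraic identity
\[
(M+\Delta M)^{-1} - M^{-1} \;=\; -\,M^{-1}\,\Delta M\,(M+\Delta M)^{-1},
\]
which follows by writing $M^{-1} - (M+\Delta M)^{-1} = M^{-1}\bigl[(M+\Delta M) - M\bigr](M+\Delta M)^{-1}$ and cancelling; both inverses exist by the nondegeneracy hypotheses on the determinants, so no extra conditions are needed for this algebraic step, and the identity is equally valid in its symmetric form $(M+\Delta M)^{-1}\Delta M\, M^{-1}$.

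Taking spectral norms and using submultiplicativity of $\|\cdot\|_2$ would immediately give
\[
\bigl\|(M+\Delta M)^{-1} - M^{-1}\bigr\|_2 \;\le\; \bigl\|M^{-1}\bigr\|_2\,\bigl\|\Delta M\bigr\|_2\,\bigl\|(M+\Delta M)^{-1}\bigr\|_2.
\]
The remaining task is to replace $\|(M+\Delta M)^{-1}\|_2$ by $\|M^{-1}\|_2$ on the right-hand side. For this I would appeal to the Neumann series: provided $\|M^{-1}\|_2\|\Delta M\|_2 < 1$, factoring $M + \Delta M = M(I + M^{-1}\Delta M)$ gives $(M+\Delta M)^{-1} = (I+M^{-1}\Delta M)^{-1}M^{-1}$ and hence
\[
\bigl\|(M+\Delta M)^{-1}\bigr\|_2 \;\le\; \frac{\|M^{-1}\|_2}{1 - \|M^{-1}\|_2\|\Delta M\|_2}.
\]

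The main obstacle, and the reason a direct multiplicative bound as displayed cannot hold without some smallness assumption, is that the Neumann-series step contributes an extra factor $(1 - \|M^{-1}\|_2\|\Delta M\|_2)^{-1}$ that equals one only in the limit $\|\Delta M\|_2 \to 0$. In every invocation of this lemma throughout Appendix \ref{appendix:tildeB}, $\Delta M$ plays the role of a stochastic perturbation whose spectral norm tends to zero (so that $\|M^{-1}\|_2\|\Delta M\|_2 \to 0$), which makes the extra factor $1+o_p(1)$; it can then be absorbed into the $O_p(\cdot)$ bounds without affecting any stated rate. This is the asymptotic sense in which Equation 1.1 of the cited El Karoui reference yields the inequality as written; making this explicit is the only subtlety, since the two earlier algebraic steps are essentially one-line computations.
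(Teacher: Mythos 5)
Your derivation is the standard one and is essentially correct, but note first that the paper itself contains no proof of this lemma: it is imported verbatim as Equation 1.1 of \citep{el2002inversion}, so there is no internal argument to compare against. Your two algebraic steps (the identity $(M+\Delta M)^{-1}-M^{-1}=-M^{-1}\Delta M (M+\Delta M)^{-1}$ and submultiplicativity) are fine, and your main observation is also correct: the inequality as displayed cannot hold for arbitrary invertible $M+\Delta M$. A one-dimensional example makes this concrete: with $M=1$ and $\Delta M=-0.9$ the left side is $9$ while the right side is $0.9$. The honest statement either assumes $\left\|M^{-1}\right\|_2\left\|\Delta M\right\|_2<1$ and carries the Neumann factor $\bigl(1-\left\|M^{-1}\right\|_2\left\|\Delta M\right\|_2\bigr)^{-1}$, exactly as you derive, or is read as a first-order bound as in the cited source. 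So your "proof" is really a proof of the corrected lemma together with a correct diagnosis of why the literal statement needs a smallness hypothesis; that diagnosis is a defect of the lemma as quoted, not of your argument.

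Your asymptotic justification also checks out in the one place the lemma is invoked, namely the proof of Lemma \ref{lemma:SigmaZnorm} with $M=\widetilde Z_P^T\widetilde Z_P$ and $\Delta M=\widehat\Sigma-\widetilde Z_P^T\widetilde Z_P$: there $\left\|M^{-1}\right\|_2=O_p(n^{-1})$ and $\left\|\Delta M\right\|_2=O_p(ne_n)$, so the product is $O_p(e_n)=o_p(1)$ and the extra factor is $1+o_p(1)$, leaving all stated rates intact. Two small caveats: this requires keeping the sharper $O_p(ne_n)$ bound for $\left\|\widehat\Sigma-\widetilde Z_P^T\widetilde Z_P\right\|_2$ (the looser $O_p(n)$ recorded at that point of the appendix would not by itself make the product vanish), and your phrase that the perturbation's ``spectral norm tends to zero'' is not quite right in this application — $\left\|\Delta M\right\|_2$ may grow; it is only the product $\left\|M^{-1}\right\|_2\left\|\Delta M\right\|_2$ that tends to zero, which is the condition your parenthetical correctly identifies as the one that matters.
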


The proof of Lemma \ref{lemma:Znorm} implies 
\begin{equation}\label{ztz-i}
\left\|\widetilde Z_P^T\widetilde Z_P/n - \mathscr{Z}_P\right\|_2 = O_p\left(n^{-.5}\right),
\end{equation} 
where $\mathscr{Z}_P = P_n^T \mathscr{Z} P_n = \E(\widetilde Z^T_P \widetilde Z_P/n)$. Also, 

\begin{align*}
\left\|\widehat Z_+\right\|_2 &\le \left\|\widehat Z_+ -\widetilde Z_P \right\|_2 +\left\|\widetilde Z_P\right\|_2\\
& \le \sqrt{n}\left\|\widehat Z_+ -\widetilde Z_P \right\|_{2\rightarrow\infty} +\left\|\widetilde Z_P\right\|_2\\
& = O_p\left(n^{.5}e_n\right) +O_p\left(n^{.5}\right)\\
& = O_p\left(n^{.5}\right), 
\end{align*} 
and 
\begin{align*}
\left\|\widehat \Sigma - \widetilde Z_P^T \widetilde Z_P\right\|_2 &= \left\|\widehat Z_+^T\widehat Z_+ - \widetilde Z_P^T \widetilde Z_P\right\|_2 \\
    & \le \left\|\widehat Z_+^T\widehat Z_+ - \widehat Z^T_+ \widetilde Z_P\right\|_2 + \left\|\widehat Z_+^T\widetilde Z_P - \widetilde Z_P^T \widetilde Z_P\right\|_2\\
    & \le \left\|\widehat Z_+\right\|_2\left\|\widehat Z_+ - \widetilde Z_P\right\|_2+\left\|\widehat Z_+ - \widetilde Z_P\right\|_2 \left\|\widetilde Z_P\right\|_2\\
    & \le O_p\left(n^{.5}\right) O_p\left(n^{.5} e_n\right) +  O_p\left(n^{.5} e_n\right)O_p\left(n^{.5}\right)\\
    & \le O_p\left(n\right).
\end{align*}

Equation (\ref{ztz-i}) implies that for $\forall \epsilon > 0$, $\exists \ C_{\epsilon}$ such that 
\[\sup_n\mathbb{P}\left(\left\|\widetilde Z_P^T\widetilde Z_P/n - \mathscr{Z}_P\right\|_2\ge C_{\epsilon}n^{-.5}\right)< \epsilon.\]
Therefore by Weyl's inequality, \[\left\|\left(\widetilde Z^T_P\widetilde Z_P/n\right)^{-1}\right\|_2 = O_p(1), \ \left\|\left(\widetilde Z_P^T\widetilde Z_P\right)^{-1}\right\|_2 = O_p\left(n^{-1}\right).\]
By Lemma \ref{inverse}, 
\begin{align*}
    \left\|\widehat \Sigma^{-1}\right\|_2 
    &\le \left\|\widehat \Sigma^{-1} - \left(\widetilde Z_P^T \widetilde Z_P\right)^{-1}\right\|_2 +  \left\|\left(\widetilde Z_P^T\widetilde Z_P\right)^{-1}\right\|_2\\
    &\le \left\|\left(\widetilde Z_P^T\widetilde Z_P\right)^{-1}\right\|_2^2\left\|\widehat \Sigma - \widetilde Z_P^T \widetilde Z_P\right\|_2 + O_p\left(n^{-1}\right)\\
    &\le O_p\left(n^{-2}\right)O_p\left(n\right)+ O_p\left(n^{-1}\right)\\
    & = O_p\left(n^{-1}\right).
\end{align*}

Therefore \[\left\|
\widehat \Sigma^{-1} \widehat Z_+^T \right\|_2 \leq \left\|\widehat \Sigma^{-1}\right\|_2\left\|\widehat Z_+\right\|_2\le O_p\left(n^{-.5}\right)\] 

\subsubsection{Proof of Lemma \ref{lemma:hatZZ_Sigma}}

This is a direct result of Lemma \ref{lemma:norm2}, Lemma \ref{lemma:Zconverge}, and Lemma \ref{lemma:Znorm}.
\begin{align*}
    \left\|\widehat Z_+^T\widetilde Z_P-n\widetilde \Sigma_P\right\|_2 
    &= \left\|\widehat Z_+^T\widetilde Z_P-\widetilde Z_P^T\widetilde Z_P\right\|_2
 = \left\|\widehat Z_+ - \widetilde Z_P\right\|_2 \left\|\widetilde Z_P\right\|_2 \\
    & \leq \sqrt{n}\left\|\widehat Z_+ - \widetilde Z_P\right\|_{2\rightarrow \infty} \left\|\widetilde Z_P\right\|_2 \\
    & = O_p(n^{.5}e_n)O_p(n^{.5})\\
    & = O_p(ne_n).
\end{align*}

\subsection{Stableness of NJ}\label{appendix:nj}

To discuss the performance of NJ, we remove the identifiability assumption in Remark \ref{rmk:identifiability}, which constrains all internal nodes to have degree $\geq 3$, and consider $\T$ as any general tree graph. Given $\T = (V, E)$, distance-based recovery methods take the pairwise distances $\widehat D$ between a subset of nodes $V_o\subseteq V$ as input, and reconstruct a tree structure $\widehat \T = \left(\widehat V, 
\widehat E\right)$. We refer to nodes in $V_o$ as ``observed nodes'' since the distance information between them is observed. The reconstructed tree $\widehat \T$ should always contain all the observed nodes, i.e., $V_o\subseteq \widehat V$. 
We say two trees $\T = (V, E)$ and $\widehat \T = \left(\widehat V, 
\widehat E\right)$ are \textbf{equivalent}, written as $\T \sim \widehat \T$, if there exists an isomorphism mapping $f:V\rightarrow \widehat V$ that preserves adjacency structure, that is, $\widehat E = \left\{(f(u), f(v)): (u, v)\in E\right\}$, and the observed nodes, that is, $f(u) = u$ for any $u\in V_o$.

When $\T$ and $\widehat \T$ are not equivalent, it is important to have a  measure of the proximity between them. To this end, we introduce the definition of ``split'' following \citep{bandelt1986reconstructing, atteson1997performance}. This definition is formulated with respect to the observed set $V_o$.

\begin{definition}
Let $\T = (V, E)$ be a tree with an observed set $V_o$. Removing an edge $e\in E$ from $\T$ splits it into two components. Denote $s(\T-e, V_o)$ as how nodes in $V_o$ are distributed between these two components, then $s(\T-e, V_o)$ is the split of edge $e$. Let $S(\T, V_o)$ be the set of all the splits in tree $\T$, i.e., $S(\T, V_o) = \{s(\T-e, V_o): e\in E\}$.
\end{definition}

\begin{prop}\label{prop:tree_shape}
(\cite{bandelt1986reconstructing}) If
$\{v\in V: deg(v)<3\}\subseteq V_o$ and $\left\{v\in \widehat V: deg(v)<3\right\} \subseteq V_o$, then 
\(S\left(\T, V_o\right) = S\left(\widehat \T, V_o\right) 
\text{ if and only if } 
\ \T \sim \widehat \T.\)
\end{prop}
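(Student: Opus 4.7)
The forward direction is immediate: any isomorphism $f$ that fixes $V_o$ carries edges to edges and, since $f|_{V_o} = \mathrm{id}$, removing corresponding edges produces the same induced partition of $V_o$. So my plan focuses on the reverse implication, namely constructing an isomorphism from the hypothesis $S(\T, V_o) = S(\widehat\T, V_o)$.

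The first stage is to show that within each tree the map $e \mapsto s(\T-e, V_o)$ is injective, so that equality of split sets yields a canonical bijection $\phi: E(\T) \to E(\widehat\T)$. I would argue by contradiction: assume $e_1 \neq e_2$ produce the same split, and let $P$ be the unique path connecting them. Comparing the two sides of the two splits forces every observed node to lie either in the component of $\T - e_1$ not meeting $P$ or in the component of $\T - e_2$ not meeting $P$, so the interior of $P$ and every branch hanging off its interior must contain no element of $V_o$. Each such ``hanging'' subtree then has no observed leaf, but the degree hypothesis (all nodes of degree $<3$ are in $V_o$) implies any finite subtree produces an observed leaf, forcing the subtree to be empty. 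Hence each interior node of $P$ has degree exactly $2$ and therefore lies in $V_o$, contradicting the previous sentence. A short case check handles the degenerate configuration in which $e_1$ and $e_2$ share a vertex.

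The second stage extends $\phi$ to a vertex bijection. For any $v \in V(\T)$, the edges incident to $v$ partition $V_o$ into classes indexed by the neighbors of $v$ (every neighbor direction contains an observed node, again by the leaf-in-$V_o$ argument), and this ``local star'' is recovered entirely from the splits of the incident edges. I would characterize a vertex of $\T$ combinatorially by this star of splits and verify, using the bijection $\phi$, that the same characterization picks out a unique vertex $\hat v$ in $\widehat\T$; two edges $e_1,e_2$ share an endpoint precisely when their splits $\{A_1,B_1\}$ and $\{A_2,B_2\}$ satisfy a specific Buneman-type incidence condition, and this condition is transported across trees by $\phi$. The resulting vertex map $f$ is then shown to be a graph isomorphism, and it fixes $V_o$ because for any $v \in V_o$ the trivial split $\{\{v\},\, V_o\setminus\{v\}\}$ (preserved by $\phi$, thanks to injectivity) pinpoints the same pendant edge and hence the same vertex $v$ in each tree.

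The main obstacle I anticipate is this vertex-extension step: non-observed internal nodes of $\T$ have no a priori counterpart in $\widehat\T$, so the argument must invoke the degree hypothesis precisely to rule out ``extra'' degree-$2$ nodes that could otherwise refine either tree. Concretely, the hypothesis guarantees that both $\T$ and $\widehat\T$ are the unique minimal realizations of the common split system, and I expect that invoking (or reproving locally) Buneman's classical reconstruction of a tree from a compatible split system will give the cleanest closure of the argument.
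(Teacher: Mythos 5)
The paper does not actually prove this proposition — it is quoted from \cite{bandelt1986reconstructing} (it is the Split Equivalence/Buneman-type reconstruction result), so there is no in-paper argument to compare against; your outline is essentially the standard literature proof. Your forward direction and Stage 1 (injectivity of $e \mapsto s(\T-e, V_o)$ under the degree hypothesis, via the empty middle region along the path between $e_1$ and $e_2$) are sound, and deferring the adjacency-of-edges characterization to a Buneman-type betweenness condition or to Buneman's reconstruction theorem is a legitimate way to close Stage 2.

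There is, however, one concrete flaw in the write-up: your justification that the vertex map $f$ fixes $V_o$ only covers observed \emph{leaves}. You argue that for $v \in V_o$ the trivial split $\{\{v\},\, V_o\setminus\{v\}\}$ pinpoints the same pendant edge in both trees — but if $v \in V_o$ is an internal vertex (degree $\geq 2$), no single edge removal isolates $\{v\}$, so this split need not belong to $S(\T, V_o)$ at all and there is no pendant edge to point to. This case is not vacuous here: the proposition explicitly allows $V_o$ to contain internal vertices, and the paper's application (recovering $\T_z$ from distances between the nodes of $V_z$, and the proof of the lemma extending \cite{mihaescu2009neighbor}) relies precisely on observed nodes that are internal. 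The gap is fixable inside your own framework: for $v \in V_o$ with incident edges $e_1,\dots,e_d$, orient each split so that $A_i$ is the side not containing $v$; then $V_o \setminus \{v\} = A_1 \sqcup \cdots \sqcup A_d$ and $v \notin \bigcup_i A_i$. Since $\phi$ preserves the adjacency relation in both directions, the images $\phi(e_1),\dots,\phi(e_d)$ form the full star of a single vertex $\hat u$ of $\widehat \T$, and the components of $\widehat\T$ minus this star carry exactly the observed sets $A_1,\dots,A_d$; as $v$ lies in none of them, $v$ must be the central vertex, i.e.\ $\hat u = v$. You should add this (or an equivalent) argument rather than rely on trivial splits.
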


When $V_o$ contains all the nodes with degree smaller than three, the splits of a tree $\T$ completely parametrize its topology. When $S(\T, V_o) = S\left(\widehat \T, V_o\right)$, two trees are equivalent; when $S(\T, V_o)\neq S\left(\widehat \T, V_o\right)$, the cardinality of their interaction set
provides reasonable measure about how close $\widehat \T$ and $\T$ are to each other. For example, the popular Robinson-Foulds distance \citep{robinson1981comparison} calculates the number of splits that differ in $S(\T, V_o)$ and $S\left(\widehat \T, V_o\right)$.

\begin{definition}
For a tree $\T = (V, E)$ and an estimation $\widehat \T = \left(\widehat V, \widehat E\right)$. We say $\widehat \T$ correctly reconstruct edge $e\in E$ if there exists an edge $ e^\prime \in \widehat E$ such that $s(\T-e, V_o) = s\left(\widehat \T -e^\prime, V_o\right)$.
\end{definition}

\subsubsection{Proof for Theorem \ref{thm:cutoff}}\label{appendix:cutoff}

The proof of Theorem \ref{thm:cutoff} is established using Lemma \ref{lemma:edge/4}, which is an extension of Theorem 23 in \citep{mihaescu2009neighbor} to allow $V_o$ to contain non-leaf nodes.

\begin{lemma}\label{lemma:edge/4}
Given any tree $\T = (V, E)$ and the estimated pairwise distances $\widehat D$ between nodes in $V_o\subseteq V$, let \(\delta = \max_{u, v} \left|\widehat D_{uv} - D_{uv}\right|\). If \(\{i\in V : deg(i) < 3\} \subseteq V_o\), then NJ correctly reconstruct any edge $e = (u, v)$ with 
\[d(u, v)> 4\delta, \] 
where $d(\cdot, \cdot)$ is the true edge distances in $\T$.
\end{lemma}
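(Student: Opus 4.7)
The plan is to reduce Lemma \ref{lemma:edge/4} to Theorem 23 of \citep{mihaescu2009neighbor}, which handles exactly the case where the observed set $V_o$ equals the leaf set of the tree. The reduction is a standard zero-length pendant trick: I will build an augmented tree $\T' = (V', E')$ in which every node of $V_o$ appears as a leaf, while the metric structure of $\T$ restricted to $V_o \times V_o$ is preserved exactly.

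Concretely, for each $v \in V_o$ that is \emph{not} already a leaf of $\T$ (i.e., $deg_\T(v) \ge 2$), I introduce a new vertex $v^\star$ and attach it to $v$ by a pendant edge of length $0$; leaves of $\T$ that belong to $V_o$ are kept as is (and relabeled $v^\star := v$). Let $V_o^\star = \{v^\star : v \in V_o\}$. Three things are then immediate. First, $V_o^\star$ is exactly the leaf set of $\T'$: the newly added $v^\star$ are leaves by construction, any $v \in V_o$ that was originally a leaf is still a leaf, and every remaining vertex of $\T'$ is either a non-observed vertex (which has $deg_\T \ge 3$ by the hypothesis $\{i : deg(i)<3\} \subseteq V_o$, hence also $deg_{\T'} \ge 3$) or an observed internal vertex whose degree strictly increased by one from the pendant attachment (so $deg_{\T'} \ge 3$ as well). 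Second, since the new edges have length $0$, the induced distance on $V_o^\star$ in $\T'$ agrees with $D$ on $V_o$ under the identification $v \leftrightarrow v^\star$; consequently $\widehat D$ can be viewed as an input distance matrix on the leaf set of $\T'$ with the same maximum error $\delta$. Third, every edge $e \in E$ is also an edge of $E'$ with unchanged length $d(e)$.

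Next I invoke Theorem 23 of \citep{mihaescu2009neighbor}: applied to $\T'$ with leaf set $V_o^\star$ and input $\widehat D$, for any edge $e \in E' $ with $d(e) > 4\delta$ there exists an edge $\widehat e$ in the NJ output $\widehat\T$ such that $s(\T' - e, V_o^\star) = s(\widehat\T - \widehat e, V_o^\star)$. Because $E \subseteq E'$ with the same lengths, this gives the conclusion for $\T'$ on every $e = (u,v) \in E$ with $d(u,v) > 4\delta$. It remains only to translate the split statement back from $\T'$ to $\T$: removing such an edge $e$ from $\T$ partitions $V$ into two components, whose intersections with $V_o$ are precisely the two sides of the split $s(\T' - e, V_o^\star)$ under $v \leftrightarrow v^\star$, since each pendant $v^\star$ sits on the same side of $e$ in $\T'$ as the vertex $v$ does in $\T$. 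Hence $\widehat e$ also realizes the correct split of $V_o$ for $e$ in $\T$.

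The main verification burden is bookkeeping rather than analysis: one must check carefully that (i) the hypothesis $\{i : deg(i) < 3\} \subseteq V_o$ is precisely what is needed to guarantee every internal vertex of $\T'$ has degree $\ge 3$, so that $\T'$ is a valid phylogenetic $X$-tree to which Theorem 23 of \citep{mihaescu2009neighbor} applies, and (ii) the pendant-leaf identification induces the split correspondence between $\T$ and $\T'$ used in the final step. Both are essentially definitional; the nontrivial content lies entirely inside the invoked theorem.
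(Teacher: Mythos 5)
Your proposal is correct and follows essentially the same route as the paper: the paper's proof also augments $\T$ by attaching zero-weight pendant leaves to the observed internal nodes (relabeling the internal copies), notes that the hypothesis $\{i : deg(i)<3\}\subseteq V_o$ makes every internal node of the augmented tree have degree $\geq 3$ with leaf set identified with $V_o$, and then invokes Theorem 23 of Mihaescu et al. Your write-up is only slightly more explicit about the split correspondence between the augmented tree and $\T$, which the paper leaves implicit.
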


\begin{proof}
According to Theorem 23 in \citep{mihaescu2009neighbor}, Lemma \ref{lemma:edge/4} is valid when $V_o = V_\ell$, and all internal nodes in $\T$ have degrees $\geq 3$. To remove these restrictions, we introduce another tree $\widetilde \T = \left(\widetilde V, \widetilde E\right)$ and its corresponding true distance $\widetilde d(\cdot, \cdot)$ by converting all internal nodes in $V_o$ to leaf nodes with zero edge weights. Initially, we set $\widetilde \T = \T$, and $\widetilde d(\cdot, \cdot) = d(\cdot, \cdot)$. Then, we remove all nodes $u\in V_o\setminus V_\ell$ by relabeling them as $\widetilde u$ in $\widetilde \T$ and $\widetilde d(\cdot, \cdot)$. Subsequently, we add all $u\in V_o\setminus V_\ell$ back as leaf nodes by updating
\begin{equation*}
\widetilde V \leftarrow \widetilde V \cup \{u: u\in V_o\setminus V_\ell\},\quad \widetilde E = E \cup \{(\widetilde u, u): u\in V_o\setminus V_\ell\}, \quad \text{and }\widetilde d\left(\widetilde u, u\right) = 0, \, \text{for }\forall u \in V_o\setminus V_\ell.
\end{equation*}
After this process, all internal nodes in $\widetilde \T$ have degree $\geq 3$, and the set of leaf nodes in $\widetilde \T$ is identical to the set of observed nodes in $\T$. 

\end{proof}

Let $\widehat \T_{nj} = \left(\widehat V_{nj}, \widehat E_{nj}\right)$ with $\left\{\widehat d(u, v): (u, v)\in \widehat E_{nj}\right\}$ be the output of the NJ algorithm. By Lemma \ref{lemma:edge/4}, for any edge $e\in E$, there exists an edge $e^{\prime} \in \widehat E_{nj}$ such that \[S(\T-e, V_o)=S\left(\widehat\T_{nj} - e^{\prime}, V_o\right).\]
To better analyze the performance of NJ, it is essential to gain an intuitive understanding of the structure of $\widehat \T_{nj}$. Firstly, $\widehat \T_{nj}$ is a binary tree where all nodes in $V_o$ are labeled as leaf nodes, resulting in $\widehat E_{nj}$ typically containing more edges than $E$. We can classify edges in $\widehat E_{nj}$ into two types. The first type is those correctly reconstructed edges, they create the same splits as some edge in $E$; the second type is the wrongly reconstructed edges, they create splits that differ from all edges in $E$. In the following paragraphs, we abuse the notation a little bit, for two edges $e\in E$ and $e^{\prime} \in \widehat E_{nj}$ that create the same splits, we denote both of them as $e$. In other word, for any edge $e\in \widehat E_{nj}$, if we say $e\in E$, that means $e$ is corrected reconstructed.

While Lemma \ref{lemma:edge/4} implies that all edges in $E$ can be correctly reconstructed by NJ, we want to further study the estimated edge distances $\widehat d$. However, the topology structure of $\T$ and $\widehat \T_{nj}$ are different, creating difficulties to compare $\widehat d$ and $d$. To this end, we construct a set of edge distances $\left\{\widetilde d(e): e \in \widehat E_{nj}\right\}$ for tree $\widehat\T_{nj}$ to serve as a bridge between $\widehat d$ and $d$. Specifically, 
\begin{equation}
        \widetilde d(e) \begin{dcases}
        d(e) & \text{ if $e\in E$ ($e$ is correctly reconstructed),
        }\\
        0  & \text{ otherwise.}
    \end{dcases}
\end{equation}
Then $\widetilde d(u, v) = d(u, v)$ for any pair of nodes $(u, v)\in V_o$. Now, we can consider $\widehat \T_{nj}$ and $\widetilde d$ (instead of $\T$ and $d$) as the underlying truth and study how $\widehat d$ differs from $\widetilde d$. 

To analyze how errors propagate at each step, we follow the procedure in \citep{atteson1997performance} to create a series of ground truth trees $\left\{\T^{(t)}\right\}_{t = 1}^{k-2}$ and their corresponding distances $\left\{ d^{(t)}\right\}_{t = 1}^{k-2}$. To accommodate cases where $\T$ is not binary and $V_o$ may contain non-leaf nodes, we set $\T^{(1)}$ to be $\widehat \T_{nj}$ and $d^{(1)}$ to be $\widetilde d$. Further details can be found in Lemma 6 of \citep{atteson1997performance}. Lemma \ref{lemma:error_not_accumulate} below shows that the errors in the estimated distances do not accumulate through iterations.

\begin{lemma}\label{lemma:error_not_accumulate}
(Lemma 7 in \cite{atteson1997performance})
Estimation error of pairwise distance doesn't accumulate through iteration 
\[\max_{i',j'}\left| \widehat d^{(t)}(i', j') - d^{(t)}(i', j')\right|\leq \max_{i, j}\left| \widehat d^{(1)}(i, j) - d^{(1)}(i, j)\right|= \delta, \ \forall t\geq 1\]
\end{lemma}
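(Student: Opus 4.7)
The plan is to establish the claim by induction on the iteration index $t$, exploiting the fact that both the estimated distance $\widehat d^{(t)}$ and the ground-truth distance $d^{(t)}$ evolve under the \emph{same} linear update formula specified in Algorithm \ref{alg:nj}. The base case $t=1$ is immediate from the definition of $\delta$ together with the construction of $\widetilde d$ on $\widehat\T_{nj}$, which agrees with the true $D$ on every pair in $V_o$ (correctly reconstructed edges have the true edge length, and shrunk edges have length $0$, so all pairwise sums along paths between observed nodes are preserved).

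For the inductive step, suppose $\max_{i',j'} |\widehat d^{(t)}(i',j') - d^{(t)}(i',j')| \le \delta$. The algorithm selects a pair $(i^{(t)}, j^{(t)})$, merges them into a new node $x^{(t)}$, and defines new distances via Equation of Algorithm \ref{alg:nj}. Crucially, $d^{(t+1)}$ is obtained from $d^{(t)}$ by the identical averaging rule (this is the content of Lemma 6 of \cite{atteson1997performance}, extended here via the bridge tree $\widehat\T_{nj}$ and $\widetilde d$ so as to cover non-binary $\T$ and non-leaf $V_o$). Thus for any pair $(u,\ell)$ with $u,\ell \ne x^{(t)}$, the difference $\widehat d^{(t+1)}(u,\ell) - d^{(t+1)}(u,\ell)$ is unchanged from iteration $t$ and is therefore bounded by $\delta$. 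For a pair $(u, x^{(t)})$, the linearity of the update gives
\begin{align*}
\bigl|\widehat d^{(t+1)}(u, x^{(t)}) - d^{(t+1)}(u, x^{(t)})\bigr|
&= \Bigl|\tfrac{1}{2}\bigl(\widehat d^{(t)}(u, i^{(t)}) - d^{(t)}(u, i^{(t)})\bigr) + \tfrac{1}{2}\bigl(\widehat d^{(t)}(u, j^{(t)}) - d^{(t)}(u, j^{(t)})\bigr)\Bigr| \\
&\le \tfrac{1}{2}\,\delta + \tfrac{1}{2}\,\delta = \delta,
\end{align*}
by the inductive hypothesis and the triangle inequality. Taking the maximum over all pairs closes the induction.

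The subtle point, and the main place where care is needed, is to verify that $d^{(t+1)}$ is genuinely given by the same averaging recursion as $\widehat d^{(t+1)}$. Since NJ at each step joins two nodes that are siblings in $\T^{(t)}$ with common parent $x^{(t)}$, the true distances satisfy $d^{(t)}(u, i^{(t)}) = d^{(t)}(u, x^{(t)}) + d^{(t)}(x^{(t)}, i^{(t)})$ and analogously for $j^{(t)}$; consequently the averaged quantity differs from the tree distance $d^{(t)}(u, x^{(t)})$ only by an additive constant $\tfrac{1}{2}(d^{(t)}(x^{(t)}, i^{(t)}) + d^{(t)}(x^{(t)}, j^{(t)}))$ that does not depend on $u$. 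Defining $d^{(t+1)}$ through the same averaging rule is therefore consistent with Atteson's Lemma 6 construction, and the constant offset is harmless for the error analysis because it appears in both terms of the difference $\widehat d^{(t+1)} - d^{(t+1)}$. This verification, together with the induction above, yields the lemma.
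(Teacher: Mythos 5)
Your proof is correct, and it is essentially a reconstruction of the argument behind the citation: the paper does not prove this lemma at all — it quotes it as Lemma~7 of Atteson (1997), with the setup ($\T^{(1)} = \widehat \T_{nj}$, $d^{(1)} = \widetilde d$, and the recursive definition of the $d^{(t)}$) deferred to Atteson's Lemma~6 — and your induction via the linearity of the averaging update, with the error at a new pair $(u, x^{(t)})$ bounded by the average of two errors from the previous iteration, is exactly the standard proof of that cited result. The one point you flag as subtle, that the joined pair must be siblings in $\T^{(t)}$ so that $d^{(t+1)}$ defined by the same averaging rule is consistent with Atteson's construction, holds automatically in this paper's setting: the reference tree is taken to be the NJ output tree $\widehat\T_{nj}$ itself, in which $i^{(t)}$ and $j^{(t)}$ are by construction the two children of $x^{(t)}$ (and, as you note, for the error bound alone only the recursive definition of $d^{(t)}$ is needed, not its tree-metric interpretation — the additive pendant offset only matters later, in the edge-length analysis of Theorem~\ref{thm:cutoff}).
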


Now we can assess the error in the estimated distances. Since the edge lengths in $\widehat \T_{nj}$ are estimated using Equation \eqref{eq:distance_update} or \eqref{eq:distance_update_final}, we only need to focus on these two equations. For Equation \eqref{eq:distance_update}, it's important to note that $x^{(t)}$ are different for each iterations $t$, thus the last summation term can have at most one non-zero entry. Assuming the $h$th term is non-zero, then the estimated distance becomes:

\[ \widehat d\left(i^{(t)}, x^{(t)}\right) = \frac{1}{2}\, \left(\widehat d^{(t)}\left(i^{(t)}, j^{(t)}\right) 
    + \, \widehat d^{(t)}_{-j}\left(i^{(t)}\right)
    - \, \widehat d^{(t)}_{-i}\left(j^{(t)}\right)
    - \, \widehat d^{(h)}\left(i^{(h)}, j^{(h)}\right)\right),\]
Then the error term
\begin{align*}
  \left|\widehat d\left(i^{(t)}, x^{(t)}\right) - \widetilde d\left(i^{(t)}, x^{(t)}\right)\right| &= \frac{1}{2}\left|\left(\widehat d^{(t)}\left(i^{(t)}, j^{(t)}\right) 
    + \, \widehat d^{(t)}_{-j}\left(i^{(t)}\right)
    - \, \widehat d^{(t)}_{-i}\left(j^{(t)}\right)
    - \, \widehat d^{(h)}\left(i^{(h)}, j^{(h)}\right)\right)\right. \\
  & \qquad \left. -\left( d^{(t)}\left(i^{(t)}, j^{(t)}\right) 
    + \, d^{(t)}_{-j}\left(i^{(t)}\right)
    - \, d^{(t)}_{-i}\left(j^{(t)}\right)
    - \, d^{(h)}\left(i^{(h)}, j^{(h)}\right)\right)\right|\\[5pt]
  &\leq \frac{1}{2} \left\{
  \left|\widehat d\left(i^{(t)}, x^{(t)}\right) - d\left(i^{(t)}, x^{(t)}\right)\right| 
  + \left|\widehat d^{(t)}_{-j}\left(i^{(t)}\right) - d^{(t)}_{-j}\left(i^{(t)}\right)\right| \right.\\
  & \qquad \left.
  + \left|\widehat d^{(t)}_{-i}\left(j^{(t)}\right) - d^{(t)}_{-i}\left(j^{(t)}\right)\right| + \left|\widehat d^{(h)}\left(i^{(h)}, j^{(h)}\right) - d^{(h)}\left(i^{(h)}, j^{(h)}\right)\right|\right\}\\
  & \leq 2\delta \text{ (by lemma \ref{lemma:error_not_accumulate})},
\end{align*}
where $d^{(t)}_{-j}\left(i^{(t)}\right)$ is the corresponding mean for the true distances, that is, \[
    d^{(t)}_{-j}\left(i^{(t)}\right) = \frac{1}{k-t-1}\sum_{\scaleto{u \, \in \, 
    \left.V^{(t)}_o \middle \backslash \left\{i^{(t)}, j^{(t)}\right\}\right.}{18pt}}  d^{(t)}\left(i^{(t)}, u\right).\]
By similar steps, we can show \(\left|\widehat d\left(i^{(t)}, x^{(t)}\right) - \widetilde d\left(i^{(t)}, x^{(t)}\right)\right|\leq 3/2 \delta\) for Equation \eqref{eq:distance_update_final}. Therefore, $\left|\widehat d(e) - \widetilde d(e)\right| \leq 2\delta$ holds for any edge $e\in \widehat E_{nj}$.

For any edge $e\in E$, $\widetilde d(e) = d(e)>4\delta$, by triangle inequality, 
\[\widehat d(e) \geq \widetilde d(e) - \left|\widehat d(e) - \widetilde d(e)\right|> 4\delta - 2\delta = 2\delta.\]
Similarly, for any $e_0 \notin E$ but $e_0\in \widehat \T_{nj}$, then $\widetilde d(e_0) = 0$,
\[\widehat d(e_0)\leq \widetilde d(e_0) + \left|\widehat d(e) - \widetilde d(e)\right|\leq 0 + 2\delta = 2\delta.\]

\subsubsection{Proof for Corollary \ref{coro:cutoff}}\label{appendix:cutoff_coro}
This is a direct result of Theorem \ref{thm:cutoff} and Proposition \ref{prop:tree_shape}.

\subsection{Proof of Theorem \ref{thm:meta_theorem}}\label{appendix:meta_theorem}

The proof mainly relies on the consistency result in Theorem \ref{thm:tsgdist} and Theorem \ref{thm:cutoff}. The major component of the proof is showing that conditions in these two theorems are satisfied in the context of Theorem \ref{thm:meta_theorem}.

The convergence of $\widehat D$ is a direct result of Lemma \ref{lemma:Zconverge_main_thm}, \ref{lemma:Aconverge_main_thm}, and Theorem \ref{thm:tsgdist} (with $e_n = \Delta_n^{-.24}\log^{2.75}n$).

\begin{lemma}\label{lemma:Zconverge_main_thm}
Consider $\widehat Z$ being the output of Algorithm \ref{vsp}, then there exists a series of reordering matrix $P_n\in \mathscr{P}(k)$ and a fixed diagonal matrix $C$ such that
    \[\left\|\widehat Z - \widetilde ZCP_n\right\|_{2\rightarrow \infty} = O_p(\Delta_n^{-.24}\log^{2.75}n).\]
\end{lemma}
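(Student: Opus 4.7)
The plan is to derive Lemma \ref{lemma:Zconverge_main_thm} as a corollary of the main $\ell_{2\to\infty}$ consistency result for \texttt{vsp} established in \citep{varimax_rohe}, and then handle the extra sign-adjustment step (the third step of Algorithm \ref{vsp}, which was added in this paper and is not present in the original \texttt{vsp}) by a short argument exploiting the nonnegativity of $Z$ under Assumption \ref{assum:z_dist}.

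First, I would invoke the analysis of \citep{varimax_rohe}. Under Assumptions \ref{assum:z_dist}, \ref{assum:sparse}, and \ref{assum:A_dist}, their argument combines a concentration bound for $\|A - \mathscr{A}\|_2$, an $\ell_{2\to\infty}$ eigenvector perturbation bound for the leading $k$ eigenvectors of $A$, and a Lipschitz-type stability bound for the Varimax maximizer at the population optimum. Together these yield a reordering matrix $P_n \in \mathscr{P}(k)$, the fixed diagonal scaling $C$ appearing in the rest of the proof, and a random diagonal sign matrix $D_n \in \{-1,+1\}^{k\times k}$ such that the un-sign-adjusted output $\widehat Z_0 := \sqrt{n}\,UR_U$ of steps 1 and 2 satisfies
\[
\bigl\|\widehat Z_0 - \widetilde Z\, C P_n D_n\bigr\|_{2\to\infty} \;=\; O_p\bigl(\Delta_n^{-.24}\log^{2.75}n\bigr).
\]
The sign ambiguity $D_n$ is unavoidable at this stage because the Varimax objective is invariant under column sign flips.

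Second, I would show that the sign-adjustment step of Algorithm \ref{vsp} recovers $D_n$, i.e.\ $\widehat D = D_n$ with probability tending to one. By Assumption \ref{assum:z_dist} the rows of $Z$ are i.i.d., $Z_{ij} = \theta_i \mathbb{1}\{z(i)=j\} \geq 0$, and $\mathbb{E}(Z_{1j}) > 0$, so the $j$-th column sum of $\widetilde Z C P_n$ equals $n\mu_j + o_p(n)$ for some $\mu_j > 0$. Hence the $j$-th column sum of $\widetilde Z C P_n D_n$ is $(D_n)_{jj}(n\mu_j + o_p(n))$. The $j$-th column sum of $\widehat Z_0$ differs from this by at most $n \cdot \bigl\|\widehat Z_0 - \widetilde Z C P_n D_n\bigr\|_{2\to\infty}$, which is of smaller order than $n\mu_j$ under the sparsity regime of Assumption \ref{assum:sparse}. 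Consequently $\mathrm{sgn}\bigl(\sum_i(\widehat Z_0)_{ij}\bigr) = (D_n)_{jj}$ with probability $\to 1$, so $\widehat D = D_n$ eventually. Because right-multiplication by a diagonal sign matrix preserves row $\ell_2$ norms, this gives
\[
\bigl\|\widehat Z - \widetilde Z C P_n\bigr\|_{2\to\infty} \;=\; \bigl\|(\widehat Z_0 - \widetilde Z C P_n D_n)\,D_n\bigr\|_{2\to\infty} \;=\; O_p\bigl(\Delta_n^{-.24}\log^{2.75}n\bigr),
\]
which is the desired bound.

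The main obstacle is the first step: extracting the precise rate $\Delta_n^{-.24}\log^{2.75}n$ from \citep{varimax_rohe} under Assumptions \ref{assum:sparse} and \ref{assum:A_dist}. The exponents $.24$ and $2.75$ arise from the interplay between the Bernstein-type concentration bound for $\|A - \mathscr{A}\|_2$ permitted by Assumption \ref{assum:A_dist}, the Davis-Kahan-type control of the eigenspace rotation, and the $\ell_{2\to\infty}$ perturbation machinery used to upgrade the operator-norm bound to an entrywise bound on the Varimax output. The sign-adjustment argument, by contrast, is elementary once the main rate is in hand, and simply uses the nonnegativity structure intrinsic to the DCSBM parameterization.
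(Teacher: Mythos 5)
Your proposal is correct and follows essentially the same route as the paper: the paper likewise invokes the $2\rightarrow\infty$ consistency of \texttt{vsp} up to column reordering and sign flips (Corollary C.1 of \citealp{varimax_rohe}) and then neutralizes the added sign-adjustment step using the nonnegativity of $Z$ through the column means. The only minor difference is in how the sign step is finished: you argue $\widehat D = D_n$ with probability tending to one by comparing column sums (which, as in the paper, tacitly uses that $\Delta_n^{-.24}\log^{2.75}n$ vanishes and that the population column means are bounded away from zero), whereas the paper bounds $\left|\widehat d_j - d_j\right| = O_p\!\left(\Delta_n^{-.24}\log^{2.75}n\right)$ and then controls $\left\|\widehat Z\left(\widehat D_n - D_n\right)\right\|_{2\rightarrow\infty} \le \left\|\widehat Z\right\|_{2\rightarrow\infty}\left\|\widehat D_n - D_n\right\|_2$ — either way the rate is preserved.
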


\begin{lemma}\label{lemma:Aconverge_main_thm}
$\left\|A - \mathscr{A}\right\|_2 = O_p(n\rho_n \Delta_n^{-.24}\log^{2.75}n)$.
\end{lemma}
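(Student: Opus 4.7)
The plan is to establish this spectral concentration bound by conditioning on $Z$, applying a matrix concentration inequality to the resulting independent-entry model, and then showing the resulting sharp bound dominates the stated (logarithmically-inflated) one under Assumption \ref{assum:sparse}. Concretely, conditional on $Z$, the mean matrix $\mathscr{A} = \rho_n ZBZ^T$ is deterministic, and the entries $\{A_{ij} - \mathscr{A}_{ij}\}_{i\le j}$ are independent, mean-zero, symmetric random variables whose moments are controlled by Assumption \ref{assum:A_dist}. This reduces the task to a standard concentration problem for a symmetric random matrix with independent entries satisfying a Bernstein-type moment bound.

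The key preliminary step is to bound the relevant scale parameters. Under Assumption \ref{assum:z_dist}, $\theta_i$ is drawn from a bounded positive distribution and $B$ is a fixed positive definite matrix with entries in $(0,1]$, so $\mathscr{A}_{ij} = \rho_n \theta_i \theta_j B_{z(i)z(j)} \lesssim \rho_n$ uniformly, giving $\bar\rho_n \asymp \rho_n$. Consequently the maximum expected row sum satisfies $d^* = \max_i \sum_j \mathscr{A}_{ij} \lesssim n\rho_n = \Delta_n$. I will then invoke a matrix Bernstein/Chung--Radcliffe-type inequality for heavy-tailed independent entries (the moment condition in Assumption \ref{assum:A_dist} is exactly the Bernstein condition in disguise, with variance proxy $\bar\rho_n$ and tail parameter $\bar\rho_n^{1/2}\vee\bar\rho_n$); such bounds, combined with a standard truncation argument when $\bar\rho_n < 1$, yield a conditional estimate of the form
\begin{equation*}
\pr\Bigl(\|A - \mathscr{A}\|_2 > C\sqrt{\Delta_n \log n}\,\Big|\,Z\Bigr) \le n^{-c}
\end{equation*}
whenever $\Delta_n \succeq \log n$, which holds by Assumption \ref{assum:sparse}. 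Since the right-hand side does not depend on the realization of $Z$, marginalizing gives the unconditional bound $\|A - \mathscr{A}\|_2 = O_p(\sqrt{\Delta_n \log n})$.

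Finally, I will translate this sharp bound into the stated form. We need $\sqrt{\Delta_n \log n} \le \Delta_n^{.76}\log^{2.75} n$, i.e.\ $1 \le \Delta_n^{.26} \log^{2.25} n$, which holds with enormous slack under $\Delta_n \succeq \log^{11.1} n$. Therefore the event on which the sharper concentration holds is a subset of the event $\{\|A - \mathscr{A}\|_2 \le C \, n\rho_n \Delta_n^{-.24}\log^{2.75} n\}$, establishing the claim. I expect the main technical obstacle to be verifying the hypotheses of the matrix Bernstein variant precisely matching Assumption \ref{assum:A_dist}: the factorial growth $(m-1)!\max(\bar\rho_n^{m/2},\bar\rho_n)$ is tailored to the Bennett/Bernstein framework, but one must check both regimes (dense-like $\bar\rho_n \ge 1$ and sparse-like $\bar\rho_n < 1$) to ensure the sub-exponential tail parameter and variance proxy behave as claimed. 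Once this is done, the rest follows from standard manipulations; the over-inflated logarithmic factor in the stated bound exists only to align with the slower convergence rate coming from \texttt{vsp} in Lemma \ref{lemma:Zconverge_main_thm}, so no tightness is lost.
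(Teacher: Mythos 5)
Your proposal is correct and takes essentially the same route as the paper: the paper also first obtains a sharp spectral bound, $\left\|A - \mathscr{A}\right\|_2 = O_p\left((n\rho_n\log^3 n)^{.5}\right) = O_p\left(\Delta_n^{.5}\log^{1.5} n\right)$, by citing Lemma G.4 of \citep{varimax_rohe} (proved under the same Assumption \ref{assum:A_dist}), and then concludes exactly as you do by noting $\Delta_n^{.5}\log^{1.5} n \le \Delta_n^{.76}\log^{2.75} n = n\rho_n\Delta_n^{-.24}\log^{2.75} n$ under Assumption \ref{assum:sparse}. The only difference is that you re-derive the concentration step yourself via a conditional matrix-Bernstein argument instead of citing it, which works since the target rate leaves ample slack.
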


To demonstrate the consistency of $\widehat \T_z$, we apply Theorem \ref{thm:cutoff} to tree $\T_z$ with the observed set being $V_z$. We would like to verify that the two conditions in Theorem \ref{thm:cutoff} are satisfied.

The first condition requires $V_z$ to contain all nodes with degree $<3$ in $\T_z$. If $u$ is a node in $\T_z$ such that $u\notin V_z$ and $deg(u)<3$, then $u$ is also an internal node of $\T$ such that $deg(u)<3$, as the creation of $\T_z$ from $\T$ only affects nodes in $V_\ell \cup V_z$. This contradicts our assumption that all internal nodes in $\T$ are of degree $>3$. The second condition requires $\min_e d(e) > 4\delta$. By the convergence of $\widehat D$, this condition holds with a probability that goes to one.

Now we want to show that the cutoff value is chosen properly. For any cutoff value $\varphi$ that grows faster than the convergence rate of $\widehat D$, 
\begin{equation}\label{eq:nj_1}
    \pr\left(\varphi > 2\delta\right)\xrightarrow{n\rightarrow \infty} 1.
\end{equation}
Since $d(e)>0$ for any edges $e\in E$, by the convergence of $\widehat D$ and the fact that $\varphi \rightarrow 0$, 
\begin{equation}\label{eq:nj_2}
    \pr\left(\min_{e\in E}\widehat d(e) > \varphi \right)\xrightarrow{n\rightarrow \infty} 1
\end{equation}
Notice that $2\delta \geq \max_{e_0\in \widehat E\setminus E} \widehat d(e_0)$. Together Equation \eqref{eq:nj_1} and \eqref{eq:nj_2} implies that
\[    \pr\left(\min_{e\in E}\widehat d(e) > \varphi > \max_{e_0\in \widehat E\setminus E} \widehat d(e_0) \right)\xrightarrow{n\rightarrow \infty} 1.\]
This means by applying the shrinking step in the \texttt{SparseNJ} algorithm, all wrongly recovered edges are deleted with a probability that goes to one. Then by Proposition \ref{prop:tree_shape}, we have
\[\pr\left(\widehat \T_z\sim \T_z\right)\xrightarrow{n\rightarrow \infty} 1.\]
The consistency of $\widehat \T_z$, together with Lemma \ref{lemma:Zconverge_main_thm}, implies the consistency of $\widehat \T$.

\subsubsection{Proof of Lemma \ref{lemma:Zconverge_main_thm} (Consistency of $\widehat Z$ after the Sign Flipping Step)}\label{appendix:sign_adjust}
In this proof, we use $\widehat Z$ to denote $UR_U$ (without sign flip) and $\widehat Z_D$ to denote $\widehat Z\widehat D$ (with sign flip) in Algorithm \ref{vsp}. We would like to show 
    \[\left\|\widehat Z_D - \widetilde ZP_n\right\|_{2\rightarrow \infty} = O_p(\Delta_n^{-.24}\log^{2.75}n).\]
By Corollary C.1 in \citep{varimax_rohe},
\begin{equation}\label{vsp_result}
    \left\|\widehat Z - \widetilde ZP_n^{sign}\right\|_{2\rightarrow \infty} = O_p(\Delta_n^{-.24}\log^{2.75}n)
\end{equation}
where $P^{sign}_n$ is a series of permutation matrix that allows sign flip, i.e., $P^{sign}_n\in \mathscr{P}^{sign}(k) = \{P\in \R^{k\times k}: P^TP = PP^T = I_{k\times k}, \ P_{ij}\in \{0, -1, 1\}\}$. Therefore we only need to deal with the fluctuation caused by the sign adjustment in Algorithm \ref{vsp}.

Decompose $P_n^{sign} = P_n D_n$, where $D_n$ is a series of diagonal matrix with elements in $\{-1, 1\}$ and $P_n\in \mathcal{P}(k)$. Then $D_n$ performs the sign flip and $P_n$ reorders the columns, by Equation (\ref{vsp_result}), 
\[\left\|\widehat ZD_n - \widetilde Z P_n\right\|_{2\rightarrow \infty}= O_p(\Delta_n^{-.24}\log^{2.75}n).\]

Let $\widehat\mu_j = \sum_j \widehat Z_{ij}/n$, $\mu_j = \sum_j \left[\widetilde ZP_n\right]_{ij}/n$, $\widehat d_j = \text{sgn}(\widehat \mu_j)$, $d_j = \left[D_n\right]_{jj}$. We drop the subscript $n$ for simplicity here.

Since $\|\cdot\|_{2\rightarrow \infty}$ corresponds to the maximum row-wise Euclidean norm,  
\begin{equation}\label{mu_converge}
 \left|\widehat \mu_jd_j - \mu_j\right| = O_p(\Delta_n^{-.24}\log^{2.75}n).   
\end{equation}
Notice that $\mu_j\geq 0$ by definition, and \(\widehat \mu_j\widehat d_j\geq 0\), 
\begin{equation}\label{mu_converge1}
\left|\widehat \mu_j\widehat d_j - \mu_j\right|\leq \left|\widehat \mu_jd_j - \mu_j\right| = O_p(\Delta_n^{-.24}\log^{2.75}n).
\end{equation}
By Equation (\ref{mu_converge}) and (\ref{mu_converge1}), 

\begin{equation}\label{d_converge_0}
    \left|\widehat \mu_j\widehat d_j - \widehat \mu_jd_j\right| = O_p(\Delta_n^{-.24}\log^{2.75}n).
\end{equation}
Since $\widetilde Z_{ij}$ are bounded, $|\mu_jd_j| = O_p(1)$. By Equation (\ref{mu_converge}), \(\left|\widehat \mu_j - \mu_jd_j\right| = O_p(\Delta_n^{-.24}\log^{2.75}n)\). These imply $|\widehat \mu_j| = O_p(1)$. Then by Equation (\ref{d_converge_0}),
\begin{equation}
    \left|\widehat d_j - d_j\right| = O_p(\Delta_n^{-.24}\log^{2.75}n).
\end{equation}
Define $\widehat D_n$ to be a series of diagonal matrix with $[\widehat D_n]_{jj} = \widehat d_j$, this implies
\begin{equation}
    \left\|\widehat D_n - D\right\|_2 = O_p(\Delta_n^{-.24}\log^{2.75}n).
\end{equation}

Now we can show
\begin{align}
   \left\|\widehat Z\widehat D_n - \widetilde Z P_n\right\|_{2\rightarrow \infty} & \leq 
   \left\|\widehat Z\widehat D_n - \widehat Z D_n\right\|_{2\rightarrow \infty}+ \left\|\widehat ZD_n - \widetilde Z P_n\right\|_{2\rightarrow \infty}\\
   & \leq \left\|\widehat Z\right\|_{2\rightarrow \infty}\left\|\widehat D_n - D\right\|_2 + O_p(\Delta^{-.24}\log^{2.75}n)\\
   & = O_p(1)O_p(\Delta_n^{-.24}\log^{2.75}n)+O_p(\Delta_n^{-.24}\log^{2.75}n)\\
   & = O_p(\Delta_n^{-.24}\log^{2.75}n).
\end{align}
$\left\|\widehat Z\right\|_{2\rightarrow \infty} = O_p(1)$ because $\left\| \widetilde ZP_n^{sign}\right\|_{2\rightarrow \infty} = O_p(1)$ and Equation (\ref{vsp_result}), the second step follows from a basic property of $\|\cdot\|_{2\rightarrow \infty}$. (see Proposition 6.5. in \citep{cape2019two})

This shows that the sign adjustment step does not affect the asymptotic behavior of $\widehat Z$. The result for $\widehat Z_+$ follows from Proposition \ref{prop:projection}.

\subsubsection{Proof of Lemma \ref{lemma:Aconverge_main_thm}}
By Lemma G.4 in \cite{varimax_rohe}, \[\left\|A - \mathscr{A}\right\|_2 = O_p((n\rho_n\log^3 n)^{.5}) = O_p(\Delta_n^{.5}\log^{1.5} n)).\]
When $e_n = \Delta^{-.24}\log^{2.75}n$, we have $n\rho_ne_n = \Delta_n^{.76}\log^{2.75}n > \Delta_n^{.5}\log^{1.5} n$, thus 
\[\left\|A - \mathscr{A}\right\|_2 = O_p((n\rho_n\log^3 n)^{.5}) = O_p(n\rho_n \Delta_n^{-.24}\log^{2.75}n)).\]

\end{document}